\documentclass[journal,onecolumn,draftclsnofoot]{IEEEtran}
\usepackage[T1]{fontenc}
\usepackage{cite}

\usepackage[pdftex]{graphicx}
\usepackage{multicol}
\usepackage{subcaption}

\usepackage[cmex10]{amsmath}
\usepackage{amssymb}
\usepackage{amsthm}
\usepackage{bbm}

\usepackage[shortlabels]{enumitem}

\usepackage{xcolor}

\usepackage{hyperref}
\hypersetup{
    colorlinks=true,       
    linkcolor=teal,        
    citecolor=teal, 
    urlcolor=teal          
}
\usepackage[capitalise]{cleveref}
\usepackage{thmtools}
\declaretheorem[thmbox=M]{theorem}
\declaretheorem[thmbox=M]{proposition}
\declaretheorem[thmbox=M]{definition}
\declaretheorem[thmbox=M]{lemma}
\declaretheorem[thmbox=M]{corollary}
\declaretheorem[thmbox=M]{example}
\declaretheorem[style=remark]{remark}

\newcommand{\thmboxsplit}{\par\penalty-10000\relax}

\makeatletter
\newif\ifthmfooter@active
\thmfooter@activefalse
\let\thmfooter@prooflabel\@empty

\newcommand{\thmfooterbegin}{%
    \global\thmfooter@activetrue
    \global\let\thmfooter@prooflabel\@empty
}

\newcommand{\thmfooterend}{%
    \global\thmfooter@activefalse
    \global\let\thmfooter@prooflabel\@empty
}

\newcommand{\appendixproofref}[1]{%
    \gdef\thmfooter@prooflabel{#1}%
}

\let\thmfooter@originaltail\thmbox@tail
\renewcommand{\thmbox@tail}{%
    \ifthmfooter@active
        \hrule height 0pt\relax
        \thmbox@dim=\hsize
        \advance\thmbox@dim-\thmbox@leftmargin
        \advance\thmbox@dim-\thmbox@rightmargin
        \advance\thmbox@dim\thmbox@hskip
        \advance\thmbox@dim\thmbox@thickness
        \noindent
        {\dimen@=\thmbox@leftmargin
         \advance\dimen@-\thmbox@hskip
         \advance\dimen@-\thmbox@thickness
         \hskip\dimen@}%
        \vbox{%
            \hrule width \thmbox@dim height \thmbox@thickness
            \ifx\thmfooter@prooflabel\@empty
            \else
                \kern-\thmbox@thickness
                \hbox to \thmbox@dim{%
                    \hfill
                    \begingroup
                    \fboxrule=\thmbox@thickness
                    \fbox{%
                        \normalfont\footnotesize
                        Proof in Appendix~\hyperref[\thmfooter@prooflabel]{%
                            \ref*{\thmfooter@prooflabel}%
                        }%
                    }%
                    \endgroup
                }%
            \fi
        }%
        \par
    \else
        \thmfooter@originaltail
    \fi
}
\makeatother

\addtotheorempreheadhook[theorem]{\thmfooterbegin}
\addtotheorempostfoothook[theorem]{\thmfooterend}
\addtotheorempreheadhook[proposition]{\thmfooterbegin}
\addtotheorempostfoothook[proposition]{\thmfooterend}
\addtotheorempreheadhook[lemma]{\thmfooterbegin}
\addtotheorempostfoothook[lemma]{\thmfooterend}
\addtotheorempreheadhook[corollary]{\thmfooterbegin}
\addtotheorempostfoothook[corollary]{\thmfooterend}
\addtotheorempreheadhook[definition]{\thmfooterbegin}
\addtotheorempostfoothook[definition]{\thmfooterend}
\addtotheorempreheadhook[example]{\thmfooterbegin}
\addtotheorempostfoothook[example]{\thmfooterend}

\newcommand{\renyiDiv}[2]{D_{\alpha}\left(#1\|#2\right)}
\newcommand{\dd}[1]{\mathrm{d}#1}
\newcommand{\mean}[2]{\mathbb{E}_{#1}\left[#2\right]}
\newcommand{\ip}[2]{\left\langle#1, #2\right\rangle}

\def \X{\mathsf X}
\def \Y{\mathsf Y}
\def \Z{\mathsf Z}

\def \sfS{\mathsf S}
\def \F{\mathcal F}
\def \calP{\mathcal P}
\def \calS{\mathcal S}
\def \calH{\mathcal H}

\def \reals {\mathbb R}
\def \supp {\mathrm{supp}}
\def \dualK{K^\star_\mu}
\def \Id{\mathrm{Id}}
\def \rnd{\frac{\dd\nu}{\dd\mu}}
\def \allones{\mathbf{1}}

\begin{document}
\title{Contraction of R\'enyi Divergences\\for Discrete Channels}
\author{Adrien~Vandenbroucque,
        Amedeo~Roberto~Esposito,
        and~Michael~Gastpar
\thanks{A. Vandenbroucque and M. Gastpar are with the School of Computer and Communication Sciences, EPFL, Lausanne, Switzerland, e-mails: \{adrien.vandenbroucque, michael.gastpar\}@epfl.ch.}
\thanks{A. R. Esposito is with the Okinawa Institute of Science and Technology (OIST), Onna-son, Japan, e-mail: amedeo.esposito@oist.jp.}
\thanks{This paper was presented in part at the \textit{2026 IEEE International Symposium on Information Theory}, Guangzhou, China}}

\maketitle

\begin{abstract}
    We investigate Strong Data-Processing Inequality (SDPI) constants for R\'enyi Divergences on finite spaces. We study their dependence on the R\'enyi order $\alpha$, proving that they are non-decreasing and that their scaling by $(\alpha-1)$ is convex for $\alpha\geq1$. We also identify several support restrictions on the probability measures involved in determining these constants. In particular, in the distribution-independent setting, measures supported on a common set of at most two points suffice to evaluate the R\'enyi-SDPI constant. For $\alpha\in[0,1]$, we further prove equality with the $\chi^2$-SDPI constant, while at order infinity we obtain a closed-form expression. In order to link contraction over product spaces to contraction along individual coordinates, we provide tensorisation bounds for arbitrary product channels analogous to those known for $\varphi$-Divergences. At finite orders, the Rényi-SDPI constants are bounded above and below through comparisons with the $\chi^2$-Divergence and Hellinger Divergences, with sharpness established in multiple cases. At order infinity, we instead relate these constants to the contraction of Total Variation Distance. Finally, our findings are applied to local differential privacy (LDP) and the analysis of Markov chains. This yields sharp contraction guarantees for pure-LDP mechanisms, and connects R\'enyi-LDP to R\'enyi-SDPI constants. For Markov chains, we derive finite-time convergence bounds and exhibit a family of chains for which R\'enyi-SDPIs improve on classical $\chi^2$-based bounds by arbitrarily large factors.
\end{abstract}

\begin{IEEEkeywords}
    Rényi Divergences, $\varphi$-Divergences, Data-Processing Inequality, Strong Data-Processing Inequality, Markov Chains, Local Differential Privacy
\end{IEEEkeywords}

\begin{table*}[p]
\begingroup
\fontsize{7.0}{7.5}\selectfont
\setlength{\abovedisplayskip}{0.25\baselineskip}
\setlength{\belowdisplayskip}{0.25\baselineskip}
\setlength{\abovedisplayshortskip}{0.25\baselineskip}
\setlength{\belowdisplayshortskip}{0.25\baselineskip}
\setlength{\jot}{0.12\baselineskip}
\setlength{\columnsep}{1.5em}
\setlength{\multicolsep}{0.2\baselineskip}
\setlength{\fboxsep}{0.55em}
\crefname{equation}{Eq.}{Eqs.}
\Crefname{equation}{Eq.}{Eqs.}
\crefname{theorem}{Thm.}{Thms.}
\Crefname{theorem}{Thm.}{Thms.}
\crefname{proposition}{Prop.}{Props.}
\Crefname{proposition}{Prop.}{Props.}
\crefname{corollary}{Cor.}{Cors.}
\Crefname{corollary}{Cor.}{Cors.}
\newcommand{\summarysection}[1]{%
	\par\vspace{0.65\baselineskip}%
	\noindent{\fontsize{8.1}{8.8}\selectfont\bfseries #1}%
	\par\vspace{0.30\baselineskip}}
\newcommand{\summaryentry}[2]{%
	\par\vspace{0.40\baselineskip}\noindent
	\textbf{#1}~(\Cref{#2}):\ }
\newcommand{\summaryentrymixed}[3]{%
	\par\vspace{0.40\baselineskip}\noindent
	\textbf{#1}~(\Cref{#2}; \Cref{#3}):\ }

\noindent\fbox{%
\begin{minipage}[t]{\dimexpr\textwidth-2\fboxsep-2\fboxrule\relax}
	\vspace{0.12\baselineskip}
	\centering{\fontsize{10}{11}\selectfont\textbf{Summary}}\par
	\vspace{0.85\baselineskip}
	\fontsize{7.0}{7.5}\selectfont

	\begin{multicols}{2}
	\raggedcolumns
	\raggedright
	\fontsize{7.0}{7.5}\selectfont

	\summarysection{Definitions}

	\begin{minipage}{\linewidth}
	\summaryentry{Distribution-dependent contraction coefficient}{eq:dd_sdpi_constant}
	For $\alpha\in(0,\infty]$,
	\begin{equation*}
		\eta_\alpha(\mu,K)=\sup_{\substack{\nu\in\calP(\X):\\0<D_\alpha(\nu\|\mu)<\infty}}\frac{D_\alpha(\nu K\|\mu K)}{D_\alpha(\nu\|\mu)}.
	\end{equation*}
	At $\alpha=0$, the coefficient is defined using reverse KL divergence.

	\summaryentry{Distribution-independent contraction coefficient}{eq:di_sdpi_constant}
	\begin{equation*}
		\eta_\alpha(K)=\sup_{\mu\in\calP(\X)}\eta_\alpha(\mu,K), \qquad \alpha\in[0,\infty].
	\end{equation*}
	Throughout, write $K_x\triangleq K(\cdot|x)=\delta_xK$.
	\end{minipage}

	\summarysection{Dependence on the Order}

	\begin{minipage}{\linewidth}
	\summaryentrymixed{Monotonicity}{thm:monotonicity,prop:distribution_independent_sdpi_equal_chi_square_sdpi}{corollary:distribution_independent_order_properties}
	\begin{align*}
		\eta_\alpha(\mu,K)&\leq\eta_\beta(\mu,K), &1\leq\alpha\leq\beta\leq\infty,\\
		\eta_\alpha(K)&\leq\eta_\beta(K), &0\leq\alpha\leq\beta\leq\infty.
	\end{align*}

	\summaryentry{Scaled convexity}{thm:convexity,corollary:distribution_independent_order_properties}
	\begin{equation*}
		\begin{aligned}
			\alpha&\longmapsto(\alpha-1)\eta_\alpha(\mu,K),\\
			\alpha&\longmapsto(\alpha-1)\eta_\alpha(K)
		\end{aligned}
		\qquad\text{are convex on }(1,\infty).
	\end{equation*}

	\summaryentry{Continuity in the order}{corollary:continuity_in_order,corollary:distribution_independent_order_properties}
	\begin{equation*}
		\begin{aligned}
			\alpha\mapsto\eta_\alpha(\mu,K)&\text{ is continuous on }[1,\infty],\\
			\alpha\mapsto\eta_\alpha(K)&\text{ is continuous on }(1,\infty].
		\end{aligned}
	\end{equation*}
	\end{minipage}

	\summarysection{Simplified Representations}

	\begin{minipage}{\linewidth}
	\summaryentry{Boundary-support reduction}{thm:boundary_nu}
	For $\alpha\in[2,\infty)$,
	\begin{equation*}
		\eta_\alpha(\mu,K)=\max_{\substack{\nu\in\calP(\X):\\ \supp(\nu)\subsetneq\supp(\mu)}}\frac{D_\alpha(\nu K\|\mu K)}{D_\alpha(\nu\|\mu)}.
	\end{equation*}

	\summaryentry{Binary-support reduction}{prop:eta_parameter_renyi_binary}
	In the optimisation defining $\eta_\alpha(K)$:
	\par\vspace{0.12\baselineskip}
	\noindent For $\alpha\in[0,\infty]$, it suffices to consider pairs
	$(\nu,\mu)$ supported on at most two common points.
	\par\vspace{0.12\baselineskip}
	\noindent For $\alpha\in[2,\infty]$, one may further take $\nu$ to be
	a point mass and $\mu$ to be supported on that point and one
	additional point.
	\summaryentry{Distribution-dependent order $\infty$}{prop:infty_sdpi_achieved_on_conditionals}
	\begin{equation*}
		\eta_\infty(\mu,K)=\max_{\substack{A\in\Sigma_\X:\\0<\mu(A)<1}}\frac{D_\infty(\mu_{|A}K\|\mu K)}{D_\infty(\mu_{|A}\|\mu)}.
	\end{equation*}

	\summaryentry{Distribution-independent order $\infty$}{prop:infty_renyi_sdpi_closed_form}
	\begin{equation*}
		\eta_\infty(K)=\max_{x,x'\in\X}\left\{1-\min_{y\in\supp(K(\cdot|x))}\frac{K(y|x')}{K(y|x)}\right\}.
	\end{equation*}

	\summaryentry{Trivial contraction}{prop:non_trivial_sdpi_characterisation}
	For $\alpha\in[0,\infty]$,
	\begin{equation*}
		\eta_\alpha(K)=1\iff\! \begin{cases} \exists\,x,x':\ \supp(K_x)\cap\supp(K_{x'})=\emptyset, &\alpha\leq1,\\ \exists\,x,x':\ \supp(K_x)\neq\supp(K_{x'}), &\alpha>1. \end{cases}
	\end{equation*}
	\end{minipage}

	\summarysection{Product Kernels and Tensorisation Bounds}

	\begin{minipage}{\linewidth}
	Set
	\begin{equation*}
		\mu\triangleq\bigotimes_{i=1}^n\mu_i, \qquad K\triangleq\bigotimes_{i=1}^nK_i.
	\end{equation*}
	For any joint law $\rho$, let $\rho_T$ denote its marginal on the
	coordinates in $T$.

	\summaryentry{Samorodnitsky-type inequality}{prop:renyi_samorodnitsky}
	Let
	\begin{equation*}
		w_T\triangleq \prod_{i\in T}\eta_i \prod_{j\in[n]\setminus T}(1-\eta_j).
	\end{equation*}
	For $\nu\ll\mu$ and $\alpha\in(1,\infty]$, take
	$\eta_i=\eta_\alpha(\mu_i,K_i)$. Then
	\begin{equation*}
		D_\alpha(\nu K\|\mu K) \leq\sum_{T\subseteq[n]}w_TD_\alpha(\nu_T\|\mu_T).
	\end{equation*}
	\end{minipage}

	\columnbreak

	\fontsize{7.0}{7.5}\selectfont
	\setlength{\abovedisplayskip}{0.08\baselineskip}
	\setlength{\belowdisplayskip}{0.08\baselineskip}
	\setlength{\abovedisplayshortskip}{0.08\baselineskip}
	\setlength{\belowdisplayshortskip}{0.08\baselineskip}
	\setlength{\jot}{0.07\baselineskip}
	\renewcommand{\summarysection}[1]{%
		\par\vspace{0.35\baselineskip}%
		\noindent{\fontsize{8.1}{8.8}\selectfont\bfseries #1}%
		\par\vspace{0.16\baselineskip}}
	\renewcommand{\summaryentry}[2]{%
		\par\vspace{0.15\baselineskip}\noindent
		\textbf{#1}~(\Cref{#2}):\ }
	\renewcommand{\summaryentrymixed}[3]{%
		\par\vspace{0.15\baselineskip}\noindent
		\textbf{#1}~(\Cref{#2}; \Cref{#3}):\ }

	\begin{minipage}{\linewidth}
	\vspace*{0.80\baselineskip}
	For arbitrary $\nu,\gamma$ when $\alpha\in(0,1)$, and for $\nu\ll\gamma$ when $\alpha\in(1,\infty]$, take
	$\eta_i=\eta_\alpha(K_i)$. Then
	\begin{equation*}
		D_\alpha(\nu K\|\gamma K) \leq\sum_{T\subseteq[n]}w_TD_\alpha(\nu_T\|\gamma_T).
	\end{equation*}

	\vspace{0.25\baselineskip}
	\summaryentry{Product-coefficient bounds}{prop:tensorisation_of_sdpi_constant}
	For $\alpha\in(1,\infty]$,
	\begin{equation*}
		\max_{i\in[n]}\eta_\alpha(\mu_i,K_i) \leq\eta_\alpha(\mu,K) \leq1-\prod_{i=1}^n\bigl(1-\eta_\alpha(\mu_i,K_i)\bigr).
	\end{equation*}
	For $\alpha\in(0,1)\cup(1,\infty]$,
	\begin{equation*}
		\max_{i\in[n]}\eta_\alpha(K_i) \leq\eta_\alpha(K) \leq1-\prod_{i=1}^n\bigl(1-\eta_\alpha(K_i)\bigr).
	\end{equation*}
	\end{minipage}

	\summarysection{Comparisons with Other Contraction Coefficients}

	\begin{minipage}{\linewidth}
	\summaryentry{Equality with $\chi^2$-contraction}{prop:distribution_independent_sdpi_equal_chi_square_sdpi}
	\begin{equation*}
		\eta_\alpha(K)=\eta_{\chi^2}(K),\qquad \alpha\in[0,1].
	\end{equation*}

	\summaryentry{Upper bounds via $\chi^2$-contraction}{prop:ub_on_eta_alpha_by_eta_chi_squared}
	Let
	\begin{equation*}
		C_\alpha(t)\triangleq \frac{t^\alpha-\alpha t+\alpha-1}{(t-1)^2}\quad(t\in\reals_+\setminus\{1\}), \qquad C_\alpha(1)\triangleq\frac{\alpha(\alpha-1)}{2}.
	\end{equation*}
	Then, for $0<\alpha<1$ and $\supp(\mu)=\X$,
	\begin{equation*}
		\eta_\alpha(\mu,K)\leq\frac{\eta_{\chi^2}(\mu,K)}{\alpha\min_{x\in\X}\mu(x)}.
	\end{equation*}
	For $1\leq\alpha\leq2$,
	\begin{equation*}
		\eta_\alpha(\mu,K)\leq\frac{\eta_{\chi^2}(\mu,K)}{\min_{x\in\supp(\mu)}\mu(x)}.
	\end{equation*}
	For $\alpha>2$,
	\begin{equation*}
		\eta_\alpha(\mu,K)\leq\frac{C_\alpha\!\left(\bigl(\min_{y\in\supp(\mu K)}\mu K(y)\bigr)^{-1}\right)\eta_{\chi^2}(\mu,K)}{(\alpha-1)\min_{x\in\supp(\mu)}\mu(x)}.
	\end{equation*}

	\summaryentry{Comparison with Hellinger contraction}{corollary:sdpi_ub_lb_hellinger}
	\begin{align*}
		\eta_{\calH_\alpha}(\mu,K)&\leq\eta_\alpha(\mu,K),& \eta_{\calH_\alpha}(K)&\leq\eta_\alpha(K),&&\alpha>1,\\
		\eta_\alpha(\mu,K)&\leq\eta_{\calH_\alpha}(\mu,K),& \eta_\alpha(K)&\leq\eta_{\calH_\alpha}(K),&&0<\alpha<1.
	\end{align*}

	\summaryentry{Upper bounds via Hellinger contraction}{prop:sdpi_ub_hellinger}
	For $\alpha>1$,
	\begin{align*}
		\eta_\alpha(\mu,K)&\leq\frac{\log\left(1+(\alpha-1)M_\alpha(\mu,K)\right)}{\log\bigl(1+(\alpha-1)M_\alpha(\mu,K)/\eta_{\calH_\alpha}(\mu,K)\bigr)},\\
		\eta_\alpha(K)&\leq\frac{\log\left(1+(\alpha-1)M_\alpha(K)\right)}{\log\bigl(1+(\alpha-1)M_\alpha(K)/\eta_{\calH_\alpha}(K)\bigr)}.
	\end{align*}
	where
	\begin{equation*}
		M_\alpha(\mu,K)\triangleq\max_{x: \mu(x)>0}\calH_\alpha(K_x\|\mu K), \; M_\alpha(K)\triangleq\sup_{x,x'}\calH_\alpha(K_x\|K_{x'}).
	\end{equation*}

	\summaryentry{Order $\infty$ and Total Variation Distance}{prop:bounds_on_renyi_infty_sdpi_by_tv,prop:eta_tv_lower_bounds_eta_inf}
	If $\mu$ has full support, then
	\begin{equation*}
		\eta_{\sf TV}(\mu,K)\leq\eta_\infty(\mu,K)\leq\frac{\eta_{\sf TV}(\mu,K)}{\min_{y\in\supp(\mu K)}\mu K(y)}.
	\end{equation*}
    Moreover,
    \begin{equation*}
        \eta_{\sf TV}(K)\leq\eta_\infty(K).
    \end{equation*}
	\end{minipage}

	\vspace{0.20\baselineskip}
	\summarysection{Applications}

	\begin{minipage}{\linewidth}
	\summaryentry{Local differential privacy}{thm:renyi_ldp_implies_contraction}
	For $\alpha\in[2,\infty]$,
	\begin{align*}
		(\alpha,\varepsilon)\text{-RLDP}&\Longrightarrow\eta_\alpha(K)\leq1-e^{-\varepsilon},\\
		\varepsilon\text{-LDP}&\Longleftrightarrow\eta_\infty(K)\leq1-e^{-\varepsilon}.
	\end{align*}

	\summaryentry{Sharp contraction under pure LDP}{thm:ldp_sharp_renyi_contraction}
	Every $\varepsilon$-LDP mechanism $K$ satisfies, for $\alpha\in[0,\infty]$,
	\begin{equation*}
		\eta_\alpha(K)\leq\eta_\alpha\!\left(\mathrm{BSC}\!\left(\frac1{e^\varepsilon+1}\right)\right),
	\end{equation*}
	with equality for the binary randomised response mechanism.

	\summaryentry{Asymptotic Markov-chain contraction}{corollary:asymptotic_renyi_contraction}
	For irreducible, aperiodic, reversible $K$ with stationary law $\pi$,
	\begin{equation*}
		\lim_{t\to\infty}\eta_\alpha(\pi,K^t)^{1/t}=\eta_{\chi^2}(\pi,K),\qquad \alpha\in(0,\infty).
	\end{equation*}

	\summaryentry{Finite-time Markov-chain contraction}{prop:mcmt}
	If $\pi K=\pi$, $\nu\ll\pi$, $t\geq1$, and $\alpha\in(0,1)\cup(1,\infty)$:
	\begin{equation*}
		\calH_\alpha(\nu K^t\|\pi)\leq\frac{\bigl(1+(\alpha-1)\calH_\alpha(\nu\|\pi)\bigr)^{\eta_\alpha(\pi,K)^t}-1}{\alpha-1}.
	\end{equation*}
	\end{minipage}

	\end{multicols}
	\vspace{0.6\baselineskip}
\end{minipage}}
\endgroup
\end{table*}

\section{Introduction}\noindent
    The Data-Processing Inequality (DPI) is a fundamental result in information theory, formalising the fact that post-processing cannot increase information. Strong Data-Processing Inequalities (SDPIs) refine this qualitative statement by quantifying the amount of information dissipated by a given channel. The study of such quantitative contraction has a long history, with early results for relative entropy and subsequent extensions to more general divergence measures~\cite{ahlswede1976spreading,cohen1993relative,choi1994equivalence}. Over the past decade, this line of research has developed into a more systematic theory, particularly for $\varphi$-Divergences~\cite{raginsky2016strong,polyanskiy2017strong,makur2020comparison,Polyanskiy2025}. Beyond these theoretical developments, SDPIs have found applications to local differential privacy~\cite{zamanlooy2023strong,asoodeh2024contraction}, information reconstruction~\cite{polyanskiy2020application,gu2023non}, and concentration inequalities~\cite{esposito2024concentration}. More recently, information contraction has also been used to derive formal guarantees for machine-unlearning procedures~\cite{chien2024langevin,koloskova2025certified}.

    R\'enyi Divergences have likewise become increasingly prominent in recent years. Their order $\alpha$ allows different features of the probability measures to be emphasised, with higher orders being particularly sensitive to discrepancies in their tails. Together with their additivity over product measures, this flexibility makes R\'enyi Divergences well suited to problems involving tail behaviour and repeated observations~\cite{van2014renyi}. These features underlie their use in differential privacy~\cite{Mironov2017}, the convergence analysis of Langevin algorithms~\cite{vempala2019rapid,chewi2025analysis}, hypothesis testing~\cite{tomamichel2017operational,bruno2026finite}, estimation~\cite{esposito2024lower}, the generalisation behaviour of learning algorithms~\cite{EspositoGI:2021}, and universal prediction~\cite{bondaschi2023alphanml}. The breadth of these applications motivates a better understanding of how R\'enyi Divergences contract under stochastic transformations and, more specifically, of their SDPI constants.

    Despite this growing interest, the contraction of R\'enyi Divergences remains considerably less understood than that of $\varphi$-Divergences. Recent work by Jin et al.~\cite{jin2024properties} established a universal lower bound through the $\chi^2$-SDPI constant and studied the optimisation defining the contraction coefficient at order $\alpha=2$. Grosse et al.~\cite{grosse2025bounds} developed Pinsker-type bounds for restricted classes of probability measures and applied them to privacy amplification, while Abawonse et al.~\cite{abawonse2026generalized} obtained related functional inequalities in their study of noisy functions. Together, these works provide several important results, but a systematic understanding of the contraction properties of R\'enyi Divergences remains incomplete.

    Motivated by this gap, we study the properties of R\'enyi-SDPI constants for discrete channels. Particular attention is devoted to their dependence on the order $\alpha$ of the R\'enyi Divergence, to the probability measures that determine their value, and to their behaviour in product spaces. Parallels with the contraction of $\varphi$-Divergences complement this analysis, yielding comparisons that provide bounds on R\'enyi-SDPI constants and facilitate their applications.

    \subsection{Contributions and Organisation}\noindent
    \Cref{section:background_and_definitions} introduces the notation and definitions used throughout the paper. In~\cref{sec:properties_exact_characterisations}, we first derive a functional representation of the R\'enyi-SDPI constant, from which we establish its monotonicity in $\alpha$ and the convexity of its scaling by $\alpha-1$ for orders $\alpha\geq1$. We then identify restrictions on the probability measures that need to be considered when determining these constants, leading to reduced optimisation problems and simpler representations. In the distribution-independent setting, we further prove equality with the $\chi^2$-SDPI constant for $\alpha\in[0,1]$ and characterise exactly when strict contraction fails. The section concludes with tensorisation bounds for product channels.

    In~\cref{sec:bounds_on_contraction_coefficient}, we establish bounds on R\'enyi-SDPI constants. At finite orders, we complement the universal lower bound provided by the $\chi^2$-SDPI constant with upper bounds, and exploit the one-to-one relationship with Hellinger Divergences to obtain further comparisons. At order infinity, where these methods cannot be applied, we instead derive upper and lower bounds in terms of Total Variation Distance.

    Finally, \cref{sec:applications} applies these results to local differential privacy (LDP) and the convergence of Markov chains. We first relate privacy guarantees expressed in terms of R\'enyi Divergences to contraction at the same order. We then derive an upper bound on the R\'enyi-SDPI constant of any pure-LDP mechanism and prove it is attained by a randomised response mechanism. For reversible Markov chains, we show that R\'enyi Divergences with finite order contract asymptotically at the same rate as the $\chi^2$-Divergence. We also derive finite-time convergence bounds and exhibit a family of chains for which they improve upon the classical $\chi^2$-SDPI bound.

\section{Background and Definitions}\label{section:background_and_definitions}\noindent
    In the following, we consider spaces with finite alphabets. Nonetheless, we decide to use standard measure-theoretic notation throughout, as some of the results naturally extend to more general spaces.

    For a space $\X$, it is assumed that a $\sigma$-algebra $\Sigma_\X$ is associated to it, rendering it a measure space. Since the spaces under consideration are finite, we can assume $\Sigma_\X$ to be the power set of $\X$. Indeed, in case it is not, some symbols in $\X$ can be merged to yield a new measure space with this property. We always assume $|\X|\geq2$. We denote by $\calP(\X)$ the set of probability measures on $\X$ and by $\F(\X)$ the set of all real-valued functions on $\X$. The expectation of a function $f\in\F(\X)$ w.r.t. a probability measure $\mu\in\calP(\X)$ is ${\mean{\mu}{f}\triangleq\int_{\X}f\dd\mu}$ and its $L^\alpha$-norm is ${\|f\|_{L^\alpha(\mu)}\triangleq\big(\mean{\mu}{|f|^\alpha}\big)^{\frac1\alpha}}$ for $\alpha\in[1,\infty)$, with ${\|f\|_{L^\infty(\mu)}\triangleq\mathrm{ess}\sup_\mu|f|}$. For $\mu\in\calP(\X)$ and $\alpha\in[1,\infty]$, we denote by $L^\alpha(\mu)$ the space of functions identified up to $\mu$-almost-everywhere equality, with finite $L^\alpha$-norm. The constant function taking value 1 is denoted by $\allones$. Given two probability measures $\nu, \mu$, we write $\nu\ll\mu$ to denote that $\nu$ is absolutely continuous w.r.t. $\mu$, in which case there exists a Radon-Nikodym derivative $\frac{\dd\nu}{\dd\mu}$. The natural logarithm function is denoted by $\log$. When handling indices, we use $[n]\triangleq\{1, \dots, n\}$. For $\alpha\in[1,\infty]$, we denote its H\"older conjugate by $\alpha^\prime$, so that $\frac1\alpha+\frac1{\alpha^\prime}=1$. The Binary Symmetric Channel with crossover probability $\varepsilon$ is written as $\mathrm{BSC}(\varepsilon)$, and the Binary Erasure Channel (BEC) with erasure probability $\delta$ is denoted as $\mathrm{BEC}(\delta)$.

    \subsection{Markov Kernels}\label{sec:markov_kernels}\noindent
        Hereafter, a channel will be viewed as a Markov kernel, or stochastic transformation.
        \begin{definition}
            A Markov kernel $K:\Y\times\X\to[0, 1]$ specifies transition probabilities from elements of a set $\X$ to elements of a set $\Y$, and in particular
        \begin{enumerate}
            \item $\forall x\in\X$, the mapping $y\mapsto K(y|x)$ is a probability measure,
            \item $\forall y\in\Y$, the mapping $x\mapsto K(y|x)$ is a $\Sigma_\X$-measurable real-valued function.
        \end{enumerate}
        \end{definition}
        The set of all such kernels is denoted by $\calP(\Y|\X)$.

        A Markov kernel acts on probability measures $\mu\in\calP(\X)$ from the left as
        \begin{equation*}
            \mu K (y) = \sum_{x\in\X} K(y|x)\mu(x), \quad \text{for }y\in\Y,
        \end{equation*}
        so that $\mu K\in\calP(\Y)$. It can also be viewed as acting on functions $f\in\F(\Y)$ from the right as
        \begin{equation*}
            Kf(x) = \sum_{y\in\Y} f(y)K(y|x), \quad \text{for }x\in\X,
        \end{equation*}
        so that $Kf\in\F(\X)$. In terms of random variables, if $(X,Y)\in\X\times\Y$ is a random pair such that $X\sim\mu$ and $Y|X=x\sim K(\cdot|x)$, then $Y\sim\mu K$ and, for any $f\in\F(\Y)$,
        \begin{equation*}
            Kf(x)=\mathbb{E}\big[f(Y)| X=x\big],
            \qquad\text{for } x\in\supp(\mu).
        \end{equation*}
        For any pair $(\mu, K)\in\calP(\X)\times\calP(\Y|\X)$, there exists a dual kernel $\dualK\in\calP(\X|\Y)$ satisfying
        \begin{equation}\label{eq:adjoint_relation}
            \ip{Kf}{g}_{\mu} = \ip{f}{\dualK g}_{\mu K}, \quad\forall f\in\F(\Y)\text{ and }\forall g\in\F(\X)
        \end{equation}
        where we have used the inner-product notation ${\ip{f}{g}_\mu\triangleq \mean{\mu}{f\cdot g}=\int_\X f(x)g(x)\dd\mu(x)}$. \Cref{eq:adjoint_relation} determines the dual kernel up to a $\mu K$-null set: if $L,\widetilde L\in\calP(\X|\Y)$ both satisfy~\cref{eq:adjoint_relation}, then $L(\cdot|y)=\widetilde L(\cdot|y)$ for $\mu K$-almost every $y\in\Y$. In discrete settings like ours, the transition probabilities of this dual kernel are given by
        \begin{equation*}
            \dualK(x|y) = \frac{\mu(x)K(y|x)}{\mu K(y)},\qquad\forall x\in\X\text{ and }\forall y\in\supp(\mu K).
        \end{equation*}
        One useful consequence of~\cref{eq:adjoint_relation} is that for any non-negative $f\in\F(\X)$,
        \begin{equation}\label{eq:markov_kernel_mean_preservation}
            \|K_\mu^\star f\|_{L^1(\mu K)} = \ip{\allones}{K_\mu^\star f}_{\mu K} = \ip{K\allones}{ f}_{\mu} = \ip{\allones}{ f}_{\mu} = \|f\|_{L^1(\mu)}.
        \end{equation}
        We also note that for a pair $(\mu, K)$ and any other probability measure $\nu\ll\mu$, we have
        \begin{equation}\label{eq:dual_kernel_formula}
            \dualK f = \frac{\dd(\nu K)}{\dd(\mu K)}\quad \mu K\text{-a.e.},
        \end{equation}
        where $f=\frac{\dd\nu}{\dd\mu}$, see~\cite[Lemma 1]{esposito2024contraction}.

    \subsection{Divergences and (Strong) Data-Processing Inequalities}\noindent
        Let us define two families of divergences which will be central to this work, starting with R\'enyi Divergences~\cite{van2014renyi}.
        \begin{definition}
            For two probability measures $\nu,\mu\in\calP(\X)$, let $\lambda=\frac12(\nu+\mu)$. The R\'enyi divergence of order $\alpha\in(0,1)\cup(1,\infty)$ from $\nu$ to $\mu$ is defined as
            \begin{equation*}
                D_\alpha(\nu\|\mu) = \frac{1}{\alpha-1}\log\mean{\lambda}{\left(\frac{\dd\nu}{\dd\lambda}\right)^\alpha\left(\frac{\dd\mu}{\dd\lambda}\right)^{1-\alpha}},
            \end{equation*}
            where, for $\alpha>1$, the right-hand side is understood to equal $+\infty$ whenever $\nu\not\ll\mu$. The orders $0$, $1$ and $\infty$ are defined by continuous extensions.
        \end{definition}
        The KL Divergence, denoted as $D_{\sf KL}(\nu\|\mu)$, corresponds to the limit $\alpha\to1$.

        The second family of divergences we consider are the $\varphi$-Divergences~\cite{csiszar1967information}.
        \begin{definition}
            Let $\varphi:(0,\infty)\to\reals$ be a convex function with $\varphi(1)=0$. For two probability measures $\nu,\mu\in\calP(\X)$, let $\lambda=\frac12(\nu+\mu)$. The $\varphi$-Divergence from $\nu$ to $\mu$ is defined as
            \begin{equation*}
                D_\varphi(\nu\|\mu)=\mean{\lambda}{\frac{\dd\mu}{\dd\lambda}\varphi\left(\frac{\dd\nu/\dd\lambda}{\dd\mu/\dd\lambda}\right)},
            \end{equation*}
            with the conventions $\varphi(0)\triangleq\lim_{t\downarrow0}\varphi(t)$ and $0\varphi\left(\frac{a}{0}\right)\triangleq a\lim_{t\to\infty}\frac{\varphi(t)}{t}$ for $a>0$.
        \end{definition}

        A variety of well-known divergences belong to the family of $\varphi$-Divergences, notably:
        \begin{itemize}
            \item The KL Divergence, for which $\varphi(t) = t\log t$,
            \item The reverse KL Divergence, for which $\varphi(t) = -\log t$,
            \item The Total Variation Distance, denoted as $\|\nu-\mu\|_{\sf TV}$, for which $\varphi(t) = \frac12|t-1|$,
            \item The Hellinger Divergence of order $\alpha\in(0,1)\cup(1,\infty)$, denoted as $\calH_\alpha(\nu\|\mu)$, for which $\varphi(t) = \frac{1}{\alpha-1}(t^\alpha-1)$. The case $\alpha=2$ corresponds to the $\chi^2$-Divergence, denoted as $\chi^2(\nu\|\mu)$.
        \end{itemize}
        All the divergences defined above satisfy the Data-Processing Inequality (DPI), which for a divergence $D(\cdot\|\cdot)$ and a Markov kernel $K$ reads
        \begin{equation*}
            D(\nu K\|\mu K) \leq D(\nu\|\mu)
        \end{equation*}
        for all $\nu, \mu\in\calP(\X)$. Typically, the inequality is strict and one can give a more quantitative statement. For a fixed pair $(\mu, K)$, the DPI can be tightened by considering the associated distribution-dependent SDPI constant~\cite[Definition III.1]{raginsky2016strong}~\cite[Equation 5]{polyanskiy2017strong}
        \begin{equation}\label{eq:dd_sdpi_constant}
            \eta_{D}(\mu, K) = \sup_{\substack{\nu\in\calP(\X): \\0<D(\nu\|\mu)<\infty}} \frac{D(\nu K\|\mu K)}{D(\nu\|\mu)},
        \end{equation}
        yielding the stronger inequality ${D(\nu K\|\mu K) \leq \eta_{D}(\mu, K)D(\nu\|\mu)}$ for all $\nu\in\calP(\X)$. Whenever the constraint set in~\cref{eq:dd_sdpi_constant} is empty, we adopt the convention that $\eta_D(\mu,K)=0$.

        One can similarly define a quantity that only depends on $K$~\cite[Definition III.1]{raginsky2016strong}~\cite[Equation 6]{polyanskiy2017strong}, leading to the distribution-independent SDPI constant
        \begin{equation}\label{eq:di_sdpi_constant}
            \eta_{D}(K) = \sup_{\mu\in\calP(\X)}\eta_D(\mu, K) = \sup_{\substack{\nu, \mu\in\calP(\X)\\0<D(\nu\|\mu)<\infty}} \frac{D(\nu K\|\mu K)}{D(\nu\|\mu)}.
        \end{equation}
        In the rest of the article, we use the following notation to refer to SDPI constants under various common divergences: $\eta_{\sf TV}$ for Total Variation Distance, $\eta_{\sf KL}$ for KL Divergence, $\eta_{\sf revKL}$ for reverse KL Divergence, $\eta_{\chi^2}$ for $\chi^2$-Divergence, $\eta_{\calH_\alpha}$ for Hellinger Divergence of order $\alpha$, and $\eta_\alpha$ for R\'enyi Divergence of order $\alpha$. We often refer to SDPI constants as ``contraction coefficients''.

        Let us finally comment on how R\'enyi-SDPI constants are handled for extended orders of R\'enyi Divergences:
        \begin{itemize}
            \item For $\alpha=1$, the definition $\eta_1(\mu, K)\triangleq \sup_{\substack{\nu\in\calP(\X):\\0<D_{\sf KL}(\nu\|\mu)<\infty}}\frac{D_{\sf KL}(\nu K\|\mu K)}{D_{\sf KL}(\nu\|\mu)}=\eta_{\sf KL}(\mu, K)$ is used,
            \item For $\alpha=\infty$, the definition $\eta_\infty(\mu, K)\triangleq \sup_{\substack{\nu\in\calP(\X):\\0<D_\infty(\nu\|\mu)<\infty}}\frac{D_\infty(\nu K\|\mu K)}{D_\infty(\nu\|\mu)}$ is used,
            \item For $\alpha=0$, the definition $\eta_0(\mu, K)\triangleq \sup_{\substack{\nu\in\calP(\X):\\0<D_{\sf KL}(\mu\|\nu)<\infty}}\frac{D_{\sf KL}(\mu K\|\nu K)}{D_{\sf KL}(\mu\|\nu)}= \eta_{\sf revKL}(\mu, K)$ is used.
        \end{itemize}
        \begin{remark}
            One could alternatively define $\eta_0(\mu, K)$ through $\sup_{\substack{\nu\in\calP(\X):\\0<D_0(\nu\|\mu)<\infty}}\frac{D_0(\nu K\|\mu K)}{D_0(\nu\|\mu)}$. The reverse KL Divergence is chosen here because of the skew-symmetry property of R\'enyi Divergences $\frac{1}{\alpha}D_\alpha(\nu\|\mu) = \frac1{1-\alpha}D_{1-\alpha}(\mu\|\nu)$ for $\alpha\in(0,1)$, which allows us to view the contraction of $D_\alpha(\nu\|\mu)$ as the contraction of $D_{1-\alpha}(\mu\|\nu)$. In the limit, we have $\lim_{\alpha\downarrow0}\frac{1}{\alpha}D_\alpha(\nu\|\mu)=D_{\sf KL}(\mu\|\nu)$, so that our definition retains the corresponding symmetry at the endpoints. This enables us, for example, to write $\eta_\alpha(K)=\eta_{1-\alpha}(K)$ for all $\alpha\in[0,1]$.
        \end{remark}
\section{Properties of the Contraction Coefficient}\label{sec:properties_exact_characterisations}
    \subsection{Functional Representation and Dependence on the Order}\label{sec:monotonicity}\noindent
        The order $\alpha$ is one of the defining features of R\'enyi Divergences. For fixed probability measures $\nu\ll\mu$, increasing the order places progressively greater emphasis on large values of the density ratio $\frac{\dd\nu}{\dd\mu}$. This gives the well-known inequalities
        \begin{equation*}
            D_{\sf KL}(\nu\|\mu)\leq D_\alpha(\nu\|\mu)\leq D_\infty(\nu\|\mu),\qquad\alpha\in(1,\infty),
        \end{equation*}
        so that higher orders provide increasingly stringent measures of discrepancy, with $D_\infty$ determined by the largest value of this density ratio~\cite[Theorem 3]{van2014renyi}.

        It is natural to ask whether this hierarchy is inherited by the corresponding SDPI constants. Such a conclusion does not follow directly from the monotonicity of R\'enyi Divergences. Indeed, for a fixed probability measure $\nu\ll\mu$, both the numerator and the denominator in the contraction ratio
        \begin{equation*}
            \frac{D_\alpha(\nu K\|\mu K)}{D_\alpha(\nu\|\mu)}
        \end{equation*}
        vary with the order, and their individual monotonicity yields no immediate comparison between the resulting ratios. Passing from these ratios to $\eta_\alpha(\mu, K)$ introduces a further difficulty, since the supremum over probability measures is taken separately at each order. It is therefore unclear a priori whether increasing the order makes a given pair $(\mu,K)$ appear more or less contractive. We show below that the associated SDPI constant is in fact non-decreasing in the order.

        Rather than comparing the contraction ratios directly, our approach relies on the following functional representation of $\eta_\alpha(\mu,K)$.
        \begin{proposition}\label{prop:renyi_sdpi_functional_form}
            Let $\alpha\in(1, \infty]$ and consider a pair $(\mu, K)\in\calP(\X)\times\calP(\Y|\X)$. Define $\mathcal{D}_\mu \triangleq \big\{f\in L^1(\mu): f\geq 0~\mu\text{-a.e.} \text{ and }\mean{\mu}{f}=1\big\}$, that is, every $f\in\mathcal D_\mu$ is the Radon--Nikodym derivative $f=\frac{\dd\nu}{\dd\mu}$ of a probability measure $\nu\ll\mu$. Then,
            \begin{equation*}
                \eta_\alpha(\mu, K) = \sup_{f\in\mathcal{D}_\mu\setminus\{\allones\}} \frac{\log\|K_\mu^\star f\|_{L^\alpha(\mu K)}}{\log\|f\|_{L^\alpha(\mu)}},
            \end{equation*}
            and the supremum is understood to be zero when its constraint set is empty. Equivalently, for any $f\in\mathcal{D}_\mu$,
            \begin{equation}\label{eq:renyi_sdpi_functional_inequality}
                \left\|K_\mu^\star f\right\|_{L^\alpha(\mu K)} \leq \left\|f\right\|_{L^\alpha(\mu)}^{\eta_\alpha(\mu, K)}.
            \end{equation}
            \appendixproofref{appendix:proof_renyi_sdpi_functional_form}
        \end{proposition}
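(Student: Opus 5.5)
The plan is to rewrite the numerator and the denominator of the contraction ratio in terms of $f=\frac{\dd\nu}{\dd\mu}$ and $K_\mu^\star f$, and then check that the constraint sets match.

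\textbf{Step 1: the denominator.} For $\alpha\in(1,\infty)$ and $\nu\ll\mu$, write $f=\frac{\dd\nu}{\dd\mu}\in\mathcal D_\mu$. The definition gives
\[
D_\alpha(\nu\|\mu)=\frac{1}{\alpha-1}\log\mean{\mu}{f^\alpha}=\frac{\alpha}{\alpha-1}\log\|f\|_{L^\alpha(\mu)}=\alpha'\log\|f\|_{L^\alpha(\mu)}.
\]
For $\alpha=\infty$ we have $D_\infty(\nu\|\mu)=\log\mathrm{ess}\sup_\mu f=\log\|f\|_{L^\infty(\mu)}$, which is the same formula with $\alpha'=1$.

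\textbf{Step 2: the numerator.} Since $\nu K\ll\mu K$, \cref{eq:dual_kernel_formula} gives $\frac{\dd(\nu K)}{\dd(\mu K)}=K_\mu^\star f$ $\mu K$-a.e. Hence $D_\alpha(\nu K\|\mu K)=\alpha'\log\|K_\mu^\star f\|_{L^\alpha(\mu K)}$. The factor $\alpha'$ cancels in the ratio.

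\textbf{Step 3: matching the constraint sets.} For $\alpha>1$, the condition $D_\alpha(\nu\|\mu)<\infty$ forces $\nu\ll\mu$. Conversely, on a finite space every $f\in\mathcal D_\mu$ is bounded, so $D_\alpha<\infty$ automatically. The condition $D_\alpha(\nu\|\mu)>0$ is equivalent to $\nu\neq\mu$, i.e.\ $f\neq\allones$ $\mu$-a.e. One can also see this directly: by Jensen/Hölder, $\|f\|_{L^\alpha(\mu)}\geq\|f\|_{L^1(\mu)}=1$, with equality iff $f$ is $\mu$-a.e.\ constant, hence iff $f=\allones$. So $\nu\mapsto f$ is a bijection between the constraint set of \cref{eq:dd_sdpi_constant} and $\mathcal D_\mu\setminus\{\allones\}$. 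The two suprema are therefore equal, including the empty case (when $\mu$ is a point mass) under the zero convention.

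\textbf{Step 4: the equivalent inequality.} For $f\neq\allones$ the denominator $\log\|f\|_{L^\alpha(\mu)}$ is strictly positive. The definition of the supremum then gives $\log\|K_\mu^\star f\|\leq\eta_\alpha(\mu,K)\log\|f\|$, and exponentiating yields \cref{eq:renyi_sdpi_functional_inequality}. For $f=\allones$, $K_\mu^\star\allones=\allones$ by the dual-kernel formula, so both sides equal $1$. In the empty case $\mathcal D_\mu=\{\allones\}$ and the same argument applies. Conversely, if \cref{eq:renyi_sdpi_functional_inequality} holds with some constant $c$ for all $f$, dividing by the positive denominator shows that the supremum is at most $c$. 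So $\eta_\alpha(\mu,K)$ is the smallest constant for which the inequality holds, which is the sense of ``equivalently''.

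\textbf{Main obstacle.} The only delicate point is the $\alpha=\infty$ case together with the a.e.\ identifications. Specifically, the essential supremum of $K_\mu^\star f$ must be taken under $\mu K$, which is justified by \cref{eq:dual_kernel_formula} holding $\mu K$-a.e. Apart from this, everything is bookkeeping.
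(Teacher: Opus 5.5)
Your proposal is correct and follows essentially the same route as the paper. Both use the bijection $\nu\mapsto f=\frac{\dd\nu}{\dd\mu}$ onto $\mathcal D_\mu$, with $\nu=\mu$ iff $f=\allones$, and apply \cref{eq:dual_kernel_formula} to express both divergences as log-norms (your $\alpha'$ factor cancels), so the suprema agree. Both then obtain \cref{eq:renyi_sdpi_functional_inequality} by exponentiating for $f\neq\allones$ and using $K_\mu^\star\allones=\allones$ for $f=\allones$; your explicit constraint-set check and the remark that $\eta_\alpha(\mu,K)$ is the smallest valid exponent only spell out details the paper leaves implicit.
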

        \begin{remark}
            In fact, \cref{eq:renyi_sdpi_functional_inequality} extends from probability densities to arbitrary non-negative functions (cf.~\cite[Equation 3]{abawonse2026generalized}). More precisely, for any non-negative function $f\in L^\alpha(\mu)$ satisfying $\|f\|_{L^1(\mu)}>0$ and any $\lambda\in[\eta_\alpha(\mu,K),1]$,
            \begin{equation}\label{eq:renyi_sdpi_functional_inequality_non_negative_functions}
                \left\|K_\mu^\star f\right\|_{L^\alpha(\mu K)}\leq\left\|f\right\|_{L^\alpha(\mu)}^\lambda\left\|f\right\|_{L^1(\mu)}^{1-\lambda}.
            \end{equation}
            Indeed, applying~\cref{eq:renyi_sdpi_functional_inequality} to the density $f/\|f\|_{L^1(\mu)}$ and rearranging gives~\cref{eq:renyi_sdpi_functional_inequality_non_negative_functions} for $\lambda=\eta_\alpha(\mu, K)$. Since $\|f\|_{L^\alpha(\mu)}\geq\|f\|_{L^1(\mu)}$, increasing $\lambda$ can only increase the right-hand side.
        \end{remark}
        With this functional formulation in hand, we can now show that the hierarchy of R\'enyi Divergences is inherited by the corresponding distribution-dependent SDPI constants.
        \begin{theorem}\label{thm:monotonicity}
            Consider a pair $(\mu,K)\in\calP(\X)\times\calP(\Y|\X)$ and let $1\leq\alpha\leq\beta\leq\infty$. Then, $\eta_\alpha(\mu,K)\leq\eta_\beta(\mu,K)$.
            \appendixproofref{appendix:proof_monotonicity}
        \end{theorem}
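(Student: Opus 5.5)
The plan is to prove the equivalent functional statement: for every density $f\in\mathcal D_\mu$,
\begin{equation*}
\|K_\mu^\star f\|_{L^\alpha(\mu K)}\le\|f\|_{L^\alpha(\mu)}^{\eta_\beta(\mu,K)},
\end{equation*}
which by \cref{prop:renyi_sdpi_functional_form} gives $\eta_\alpha(\mu,K)\le\eta_\beta(\mu,K)$ for $1<\alpha\le\beta$. The case $\alpha=1$ then follows by a limiting argument, writing the KL ratio as a limit of Rényi ratios as $\alpha\downarrow1$ along near-optimal $\nu$. The case $\beta=\infty$ is handled in the same way with $L^\infty$ endpoints. The point is that the bound at order $\beta$ must be transported down to order $\alpha$. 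At order $1$ we have the exact identity $\|K_\mu^\star f\|_{L^1(\mu K)}=\|f\|_{L^1(\mu)}$ from \cref{eq:markov_kernel_mean_preservation}, so one should interpolate between the endpoints $1$ and $\beta$.

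I will not use plain Hölder/log-convexity of norms. Applied naively, it either gives the trivial bound or points the wrong way. For instance, Jensen $(K_\mu^\star f)^p\le K_\mu^\star(f^p)$ with $p=\beta/\alpha$ only gives an \emph{upper} bound $\eta_\beta\le1-(1-\eta_\alpha)\frac{\alpha-1}{\beta-1}$. Instead I will run a Stein/Riesz--Thorin-type interpolation adapted to the nonlinear bound. Fix a density $f$ and a test function $g\ge0$ with $\|g\|_{L^{\alpha'}(\mu K)}=1$ that attains $\|K_\mu^\star f\|_{L^\alpha}=\ip{K_\mu^\star f}{g}_{\mu K}$. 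Set $\frac1{p(z)}=(1-z)+\frac z\beta$, choose $z_0\in(0,1]$ with $p(z_0)=\alpha$, and define the analytic families
\begin{equation*}
f_z=f^{\alpha/p(z)},\qquad g_z=g^{\alpha'(1-1/p(z))}.
\end{equation*}
Consider $\Phi(z)=\ip{K_\mu^\star f_z}{g_z}_{\mu K}$, which is a finite sum of exponentials in $z$ because the spaces are finite and hence bounded on the strip.

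On $\Re z=0$ the endpoint is $p=1$. There $|\Phi|\le\|K_\mu^\star|f_{it}|\|_{L^1}\,\|g_{it}\|_{L^\infty}\le\|f\|_{L^\alpha}^{\alpha}$, using mean preservation and $|g_{it}|\le1$. On $\Re z=1$ the endpoint is $p=\beta$. There I apply the extended inequality \cref{eq:renyi_sdpi_functional_inequality_non_negative_functions} at order $\beta$ with $\lambda=\eta_\beta$ to the nonnegative function $|f_{1+it}|$, together with positivity of $K_\mu^\star$, which gives $|K_\mu^\star h|\le K_\mu^\star|h|$. This yields
\begin{equation*}
|\Phi|\le\|f\|_{L^\alpha}^{\alpha\eta_\beta}\,\bigl\|f^{\alpha/\beta}\bigr\|_{L^1}^{1-\eta_\beta}.
\end{equation*}
Hadamard's three-lines theorem then bounds $\Phi(z_0)=\|K_\mu^\star f\|_{L^\alpha}$ by the geometric interpolation of these two bounds.

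The main obstacle is the bookkeeping of the extra factors $\|f^{\alpha/\beta}\|_{L^1}$ and $\|f\|_{L^\alpha}^{\alpha}$. The goal is for the resulting exponent to collapse to exactly $\eta_\beta$, with leftover factors at most $1$ thanks to $\mean{\mu}{f}=1$. For example, $\|f^{\alpha/\beta}\|_{L^1}\le1$ holds by Jensen since $\alpha\le\beta$. If the three-lines exponents do not match, I would try two fixes. The first is to modify the family by an $L^1$-normalising factor, e.g. replacing $f_z$ by $f_z/\|f_z\|_{L^1}^{\,\cdot}$, so that the $L^1$ endpoint contributes exactly $1$. The second is to use the homogeneity in \cref{eq:renyi_sdpi_functional_inequality_non_negative_functions} to absorb constants, and then optimise over the free scaling before letting $g$ range over the dual ball.
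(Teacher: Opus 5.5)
Your three-lines step does not close, and it fails in the bookkeeping you deferred. First, a slip: on $\Re z=1$ you have $\|f^{\alpha/\beta}\|_{L^\beta(\mu)}=\|f\|_{L^\alpha(\mu)}^{\alpha/\beta}$. So the endpoint bound is $M_1=\|f\|_{L^\alpha(\mu)}^{\alpha\eta_\beta/\beta}\|f^{\alpha/\beta}\|_{L^1(\mu)}^{1-\eta_\beta}$ (writing $\eta_\beta=\eta_\beta(\mu,K)$), while $M_0=\|f\|_{L^\alpha(\mu)}^{\alpha}$. Your interpolation point is $z_0=\frac{\beta(\alpha-1)}{\alpha(\beta-1)}$, so $1-z_0=\frac{\beta-\alpha}{\alpha(\beta-1)}$ and $(1-z_0)\alpha+z_0\frac{\alpha}{\beta}=1$. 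With these, $M_0^{1-z_0}M_1^{z_0}$ rearranges exactly into
\begin{equation*}
\|K_\mu^\star f\|_{L^\alpha(\mu K)}\le\|f\|_{L^\alpha(\mu)}^{\eta_\beta}\Bigl(\mean{\mu}{f^\alpha}^{1-z_0}\,\mean{\mu}{f^{\alpha/\beta}}^{z_0}\Bigr)^{1-\eta_\beta}.
\end{equation*}
Since $1=(1-z_0)\alpha+z_0\frac{\alpha}{\beta}$, log-convexity of $p\mapsto\mean{\mu}{f^p}$ (H\"older/Lyapunov) shows that the bracket is \emph{at least} $\mean{\mu}{f}=1$. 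Equality holds only when $f=\mathbbm{1}_A/\mu(A)$. So the leftover factor is never below one. The gain $\|f^{\alpha/\beta}\|_{L^1(\mu)}\le1$ is always outweighed by the surplus power of $\|f\|_{L^\alpha(\mu)}$ that the contraction-free $L^1$ endpoint injects. For instance, discarding $\mean{\mu}{f^{\alpha/\beta}}^{z_0}\le1$ leaves only $\eta_\alpha(\mu,K)\le1-\bigl(1-\eta_\beta(\mu,K)\bigr)\frac{\alpha-1}{\beta-1}$. That exponent interpolates between $1$ and $\eta_\beta$ rather than equalling $\eta_\beta$. The case $\beta=\infty$ fails the same way, with surplus $\bigl(\|f\|_{L^\alpha(\mu)}^{\alpha}\mu(\{f>0\})^{\alpha-1}\bigr)^{(1-\eta_\infty)/\alpha}\ge1$.

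Your proposed fixes do not change this.
\begin{itemize}
\item Multiplying the analytic family by a zero-free factor $e^{a+bz}$ with $a,b$ real leaves the three-lines estimate unchanged, since its log-modulus is affine in $\Re z$.
\item \Cref{eq:renyi_sdpi_functional_inequality_non_negative_functions} is $1$-homogeneous, so constant rescalings cancel.
\item A $z$-dependent normaliser such as $\mean{\mu}{f^{\alpha/p(z)}}$ has no lower bound on its modulus on the boundary lines and may vanish in the strip.
\end{itemize}
The underlying issue is that interpolating endpoint bounds for an arbitrary fixed $f$ mixes in the trivial exponent $1$ of the order-one endpoint.

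The paper avoids this by working at an extremal density. With $\gamma=1+\alpha/\beta'$, it takes $\max\{\eta_\alpha,\eta_\beta\}<\lambda<\eta_\gamma$ (only $\eta_\beta<\lambda$ when $\alpha=1$). It then takes a maximiser $h$ of $\log\|K_\mu^\star f\|_{L^\gamma(\mu K)}-\lambda\log\|f\|_{L^\gamma(\mu)}$ over $\mathcal D_\mu$. The first-order optimality condition of $h$, after weighted AM--GM, lower-bounds $\ip{K_\mu^\star(h^{\alpha/\beta})}{(K_\mu^\star h)^{\gamma-1}}_{\mu K}/\|K_\mu^\star h\|_{L^\gamma(\mu K)}^\gamma$ by $\|h^{\alpha/\beta}\|_{L^1(\mu)}^{1-\lambda}\bigl(\|h\|_{L^\alpha(\mu)}^\alpha/\|h\|_{L^\gamma(\mu)}^\gamma\bigr)^\lambda$. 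H\"older with exponents $\beta,\beta'$, together with the functional inequalities at orders $\beta$ and $\alpha$, bounds the same numerator above by $\|h^{\alpha/\beta}\|_{L^1(\mu)}^{1-\lambda}\|h\|_{L^\alpha(\mu)}^{\lambda\alpha}$. The $L^1$-type factor therefore cancels exactly instead of leaving a H\"older surplus. This yields only the three-order inequalities $\eta_{1+1/\beta'}\le\eta_\beta$ and $\eta_\gamma\le\max\{\eta_\alpha,\eta_\beta\}$. The paper upgrades these to full monotonicity with an open-set, maximal-interval contradiction argument. Your limiting argument for $\alpha=1$ is fine and matches the paper's.
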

        \begin{remark}
            The restriction to orders $\alpha\geq1$ in~\cref{thm:monotonicity} is necessary. On $(0,1)$, the SDPI constant $\eta_\alpha(\mu,K)$ can instead decrease with $\alpha$, as illustrated numerically in~\cref{fig:dd_bounds_on_renyi_sdpi}.
        \end{remark}
        Since~\cref{thm:monotonicity} plays a central role in the sequel, we give a high-level overview of its proof.

        Rather than comparing the contraction coefficients at two orders directly, we establish a comparison involving a third order. Fix $1<\alpha<\beta<\infty$ and set $\gamma\triangleq 1+\frac{\alpha}{\beta^\prime}\in(\alpha,\beta)$. We show in~\cref{lemma:order_insertion_monotonicity} that
        \begin{equation}\label{eq:main_text_three_order_inequality}
            \eta_\gamma(\mu,K)\leq\max\big\{\eta_\alpha(\mu,K),\eta_\beta(\mu,K)\big\}.
        \end{equation}
        To obtain this inequality, we employ a first-order optimality condition (\cref{lemma:first_order_optimality}) to derive a relation between a suitable probability density and its image under $K_\mu^\star$. Weighted AM--GM turns this relation into a lower bound (\cref{lemma:first_order_balance_identity}), while H\"older's inequality with conjugate exponents $\beta$ and $\beta^\prime$ provides a complementary upper bound on the same quantity by separating it into two norm estimates. These are controlled using~\cref{eq:renyi_sdpi_functional_inequality_non_negative_functions}: one by applying the functional inequality at order $\beta$ to a suitable power of the density, and the other by applying it directly at order $\alpha$. The choice of $\gamma$ is key: it ensures that both factors induced by H\"older's inequality reduce to quantities at order $\alpha$, allowing the lower and upper bounds to be combined.

        The three-order comparison from~\cref{eq:main_text_three_order_inequality} drives the monotonicity argument by allowing an upper bound at two orders to be propagated to a suitable order between them. A contradiction argument based on this property shows that, for every order $\beta$, the value $\eta_\beta(\mu,K)$ must bound the contraction coefficient at every smaller order.

        Beyond monotonicity, R\'enyi Divergences satisfy a further structural property with respect to the order. For fixed probability measures $\nu\ll\mu$, the function
        \begin{equation*}
            \alpha\longmapsto(\alpha-1)D_\alpha(\nu\|\mu)=\log\mean{\mu}{\left(\frac{\dd\nu}{\dd\mu}\right)^\alpha}
        \end{equation*}
        is convex~\cite[Corollary 2]{van2014renyi}. As with monotonicity, this property does not pass immediately to the corresponding contraction coefficients. Nevertheless, the three-order comparison in~\cref{eq:main_text_three_order_inequality} underlying~\cref{thm:monotonicity} can be refined to show that the scaled SDPI constant inherits the same convexity.
        \begin{theorem}\label{thm:convexity}
            Consider a pair $(\mu, K)\in\calP(\X)\times\calP(\Y|\X)$. Then, the function $\alpha\mapsto(\alpha-1)\eta_\alpha(\mu, K)$ is convex on the interval $(1, \infty)$.
            \appendixproofref{appendix:proof_convexity}
        \end{theorem}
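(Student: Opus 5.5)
The plan is to sharpen the three-order comparison~\cref{eq:main_text_three_order_inequality} into a chord inequality, and then pass from this restricted family of chords to full convexity by a maximum-principle argument. Write $g(\alpha)\triangleq(\alpha-1)\eta_\alpha(\mu,K)$. Fix $1<\alpha<\beta<\infty$ and set $\gamma=1+\alpha/\beta'$. Then $\gamma\in(\alpha,\beta)$ and $\gamma=\theta\alpha+(1-\theta)\beta$ with $\theta=\frac{\beta-\gamma}{\beta-\alpha}$. The first goal is the refined form of \cref{lemma:order_insertion_monotonicity}:
\begin{equation*}
    g(\gamma)\leq\theta\, g(\alpha)+(1-\theta)\, g(\beta).
\end{equation*}

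To obtain this I would rerun the argument behind~\cref{eq:main_text_three_order_inequality} while keeping track of exponents instead of bounding them by a maximum. Take a near-optimal density $f$ for $\eta_\gamma(\mu,K)$ and use \cref{lemma:first_order_optimality} to get a stationarity relation between $f$ and $K_\mu^\star f$. \Cref{lemma:first_order_balance_identity} (weighted AM--GM) then gives a lower bound of the form $\|f\|_{L^\gamma(\mu)}^{c\,(\gamma-1)\eta_\gamma}$ on a mixed quantity $\langle K_\mu^\star f^{a},(K_\mu^\star f)^{b}\rangle_{\mu K}$. For the upper bound, split this quantity with H\"older at exponents $(\beta,\beta')$ and control the two factors by~\cref{eq:renyi_sdpi_functional_inequality_non_negative_functions}: one at order $\beta$ applied to a power of $f$, the other at order $\alpha$ applied to $f$ itself.

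The choice of $\gamma$ makes both factors powers of $\|f\|_{L^\alpha(\mu)}$-type quantities. Rather than replacing $\eta_\alpha$ and $\eta_\beta$ by their maximum, I would keep them separate, so the resulting exponent is the affine combination $(\alpha-1)\eta_\alpha\cdot(\text{weight})+(\beta-1)\eta_\beta\cdot(\text{weight})$. Comparing exponents of $\log\|f\|$, which is positive since $f\neq\allones$, should yield exactly the weights $\theta$ and $1-\theta$. This bookkeeping is where I expect the algebra to need care, but no new idea.

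Next, I would upgrade this family of chord inequalities to convexity. The key point is that, for any target point $\gamma$ and any $\alpha<\gamma$ close enough to $\gamma$, the choice $\beta=\alpha/(\alpha-\gamma+1)$ lies in $(\gamma,\gamma+\delta)$, and $\beta\downarrow\gamma$ as $\alpha\uparrow\gamma$.

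Suppose $g$ is not convex. Then some chord over $[a,b]$ lies strictly below $g$ somewhere; let $h=g-\ell$, where $\ell$ is that chord. Assuming $h$ attains its maximum $M>0$ on $[a,b]$, let $s$ be the leftmost maximiser, and pick $\alpha\in(a,s)$ close to $s$ so that the matching $\beta$ lies in $(s,b]$. Then $h(\alpha)<M$ and $h(\beta)\le M$, and since $\ell$ is affine the chord inequality transfers to $h$. It gives $h(s)<M$, which is a contradiction.

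The main obstacle is the regularity needed to say the maximum is attained. We know $\eta_\alpha$ is non-decreasing by \cref{thm:monotonicity}, so $g$ is non-decreasing. Also, $g$ is lower semicontinuous, being a supremum of functions of $\alpha$ that are continuous for each fixed $\nu$. For the maximum argument I need upper semicontinuity. I would try first to obtain it directly: for $\alpha_n\downarrow\alpha$, take near-maximisers $\nu_n$, extract a convergent subsequence in the compact simplex, and control possible degeneration $\nu_n\to\mu$ via second-order expansion, where the ratio tends to a $\chi^2$-type quantity at most $\eta_\alpha$.

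If that fails, I would instead run the contradiction argument with $\sup h$ over $[a,b]$, using monotonicity of $g$ to pick $s$ as the infimum of near-maximisers. The chord inequality with strict slack $h(\alpha)<M-\epsilon$ should still absorb the small defects.
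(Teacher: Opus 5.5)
Your chord inequality is the paper's own: with $\theta=\frac{\beta-\gamma}{\beta-\alpha}=\frac1{\beta^\prime}$ it is \cref{lemma:order_insertion_convexity} multiplied by $\gamma-1$. Your account of its proof, however, leaves out a needed input. \Cref{lemma:first_order_optimality} applies to a maximiser of $\Phi_{\gamma,\lambda}$, not to a near-optimal density for $\eta_\gamma(\mu,K)$, so the argument has to go by contradiction. Set $\lambda\triangleq\frac{(\alpha-1)\eta_\alpha(\mu,K)+\eta_\beta(\mu,K)}{\alpha}$, assume $\eta_\gamma(\mu,K)>\lambda$, and take the maximiser $h$ from \cref{lemma:first_order_balance_identity}, which satisfies $\|K_\mu^\star h\|_{L^\gamma(\mu K)}>\|h\|_{L^\gamma(\mu)}^\lambda$. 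The AM--GM lower bound and the H\"older upper bound then involve three different quantities: $\|h^{\alpha/\beta}\|_{L^1(\mu)}$, $\|h\|_{L^\alpha(\mu)}$ and $\|h\|_{L^\gamma(\mu)}$. Once $\eta_\alpha$ and $\eta_\beta$ are kept separate, these do not cancel, and what remains is
\begin{equation*}
    \left(\frac{\|K_\mu^\star h\|_{L^\gamma(\mu K)}}{\|h\|_{L^\gamma(\mu)}^\lambda}\right)^\gamma\leq\Bigl(\|h^{\alpha/\beta}\|_{L^1(\mu)}^{\alpha-1}\|h\|_{L^\alpha(\mu)}^{\alpha(1-\alpha/\beta)}\Bigr)^{-\frac{\eta_\beta(\mu,K)-\eta_\alpha(\mu,K)}{\alpha}}.
\end{equation*}
To close, you need two facts. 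The base is at least one, by H\"older with exponents $\frac{\gamma-1}{\alpha-1}$ and $\frac{\gamma-1}{\gamma-\alpha}$ applied to $1=\langle h,\allones\rangle_\mu^{\gamma-1}$. The exponent is non-positive, because $\eta_\alpha(\mu,K)\leq\eta_\beta(\mu,K)$ by \cref{thm:monotonicity}. So monotonicity is an input to the chord inequality itself, not only to your semicontinuity discussion. Without it, the scheme only works with a common exponent in both functional inequalities, which gives $\eta_\gamma\leq\max\{\eta_\alpha,\eta_\beta\}$ (\cref{lemma:order_insertion_monotonicity}). This is more than comparing powers of a single $\log\|f\|$.

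Your passage from chords to convexity is valid but differs from the paper's. The leftmost-maximiser argument works once your $g$ (the paper's $F$) is upper semicontinuous, and your sketch for that goes through on a finite alphabet. Joint continuity of the ratio handles limits $f^\star\neq\allones$. For $f^\star=\allones$, the expansion $\log\|\allones+h\|_{L^{\theta}(\mu)}=\frac{\theta-1}{2}\|h\|_{L^2(\mu)}^2\bigl(1+O(\|h\|_{L^\infty(\mu)})\bigr)$ is uniform for $\theta$ in compact subsets of $(1,\infty)$. The limiting ratio is then at most $\eta_{\chi^2}(\mu,K)\leq\eta_\alpha(\mu,K)$ by \cref{thm:local_chi_squared_lb_on_renyi_sdpi}, as in the proof of \cref{thm:extremal_functions}. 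Your fallback with $\sup h$ is therefore unnecessary.

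The paper needs only the lower semicontinuity you already noted. It makes $\{\theta\in(\alpha,\beta):F(\theta)>\ell(\theta)\}$ open, so one can take a connected component $(a,b)$, at whose endpoints $F\leq\ell$. Applying the chord inequality to the endpoint pair itself, with $\gamma=1+a/b^\prime\in(a,b)$, gives $F(\gamma)\leq\ell(\gamma)$, a contradiction. No attainment, compactness or $\chi^2$ comparison is needed, and continuity on $(1,\infty)$ follows afterwards from convexity (\cref{corollary:continuity_in_order}). Your route costs an extra semicontinuity lemma, but it yields continuity before, and independently of, convexity.
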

        \begin{remark}
            The factor $\alpha-1$ in~\cref{thm:convexity} is essential: the map $\alpha\mapsto\eta_\alpha(\mu,K)$ need not itself be convex on $(1,\infty)$, as illustrated numerically in~\cref{fig:dd_bounds_on_renyi_sdpi}.
        \end{remark}
        The key difference from the monotonicity argument is how the contraction coefficients at the two comparison orders are used in the three-order inequality. For monotonicity, we use only a common upper bound on $\eta_\alpha(\mu,K)$ and $\eta_\beta(\mu,K)$, and therefore discard their individual values. Retaining these values separately in the same comparison yields the sharper estimate
        \begin{equation}\label{eq:main_text_refined_three_order_inequality}
            \eta_\gamma(\mu,K)\leq\frac{(\alpha-1)\eta_\alpha(\mu,K)+\eta_\beta(\mu,K)}{\alpha},
        \end{equation}
        where $1<\alpha<\beta<\infty$ and $\gamma\triangleq1+\frac{\alpha}{\beta^\prime}$, as established in~\cref{lemma:order_insertion_convexity}. To make the connection with convexity explicit, define $F(\theta)\triangleq(\theta-1)\eta_\theta(\mu,K)$. Then~\cref{eq:main_text_refined_three_order_inequality} reads
        \begin{equation*}
            F(\gamma)\leq \frac1{\beta^\prime} F(\alpha) + \frac1\beta F(\beta).
        \end{equation*}
        Since $\gamma=\frac1{\beta^\prime}\alpha+\frac1\beta\beta$, this means that there is a suitable intermediate order at which $F$ does not exceed the affine interpolation of its endpoint values. A contradiction argument based on this property shows that the same bound must in fact hold at every intermediate order, which is precisely convexity.

        With~\cref{thm:monotonicity,thm:convexity} in hand, we now collect their main consequences. We begin with the distribution-dependent setting and establish continuity in the order.
        \begin{corollary}\label{corollary:continuity_in_order}
            Consider a pair $(\mu,K)\in\calP(\X)\times\calP(\Y|\X)$. Then, the function $\alpha\mapsto\eta_\alpha(\mu,K)$ is continuous on $[1,\infty]$.
            \appendixproofref{appendix:proof_continuity_in_order}
        \end{corollary}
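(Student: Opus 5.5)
Write $g(\alpha)\triangleq\eta_\alpha(\mu,K)$. The plan is to prove continuity separately on the open interval $(1,\infty)$, at the endpoint $\alpha=1$, and at $\alpha=\infty$. Throughout we use that $g$ is non-decreasing on $[1,\infty]$ by \cref{thm:monotonicity}. We may assume $\mu$ is not a point mass; otherwise the constraint set is empty and $g\equiv0$.

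\textbf{Interior.} By \cref{thm:convexity}, $F(\alpha)\triangleq(\alpha-1)g(\alpha)$ is convex on $(1,\infty)$. It is finite there, since $0\leq g\leq1$ by the data-processing inequality. A finite convex function on an open interval is continuous. Dividing by the continuous, non-vanishing factor $\alpha-1$ shows that $g$ is continuous on $(1,\infty)$.

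\textbf{At $\alpha=1$.} Monotonicity gives $g(1)\leq\lim_{\alpha\downarrow1}g(\alpha)\triangleq L$, so it suffices to show $L\leq g(1)$. Convexity alone does not suffice here: $F(\alpha)\to0$ as $\alpha\downarrow1$ for any bounded $g$. We therefore argue directly, using finiteness of $\X$. Restrict to $\nu\ll\mu$ with $\nu\neq\mu$. For each such $\nu$, the ratio $D_\alpha(\nu K\|\mu K)/D_\alpha(\nu\|\mu)$ is jointly continuous in $(\alpha,\nu)$ on $[1,\infty)$, by continuity of Rényi divergences in the order. Continuity can fail only as $\nu\to\mu$, where the ratio degenerates to $0/0$. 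To handle this, I would use a second-order expansion uniform in $\alpha$ near $1$. For $\nu=\mu(1+\varepsilon h)$ with $\mathbb{E}_\mu h=0$, we have $D_\alpha(\nu\|\mu)=\frac{\alpha}{2}\varepsilon^2\|h\|^2_{L^2(\mu)}+O(\varepsilon^3)$, with the $O(\cdot)$ uniform for $\alpha\in[1,2]$ and $h$ on the unit sphere. Hence near $\mu$ the ratio is bounded by $\eta_{\chi^2}(\mu,K)+o(1)$, uniformly in $\alpha$. Since $\eta_{\chi^2}(\mu,K)\leq\eta_{\sf KL}(\mu,K)=g(1)$ by the standard comparison, this local region causes no jump.

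Away from $\mu$, that is on $\{\nu: \|\nu-\mu\|\geq\delta\}$, the set is compact and $D_1(\nu\|\mu)$ is bounded below. Joint continuity then gives uniform convergence of the ratio as $\alpha\downarrow1$. Combining the two regions yields $\limsup_{\alpha\downarrow1}g(\alpha)\leq g(1)$.

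\textbf{At $\alpha=\infty$.} Monotonicity gives $\lim_{\alpha\to\infty}g(\alpha)\leq g(\infty)$, so we need the reverse inequality. Fix $\nu\ll\mu$ with $0<D_\infty(\nu\|\mu)$. Then $D_\alpha(\nu K\|\mu K)\to D_\infty(\nu K\|\mu K)$ and $D_\alpha(\nu\|\mu)\to D_\infty(\nu\|\mu)$ as $\alpha\to\infty$. Hence the ratio at $\nu$ is the limit of ratios that are each at most $g(\alpha)$. Taking the supremum over $\nu$ gives $g(\infty)\leq\liminf_{\alpha\to\infty} g(\alpha)$. The same lower-semicontinuity argument also reproves the easy direction at $\alpha=1$.

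\textbf{Main obstacle.} The hardest step is the right-continuity at $\alpha=1$, specifically the uniform control of the ratio as $\nu\to\mu$. I expect to need the Taylor expansion of $\log\|f\|_{L^\alpha}$ together with the fact that the $\chi^2$ coefficient lower-bounds $\eta_{\sf KL}$.
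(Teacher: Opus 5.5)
Your proof is correct. On $(1,\infty)$ and at $\alpha=\infty$ it coincides with the paper's argument: convexity of $(\alpha-1)\eta_\alpha(\mu,K)$ for the interior, and pointwise convergence of the ratios plus monotonicity at infinity. The two proofs differ only at the endpoint $\alpha=1$.

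\textbf{The paper's route at $\alpha=1$.} The paper avoids compactness and local analysis entirely. Writing $m_\mu\triangleq\min_{x\in\supp(\mu)}\mu(x)$, it observes that both $f=\frac{\dd\nu}{\dd\mu}$ and $K_\mu^\star f$ take values in $[0,m_\mu^{-1}]$. Comparing second derivatives of $t\mapsto\frac{t^\alpha-\alpha t+\alpha-1}{\alpha-1}$ and $t\mapsto t\log t-t+1$ on that range gives the pointwise bound $D_\alpha\le\calH_\alpha\le\alpha m_\mu^{1-\alpha}D_{\sf KL}$ at the output. Chaining this with the KL-SDPI and $D_{\sf KL}\le D_\alpha$ at the input yields the explicit sandwich $\eta_1(\mu,K)\le\eta_\alpha(\mu,K)\le\alpha m_\mu^{1-\alpha}\eta_1(\mu,K)$. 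This is shorter and quantitative, with an explicit $O(\alpha-1)$ rate. It needs neither $\eta_{\chi^2}\le\eta_{\sf KL}$ nor uniform Taylor remainders.

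\textbf{What your route buys.} Your near/far decomposition is softer and gives no rate. It has the merit of explaining why no jump can occur: the only degenerate regime is $\nu\to\mu$, and there the ratios are governed by $\eta_{\chi^2}(\mu,K)\le\eta_{\sf KL}(\mu,K)$. The same template would transfer to other parametrised families of divergences with local $\chi^2$ behaviour.

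\textbf{Two points to make explicit.} First, joint continuity of $(\alpha,\nu)\mapsto D_\alpha(\nu\|\mu)$ on $[1,2]\times\calP(\supp(\mu))$ does not follow from continuity in the order for each fixed $\nu$, as your wording suggests. It does hold on finite alphabets. Set $N(\theta,\nu)\triangleq\log\sum_x\nu(x)^\theta\mu(x)^{1-\theta}$; then $N(1,\nu)=0$ and $D_\alpha(\nu\|\mu)=\int_0^1(\partial_\theta N)(1+s(\alpha-1),\nu)\,\dd{s}$, where $\partial_\theta N$ is jointly continuous for $\theta\geq1$ because $\nu(x)^\theta\log\nu(x)\to0$ as $\nu(x)\to0$. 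Second, the $O(\varepsilon^3)$ remainder is uniform over $\alpha\in[1,2]$ and over unit-norm mean-zero $h$, for two reasons. One is that $\|h\|_{L^\infty(\mu)}\le m_\mu^{-1/2}$. The other is that the cubic term of $\mean{\mu}{(1+\varepsilon h)^\alpha}$ carries the factor $\alpha(\alpha-1)(\alpha-2)$, which cancels the prefactor $\frac1{\alpha-1}$.
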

        Having described the dependence on the order for a fixed reference measure, we now turn to the distribution-independent setting. The monotonicity and scaled convexity established above are preserved under the additional optimisation over $\mu$, leading to the following corollary.
        \begin{corollary}\label{corollary:distribution_independent_order_properties}
            Consider a Markov kernel $K\in\calP(\Y|\X)$. Then,
            \begin{enumerate}[a)]
                \item The function $\alpha\mapsto\eta_\alpha(K)$ is non-decreasing on $[1,\infty]$,
                \item The function $\alpha\mapsto(\alpha-1)\eta_\alpha(K)$ is convex on $(1,\infty)$,
                \item The function $\alpha\mapsto\eta_\alpha(K)$ is continuous on the interval $(1,\infty]$.
            \end{enumerate}
            \appendixproofref{appendix:proof_distribution_independent_order_properties}
        \end{corollary}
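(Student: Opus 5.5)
Parts a) and b) follow by passing the pointwise-in-$\mu$ statements through the supremum in~\cref{eq:di_sdpi_constant}. For a), fix $1\leq\alpha\leq\beta\leq\infty$. By~\cref{thm:monotonicity}, $\eta_\alpha(\mu,K)\leq\eta_\beta(\mu,K)\leq\eta_\beta(K)$ for every $\mu\in\calP(\X)$, and taking the supremum over $\mu$ on the left gives $\eta_\alpha(K)\leq\eta_\beta(K)$. For b), observe that $(\alpha-1)\eta_\alpha(K)=\sup_{\mu}(\alpha-1)\eta_\alpha(\mu,K)$ on $(1,\infty)$, because the factor $\alpha-1$ is positive and does not depend on $\mu$. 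By~\cref{thm:convexity}, each map $\alpha\mapsto(\alpha-1)\eta_\alpha(\mu,K)$ is convex on $(1,\infty)$. These maps are bounded by $\alpha-1$, since $\eta_\alpha\leq1$ by the DPI, so their pointwise supremum is finite and convex.

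For c), continuity on the open interval $(1,\infty)$ is immediate: $F(\alpha)\triangleq(\alpha-1)\eta_\alpha(K)$ is a finite convex function on an open interval and hence continuous, and dividing by $\alpha-1>0$ preserves continuity.

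The main obstacle is continuity at $\alpha=\infty$, i.e.\ showing $\lim_{\alpha\to\infty}\eta_\alpha(K)=\eta_\infty(K)$. By a), the limit exists and is at most $\eta_\infty(K)$, so it remains to prove the reverse inequality. I would use lower semicontinuity of a supremum of continuous functions. By~\cref{corollary:continuity_in_order}, each map $\alpha\mapsto\eta_\alpha(\mu,K)$ is continuous at $\infty$. For any $\varepsilon>0$, first pick $\mu$ with $\eta_\infty(\mu,K)\geq\eta_\infty(K)-\varepsilon$. Then pick $\alpha_0$ such that $\eta_\alpha(\mu,K)\geq\eta_\infty(\mu,K)-\varepsilon$ for all $\alpha\geq\alpha_0$. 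For such $\alpha$ this gives $\eta_\alpha(K)\geq\eta_\alpha(\mu,K)\geq\eta_\infty(K)-2\varepsilon$, so $\lim_{\alpha\to\infty}\eta_\alpha(K)\geq\eta_\infty(K)$. Combined with the upper bound from monotonicity, this yields continuity at $\infty$.

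The same argument shows that $\alpha\mapsto\eta_\alpha(K)$ is lower semicontinuous at $\alpha=1$, but monotonicity bounds a nondecreasing function from above only from the right. Right-continuity at $1$ would therefore require $\lim_{\alpha\downarrow1}\sup_\mu\eta_\alpha(\mu,K)\leq\sup_\mu\eta_1(\mu,K)$, which requires uniformity in $\mu$ and may fail. This is consistent with the statement restricting c) to $(1,\infty]$, so I would not attempt $\alpha=1$.
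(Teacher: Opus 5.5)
Your proposal is correct and follows the paper's proof essentially step for step. Parts a) and b) come from taking pointwise suprema over $\mu$ of the monotone and scaled-convex maps in \cref{thm:monotonicity,thm:convexity}, and part c) comes from continuity of a finite convex function on $(1,\infty)$ together with the interchange $\sup_\mu\lim_{\alpha\to\infty}\eta_\alpha(\mu,K)\leq\lim_{\alpha\to\infty}\sup_\mu\eta_\alpha(\mu,K)$ at infinity, which your $\varepsilon$-argument spells out; your caution about $\alpha=1$ is also borne out by the paper's $Z$-channel example (\cref{example:z_channel}), where right-continuity at $1$ does fail.
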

        \begin{remark}
            The exclusion of the order one from Part c) of~\cref{corollary:distribution_independent_order_properties} is essential. An example of discontinuity is given by the $Z$-channel $K = \big[\begin{smallmatrix}
                1 & 0\\
                1-\lambda & \lambda
            \end{smallmatrix}\big]$ with parameter $\lambda\in(0,1)$. One has $\eta_1(K)=\lambda<1$, whereas $\eta_\alpha(K)=1$ for every $\alpha\in(1,\infty]$, as shown in~\cref{example:z_channel}.
        \end{remark}

    \subsection{Properties of Extremal Distributions}\label{sec:properties}\noindent
        Let us first focus on the distribution-dependent case, where a reference measure $\mu$ is fixed. Our objective is to study properties of the probability measures that achieve the optimisation in~\cref{eq:dd_sdpi_constant}. We refer to any probability measure attaining the supremum defining $\eta_\alpha(\mu,K)$ as an extremal distribution. The interest in analysing such properties stems from the literature on log-Sobolev-type inequalities, where ``extremal functions'' are studied and shown to satisfy particular variational equations~\cite[Section 6]{bobkov2006modified}. This approach has similarly been applied to characterise the extremal distributions of $\eta_\varphi(\mu, K)$ for a restricted family of $\varphi$-Divergences~\cite[Theorem III.12]{raginsky2016strong}. In~\cite[Lemma 4.7]{caputo2025entropy}, conditions on $(\mu, K)$ are given that guarantee the extremal distributions for $\eta_{\sf KL}(\mu, K)$ to have full support.

        For the contraction of R\'enyi Divergences, the following can be established about extremal distributions:
        \begin{theorem}\label{thm:extremal_functions}
            Let $\alpha\in(1,\infty)$ and consider a pair $(\mu,K)\in\calP(\X)\times\calP(\Y|\X)$. Then, exactly one of the following holds:
            \begin{enumerate}[a)]
                \item $\eta_\alpha(\mu,K)=\eta_{\chi^2}(\mu,K)$,
                \item $\eta_{\chi^2}(\mu,K)<\eta_\alpha(\mu,K)<1$, and the supremum defining $\eta_\alpha(\mu,K)$ is attained.
            \end{enumerate}
            Moreover, in either case, every extremal distribution $\nu^\star$ for $\eta_\alpha(\mu,K)$, with corresponding density $f^\star\triangleq\frac{\dd\nu^\star}{\dd\mu}$, satisfies
            \begin{equation}\label{eq:renyi_sdpi_foc}
                \frac{K\big((K_\mu^\star f^\star)^{\alpha-1}\big)}{\|K_\mu^\star f^\star\|_{L^\alpha(\mu K)}^\alpha}\leq\eta_\alpha(\mu,K)\frac{(f^\star)^{\alpha-1}}{\|f^\star\|_{L^\alpha(\mu)}^\alpha}+\bigl(1-\eta_\alpha(\mu,K)\bigr)\allones\qquad\mu\text{-almost everywhere},
            \end{equation}
            with equality on $\supp(\nu^\star)$.
            \appendixproofref{appendix:proof_extremal_functions}
        \end{theorem}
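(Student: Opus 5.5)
The plan is to work in the functional form of \cref{prop:renyi_sdpi_functional_form}, so that $\eta_\alpha(\mu,K)=\sup_{f\in\mathcal D_\mu\setminus\{\allones\}}R(f)$ with
\begin{equation*}
R(f)\triangleq\frac{\log\|K_\mu^\star f\|_{L^\alpha(\mu K)}}{\log\|f\|_{L^\alpha(\mu)}}.
\end{equation*}
Since $\X$ is finite, $\mathcal D_\mu$ is a compact simplex. The only points where $R$ may fail to be continuous are at $f=\allones$, where numerator and denominator both vanish. The dichotomy comes from analysing $R$ near $\allones$.

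For the local analysis, write $f=\allones+\epsilon g$ with $\mean{\mu}{g}=0$. Expanding, $\log\|f\|_{L^\alpha(\mu)}=\frac{\alpha-1}{2}\epsilon^2\|g\|_{L^2(\mu)}^2+O(\epsilon^3)$. Likewise $\log\|K_\mu^\star f\|_{L^\alpha(\mu K)}=\frac{\alpha-1}{2}\epsilon^2\|K_\mu^\star g\|_{L^2(\mu K)}^2+O(\epsilon^3)$. The ratio therefore tends to $\|K_\mu^\star g\|^2/\|g\|^2\le\eta_{\chi^2}(\mu,K)$, uniformly in the direction $g$, and
\begin{equation*}
\limsup_{f\to\allones}R(f)\le\eta_{\chi^2}(\mu,K).
\end{equation*}
Combining this with the known lower bound $\eta_\alpha(\mu,K)\ge\eta_{\chi^2}(\mu,K)$ (Jin et al.) gives two cases. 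Either $\eta_\alpha=\eta_{\chi^2}$, which is case a). Or $\eta_\alpha>\eta_{\chi^2}$, in which case a maximising sequence stays away from $\allones$, and by compactness and continuity of $R$ on $\mathcal D_\mu\setminus B(\allones,\delta)$ the supremum is attained.

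It remains to show $\eta_\alpha<1$ in case b). Suppose an attaining $f^\star\neq\allones$ had $R(f^\star)=1$. Then $\|K_\mu^\star f^\star\|_{L^\alpha}=\|f^\star\|_{L^\alpha}$, and we need to rule this out or show it forces the trivial situation. This is the step I expect to be delicate. I would try to show that equality in the data-processing Jensen step (conditional Jensen for the strictly convex $t^\alpha$ under $K_\mu^\star$) forces $f^\star$ to be constant on the $x$'s connected through common outputs. Then any $f$ of the form $\allones+\epsilon g$, with $g$ built from the same block structure, would also give ratio $1$, which implies $\eta_{\chi^2}=1$ and contradicts $\eta_{\chi^2}<\eta_\alpha$. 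The two cases are mutually exclusive by definition.

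For the first-order condition, I would take an extremal $f^\star$ and the convex-combination perturbations $f_t=(1-t)f^\star+t\delta_x/\mu(x)$ for $x\in\supp\mu$, $t\in[0,1)$, which remain in $\mathcal D_\mu$. Maximality of $R$ at $f^\star$ gives $\frac{d}{dt}\big|_{t=0^+}\bigl[N(f_t)-\eta D(f_t)\bigr]\le0$, where $N=\log\|K^\star_\mu\cdot\|$, $D=\log\|\cdot\|$ and $\eta=\eta_\alpha(\mu,K)$. Since $N(f^\star)=\eta D(f^\star)$, this amounts to maximising $N-\eta D$ with value $0$.

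To compute the derivative, use $\frac{d}{dt}\log\|f_t\|_{L^\alpha}=\frac{\mean{\mu}{f^{\alpha-1}(h-f)}}{\|f\|_\alpha^\alpha}$ with $h=\delta_x/\mu(x)$. Use the adjoint relation \cref{eq:adjoint_relation} for the $K_\mu^\star$ term, together with $\mean{\mu}{f^{\star\,\alpha-1}f^\star}=\|f^\star\|^\alpha_\alpha$. This yields exactly \cref{eq:renyi_sdpi_foc} at $x$, with the constant $1-\eta$ arising from the $-f$ parts. When $x\in\supp(\nu^\star)$, two-sided perturbations are allowed (small negative $t$), which gives equality there.

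In case a), extremal distributions, if any exist, are still maximisers away from $\allones$, so the same computation applies. I would also note that at $\alpha<2$ the map $t\mapsto t^{\alpha-1}$ is still differentiable at positive $f$; zero entries only ever appear via one-sided perturbations, so no issue arises.
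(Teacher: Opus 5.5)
Your proposal is correct and follows essentially the same route as the paper. The dichotomy and attainment come from a uniform second-order expansion near $\allones$ plus compactness. The bound $\eta_\alpha(\mu,K)<1$ comes from equality in the conditional Jensen step, via strict convexity of $t\mapsto t^\alpha$, which forces $\eta_{\chi^2}(\mu,K)=1$. The variational inequality comes from convex-combination perturbations of a maximiser of $\log\|K_\mu^\star f\|_{L^\alpha(\mu K)}-\eta_\alpha(\mu,K)\log\|f\|_{L^\alpha(\mu)}$.

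There are only two small differences. For $\eta_\alpha(\mu,K)<1$, the paper plugs $f^\star-\allones$ directly into the $\chi^2$ quotient, using $(K_\mu^\star(f^\star-\allones))^2=K_\mu^\star((f^\star-\allones)^2)$, rather than re-expanding along block-constant directions. For equality on $\supp(\nu^\star)$, the paper uses complementary slackness: the first-order slack is non-negative and integrates to zero against $f^\star$. Your two-sided perturbations are an equally valid way to get this.
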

        The variational condition in~\cref{eq:renyi_sdpi_foc} also constrains the support of extremal distributions. For $\alpha=2$, it was shown in~\cite[Theorem 4]{jin2024properties} that whenever $\eta_{\chi^2}(\mu, K)<1$, the optimisation defining $\eta_2(\mu,K)$ may be restricted to probability measures whose support is strictly contained in $\supp(\mu)$. We extend this property to orders $\alpha\in[2,\infty)$ and strengthen it to a statement about every extremal distribution.
        \begin{theorem}\label{thm:boundary_nu}
            Let $\alpha\in[2,\infty)$ and consider a pair $(\mu,K)\in\calP(\X)\times\calP(\Y|\X)$ with $|\supp(\mu)|\geq2$. Then
            \begin{equation}\label{eq:boundary_nu_maximum_representation}
                \eta_\alpha(\mu,K)=\max_{\substack{\nu\in\calP(\X):\\\supp(\nu)\subsetneq\supp(\mu)}}\frac{D_\alpha(\nu K\|\mu K)}{D_\alpha(\nu\|\mu)},
            \end{equation}
            where the maximum is understood to be zero when its constraint set is empty.
            Moreover, if $0<\eta_\alpha(\mu,K)<1$, then every extremal distribution $\nu^\star$ for $\eta_\alpha(\mu,K)$ satisfies $\supp(\nu^\star)\subsetneq\supp(\mu)$.
            \appendixproofref{appendix:proof_boundary_nu}
        \end{theorem}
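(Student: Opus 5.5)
The plan is to reduce everything to the \emph{boundary set}
\[
\mathcal{B}\triangleq\{\nu\in\calP(\X):\supp(\nu)\subsetneq\supp(\mu)\}.
\]
There are three steps: show the maximum in \cref{eq:boundary_nu_maximum_representation} is attained, show it equals $\eta_\alpha(\mu,K)$, and then handle the ``moreover'' statement using the first-order condition of \cref{thm:extremal_functions}.

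\textbf{Step 1: attainment.} Every $\nu\in\mathcal B$ vanishes at some $x\in\supp(\mu)$. Hence $\|f\|_{L^\alpha(\mu)}^\alpha\geq 1/(1-\min_x\mu(x))$ for $f=\rnd$, by Jensen on the reduced support. So $D_\alpha(\nu\|\mu)$ is bounded away from $0$ on the compact set $\mathcal B$. Since $\nu\ll\mu$ on $\mathcal B$, it is also finite there. The ratio is therefore continuous on $\mathcal B$, and the maximum exists. The right-hand side of \cref{eq:boundary_nu_maximum_representation} is trivially at most $\eta_\alpha(\mu,K)$.

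\textbf{Step 2: the reverse inequality via rays.} Fix $\nu\ll\mu$ with full support on $\supp(\mu)$ and write $\rnd=\allones+g$ with $\mean{\mu}{g}=0$. Let $f_t=\allones+tg$ for $t\in(0,t_{\max}]$, where $t_{\max}$ is the first time $f_t$ vanishes somewhere on $\supp(\mu)$; then $f_{t_{\max}}$ is the density of a measure in $\mathcal B$. Using \cref{eq:dual_kernel_formula}, $K_\mu^\star f_t=\allones+t\,K_\mu^\star g$. The goal is to show that
\[
R(t)\triangleq\frac{\log\mean{\mu K}{(\allones+tK_\mu^\star g)^\alpha}}{\log\mean{\mu}{(\allones+tg)^\alpha}}
\]
is non-decreasing on $(0,t_{\max}]$ when $\alpha\geq2$. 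Given this, the ratio at $\nu$ (which is $R(1)$) is at most $R(t_{\max})$, a ratio attained in $\mathcal B$, and taking suprema over $\nu$ gives equality. This monotonicity is where I expect the main obstacle.

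My first attempt at it is to write $\log\mean{}{(1+th)^\alpha}=\Phi_h(t)$ and compare the logarithmic derivatives $t\Phi'_{h}/\Phi_{h}$ for $h=K_\mu^\star g$ and $h=g$. The key facts I would try to exploit are:
\begin{enumerate}[i)]
\item $u\mapsto(1+u)^{\alpha}$ has a convex second derivative exactly when $\alpha\geq2$, which is where the restriction on the order should enter;
\item $K_\mu^\star$ is a conditional expectation, so Jensen applies.
\end{enumerate}
If this pointwise-in-$t$ comparison proves too delicate, the fallback is the case split of \cref{thm:extremal_functions}. In case a), I would need only $\eta_{\chi^2}(\mu,K)\leq$ the boundary maximum. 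For this I would use the second-order expansion near $t=0$, $R(t)\to\chi^2(\nu K\|\mu K)/\chi^2(\nu\|\mu)$, together with monotonicity of $R$ for a single top singular direction $g$, which is a simpler one-direction statement. In case b) a maximiser exists, and it lies in $\mathcal B$ by Step 3.

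\textbf{Step 3: every extremal distribution lies in $\mathcal B$.} Suppose $0<\eta_\alpha(\mu,K)<1$ and an extremal $\nu^\star$ had $\supp(\nu^\star)=\supp(\mu)$. Then \cref{eq:renyi_sdpi_foc} holds with equality $\mu$-a.e. I would integrate this identity against $\mu$ and against $f^\star\,\dd\mu$. Integrating against $\mu$ and using \cref{eq:adjoint_relation} gives
\[
\|K_\mu^\star f^\star\|_{\alpha-1}^{\alpha-1}/\|K_\mu^\star f^\star\|_\alpha^\alpha=\eta\,\|f^\star\|_{\alpha-1}^{\alpha-1}/\|f^\star\|_\alpha^\alpha+(1-\eta).
\]
Integrating against $f^\star\,\dd\mu$ gives a trivially consistent identity. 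The first identity is then played off against the functional inequality of \cref{prop:renyi_sdpi_functional_form}, applied at order $\alpha-1\geq1$ via \cref{thm:monotonicity}, together with weighted AM--GM, mirroring \cref{lemma:first_order_balance_identity}. The aim is a strict inequality contradicting $\eta_\alpha<1$. Alternatively, a full-support extremal point is an interior critical point of the ray function $R$, which contradicts the strict monotonicity obtained in Step 2.
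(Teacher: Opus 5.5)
\textbf{The key claim of Step 2 is false for $\alpha>2$.} Write $m_k=\mean{\mu}{g^k}$ and $m_k'=\mean{\mu K}{(K_\mu^\star g)^k}$. A third-order expansion gives
\[
R(t)=\frac{m_2'}{m_2}\Bigl[1+\frac{\alpha-2}{3}\,t\Bigl(\frac{m_3'}{m_2'}-\frac{m_3}{m_2}\Bigr)\Bigr]+O(t^2).
\]
So $R$ decreases near $0$ whenever $K_\mu^\star$ lowers the ratio $m_3/m_2$, and a conditional expectation readily does this. Concretely, take $\mu$ uniform on $\{1,2,3,4\}$, $K$ the deterministic map merging $\{1,2\}$ and $\{3,4\}$, $g=(3,-1,-1,-1)$, and $\alpha=3$. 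Then $t_{\max}=1$ and
\[
R(t)=\frac{\log(1+3t^2)}{\log(1+9t^2+6t^3)}.
\]
This tends to $\frac13$ as $t\downarrow0$, but is about $0.331$ at $t=0.01$ and about $0.322$ at $t=0.1$; these are full-support points on the ray. Here $R(1)=\frac12$ still dominates, so the example refutes your monotonicity claim, not the theorem. Ray monotonicity holds only at $\alpha=2$. There the skewness term vanishes and $R(t)=\log(1+\lambda s)/\log(1+s)$ with $\lambda=m_2'/m_2$ and $s=t^2m_2$, which is monotone by \cref{corollary:log_ratio_increasing}. So the ray route cannot be carried out for general $\alpha\geq2$, and neither can the alternative in Step 3 that relies on ``strict monotonicity''.

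\textbf{The fallback leaves exactly the hard case open.} Case b) of \cref{thm:extremal_functions} combined with your Step 3 is fine. The gap is case a) with $0<\eta_{\chi^2}(\mu,K)<1$. There you invoke monotonicity of $R$ at order $\alpha$ along a top direction $h^\star$, which is unproved and faces the same skewness obstruction. The missing idea is to avoid order-$\alpha$ ray arguments altogether. Normalise $h^\star$ so that $\min_{\supp(\mu)}h^\star=-1$, and combine \cref{thm:monotonicity} with the exact order-$2$ computation at $\allones+h^\star$. If $0<\eta_{\chi^2}(\mu,K)<1$, this gives
\[
\eta_\alpha(\mu,K)\geq\eta_2(\mu,K)\geq\frac{\log\bigl(1+\eta_{\chi^2}(\mu,K)\|h^\star\|_{L^2(\mu)}^2\bigr)}{\log\bigl(1+\|h^\star\|_{L^2(\mu)}^2\bigr)}>\eta_{\chi^2}(\mu,K).
\]
Hence whenever $0<\eta_\alpha(\mu,K)<1$ you are in case b), so an extremal distribution exists, and Step 3 places it in $\mathcal B$.

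The case $\eta_\alpha(\mu,K)=1$ forces $\eta_{\chi^2}(\mu,K)=1$. Equality in Jensen's inequality then makes $h^\star$ constant on the supports of $K_\mu^\star(\cdot|y)$. Consequently $\allones+h^\star$ attains ratio one and lies in $\mathcal B$.

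Your primary Step 3 is the paper's argument. To close it you need three facts. First, extremality gives $\|K_\mu^\star f^\star\|_{L^\alpha(\mu K)}^\alpha=\|f^\star\|_{L^\alpha(\mu)}^{\alpha\eta_\alpha(\mu,K)}$. Second, $\|f^\star\|_{L^{\alpha-1}(\mu)}^{\alpha-1}<\|f^\star\|_{L^\alpha(\mu)}^\alpha$, so the norm ratio lies in $(0,1)$. Third, the strict weighted AM--GM inequality, which uses $0<\eta_\alpha(\mu,K)<1$, contradicts the integrated identity itself rather than the bound $\eta_\alpha<1$.
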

        Together with~\cref{thm:extremal_functions}, this shows that if $\alpha\in[2,\infty)$ and $\eta_\alpha(\mu,K)>\eta_{\chi^2}(\mu,K)$, then every extremal distribution has support strictly contained in $\supp(\mu)$.
        \begin{remark}
            \Cref{thm:boundary_nu} also clarifies why the variational relation in~\cref{eq:renyi_sdpi_foc} is stated as an inequality rather than an equality. Indeed, while~\cref{thm:extremal_functions} guarantees equality on $\supp(\nu^\star)$, the proof of~\cref{thm:boundary_nu} shows that the inequality is strict at some point of $\supp(\mu)\setminus\supp(\nu^\star)$. This contrasts with other extremal analyses such as the modified log-Sobolev setting of~\cite[Proof of Theorem 6.5]{bobkov2006modified}, where extremal functions are first shown to be strictly positive and the corresponding variational relation consequently holds with equality.
        \end{remark}
        In fact, when $\alpha=\infty$, a more precise result can be uncovered: the achieving distributions are not merely of smaller support, but are precisely restrictions of $\mu$ to some non-trivial measurable set $A\in\Sigma_\X$.
        \begin{theorem}\label{prop:infty_sdpi_achieved_on_conditionals}
            Consider a pair $(\mu,K)\in\calP(\X)\times\calP(\Y|\X)$. Then
            \begin{equation*}
                \eta_\infty(\mu,K)=\max_{\substack{A\in\Sigma_\X:\\0<\mu(A)<1}}\frac{D_\infty\left(\mu_{|A}K\|\mu K\right)}{D_\infty\left(\mu_{|A}\|\mu\right)},
            \end{equation*}
            where $\mu_{|A}(x) \triangleq \frac{\mathbbm{1}_A(x)\mu(x)}{\mu(A)}$ denotes the restriction of $\mu$ to $A$, and the maximum is understood to be zero when its constraint set is empty. Moreover, if $0<\eta_\infty(\mu,K)<1$, then every extremal distribution for $\eta_\infty(\mu,K)$ is of the form $\mu_{|A}$ for some $A\in\Sigma_\X$ with $0<\mu(A)<1$.
            \appendixproofref{appendix:proof_infty_sdpi_achieved_on_conditionals}
        \end{theorem}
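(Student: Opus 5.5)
Throughout, write $f=\frac{\dd\nu}{\dd\mu}$ and use $D_\infty(\nu\|\mu)=\log\|f\|_{L^\infty(\mu)}$. By \cref{eq:dual_kernel_formula}, $D_\infty(\nu K\|\mu K)=\log\|K_\mu^\star f\|_{L^\infty(\mu K)}$, so $\eta_\infty(\mu,K)$ is the ratio of these two logarithms, maximised over $f\in\mathcal D_\mu\setminus\{\allones\}$.

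If $|\supp(\mu)|=1$, every $\nu\ll\mu$ equals $\mu$. The constraint set is then empty, and no $A$ has $0<\mu(A)<1$, so both sides are zero. Assume therefore $|\supp(\mu)|\geq2$. Only values of $f$ on $\supp(\mu)$ matter, so we may restrict every set $A$ to $\supp(\mu)$. There are then finitely many admissible $A$, and the right-hand side is a genuine maximum. It is at most $\eta_\infty(\mu,K)$ because each $\mu_{|A}$ is feasible. So the work is the reverse inequality. Let $r$ denote the right-hand maximum; note $0\le r\le1$ by the DPI.

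\textbf{Layer-cake decomposition.} List the distinct values of $f$ on $\supp(\mu)$ as $t_1>t_2>\dots>t_m\geq0$, and set $t_{m+1}\triangleq0$. Define the nested upper level sets $A_k\triangleq\{x\in\supp(\mu):f(x)\geq t_k\}$, so that $A_m=\supp(\mu)$. Then
\begin{equation*}
f=\sum_{k=1}^m w_k\,\frac{\mathbbm 1_{A_k}}{\mu(A_k)},\qquad w_k\triangleq(t_k-t_{k+1})\mu(A_k)\geq0,\qquad \sum_k w_k=\mean{\mu}{f}=1.
\end{equation*}
Write $g_k=\mathbbm 1_{A_k}/\mu(A_k)$, which is the density of $\mu_{|A_k}$, and set $a_k\triangleq\|g_k\|_{L^\infty(\mu)}=1/\mu(A_k)$. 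Then
\begin{equation*}
\sum_k w_k a_k=\sum_k(t_k-t_{k+1})=t_1=\|f\|_{L^\infty(\mu)}.
\end{equation*}

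\textbf{Upper bound.} Linearity of $K_\mu^\star$ and the triangle inequality give
\begin{equation*}
\|K_\mu^\star f\|_{L^\infty(\mu K)}\le\sum_k w_k b_k,\qquad b_k\triangleq\|K_\mu^\star g_k\|_{L^\infty(\mu K)}.
\end{equation*}
For $k$ with $\mu(A_k)<1$, the definition of $r$ gives $b_k\le a_k^{r}$. For $A_m=\supp(\mu)$ we have $a_m=b_m=1$, so this bound holds trivially. Since $t\mapsto t^r$ is concave for $r\in[0,1]$, Jensen's inequality yields
\begin{equation*}
\|K_\mu^\star f\|_{L^\infty(\mu K)}\le\sum_k w_k a_k^{r}\le\Bigl(\sum_k w_k a_k\Bigr)^{r}=\|f\|_{L^\infty(\mu)}^{r}.
\end{equation*}
Taking logarithms gives a ratio at most $r$ for every admissible $f$, hence $\eta_\infty(\mu,K)\le r$, which proves the formula.

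\textbf{Extremal distributions.} Suppose $0<\eta\triangleq\eta_\infty(\mu,K)<1$ and that $f$ is extremal. Then the Jensen step must be an equality. Since $t\mapsto t^\eta$ is strictly concave, all $a_k$ with $w_k>0$ must coincide.

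Because the $A_k$ are strictly nested, the $a_k$ are strictly increasing in $\mu$-measure terms, so at most one index carries positive weight. As $\sum_kw_k=1$, exactly one index $k$ has $w_k=1$, and $f=g_k=\mathbbm 1_{A_k}/\mu(A_k)$. If that set had $\mu(A_k)=1$, then $f=\allones$, which is excluded since $D_\infty(\nu\|\mu)>0$. Hence $\nu=\mu_{|A}$ for some $A$ with $0<\mu(A)<1$.

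The main point to handle carefully is this equality analysis. In particular, one must track the layer with $A_m=\supp(\mu)$: it has positive weight exactly when $\min_{\supp(\mu)}f>0$. The hypothesis $\eta<1$ is what excludes that case, while $\eta>0$ is what makes the concavity strict.
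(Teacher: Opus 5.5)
Your proof is correct and is essentially the paper's argument written in discrete form: your decomposition $f=\sum_k w_k g_k$ is the layer-cake representation, and the per-layer bound $b_k\le a_k^{r}$ is equivalent to the paper's event inequality $\rho_{\{y\}}(A)\le\mu(A)^{1-r}$ (where $\rho_{\{y\}}$ has $\mu$-density $K(y|\cdot)/\mu K(y)$ and the paper's $\eta$ is your $r$). Your Jensen step is exactly the paper's H\"older step, since $\sum_k w_k a_k^{r}=\int_0^{\|f\|_{L^\infty(\mu)}}\mu(\{f>t\})^{1-r}\,\dd t$, and the equality analysis for extremals is the same. One small correction to your closing remark: strict concavity of $t\mapsto t^\eta$ needs both $\eta>0$ and $\eta<1$, and it is this strict concavity alone that rules out every multi-layer $f$, including those in which the full-support layer carries weight.
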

        We now provide three examples that leverage the representation given in~\cref{prop:infty_sdpi_achieved_on_conditionals}.
        \begin{example}[Binary Symmetric Channel]\label{example:dd_infty_sdpi_bsc}
            Consider $K=\mathrm{BSC}(\varepsilon)$ and $\mu=\left[\frac12, \frac12\right]$. Since $\X=\{0,1\}$, we only need to consider $A=\{0\}$ and $A=\{1\}$ when applying~\cref{prop:infty_sdpi_achieved_on_conditionals}. One has $\mu_{|\{0\}}=\delta_0$ and $\mu_{|\{1\}}=\delta_1$, so that
            \begin{equation*}
                \eta_\infty(\mu, K) = \max\left(\frac{D_\infty(\delta_0 K\|\mu K)}{D_\infty(\delta_0\|\mu)}, \frac{D_\infty(\delta_1 K\|\mu K)}{D_\infty(\delta_1\|\mu)}\right) =\frac{\log\big(2\max(\varepsilon, 1-\varepsilon)\big)}{\log (2)} = 1 - \log_2\frac1{\max(\varepsilon, 1-\varepsilon)}.
            \end{equation*}
        \end{example}
        \begin{example}[Binary Erasure Channel]
            Let $\X=\{0,1\}, \Y=\{0, 1, e\}$, and consider $K=\mathrm{BEC}(\delta)$ with $\delta\in[0,1)$ and $\mu=\left[\frac12, \frac12\right]$. Since $\X=\{0,1\}$, we only need to consider subsets $A$ of size 1. In fact, selecting $A=\{0\}$ gives
            \begin{equation*}
                \frac{D_\infty\left(\mu_{|A} K\|\mu K\right)}{D_\infty\left(\mu_{|A}\|\mu\right)} = \frac{\log\max_{y\in\supp(\mu K)}\frac{\mu_{|\{0\}}K(y)}{\mu K(y)}}{\log(2)} = 1.
            \end{equation*}
            \Cref{prop:infty_sdpi_achieved_on_conditionals} gives $\eta_\infty(\mu, K)\geq 1$, so we must have $\eta_\infty(\mu, K)=1$.
        \end{example}
        The next example considers larger alphabet sizes.
        \begin{example}[$n$-ary Symmetric Channel]
            Suppose $\X=\Y=\{0, \dots, n-1\}$, $\mu=\left[\frac1n, \dots, \frac1n\right]$, and consider the $n$-ary Symmetric Channel with $\varepsilon\in\left[0,\frac{n-1}{n}\right]$, defined through $K(y|x) = (1-\varepsilon)\mathbbm{1}_{y=x}+\frac{\varepsilon}{n-1}\mathbbm{1}_{y\neq x}$ for $(x, y)\in\X\times\Y$. We have $\mu K=\mu$ and for any $A$ with $|A|=k\in[n-1]$,
            \begin{align*}
                \mu_{|A}K(y) &= \sum_{x\in A}K(y|x)\frac1k \\
                &= \begin{cases}
                    \frac1k\left[(1-\varepsilon) + (k-1)\frac{\varepsilon}{n-1}\right], \quad\text{if } y\in A\\
                    \frac{\varepsilon}{n-1}, \quad\text{otherwise}.
                \end{cases}
            \end{align*}
            Since $1-\varepsilon\geq\frac{\varepsilon}{n-1}$, we get $\max_{y\in\Y}\mu_{|A}K(y) = \frac1k\left[(1-\varepsilon) + (k-1)\frac{\varepsilon}{n-1}\right]= \frac1k\left(1-\varepsilon\frac{n-k}{n-1}\right)$, so that
            \begin{equation*}
                \frac{D_\infty\left(\mu_{|A} K\|\mu K\right)}{D_\infty\left(\mu_{|A}\|\mu\right)} = \frac{\log\max_{y\in\Y}\frac{\mu_{|A}K(y)}{\mu K(y)}}{-\log\mu(A)} = 1 + \frac{\log\left(1-\varepsilon\frac{n-k}{n-1}\right)}{\log\left(\frac{n}{k}\right)}.
            \end{equation*}
            Since the ratio of $\infty$-R\'enyi Divergences for a set $A$ depends solely on its size, \cref{prop:infty_sdpi_achieved_on_conditionals} gives
            \begin{equation}\label{eq:q_ary_symmetric}
                \eta_\infty(\mu, K) = \max_{k\in[n-1]}\left\{1 + \frac{\log\left(1-\varepsilon\frac{n-k}{n-1}\right)}{\log\left(\frac{n}{k}\right)}\right\} = 1 + \max_{k\in[n-1]}\frac{\log\left(1-\varepsilon\frac{n-k}{n-1}\right)}{\log\left(\frac{n}{k}\right)}.
            \end{equation}
            Under the substitution $t(k)=\log\left(\frac{n}{k}\right)$, the fraction in~\cref{eq:q_ary_symmetric} becomes $\frac{\log\left(1-\frac{n\varepsilon}{n-1}+\frac{n\varepsilon}{n-1}e^{-t(k)}\right)}{t(k)}$, which is non-decreasing in $t(k)$ by~\cref{corollary:log_sum_exp}. Since $t(k)$ is decreasing in $k$, the maximum is attained at $k=1$, and hence
            \begin{equation*}
                \eta_\infty(\mu,K)=1+\frac{\log(1-\varepsilon)}{\log n}=1-\log_n\frac1{1-\varepsilon}.
            \end{equation*}
        \end{example}
        For the distribution-independent SDPI constant, one can similarly study the properties of extremisers. In the context of $\varphi$-Divergences, it was recently proven that the optimisation $\sup_{\nu, \mu} \frac{D_\varphi(\nu K\|\mu K)}{D_\varphi(\nu\|\mu)}$ defining $\eta_\varphi(K)$ can be restricted to pairs of probability measures that are binary supported~\cite{ordentlich2021strong}. We show that the same result holds for R\'enyi Divergences of order $\alpha\in[0,\infty]$, and that furthermore when $\alpha\in[2, \infty]$, one of the two probability measures only needs to be supported on a single point.
        \begin{theorem}\label{prop:eta_parameter_renyi_binary}
            Consider a Markov kernel $K\in\calP(\Y|\X)$. Then:
            \begin{enumerate}[a)]
                \item For $\alpha\in[0,\infty]$, the optimisation defining $\eta_\alpha(K)$ can be restricted to pairs $(\nu,\mu)\in\calP(\X)\times\calP(\X)$ satisfying $\big|\supp(\nu)\cup\supp(\mu)\big|\leq2$,
                \item For $\alpha\in[2,\infty]$, the optimisation can be further restricted to pairs of the form
                \begin{equation*}
                    \nu=\delta_x,\qquad\mu=(1-t)\delta_x+t\delta_{x^\prime},
                \end{equation*}
                where $x,x^\prime\in\X$ are distinct and $t\in(0,1)$.
            \end{enumerate}
            \appendixproofref{appendix:proof_eta_parameter_renyi_binary}
        \end{theorem}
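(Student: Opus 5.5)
Part a) will follow the argument of~\cite{ordentlich2021strong} for $\varphi$-Divergences, adapted to R\'enyi Divergences. Two reductions are needed.

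\emph{Reduction to one monotone functional.} Fix $\alpha\in(0,1)\cup(1,\infty)$. Since $D_\alpha=\frac{1}{\alpha-1}\log(1+(\alpha-1)\calH_\alpha)$ is a strictly increasing function of the Hellinger Divergence, the inequality $D_\alpha(\nu K\|\mu K)\leq\eta D_\alpha(\nu\|\mu)$ is equivalent to
\begin{equation*}
Q_\alpha(\nu K,\mu K)^{\frac{1}{\alpha-1}}\leq Q_\alpha(\nu,\mu)^{\frac{\eta}{\alpha-1}},\qquad Q_\alpha(\nu,\mu)\triangleq\textstyle\sum_x\nu(x)^\alpha\mu(x)^{1-\alpha}.
\end{equation*}
The orders $0,1,\infty$ will be handled afterwards by continuity or limits.

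\emph{Reduction from an arbitrary pair to a binary one.} Fix $\eta$ and suppose some pair $(\nu,\mu)$ violates this inequality. I want a violating pair supported on two points. Parametrise the segment $\nu_s=\mu+s(\nu-\mu)$. As in Ordentlich--Polyanskiy, I would decompose the signed measure $\nu-\mu$ into a sum of "elementary" differences $c(\delta_{x}-\delta_{x'})$. Each such term moves mass between two atoms only. Then I would argue by a convexity or local perturbation argument: if every binary pair satisfies the inequality, so does every pair.

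Concretely, I would consider the function $g(\nu,\mu)=\eta\log Q_\alpha(\nu,\mu)-\log Q_\alpha(\nu K,\mu K)$ (with the sign adjusted to $\alpha-1$). I would show it is minimised on the set of pairs with a prescribed value of $D_\alpha(\nu\|\mu)$ at extreme points. A cleaner route is the one in~\cite{ordentlich2021strong}: use the joint homogeneity of $Q_\alpha$ and the fact that the ratio is a quasi-convex-type function along directions in which $(\nu,\mu)$ share a common convex decomposition. Write $(\nu,\mu)=\sum_i w_i(\nu_i,\mu_i)$ with binary-supported pairs $(\nu_i,\mu_i)$ that are mutually singular across $i$, so that $Q_\alpha(\nu,\mu)=\sum_i w_iQ_\alpha(\nu_i,\mu_i)$. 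Then combine this decomposition with the analogous mixture inequality on the output side to transfer the bound.

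For $\alpha\leq1$ a much shorter route exists, since \cref{prop:distribution_independent_sdpi_equal_chi_square_sdpi} is not yet available. I would instead verify directly that near-diagonal binary pairs give the $\chi^2$ constant, which is already known to be attained by binary pairs~\cite{ordentlich2021strong}.

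Part b) follows from part a) together with \cref{thm:boundary_nu}. Take a binary pair $\nu,\mu$ on $\{x,x'\}$. Apply \cref{thm:boundary_nu} to the restriction of $K$ to inputs $\{x,x'\}$ with reference $\mu$. If $\supp(\mu)=\{x,x'\}$, the ratio is dominated by one with $\supp(\nu)\subsetneq\supp(\mu)$, i.e.\ $\nu=\delta_x$ or $\delta_{x'}$. If $\mu$ is a point mass, the constraint set is empty. For $\alpha=\infty$, I would use \cref{prop:infty_sdpi_achieved_on_conditionals} instead, where $\mu_{|A}$ is again a point mass.

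The main obstacle is the binary reduction in part a) for general $\alpha$. The R\'enyi ratio is not a ratio of $f$-divergences, so the Ordentlich--Polyanskiy argument must be redone at the level of $\log Q_\alpha$. I would first try to show that the violation set is stable under splitting a pair into its mutually singular binary components. I would do this by noting that $\log$ of a mixture exceeds the mixture of logs in the correct direction, combined with the concavity/convexity of $Q_\alpha$ in its arguments.
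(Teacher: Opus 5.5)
Your part~b) matches the paper's argument: \cref{thm:boundary_nu} for finite $\alpha\geq2$ and \cref{prop:infty_sdpi_achieved_on_conditionals} at $\alpha=\infty$. Part~a), however, has a genuine gap at exactly the step you flag.

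The decomposition $(\nu,\mu)=\sum_iw_i(\nu_i,\mu_i)$ into binary pairs that are mutually singular across $i$ does not exist in general. Mutual singularity forces the blocks $B_i\triangleq\supp(\nu_i)\cup\supp(\mu_i)$ to be disjoint, cover $\supp(\nu)\cup\supp(\mu)$, and satisfy $\nu(B_i)=\mu(B_i)=w_i$. Take $\mu$ uniform on three points and $\nu=\delta_{x_1}$. Then some block avoids $x_1$, so it has $\nu$-mass zero but positive $\mu$-mass. Without disjointness, $Q_\alpha$ is merely jointly convex on the input side, not additive, so your transfer has nothing to start from.

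The missing idea, which is the actual mechanism in~\cite{ordentlich2021strong} and in the paper, is to freeze the likelihood ratio instead of separating supports. Set $f=\frac{\dd\nu}{\dd\mu}$ and write $\mu=\sum_j\theta_j\mu^{(j)}$ as a convex combination of extreme points of $\{\widehat\mu\in\calP(\supp(\mu)):\mean{\widehat\mu}{f}=1\}$. This set is a simplex cut by a single hyperplane, so each $\mu^{(j)}$ has at most two atoms. Then $\nu^{(j)}\triangleq f\mu^{(j)}$ are probability measures with $\nu=\sum_j\theta_j\nu^{(j)}$. Since $Q_\alpha(f\widehat\mu,\widehat\mu)=\mean{\widehat\mu}{f^\alpha}$ is linear in $\widehat\mu$, you get $Q_\alpha(\nu,\mu)=\sum_j\theta_jQ_\alpha(\nu^{(j)},\mu^{(j)})$ exactly, even though the components overlap. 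For $\alpha>1$, with $\eta$ the supremum over binary-supported pairs, chain three facts:
\begin{itemize}
\item joint convexity of $Q_\alpha$ on the output side;
\item the bound $Q_\alpha(\nu^{(j)}K,\mu^{(j)}K)\leq Q_\alpha(\nu^{(j)},\mu^{(j)})^{\eta}$ for each component, which is trivial when $\nu^{(j)}=\mu^{(j)}$;
\item concavity of $s\mapsto s^\eta$.
\end{itemize}

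The same chain shows why $\alpha\leq1$ needs a separate argument. For $\alpha\in(0,1)$ the target becomes $Q_\alpha(\nu K,\mu K)\geq Q_\alpha(\nu,\mu)^\eta$, and joint concavity followed by Jensen for $s\mapsto s^\eta$ runs the wrong way. Besides, admissible pairs need not satisfy $\nu\ll\mu$ there. Your substitute only shows that binary pairs achieve at least $\eta_{\chi^2}(K)$; you still need the matching bound $\eta_\alpha(K)\leq\eta_{\chi^2}(K)$. That bound is \cref{prop:distribution_independent_sdpi_equal_chi_square_sdpi}, and you may use it. Its proof goes through \cref{corollary:sdpi_ub_lb_hellinger} and operator convexity of the Hellinger generator for $\alpha\in(0,1)$, and does not depend on the binary reduction.

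For $\alpha\in\{0,1\}$, cite~\cite{ordentlich2021strong} for KL and reverse KL directly rather than arguing by continuity. The map $\alpha\mapsto\eta_\alpha(K)$ is not right-continuous at $1$, as the Z-channel shows. At $\alpha=\infty$ a limit does work: use Part~c) of~\cref{corollary:distribution_independent_order_properties} together with $\eta_\beta(\mu,K)\leq\eta_\infty(\mu,K)$ for binary-supported $\mu$.
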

    \subsection{Simplified Representations and Conditions for Trivial Contraction}\label{sec:exact_characterisations}
        A hurdle when dealing with SDPI constants is that one must optimise over an entire probability simplex, or a Cartesian product of two simplices. As highlighted in~\cref{sec:properties}, the domain over which the optimisation is carried out can be reduced once the properties of extremal distributions are uncovered. This is of practical importance, as it renders numerical evaluations more efficient. In the following, we aim to obtain even further simplifications.

        A well-known example of a closed-form expression that characterises an SDPI constant is the one associated with Total Variation Distance in the distribution-independent case:
        \begin{equation}\label{eq:two_point_tv_di}
                \eta_{\sf TV}(K) = \sup_{x, x^\prime\in\X}\|K(\cdot|x)-K(\cdot|x^\prime)\|_{\rm TV}
        \end{equation}
        This ``two-point characterisation'' was proven in~\cite{cohen1993relative} for the discrete case, and extended to the general case in~\cite{del2003contraction} (see also~\cite{raginsky2016strong}). We also note that a similar representation holds more generally for Hockey-Stick Divergences~\cite{asoodeh2020contraction}, which are $\varphi$-Divergences that generalise Total Variation Distance.

        Two-point characterisations appear to be uncommon in the realm of SDPI constants. Indeed, for $\varphi$-Divergences, only Hockey-Stick Divergences (and hence the Total Variation Distance) are known to have such a representation. What they indicate, in fact, is that instead of searching over all pairs of probability measures to determine the SDPI constant, it suffices to look at pairs of Dirac measures. Our next result shows that this happens when considering the R\'enyi Divergence of order $\alpha=\infty$.
        \begin{theorem}\label{prop:infty_renyi_sdpi_closed_form}
            For any Markov kernel $K\in\calP(\Y|\X)$, the distribution-independent SDPI constant associated with the R\'enyi Divergence of order $\infty$ can be written as
            \begin{equation}\label{eq:infty_renyi_sdpi_closed_form}
                \eta_\infty(K) = \max_{x, x^\prime\in\X}\left\{1 -\min_{y\in\supp(K(\cdot|x))} \frac{K(y|x^\prime)}{K(y|x)}\right\}.
            \end{equation}
            \appendixproofref{appendix:proof_infty_renyi_sdpi_closed_form}
        \end{theorem}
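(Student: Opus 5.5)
The plan is to combine the binary-support reduction of \cref{prop:eta_parameter_renyi_binary}\,b) (valid at $\alpha=\infty$) with an explicit one-variable optimisation. By that result, it suffices to take $\nu=\delta_x$ and $\mu=\mu_t\triangleq(1-t)\delta_x+t\delta_{x'}$ with $x\neq x'$ and $t\in(0,1)$. Writing $K_x=K(\cdot|x)$, we have $D_\infty(\delta_x\|\mu_t)=-\log(1-t)$. Since $\delta_xK=K_x\ll\mu_tK=(1-t)K_x+tK_{x'}$,
\begin{equation*}
D_\infty(\delta_xK\|\mu_tK)=\log\max_{y\in\supp(K_x)}\frac{K(y|x)}{(1-t)K(y|x)+tK(y|x')}=-\log\Bigl(1-t+t\,r(x,x')\Bigr),
\end{equation*}
where $r(x,x')\triangleq\min_{y\in\supp(K_x)}\frac{K(y|x')}{K(y|x)}\in[0,\infty)$. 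The maximising $y$ is the one minimising the ratio, and the denominator stays positive on $\supp(K_x)$. Note that $r\le1$, since $\sum_y K(y|x')\le1=\sum_y K(y|x)$ on $\supp(K_x)$ forces some ratio to be at most $1$. Hence
\begin{equation*}
\eta_\infty(K)=\sup_{x\neq x'}\ \sup_{t\in(0,1)}\ g_c(t),\qquad g_c(t)\triangleq\frac{\log\bigl(1-ct\bigr)}{\log(1-t)},\quad c\triangleq1-r(x,x')\in[0,1].
\end{equation*}

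The key step is to show $\sup_{t\in(0,1)}g_c(t)=c$. For the lower bound, a first-order expansion as $t\downarrow0$ gives $g_c(t)\to c$. For the upper bound, we need $\log(1-ct)\ge c\log(1-t)$, i.e.\ $1-ct\ge(1-t)^c$. This is concavity of $u\mapsto u^c$ for $c\in[0,1]$ (Bernoulli's inequality), giving $(1-t)^c\le1-ct$. Dividing by $\log(1-t)<0$ flips the inequality to $g_c(t)\le c$. So the supremum equals $c$ but is approached as $t\to0$ rather than attained. The boundary cases are trivial: when $c=1$ ($r=0$, disjoint-ish supports) $g_1\equiv1$, and when $c=0$ we get $g_0\equiv0$.

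Diagonal pairs $x=x'$ give $1-\min_y 1=0$, so they contribute nothing and can be included. This makes the expression a maximum over the finite set $\X\times\X$, which yields \cref{eq:infty_renyi_sdpi_closed_form}. If $\eta_\infty(K)=0$ because the constraint set is empty or every $c=0$, both sides agree.

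The main obstacle is mostly bookkeeping: confirming that \cref{prop:eta_parameter_renyi_binary}\,b) indeed covers $\alpha=\infty$ and allows the restriction to these pairs without loss, and handling the ordered nature of $(x,x')$. Both roles must be swept through, which the $\max_{x,x'}$ over ordered pairs does. If one preferred not to rely on that theorem at $\infty$, a fallback is to bound directly. Using $D_\infty(\nu K\|\mu K)=\log\max_y\frac{\sum_x\nu(x)K(y|x)}{\sum_x\mu(x)K(y|x)}$ together with $\nu\le e^{D}\mu$, one would show $\nu K\le e^{cD}\mu K$ for $c$ the claimed maximum. That would require the same concavity estimate applied pointwise in $y$.
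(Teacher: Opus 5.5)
Your proposal is correct and follows essentially the same route as the paper: reduce via \cref{prop:eta_parameter_renyi_binary}\,b) to the pairs $\nu=\delta_x$, $\mu_t=(1-t)\delta_x+t\delta_{x'}$, write the ratio as $\log(1-\lambda t)/\log(1-t)$ with $\lambda=1-\min_{y\in\supp(K(\cdot|x))}K(y|x')/K(y|x)$, and show that its supremum over $t\in(0,1)$ is the limit $\lambda$ as $t\to0$. The differences are minor. The paper obtains the upper bound from the monotonicity in \cref{corollary:log_ratio_decreasing}, whereas you bound the ratio directly by Bernoulli's inequality $(1-t)^{\lambda}\leq 1-\lambda t$. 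You also explicitly verify $\lambda\in[0,1]$, which the paper only asserts.
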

        \begin{remark}
            \Cref{eq:two_point_tv_di} is obtained by directly considering pairs of distributions of the form $\nu=\delta_x$ and $\mu=\delta_{x^\prime}$ when evaluating $\sup_{\nu, \mu}\frac{\|\nu K-\mu K\|_{\sf TV}}{\|\nu-\mu\|_{\sf TV}}$. The same cannot be done for the $\infty$-R\'enyi Divergence, and instead~\cref{eq:infty_renyi_sdpi_closed_form} requires using $\nu=\delta_x$ and $\mu=(1-t)\delta_x+t\delta_{x^\prime}$ when evaluating $\sup_{\nu, \mu}\frac{D_\infty(\nu K\|\mu K)}{D_\infty(\nu\|\mu)}$. As shown in~\cref{appendix:proof_infty_renyi_sdpi_closed_form}, the closed-form expression is eventually obtained after taking the limit
            \begin{equation}\label{eq:two_point_limit_ratio_renyi_divergences}
                \max_{x, x^\prime\in\X}\lim_{t\to0^+} \frac{D_\infty(\delta_x K\|\mu_t K)}{D_\infty(\delta_x\|\mu_t)},
            \end{equation}
            which unlike Total Variation Distance, corresponds to picking the same Dirac measure $\delta_x$ for both $\nu$ and $\mu$.
        \end{remark}
        \begin{figure}
            \centering
            \includegraphics[width=0.75\linewidth]{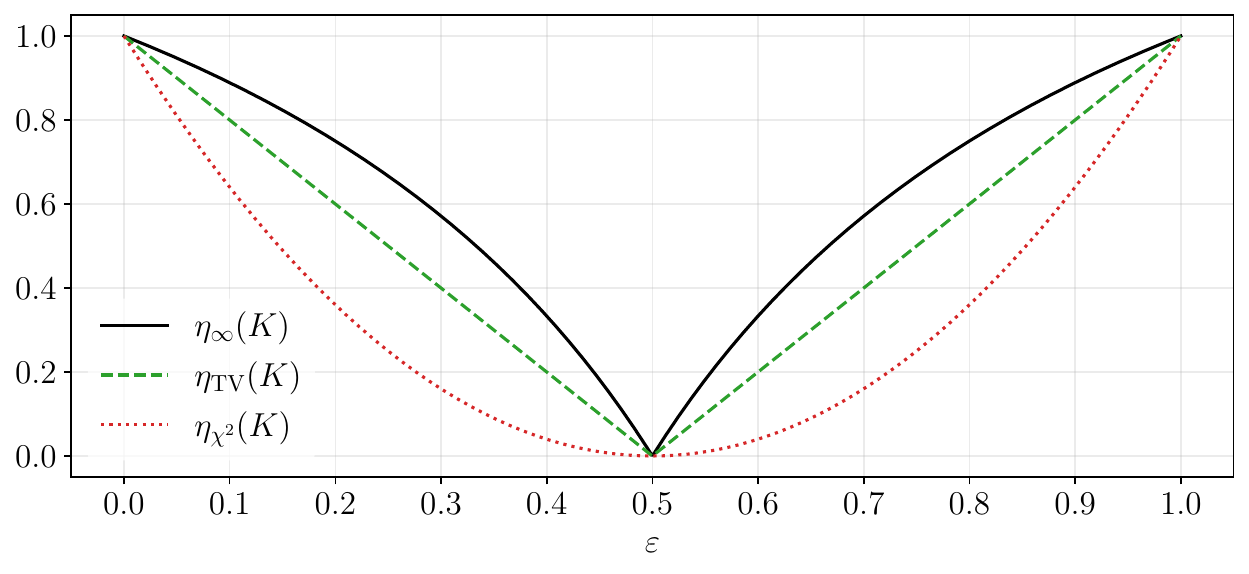}
            \caption{Comparison between $\eta_\infty(K)$ and other common distribution-independent SDPI constants for $K=\mathrm{BSC}(\varepsilon)$.}
            \label{fig:bsc_sdpi_constants_comparison}
        \end{figure}
        Next, we obtain closed-form expressions for various channels by applying~\cref{prop:infty_renyi_sdpi_closed_form} in the following examples.
        \begin{example}[Binary Input-Output Channels]
            Consider an arbitrary binary channel, that is $\X=\Y=\{0,1\}$ and the Markov kernel consists of the conditionals $K(\cdot|0)=[1-\lambda, \lambda]$ and $K(\cdot|1)=[\kappa, 1-\kappa]$, where $0\leq\lambda, \kappa\leq\frac12$. Considering $x=0, x^\prime=1$, we find $\min_y\frac{K(y|x^\prime)}{K(y|x)}= \frac{\kappa}{1-\lambda}$ and for $x=1, x^\prime=0$, we find $\min_y\frac{K(y|x^\prime)}{K(y|x)}= \frac{\lambda}{1-\kappa}$. Therefore, an application of~\cref{prop:infty_renyi_sdpi_closed_form} yields
            \begin{equation}\label{eq:eta_inf_binary_channel}
                \eta_\infty(K) = 1 - \min\left(\frac{\lambda}{1-\kappa}, \frac{\kappa}{1-\lambda}\right).
            \end{equation}
            In particular, for $K=\mathrm{BSC}(\varepsilon)$ with $0\leq\varepsilon\leq\frac12$, \cref{eq:eta_inf_binary_channel} gives $\eta_\infty(K) = 1 - \frac{\varepsilon}{1-\varepsilon}$ since $\lambda=\kappa=\varepsilon$ (see~\cref{fig:bsc_sdpi_constants_comparison} for a comparison with other common SDPI constants).
        \end{example}

        \begin{example}[Binary Erasure Channel]
             Let $\X=\{0,1\}, \Y=\{0, 1, e\}$, and consider $K=\mathrm{BEC}(\delta)$ with $\delta\in[0,1)$. For $x=0, x^\prime=1$, we find $\min_y\frac{K(y|x^\prime)}{K(y|x)}=\frac0{1-\delta}=0$, from which we directly conclude $\eta_\infty(K)=1$ by~\cref{prop:infty_renyi_sdpi_closed_form}.
        \end{example}

        \begin{example}[$n$-ary Symmetric Channel]
            Suppose $\X=\Y=\{0, \dots, n-1\}$, and consider the $n$-ary Symmetric Channel with $\varepsilon\in\left[0,\frac{n-1}{n}\right]$, defined through $K(y|x) = (1-\varepsilon)\mathbbm{1}_{y=x}+\frac{\varepsilon}{n-1}\mathbbm{1}_{y\neq x}$ for $(x, y)\in\X\times\Y$. By the symmetry of $K$, notice that any pair of distinct inputs $x\neq x^\prime$ will yield the same result when evaluating $\min_y \frac{K(y|x^\prime)}{K(y|x)}$. For $y=x$, the ratio is $\frac{K(x|x^\prime)}{K(x|x)}=\frac{\frac{\varepsilon}{n-1}}{1-\varepsilon}$, for $y=x^\prime$, the ratio is $\frac{K(x^\prime|x^\prime)}{K(x^\prime|x)}=\frac{1-\varepsilon}{\frac{\varepsilon}{n-1}}$, and for $y\notin\{x, x^\prime\}$, the ratio is $\frac{K(y|x^\prime)}{K(y|x)}=\frac{\frac{\varepsilon}{n-1}}{\frac{\varepsilon}{n-1}}=1$. Since $1-\varepsilon \geq\frac{\varepsilon}{n-1}$, we get $\min_y \frac{K(y|x^\prime)}{K(y|x)}=\frac{\frac{\varepsilon}{n-1}}{1-\varepsilon}$ and hence $\eta_\infty(K)=1 - \frac{\varepsilon}{(1-\varepsilon)(n-1)}$.
        \end{example}
        Beyond restricting the space over which the optimisation is carried out, one can simplify the expression of an SDPI constant when it mirrors a contraction coefficient that is well-understood. A prime example is the $\chi^2$-Divergence, for which the associated SDPI constant $\eta_{\chi^2}(\mu, K)$ has multiple deep interpretations. Not only is it equivalent to the squared Hirschfeld-Gebelein-R\'enyi maximal correlation~\cite{hirschfeld1935connection,gebelein1941statistische,renyi1959measures}, but it is also governed by the spectral properties of the Markov kernel~\cite{renyi1959measures}. Moreover, the role of the $\chi^2$-Divergence as a ``local approximation'' to many other divergences makes it a fundamental quantity in the study of divergence contraction~\cite[Theorem 2]{makur2020comparison}.

        The connection between the contraction coefficients of $\chi^2$-Divergence and other divergences is not new. It was first proven that the KL Divergence coincides with it through the identity $\eta_{\sf KL}(K)=\eta_{\chi^2}(K)$~\cite{ahlswede1976spreading}. This was further extended to all $\varphi$-Divergences generated by non-linear operator convex functions $\varphi$, for which it was shown $\eta_{\varphi}(K)=\eta_{\chi^2}(K)$~\cite[Theorem 1]{choi1994equivalence}\cite[Corollary III.1]{raginsky2016strong}\cite[Proposition 6]{makur2020comparison}. In the following theorem, we show that for R\'enyi Divergences of order $\alpha\in[0,1]$, the distribution-independent SDPI constant similarly collapses to $\eta_{\chi^2}(K)$.
        \begin{theorem}\label{prop:distribution_independent_sdpi_equal_chi_square_sdpi}
            For any Markov kernel $K\in\calP(\Y|\X)$ and $\alpha\in[0,1]$, we have $\eta_\alpha(K) = \eta_{\chi^2}(K)$.
            \appendixproofref{appendix:proof_distribution_independent_sdpi_equal_chi_square_sdpi}
        \end{theorem}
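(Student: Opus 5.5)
Two inequalities are needed: $\eta_\alpha(K)\geq\eta_{\chi^2}(K)$ and $\eta_\alpha(K)\leq\eta_{\chi^2}(K)$. By the skew-symmetry remark, $\eta_\alpha(K)=\eta_{1-\alpha}(K)$ on $[0,1]$. The endpoints are classical: $\eta_1(K)=\eta_{\sf KL}(K)=\eta_{\chi^2}(K)$ by Ahlswede--G\'acs, and $\eta_0(K)=\eta_{\sf revKL}(K)$, which also equals $\eta_{\chi^2}(K)$ since $-\log t$ is operator convex and non-linear (Choi--Ruskai--Seneta, Raginsky). It therefore remains to treat $\alpha\in(0,1)$.

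\emph{Lower bound.} I would use local perturbations. Fix $\mu$ and take $\nu_\epsilon=\mu+\epsilon h\,\mu$ with $\mean{\mu}{h}=0$. For every $\alpha\in(0,1)$, a second-order expansion gives
\begin{equation*}
D_\alpha(\nu_\epsilon\|\mu)=\tfrac{\alpha}{2}\epsilon^2\chi^2\text{-term}+o(\epsilon^2),
\end{equation*}
and the same holds after applying $K$. Letting $\epsilon\to0$ and optimising over $h$ gives $\eta_\alpha(\mu,K)\geq\eta_{\chi^2}(\mu,K)$. This is the same argument behind the known universal lower bound of Jin et al. Taking the supremum over $\mu$ then yields $\eta_\alpha(K)\geq\eta_{\chi^2}(K)$.

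\emph{Upper bound.} The plan is to pass through Hellinger divergences. For $\alpha\in(0,1)$,
\begin{equation*}
D_\alpha=\frac{1}{\alpha-1}\log\bigl(1+(\alpha-1)\calH_\alpha\bigr)=\frac{1}{\alpha-1}\log\bigl(1-(1-\alpha)\calH_\alpha\bigr),
\end{equation*}
and $g(u)\triangleq-\frac{1}{1-\alpha}\log(1-(1-\alpha)u)$ is increasing and convex with $g(0)=0$, on the range $u\in[0,\frac{1}{1-\alpha})$. The generator $\varphi(t)=\frac{t^\alpha-1}{\alpha-1}$ of $\calH_\alpha$ is non-linear and operator convex for $\alpha\in(0,1)$, since $-t^\alpha$ is operator convex there. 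Hence the cited $\varphi$-Divergence result gives
\begin{equation*}
\calH_\alpha(\nu K\|\mu K)\leq\eta_{\chi^2}(K)\,\calH_\alpha(\nu\|\mu).
\end{equation*}
Since $g$ is convex with $g(0)=0$, it satisfies $g(\eta u)\leq\eta g(u)$ for $\eta\in[0,1]$. Combining with monotonicity of $g$,
\begin{equation*}
D_\alpha(\nu K\|\mu K)=g\bigl(\calH_\alpha(\nu K\|\mu K)\bigr)\leq g\bigl(\eta_{\chi^2}(K)\calH_\alpha(\nu\|\mu)\bigr)\leq\eta_{\chi^2}(K)\,D_\alpha(\nu\|\mu),
\end{equation*}
which is exactly $\eta_\alpha(K)\leq\eta_{\chi^2}(K)$. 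For $\alpha\in(0,1)$ the divergence is finite even without absolute continuity, so the only care needed is the edge case $\calH_\alpha(\nu\|\mu)=\frac{1}{1-\alpha}$ (disjoint supports), where $D_\alpha=\infty$; such pairs are excluded by the constraint in \cref{eq:di_sdpi_constant}.

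\emph{Expected main obstacle.} The delicate point is justifying $\eta_{\calH_\alpha}(K)=\eta_{\chi^2}(K)$ for $\alpha\in(0,1)$ via operator convexity of $t\mapsto-t^\alpha$, i.e.\ checking that the cited theorem applies to this generator (its domain conventions at $t=0$ and $t=\infty$). If that citation is not directly usable, my fallback is to reduce to binary supports via \cref{prop:eta_parameter_renyi_binary}a) and compare to $\eta_{\chi^2}$ using the integral representation of $t^\alpha$ as a mixture of $\chi^2$-type kernels.
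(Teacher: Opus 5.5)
Your proposal is correct and follows the paper's proof. The lower bound $\eta_{\chi^2}(K)\leq\eta_\alpha(K)$ is the local $\chi^2$ argument of Jin et al. The upper bound is $\eta_\alpha(K)\leq\eta_{\calH_\alpha}(K)=\eta_{\chi^2}(K)$: your convexity argument for $g$ is the content of the paper's comparison lemma, which the paper applies to the inverse, concave map, and the equality comes from the same operator-convexity citation. The endpoints $\alpha\in\{0,1\}$ reduce to the known KL results, so the binary-support fallback is not needed.
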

        \begin{remark}
            The restriction to orders $\alpha\in[0,1]$ is strictly necessary. For $\alpha\in(1, \infty]$, the SDPI constant $\eta_\alpha(K)$ can trivially evaluate to 1 even when the $\chi^2$-contraction is strictly bounded below 1. An example is the $Z$-channel with crossover probability $\lambda\in(0,1)$, for which $\eta_{\chi^2}(K)=\lambda<1$, yet $\eta_\alpha(K)=1$ for all $\alpha\in(1, \infty]$ (see~\cref{example:z_channel} for a derivation of the identity $\eta_\alpha(K)=1$).
        \end{remark}
        The stark divergence in the SDPI constant's behaviour across different ranges of $\alpha$ is not an isolated case, as we shall shortly see.
        \begin{remark}
            \Cref{prop:distribution_independent_sdpi_equal_chi_square_sdpi} also gives $\eta_C(K)=\eta_{\chi^2}(K)$ for the Chernoff information~\cite[Section IV-A]{van2014renyi}
            \begin{equation*}
                C(\nu\|\mu)\triangleq\sup_{0<\alpha<1}(1-\alpha)D_\alpha(\nu\|\mu).
            \end{equation*}
            Indeed, for every $0<\alpha<1$ and every pair with $0<C(\nu\|\mu)<\infty$,
            \begin{equation*}
                (1-\alpha)D_\alpha(\nu K\|\mu K) \leq\eta_{\chi^2}(K)(1-\alpha)D_\alpha(\nu\|\mu) \leq\eta_{\chi^2}(K)C(\nu\|\mu).
            \end{equation*}
            Taking the supremum over $\alpha$ proves the upper bound. For the reverse inequality, let $\nu_\varepsilon=(1+\varepsilon h)\mu$, where $\mean{\mu}{h}=0$ and $\mean{\mu}{h^2}>0$. A second-order expansion, uniform in $\alpha\in[0,1]$, yields
            \begin{equation*}
                \lim_{\varepsilon\downarrow0}\frac{C(\nu_\varepsilon K\|\mu K)}{C(\nu_\varepsilon\|\mu)}
                =\frac{\|K_\mu^\star h\|_{L^2(\mu K)}^2}{\|h\|_{L^2(\mu)}^2}.
            \end{equation*}
            The supremum over $h$ on the right is $\eta_{\chi^2}(\mu,K)$. Taking the supremum over $\mu$ establishes the lower bound.
        \end{remark}

        A final avenue for simplifying the expression of an SDPI constant lies in identifying conditions on $K$ that prevent contraction, yielding an SDPI constant of exactly $1$. For general $\varphi$-Divergences, the condition $\eta_\varphi(K)=1$ is known to be equivalent to the existence of conditionals $K(\cdot|x), K(\cdot|x^\prime)$ with disjoint support~\cite{cohen1993relative}. In the specific context of R\'enyi Divergences, the same condition has been shown to be sufficient to imply $\eta_\alpha(K)=1$ for $\alpha>0$~\cite[Proposition 3]{grosse2025bounds}. Our next result strengthens this relationship from a one-way implication to a complete equivalence, establishing the precise structural properties of $K$ that are both necessary and sufficient for $\eta_\alpha(K)=1$, depending on the range of $\alpha$.
        \begin{theorem}\label{prop:non_trivial_sdpi_characterisation}
            Consider a Markov kernel $K\in\calP(\Y|\X)$. Then we have:
            \begin{enumerate}[a)]
                \item For $\alpha\in[0,1]$, $\eta_\alpha(K)=1$ if and only if there exist  $x, x^\prime\in\X$ such that ${\supp\big(K(\cdot|x)\big)\cap\supp\big(K(\cdot|x^\prime)\big)=\emptyset}$,
                \item For $\alpha\in(1, \infty]$, $\eta_\alpha(K)=1$ if and only if there exist  ${x, x^\prime\in\X}$ such that ${\supp\big(K(\cdot|x)\big)\neq\supp\big(K(\cdot|x^\prime)\big)}$.
            \end{enumerate}
            \appendixproofref{appendix:proof_non_trivial_sdpi_characterisation}
        \end{theorem}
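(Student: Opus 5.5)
For part a) the plan is a direct reduction to $\chi^2$. By \cref{prop:distribution_independent_sdpi_equal_chi_square_sdpi}, $\eta_\alpha(K)=\eta_{\chi^2}(K)$ for every $\alpha\in[0,1]$. The $\chi^2$-Divergence is a $\varphi$-Divergence, so the classical characterisation of~\cite{cohen1993relative} applies: $\eta_\varphi(K)=1$ if and only if two conditionals $K(\cdot|x)$ and $K(\cdot|x^\prime)$ have disjoint supports. Combining the two statements gives a) at once.

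For part b) with $\alpha\in(1,\infty]$, I will prove the two directions separately.

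\emph{Sufficiency.} Suppose $\supp(K_x)\neq\supp(K_{x^\prime})$. After relabelling I may assume there is $y_0$ with $K(y_0|x)>0$ and $K(y_0|x^\prime)=0$. I use the pair $\nu=\delta_x$ and $\mu_t=(1-t)\delta_x+t\delta_{x^\prime}$, but now let $t\uparrow1$ rather than $t\downarrow0$.

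\begin{itemize}
\item \emph{Denominator.} For every order, $D_\alpha(\delta_x\|\mu_t)=-\log(1-t)$, which tends to $+\infty$.
\item \emph{Numerator, finite $\alpha$.} Since $\mu_tK(y_0)=(1-t)K(y_0|x)$, keeping only the $y_0$ term of the sum gives
\begin{equation*}
D_\alpha(K_x\|\mu_tK)\ \geq\ \frac{1}{\alpha-1}\log\Bigl(K(y_0|x)^\alpha\bigl((1-t)K(y_0|x)\bigr)^{1-\alpha}\Bigr)=\frac{\log K(y_0|x)}{\alpha-1}-\log(1-t).
\end{equation*}
\item \emph{Numerator, $\alpha=\infty$.} Here $D_\infty(K_x\|\mu_tK)\geq\log\frac{K(y_0|x)}{\mu_tK(y_0)}=-\log(1-t)$.
\end{itemize}

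In both cases the numerator is finite, because $K_x\ll\mu_tK$. It exceeds the denominator minus a constant independent of $t$, so the ratio tends to $1$. Hence $\eta_\alpha(K)\geq1$, and the DPI gives equality.

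\emph{Necessity.} Suppose all conditionals share a common support. By part a) of \cref{corollary:distribution_independent_order_properties}, $\eta_\alpha(K)\leq\eta_\infty(K)$, so it suffices to show $\eta_\infty(K)<1$. I use the closed form of \cref{prop:infty_renyi_sdpi_closed_form}. For every pair $x,x^\prime$ and every $y\in\supp(K_x)=\supp(K_{x^\prime})$, the ratio $K(y|x^\prime)/K(y|x)$ is strictly positive. Finiteness of the alphabets makes the minimum over $y$ and the maximum over $(x,x^\prime)$ attained, so each term in \cref{eq:infty_renyi_sdpi_closed_form} is strictly less than $1$. Thus $\eta_\infty(K)<1$.

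The only delicate step is the sufficiency limit. The natural $t\to0^+$ regime used for \cref{prop:infty_renyi_sdpi_closed_form} does not work here. One must instead send $\mu_t$ toward $\delta_{x^\prime}$, so that the single output $y_0$ dominates the divergence at rate $-\log(1-t)$, and check that the additive constant is negligible. Everything else follows from results already stated.
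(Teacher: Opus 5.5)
Your proof is correct. Part a) and the sufficiency half of b) essentially coincide with the paper. For sufficiency, the paper uses the same pair $\nu=\delta_x$, $\mu_\varepsilon=(1-\varepsilon)\delta_{x^\prime}+\varepsilon\delta_x$ with $\varepsilon\downarrow0$, which is your $t=1-\varepsilon$. For a), the paper sandwiches $\eta_{\chi^2}(K)\le\eta_\alpha(K)\le\eta_{\sf TV}(K)$ rather than quoting $\eta_\alpha(K)=\eta_{\chi^2}(K)$, which amounts to the same argument.

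The genuine difference is the necessity direction of b).

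\emph{The paper's route.} It argues directly at each finite order, handling three regimes:
\begin{enumerate}
\item If $D_\alpha(\nu\|\mu)\to\infty$, the output divergence stays bounded because all rows share a support.
\item If $D_\alpha(\nu\|\mu)\to0$, the ratio tends to the local constant $\eta_{\chi^2}(K)<1$.
\item In between, strict Jensen plus a compactness argument rules out ratios approaching one.
\end{enumerate}
Only $\alpha=\infty$ is treated through the closed form.

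\emph{Your route.} You reduce every order to $\alpha=\infty$ via the monotonicity $\eta_\alpha(K)\le\eta_\infty(K)$ from \cref{corollary:distribution_independent_order_properties}. You then read off $\eta_\infty(K)<1$ from \cref{prop:infty_renyi_sdpi_closed_form}.

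\emph{What each buys.} Your argument is shorter and avoids the paper's compactness step, which is only sketched. Making it rigorous requires care with limiting pairs in which $\mu$ loses support or the divergence degenerates. Your route also yields an explicit bound, uniform over $\alpha\in(1,\infty]$:
\begin{equation*}
\eta_\alpha(K)\le1-\min_{x,x^\prime}\min_{y\in\supp(K(\cdot|x))}\frac{K(y|x^\prime)}{K(y|x)}.
\end{equation*}
The price is reliance on the machinery behind \cref{thm:monotonicity} and \cref{prop:infty_renyi_sdpi_closed_form}: first-order optimality, order insertion, and the binary-support and boundary reductions. The paper's finite-order argument needs only Jensen's inequality and the local $\chi^2$ approximation. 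There is no circularity in your route, since none of those earlier results depends on the present theorem.
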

        We now provide an example of a Markov kernel to illustrate~\cref{prop:non_trivial_sdpi_characterisation}.
        \begin{example}\label{example:z_channel}
            Consider the Z-channel $K = \big[\begin{smallmatrix}
                1 & 0\\
                1-\lambda & \lambda
            \end{smallmatrix}\big]$ with $\lambda\in(0,1)$. Then \cref{prop:non_trivial_sdpi_characterisation} gives $\eta_\alpha(K)=1$ for $\alpha>1$ because the conditionals of $K$ have non-identical support, and $\eta_\alpha(K)<1$ for $\alpha\in[0,1]$.
        \end{example}

    \subsection{Tensorisation}
        In this section, we prove tensorisation bounds for the contraction of R\'enyi Divergences. Tensorisation allows one to reduce the problem of analysing contraction in higher-dimensional spaces to simply studying contraction along each dimension individually. It has been greatly studied in the context of functional inequalities such as Sobolev-type inequalities~\cite{Bakry2014,chewi2026log}, and it has also appeared in the context of SDPIs for $\varphi$-Divergences~\cite[Theorem 3.9]{raginsky2016strong},~\cite[Equation 61]{makur2020comparison},~\cite[Corollary 6]{polyanskiy2017strong}.

        Given a divergence $D(\cdot\|\cdot)$, a tensorisation result should be understood as follows. Consider the spaces $\{\X_i\}_{i=1}^n$ and $\{\Y_i\}_{i=1}^n$, and let $(\mu_i,K_i)\in\calP(\X_i)\times\calP(\Y_i|\X_i)$ for each $i\in[n]$. Then the distribution-dependent (resp. distribution-independent) SDPI constant for $D(\cdot\|\cdot)$ satisfies a tensorisation result if $\eta_D(\mu_1\otimes\cdots\otimes \mu_n, K_1\otimes\dots\otimes K_n)$ (resp. $\eta_D(K_1\otimes\cdots \otimes K_n)$) can be ``controlled'' by some function of $\eta_D(\mu_1, K_1), \dots, \eta_D(\mu_n, K_n)$ (resp. $\eta_D(K_1), \dots, \eta_D(K_n)$). For example, a tensorisation result can occur through exact identities, as is the case with the distribution-dependent SDPI constant of some $\varphi$-Divergences. Specifically, for any convex $\varphi$ that induces a homogeneous and subadditive $\varphi$-Entropy, it is shown in~\cite[Theorem III.9]{raginsky2016strong} that
        \begin{equation}\label{eq:varphi_tensorisation_equality}
            \eta_\varphi(\mu_1\otimes\cdots\otimes\mu_n, K_1\otimes\cdots\otimes K_n) = \max_{i\in[n]}\eta_\varphi(\mu_i, K_i).
        \end{equation}
        Tensorisation can also manifest in the shape of bounds, notably seen with $\varphi$-Divergences generated by non-linear operator convex $\varphi$, for which
        \begin{equation}\label{eq:varphi_tensorisation_inequality}
            \eta_\varphi(K_1\otimes\cdots\otimes K_n)\leq 1 - \prod_{i=1}^n (1-\eta_\varphi(K_i)).
        \end{equation}
        This was initially proven for the KL Divergence~\cite[Corollary 6]{polyanskiy2017strong}, and extended in~\cite[Equation 61]{makur2020comparison}.

        Another tensorisation property that pertains to $\varphi$-Divergences comes from a result initially due to Samorodnitsky involving Shannon Entropy~\cite{samorodnitsky2016entropy}. It was later reformulated in terms of KL Divergence in~\cite[Theorem 20]{polyanskiy2017strong} and further extended to $\varphi$-Divergences with non-linear operator convex $\varphi$ in~\cite[Theorem 7]{makur2020comparison}. Its final form can be stated as follows. For any $\nu, \mu\in\calP(\X_1\times\cdots\times\X_n)$,
        \begin{equation}\label{eq:varphi_samorodnitsky_tensorisation}
            D_\varphi\big(\nu (K_1\otimes\cdots\otimes K_n)\|\mu (K_1\otimes\cdots\otimes K_n)\big) \leq \sum_{T\subseteq[n]} \left(\prod_{i\in T}\eta_\varphi(K_i)\prod_{j\in[n]\setminus T}\big(1-\eta_\varphi(K_j)\big)\right) D_\varphi(\nu_T\|\mu_T),
        \end{equation}
        where for a probability measure $\gamma\in\calP(\X_1\times\cdots\times\X_n)$ and set of indices $T\subseteq [n]$, $\gamma_T\in\calP(\prod_{i\in T}\X_i)$ denotes its marginal over the corresponding indices, with the convention that $\gamma_\varnothing$ is the (unique) probability measure on a singleton.

        Regarding R\'enyi Divergences, tensorisation properties of the form of~\cref{eq:varphi_tensorisation_equality,eq:varphi_tensorisation_inequality} are unknown. However, some advances have been partially made recently, establishing the same type of tensorisation as in~\cref{eq:varphi_samorodnitsky_tensorisation}. The result below is an extension of~\cite[Theorem 1]{abawonse2026generalized}.
        \begin{theorem}\label{prop:renyi_samorodnitsky}
            Consider $n\geq1$ pairs $(\mu_1, K_1), \dots, (\mu_n, K_n)$, where $(\mu_i, K_i)\in\calP(\X_i)\times\calP(\Y_i|\X_i)$ for each $i\in[n]$.
            \begin{enumerate}[a)]
                \item For all $\nu\in\calP(\X_1\times\cdots\times\X_n)$ with $\nu\ll\mu_1\otimes\cdots\otimes\mu_n$ and any $\alpha\in(1, \infty]$,
                \begin{equation}\label{eq:dd_renyi_samorodnitsky_tensorisation}
                    D_\alpha\big(\nu (K_1\otimes\cdots\otimes K_n)\|\mu_1K_1\otimes\cdots\otimes\mu_nK_n\big) \leq \sum_{T\subseteq[n]} \left(\prod_{i\in T}\eta_\alpha(\mu_i, K_i)\prod_{j\in[n]\setminus T}\big(1-\eta_\alpha(\mu_j, K_j)\big)\right) D_\alpha(\nu_T\|\mu_T),
                \end{equation}
                \item For all $\nu,\gamma\in\calP(\X_1\times\cdots\times\X_n)$ and any $\alpha\in(0,1)\cup(1,\infty]$, assuming additionally that $\nu\ll\gamma$ when $\alpha>1$,
                \begin{equation}\label{eq:di_renyi_samorodnitsky_tensorisation}
                    D_\alpha\big(\nu (K_1\otimes\cdots\otimes K_n)\|\gamma (K_1\otimes\cdots\otimes K_n)\big) \leq \sum_{T\subseteq[n]} \left(\prod_{i\in T}\eta_\alpha(K_i)\prod_{j\in[n]\setminus T}\big(1-\eta_\alpha(K_j)\big)\right) D_\alpha(\nu_T\|\gamma_T).
                \end{equation}
            \end{enumerate}
            \appendixproofref{appendix:proof_renyi_samorodnitsky}
        \end{theorem}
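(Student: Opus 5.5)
The plan is to turn \cref{eq:dd_renyi_samorodnitsky_tensorisation} into a norm inequality and then prove that by induction on $n$. The single-coordinate input is the functional inequality \cref{eq:renyi_sdpi_functional_inequality_non_negative_functions}. Write $\mu\triangleq\bigotimes_i\mu_i$, $K\triangleq\bigotimes_iK_i$, $f\triangleq\frac{\dd\nu}{\dd\mu}$ and $w_T\triangleq\prod_{i\in T}\eta_i\prod_{j\notin T}(1-\eta_j)$. Let $f_T$ be the $\mu$-conditional expectation of $f$ onto the coordinates in $T$, so that $f_T=\frac{\dd\nu_T}{\dd\mu_T}$.

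For $\alpha\in(1,\infty)$ we have $D_\alpha(\nu_T\|\mu_T)=\alpha'\log\|f_T\|_{L^\alpha(\mu_T)}$. Since $\sum_Tw_T=1$ and $\frac{\dd(\nu K)}{\dd(\mu K)}=K_\mu^\star f$ by \cref{eq:dual_kernel_formula}, part a) is equivalent to
\begin{equation*}
\|K_\mu^\star f\|_{L^\alpha(\mu K)}\leq\prod_{T\subseteq[n]}\|f_T\|_{L^\alpha(\mu_T)}^{w_T}.
\end{equation*}
The case $\alpha=\infty$ will follow by letting $\alpha\to\infty$, using \cref{corollary:continuity_in_order}. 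The case $n=1$ is exactly \cref{eq:renyi_sdpi_functional_inequality}, because $f_\varnothing\equiv1$.

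For the inductive step I write $K=K_1\otimes K'$ with $K'\triangleq K_2\otimes\cdots\otimes K_n$. Because the dual kernels act on separate coordinates, $K_\mu^\star=(K_1)_{\mu_1}^\star\otimes(K')^\star_{\mu'}$ and the two factors commute. Let $g\triangleq(\Id\otimes (K')^\star_{\mu'})f$, a function of $(x_1,y')$.
\begin{enumerate}[1.]
\item For each fixed $y'$, apply \cref{eq:renyi_sdpi_functional_inequality_non_negative_functions} in $x_1$ with $\lambda=\eta_1$:
\begin{equation*}
\|(K_1)^\star_{\mu_1}g(\cdot,y')\|_{L^\alpha}\leq A(y')^{\eta_1}B(y')^{1-\eta_1},
\end{equation*}
where $A(y')=\|g(\cdot,y')\|_{L^\alpha(\mu_1)}$ and $B(y')=\|g(\cdot,y')\|_{L^1(\mu_1)}$.
\item Take the $L^\alpha(\mu'K')$ norm in $y'$ and use Hölder with exponents $1/\eta_1$ and $1/(1-\eta_1)$. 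This gives $\|A\|_{L^\alpha}^{\eta_1}\|B\|_{L^\alpha}^{1-\eta_1}$.
\item Bound $\|B\|_{L^\alpha}$. Integrating out $x_1$ commutes with $(K')^\star$, so $B=(K')^\star_{\mu'}f_{\{2,\dots,n\}}$. The induction hypothesis then yields $\prod_{T\subseteq\{2..n\}}\|f_T\|^{w'_T}$, where $w'_T$ are the weights for coordinates $2,\dots,n$.
\item Bound $\|A\|_{L^\alpha}$. We have $\|A\|_{L^\alpha}^\alpha=\sum_{x_1}\mu_1(x_1)\|(K')^\star_{\mu'}f(x_1,\cdot)\|_{L^\alpha}^\alpha$. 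For each fixed $x_1$, apply the induction hypothesis to the non-negative function $f(x_1,\cdot)$, using its homogeneous form (normalise by its $L^1$ mass, which is legitimate since the weights sum to one). Then use Hölder over $x_1$ with weights $w'_T$. This gives $\prod_{T}\|f_{\{1\}\cup T}\|^{w'_T}$.
\end{enumerate}
Multiplying the bounds from steps 3 and 4 with exponents $1-\eta_1$ and $\eta_1$ reproduces exactly the weights $w_T$ over $T\subseteq[n]$, which closes the induction.

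For part b), the reference $\gamma$ need not be a product measure, so I would instead write $\gamma$ (and $\nu$) through successive conditional laws. The idea is to apply the single-coordinate inequality with the reference taken to be the conditional law of $\gamma$ given the other coordinates; this is allowed because $\eta_\alpha(K_i)\geq\eta_\alpha(\rho,K_i)$ for every $\rho$. Concretely, I would run the same induction with the disintegration $\gamma=\gamma_{\{2..n\}}\otimes\gamma_{1|\cdot}$. Each coordinate kernel is applied fibrewise, and the same two Hölder steps are used.

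For $\alpha\in(0,1)$ the factor $\alpha-1$ is negative, so every norm inequality reverses and the $L^\alpha$ quantities become quasi-norms. I would either redo the Hölder steps with reverse Hölder, or pass to the order $1-\alpha$ via skew-symmetry, since $\eta_\alpha(K)=\eta_{1-\alpha}(K)$, and swap the roles of $\nu$ and $\gamma$. I expect this non-product, $\alpha<1$ case to be the main obstacle: the fibrewise reference measures change with the conditioning, and the direction of each inequality has to be tracked carefully.
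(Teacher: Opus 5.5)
Part a) is correct and is essentially the paper's argument in functional form. Your fibrewise single-coordinate bound followed by H\"older over $y'$ is \cref{lemma:renyi_sdpi_side_information}. You obtain the side-information form of the induction hypothesis by homogeneity and H\"older over $x_1$, instead of carrying an auxiliary space $\Z$ through the induction.

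The same scheme gives part b) for $\alpha\in(1,\infty)$, provided you note that $K^\star_\gamma$ no longer factorises. The fibre reference for $K_1$ is then the law of $x_1$ given $y'$ under $\gamma(\Id\otimes K')$, and in Step 4 it is $\gamma(\cdot\mid x_1)$. Both are covered by $\eta_\alpha(K_i)$. At $\alpha=\infty$ you need Part~c) of \cref{corollary:distribution_independent_order_properties} rather than \cref{corollary:continuity_in_order}.

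The genuine gap is part b) for $\alpha\in(0,1)$, and neither of your suggested fixes works.
\begin{itemize}
\item Skew-symmetry maps $\alpha$ to $1-\alpha$, which again lies in $(0,1)$. It therefore never reaches the regime where your argument runs.
\item The H\"older steps cannot be reversed. Write $H_\alpha(\nu\|\gamma)\triangleq e^{(\alpha-1)D_\alpha(\nu\|\gamma)}$ for the Hellinger integral, and condition on an auxiliary variable $z$ with normalised weights $p_z\propto\nu_\Z(z)^\alpha\gamma_\Z(z)^{1-\alpha}$ and fibre integrals $H_z$. The fibrewise R\'enyi-SDPI only gives output fibre integrals at least $H_z^{\eta}$. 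The target, however, requires $\sum_z p_zH_z^{\eta}\geq\bigl(\sum_zp_zH_z\bigr)^{\eta}$. This is Jensen's inequality for the concave map $t\mapsto t^\eta$ in the wrong direction, and it fails whenever the $H_z$ vary.
\end{itemize}
So below order one, multiplicative fibrewise R\'enyi bounds cannot be aggregated. Separately, your density formulation does not reach pairs with $\nu\not\ll\gamma$, which part b) allows for $\alpha<1$.

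The missing idea is to work linearly in the Hellinger integral, as the paper does. For $\alpha\in(0,1)$ the Hellinger SDPI reads $H_\alpha(\nu K\|\gamma K)\geq\eta_{\calH_\alpha}(K)H_\alpha(\nu\|\gamma)+1-\eta_{\calH_\alpha}(K)$. By weighted AM--GM this is stronger than the multiplicative bound, and it sums over $z$ with no interpolation step. The induction then yields $H_\alpha(\nu K\|\gamma K)\geq\sum_Tw_TH_\alpha(\nu_T\|\gamma_T)$, which is \cref{eq:varphi_samorodnitsky_tensorisation} for $\calH_\alpha$; the paper imports this inequality directly. Concavity of $\log$ and division by $\alpha-1<0$ then give the R\'enyi statement.

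The weights produced this way involve $\eta_{\calH_\alpha}(K_i)$. They match the statement only because $\eta_{\calH_\alpha}(K_i)=\eta_{\chi^2}(K_i)=\eta_\alpha(K_i)$, which follows from operator convexity of the Hellinger generator and \cref{prop:distribution_independent_sdpi_equal_chi_square_sdpi}. This identity is specific to the distribution-independent setting, which is consistent with part a) being restricted to $\alpha>1$.
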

        \begin{remark}
            Abawonse et al.~\cite{abawonse2026generalized} established a Samorodnitsky-type inequality with R\'enyi Divergences for tensor products of resampling noise operators. The product-reference assertion of~\cref{prop:renyi_samorodnitsky} extends this form of tensorisation to tensor product of arbitrary Markov kernels, while the second assertion gives a distribution-independent version in which the reference measure may be correlated across coordinates, and which holds for a larger range of $\alpha$. While the functional tensorisation argument in~\cite{abawonse2026generalized} can be adapted to obtain the product-reference extension, we provide in~\cref{appendix:proof_renyi_samorodnitsky} a unified proof of both assertions directly at the level of probability measures.
        \end{remark}
        From~\cref{prop:renyi_samorodnitsky}, one can deduce tensorisation bounds of the same type as~\cref{eq:varphi_tensorisation_inequality} for both the distribution-dependent and distribution-independent R\'enyi-SDPI constant.
        \begin{corollary}\label{prop:tensorisation_of_sdpi_constant}
            Consider $n\geq1$ pairs $(\mu_1, K_1), \dots, (\mu_n, K_n)$, where $(\mu_i, K_i)\in\calP(\X_i)\times\calP(\Y_i|\X_i)$ for each $i\in[n]$. Denote by $\mu = \mu_1\otimes\cdots\otimes\mu_n\in\calP(\X_1\times\cdots\times\X_n)$ the product distribution and $K = K_1 \otimes\cdots\otimes K_n\in\calP(\Y_1\times\cdots\times\Y_n|\X_1\times\cdots\times\X_n)$ the product Markov kernel.
            \begin{enumerate}[a)]
                \item For any $\alpha\in(1, \infty]$,
                \begin{equation*}
                    \max_{i\in[n]}\eta_{\alpha}(\mu_i, K_i) \leq \eta_{\alpha}(\mu, K) \leq 1 - \prod_{i=1}^n\big(1-\eta_\alpha(\mu_i, K_i)\big),
                \end{equation*}
            \item For any $\alpha\in(0,1)\cup(1, \infty]$,
                \begin{equation*}
                    \max_{i\in[n]}\eta_{\alpha}(K_i) \leq \eta_{\alpha}(K) \leq 1 - \prod_{i=1}^n\big(1-\eta_\alpha(K_i)\big).
                \end{equation*}
            \end{enumerate}
            \appendixproofref{appendix:proof_tensorisation_of_sdpi_constant}
        \end{corollary}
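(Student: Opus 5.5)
The upper bounds follow from \cref{prop:renyi_samorodnitsky}, and the lower bounds from embedding a single coordinate.

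\emph{Upper bound, part a).} Fix $\nu\ll\mu$ with $0<D_\alpha(\nu\|\mu)<\infty$ and write $\eta_i\triangleq\eta_\alpha(\mu_i,K_i)$. Since $\mu K=\mu_1K_1\otimes\cdots\otimes\mu_nK_n$, \cref{eq:dd_renyi_samorodnitsky_tensorisation} gives $D_\alpha(\nu K\|\mu K)\leq\sum_T w_T D_\alpha(\nu_T\|\mu_T)$. The term $T=\varnothing$ vanishes because $\nu_\varnothing=\mu_\varnothing$. For every nonempty $T$, the data-processing inequality for the marginalisation map gives $D_\alpha(\nu_T\|\mu_T)\leq D_\alpha(\nu\|\mu)$. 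Hence
\begin{equation*}
D_\alpha(\nu K\|\mu K)\leq\Bigl(\sum_{T\neq\varnothing}w_T\Bigr)D_\alpha(\nu\|\mu),
\end{equation*}
and $\sum_{T\neq\varnothing}w_T=1-w_\varnothing=1-\prod_i(1-\eta_i)$, because the $w_T$ are the probabilities of independent Bernoulli$(\eta_i)$ events and sum to one. Dividing by $D_\alpha(\nu\|\mu)$ and taking the supremum over $\nu$ gives the bound.

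\emph{Upper bound, part b).} The argument is identical with \cref{eq:di_renyi_samorodnitsky_tensorisation}, applied to arbitrary $\gamma$ in place of $\mu$. For $\alpha>1$ the finiteness of $D_\alpha(\nu\|\gamma)$ forces $\nu\ll\gamma$, so the hypothesis of that inequality holds. For $\alpha\in(0,1)$ no absolute-continuity hypothesis is needed.

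\emph{Lower bounds.} Fix $i$ and any $\nu_i\ll\mu_i$ with $0<D_\alpha(\nu_i\|\mu_i)<\infty$. Set $\nu\triangleq\mu_1\otimes\cdots\otimes\nu_i\otimes\cdots\otimes\mu_n$, i.e.\ $\mu$ with its $i$-th factor replaced by $\nu_i$. Additivity of Rényi divergences over product measures, including at order $\infty$, gives $D_\alpha(\nu\|\mu)=D_\alpha(\nu_i\|\mu_i)$. Since $\nu K$ and $\mu K$ are again products that differ only in coordinate $i$, we also get $D_\alpha(\nu K\|\mu K)=D_\alpha(\nu_iK_i\|\mu_iK_i)$. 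The ratios therefore coincide, and taking the supremum gives $\eta_\alpha(\mu,K)\geq\eta_\alpha(\mu_i,K_i)$. For part b) the same construction is used with an arbitrary $\gamma_i$ in place of $\mu_i$ and product padding by a common $\rho_j$ in the other coordinates; taking the supremum over $\gamma_i$ then gives $\eta_\alpha(K)\geq\eta_\alpha(K_i)$. The convention that an empty constraint set gives value zero makes the degenerate cases consistent.

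All steps are routine; the only points needing care are that the $T=\varnothing$ term in the Samorodnitsky bound vanishes and the monotonicity of $D_\alpha$ under marginalisation.
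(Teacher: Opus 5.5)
Your proposal is correct and follows the paper's proof essentially step for step. The lower bounds come from replacing a single factor of a product reference by $\nu_i$ and using additivity of R\'enyi Divergences over products, then taking the supremum over product references for part b). The upper bounds come from \cref{prop:renyi_samorodnitsky} together with the DPI for marginalisation, the vanishing of the $T=\varnothing$ term, and $\sum_{T\neq\varnothing}w_T=1-\prod_{i}(1-\eta_i)$. You also note that finiteness of $D_\alpha(\nu\|\gamma)$ supplies the hypothesis $\nu\ll\gamma$ when $\alpha>1$, a point the paper leaves implicit.
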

        \begin{remark}
            In both~\cref{prop:renyi_samorodnitsky,prop:tensorisation_of_sdpi_constant}, the distribution-dependent results do not include the range $\alpha\in(0,1)$. This is not a limitation of the proof, as illustrated by~\cref{example:no_tensorisation_alpha_frac12}, which provides a counterexample to both results.
        \end{remark}
        \begin{example}\label{example:no_tensorisation_alpha_frac12}
            Let $\alpha=\frac12$, $\X=\Y=\{0,1\}$, $\mu=\delta_0$, and $K=\big[\begin{smallmatrix}
                \frac15 & \frac45\\
                1 & 0
            \end{smallmatrix}\big]$. Writing $\nu_t\triangleq t\delta_0+(1-t)\delta_1$, where $0<t<1$, one has
            \begin{equation*}
                \eta_{\frac12}(\mu,K)=\sup_{0<t<1}\frac{D_\frac12(\nu_tK\|\mu K)}{D_\frac12(\nu_t\|\mu)}=\sup_{0<t<1}\frac{-2\log\left(\frac{\sqrt{5-4t}+4\sqrt t}{5}\right)}{-\log t}=\frac13.
            \end{equation*}
            Now consider $K\otimes K$ and $\mu\otimes \mu$ with $\X^2=\Y^2=\{00, 01, 10, 11\}$. Choosing
            \begin{equation*}
                \nu = \frac{16}{49}\delta_{00}+\frac{33}{49}\delta_{11},
            \end{equation*}
            one finds that $D_\frac12(\nu\|\mu\otimes\mu)=-2\log(4/7)$ and $D_\frac12\big(\nu (K\otimes K)\|\mu K\otimes\mu K\big)=-2\log(5/7)$. Hence
            \begin{equation*}
                \eta_{\frac12}(\mu\otimes\mu,K\otimes K)\geq\frac{\log(7/5)}{\log(7/4)}>\frac59=1-\left(1-\eta_{1/2}(\mu,K)\right)^2,
            \end{equation*}
            showing that, in the distribution-dependent setting, tensorisation results such as~\cref{prop:tensorisation_of_sdpi_constant}, and thus~\cref{prop:renyi_samorodnitsky}, cannot hold.
        \end{example}
        In general, R\'enyi Divergences do not admit tensorisation identities of the form of~\cref{eq:varphi_tensorisation_equality}. The next example illustrates this fact for $\alpha=\infty$.
        \begin{example}
            Consider $\X=\Y=\{0,1\}$, $\mu=\left[\frac12,\frac12\right]$, and $K=\mathrm{BSC}(\frac13)$. From~\cref{example:dd_infty_sdpi_bsc},
            \begin{equation*}
                \eta_\infty(\mu,K)=\frac{\log(4/3)}{\log 2}.
            \end{equation*}
            Let $A=\{00,01,10\}$ and $\nu=(\mu\otimes\mu)_{|A}$. Since $\mu K=\mu$, a direct calculation gives
            \begin{equation*}
                \nu K^{\otimes 2}=\frac1{27}\begin{bmatrix}8&7&7&5\end{bmatrix},
            \end{equation*}
            where the entries are ordered according to $00,01,10,11$. Consequently,
            \begin{equation*}
                D_\infty(\nu\|\mu\otimes\mu)=\log\frac43,
                \qquad
                D_\infty\bigl(\nu (K\otimes K)\|\mu K\otimes\mu K\bigr)=\log\frac{32}{27}.
            \end{equation*}
            Hence, the identity $\eta_\infty(\mu_1\otimes\cdots\otimes \mu_n, K_1\otimes\cdots\otimes K_n)=\max_i\eta_\infty(\mu_i, K_i)$ does not hold in general since
            \begin{equation*}
                \eta_\infty(\mu\otimes\mu, K\otimes K)
                \geq\frac{\log(32/27)}{\log(4/3)}
                >\frac{\log(4/3)}{\log 2}
                =\eta_\infty(\mu,K).
            \end{equation*}
        \end{example}
\section{Bounds on the Contraction Coefficient}\label{sec:bounds_on_contraction_coefficient}\noindent
    The results in~\cref{sec:properties_exact_characterisations} provide different avenues for simplifying the computation of $\eta_\alpha(\mu, K)$ and $\eta_\alpha(K)$. Whether by restricting the optimisation space or obtaining exact characterisations for certain orders $\alpha$, they enable the derivation of analytical formulas for the SDPI constants of well-known channels. However, analytically computing the exact contraction coefficient of a general Markov kernel for an arbitrary order $\alpha$ is a difficult endeavour. Unlike the $\chi^2$-contraction, which essentially reduces to an eigenvalue problem, evaluating R\'enyi-SDPI constants demands a different approach.

    To overcome this bottleneck, this section shifts the focus from exactly calculating contraction coefficients to bounding them. By developing systematic upper and lower bounds for $\eta_\alpha(\mu, K)$ and $\eta_\alpha(K)$, we aim to tightly capture a channel's contractive behaviour across all values of $\alpha$, providing clear insights even where closed-form expressions are unattainable. The central idea behind our results is to exploit the relationship between R\'enyi Divergences and other divergences whose contraction properties are well understood. This approach allows us to inherit established contractive properties and map them back into bounds on R\'enyi-SDPI constants.
    \subsection{Linear Bounds between R\'enyi-SDPI Constants and $\chi^2$-SDPI Constants}\label{section:linear_bounds_renyi_chi_squared_sdpi}\noindent
        To anchor our bounds for R\'enyi-SDPI constants, we return to the $\chi^2$-contraction coefficient as our primary reference point. The foundational role of the $\chi^2$-divergence stems from its behaviour as the local quadratic approximation for any $\varphi$-Divergence with $\varphi$ appropriately smooth~\cite[Chapter 7.10]{Polyanskiy2025}. Thanks to this property, it is a well-established result that the $\chi^2$-SDPI constant naturally acts as a universal lower bound on the SDPI constant of many $\varphi$-Divergences. Specifically, whenever $\varphi$ is twice differentiable at unity and $\varphi^{\prime\prime}(1)>0$, one has $\eta_{\chi^2}(\mu, K) \leq \eta_\varphi(\mu, K)$~\cite[Theorem III.3]{raginsky2016strong}\cite[Proposition 3, Item 7]{makur2020comparison} and $\eta_{\chi^2}(K)\leq\eta_\varphi(K)$~\cite[Theorem 5.4]{cohen1993relative}\cite[Theorem 2]{polyanskiy2017strong}\cite[Proposition 5, Item 6]{makur2020comparison}. In the realm of R\'enyi Divergences, $\chi^2$-Divergence also acts as a local quadratic approximation, which directly leads to the following result.
        \begin{theorem}[{\cite[Theorem 1 and Corollary 2]{jin2024properties}}]\label{thm:local_chi_squared_lb_on_renyi_sdpi}
            Let $\alpha\in(0,1)\cup(1, \infty)$. Then we have
            \begin{equation}
                \eta_{\chi^2}(\mu, K) \leq \eta_\alpha(\mu, K)\quad\text{ and }\quad\eta_{\chi^2}(K) \leq \eta_\alpha(K).
            \end{equation}
        \end{theorem}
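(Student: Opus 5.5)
The plan is a local perturbation argument around the reference measure: along any direction, the Rényi contraction ratio should tend to the corresponding $\chi^2$ contraction ratio. Fix $\mu$ and $K$. Assume $\eta_{\chi^2}(\mu,K)>0$, since otherwise there is nothing to prove.

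Take any $h\in\F(\X)$ with $h=0$ off $\supp(\mu)$, $\mean{\mu}{h}=0$ and $\mean{\mu}{h^2}>0$. For $\varepsilon>0$ small enough that $1+\varepsilon h\geq0$, set $\nu_\varepsilon\triangleq(1+\varepsilon h)\mu$. Then $\nu_\varepsilon\ll\mu$, and since everything is finite, $0<D_\alpha(\nu_\varepsilon\|\mu)<\infty$. By~\cref{eq:dual_kernel_formula},
\begin{equation*}
\frac{\dd(\nu_\varepsilon K)}{\dd(\mu K)}=1+\varepsilon K_\mu^\star h, \qquad \mean{\mu K}{K_\mu^\star h}=\mean{\mu}{h}=0,
\end{equation*}
where the last identity uses~\cref{eq:markov_kernel_mean_preservation}.

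Next, expand the divergence to second order. For any $g$ with $\mean{\rho}{g}=0$, a Taylor expansion of $t\mapsto t^\alpha$ at $1$ gives
\begin{equation*}
\mean{\rho}{(1+\varepsilon g)^\alpha}=1+\tfrac{\alpha(\alpha-1)}{2}\varepsilon^2\mean{\rho}{g^2}+O(\varepsilon^3).
\end{equation*}
Taking $\frac{1}{\alpha-1}\log$ yields
\begin{equation*}
D_\alpha\bigl((1+\varepsilon g)\rho\,\|\,\rho\bigr)=\tfrac{\alpha}{2}\varepsilon^2\mean{\rho}{g^2}+O(\varepsilon^3).
\end{equation*}
This is legitimate for every $\alpha\in(0,1)\cup(1,\infty)$ because the alphabets are finite: $g$ is bounded and the expansion is uniform. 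Applying this with $(\rho,g)=(\mu,h)$ and with $(\rho,g)=(\mu K,K_\mu^\star h)$, and noting the common factor $\alpha/2$, we get
\begin{equation*}
\eta_\alpha(\mu,K)\geq\lim_{\varepsilon\downarrow0}\frac{D_\alpha(\nu_\varepsilon K\|\mu K)}{D_\alpha(\nu_\varepsilon\|\mu)}=\frac{\|K_\mu^\star h\|^2_{L^2(\mu K)}}{\|h\|^2_{L^2(\mu)}}.
\end{equation*}
If $K_\mu^\star h=0$ the ratio is still fine: the numerator is then exactly zero and the limit is $0$.

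To conclude, use the standard variational characterisation
\begin{equation*}
\eta_{\chi^2}(\mu,K)=\sup_{h:\,\mean{\mu}{h}=0}\frac{\|K_\mu^\star h\|^2_{L^2(\mu K)}}{\|h\|^2_{L^2(\mu)}}.
\end{equation*}
This follows from the homogeneity of $\chi^2$ and~\cref{eq:dual_kernel_formula}, since $\chi^2(\nu\|\mu)=\|f-\allones\|^2_{L^2(\mu)}$ and $\chi^2(\nu K\|\mu K)=\|K_\mu^\star(f-\allones)\|^2_{L^2(\mu K)}$, and scaling a mean-zero direction keeps the ratio unchanged. Taking the supremum over $h$ gives $\eta_{\chi^2}(\mu,K)\leq\eta_\alpha(\mu,K)$, and then taking the supremum over $\mu$ gives the distribution-independent inequality.

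The main step needing care is the uniformity of the Taylor remainder, which is needed so that the ratio of the two expansions converges. On finite alphabets with bounded $h$ this is routine. The only other subtlety is $\alpha\in(0,1)$, where the sign of $\alpha-1$ flips, but it cancels in $\frac{1}{\alpha-1}\cdot\frac{\alpha(\alpha-1)}{2}$.
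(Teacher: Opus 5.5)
Your proposal is correct and follows the same route the paper relies on. The paper does not prove this statement itself but cites it from Jin et al.\ as a consequence of the local quadratic behaviour of R\'enyi Divergences around $\mu$, and it uses your perturbation $\nu_\varepsilon=(1+\varepsilon h)\mu$ with a second-order expansion elsewhere (e.g.\ in the Chernoff-information remark) to obtain the same limit $\|K_\mu^\star h\|_{L^2(\mu K)}^2/\|h\|_{L^2(\mu)}^2$.
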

        While these lower bounds, illustrated in~\cref{fig:di_bounds_on_renyi_sdpi_via_hellinger,fig:dd_bounds_on_renyi_sdpi}, confirm that a channel cannot contract R\'enyi Divergences more aggressively than it contracts the $\chi^2$-Divergence, they do not dictate how far the R\'enyi-SDPI constant might deviate from $\eta_{\chi^2}$. In this section, we derive upper bounds that allow us to ``sandwich'' the R\'enyi-SDPI constant and more deeply understand its relationship to $\eta_{\chi^2}$.

        A natural strategy for obtaining linear bounds between contraction coefficients is to establish bounds between the divergences themselves. By upper and lower bounding a target divergence as a function of a reference divergence, one directly obtains an inequality governing their respective SDPI constants. This approach has been successfully applied in the context of $\varphi$-Divergences, establishing precise bounds that relate $\eta_\varphi(\mu, K)$ to well-known contraction coefficients like $\eta_{\chi^2}(\mu, K)$~\cite{makur2020comparison,george2025divergence,grosse2025bounds}. Following this approach, we first  establish an inequality between R\'enyi Divergence and $\chi^2$-Divergence.
        \begin{lemma}\label{lemma:bound_between_renyi_chi_squared}
        Let $\nu, \mu\in\calP(\X)$ with $\nu\ll\mu$ and define $M\triangleq \left\|\frac{\dd\nu}{\dd\mu}\right\|_{L^\infty(\mu)}$. Then for $\alpha>2$, we have
            \begin{align*}
                D_\alpha(\nu\|\mu) \leq \frac1{\alpha-1}\log\left(1+C_\alpha\left(M\right)\chi^2(\nu\|\mu)\right),
            \end{align*}
            where $C_\alpha(t)\triangleq\frac{t^\alpha-\alpha t+\alpha-1}{(t-1)^2}$ for $t\in\reals_+\setminus\{1\}$ and $C_\alpha(1)\triangleq\frac{\alpha(\alpha-1)}{2}$.
            \appendixproofref{appendix:proof_bound_between_renyi_chi_squared}
        \end{lemma}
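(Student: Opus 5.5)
Let $f\triangleq\frac{\dd\nu}{\dd\mu}$, so that $\mean{\mu}{f}=1$, $\chi^2(\nu\|\mu)=\mean{\mu}{(f-1)^2}$ and $(\alpha-1)D_\alpha(\nu\|\mu)=\log\mean{\mu}{f^\alpha}$. Since $\log$ is increasing, the claim is equivalent to
\begin{equation*}
    \mean{\mu}{f^\alpha}\leq 1+C_\alpha(M)\,\mean{\mu}{(f-1)^2}.
\end{equation*}
Because $\mean{\mu}{f}=1$, we have $\mean{\mu}{f^\alpha}-1=\mean{\mu}{f^\alpha-\alpha f+\alpha-1}$, and the integrand equals $C_\alpha(f)(f-1)^2$ pointwise, with the convention $C_\alpha(1)=\frac{\alpha(\alpha-1)}{2}$ matching the second-order Taylor coefficient of $t\mapsto t^\alpha$ at $t=1$. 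Hence
\begin{equation*}
    \mean{\mu}{f^\alpha}-1=\mean{\mu}{C_\alpha(f)(f-1)^2}.
\end{equation*}
Since $0\leq f\leq M$ holds $\mu$-a.e., it suffices to show that $C_\alpha$ is non-decreasing on $[0,\infty)$ when $\alpha>2$. Then $C_\alpha(f)\leq C_\alpha(M)$ a.e., and the bound follows.

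The main step is therefore the monotonicity of $C_\alpha$. I would use the integral form of the Taylor remainder:
\begin{equation*}
    t^\alpha-\alpha t+\alpha-1=\alpha(\alpha-1)(t-1)^2\int_0^1(1-s)\bigl(1+s(t-1)\bigr)^{\alpha-2}\dd s.
\end{equation*}
This gives
\begin{equation*}
    C_\alpha(t)=\alpha(\alpha-1)\int_0^1(1-s)\bigl(1+s(t-1)\bigr)^{\alpha-2}\dd s,
\end{equation*}
valid for all $t\geq0$ and consistent with $C_\alpha(1)$. For $\alpha>2$ and each fixed $s\in[0,1]$, the map $t\mapsto(1+s(t-1))^{\alpha-2}$ is non-decreasing, because its base $1-s+st$ is non-negative and non-decreasing in $t$ and the exponent $\alpha-2$ is positive. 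Hence $C_\alpha$ is non-decreasing, and continuous at $t=1$ and at $t=0$.

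The only delicate points are bookkeeping. The first is the points where $f=1$, which the defined value $C_\alpha(1)$ handles. The second is $\mu$-null sets, which do not affect expectations. The third is that $M\geq1$ is automatic, since $\mean{\mu}{f}=1$; this is harmless because the argument only needs $f\leq M$. One also needs $\nu\ll\mu$ with $M<\infty$, which holds automatically on finite spaces.

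Finally, applying $\frac{1}{\alpha-1}\log(\cdot)$ to both sides of the reduced inequality yields the stated bound on $D_\alpha(\nu\|\mu)$.
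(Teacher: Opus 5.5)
Your proposal is correct and follows essentially the same route as the paper. Both use the integral form of the Taylor remainder to write $C_\alpha(t)=\alpha(\alpha-1)\int_0^1(1-u)\bigl(1+u(t-1)\bigr)^{\alpha-2}\,\mathrm{d}u$, deduce that $C_\alpha$ is non-decreasing for $\alpha\geq 2$, bound $C_\alpha(f)\leq C_\alpha(M)$ pointwise, and integrate against $\mu$ using $\mathbb{E}_\mu[f]=1$. The paper only adds a preliminary, weaker Lagrange-remainder version for intuition.
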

        \begin{remark}
            \Cref{lemma:bound_between_renyi_chi_squared} can be seen as an extension of the inequality
            \begin{equation}\label{eq:renyi_ub_by_chi_squared}
                D_\alpha(\nu\|\mu) \leq D_2(\nu\|\mu) = \log\left(1+\chi^2(\nu\|\mu)\right),
            \end{equation}
            which holds for $\alpha\in[0,2]$ due to the monotonicity of R\'enyi Divergences~\cite[Theorem 3]{van2014renyi}.
        \end{remark}
        We are now in a position to state the main result of this section.
        \begin{proposition}\label{prop:ub_on_eta_alpha_by_eta_chi_squared}
            Consider a pair $(\mu, K)\in\calP(\X)\times\calP(\Y|\X)$. Then, for $\alpha>2$ we have
            \begin{equation}
                \eta_\alpha(\mu, K) \leq C_\alpha\left(\frac1{\min_{y\in\supp(\mu K)}\mu K(y)}\right) \cdot\frac{\eta_{\chi^2}(\mu, K)}{(\alpha-1)\min_{x\in\supp(\mu)}\mu(x)},
            \end{equation}
            where $C_\alpha(t)$ is the function defined in~\cref{lemma:bound_between_renyi_chi_squared}. For $\alpha\in[1, 2]$ we have
            \begin{equation}
                \eta_\alpha(\mu, K) \leq \frac{\eta_{\chi^2}(\mu, K)}{\min_{x\in\supp(\mu)}\mu(x)}
            \end{equation}
            while for $\alpha\in(0,1)$, assuming $\mu$ has full support, we have
            \begin{equation}
                \eta_\alpha(\mu, K) \leq \frac{\eta_{\chi^2}(\mu, K)}{\alpha\min_{x\in\X}\mu(x)}.
            \end{equation}
            \appendixproofref{appendix:proof_ub_on_eta_alpha_by_eta_chi_squared}
        \end{proposition}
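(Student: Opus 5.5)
The plan is to compare numerator and denominator of the contraction ratio separately with $\chi^2$-Divergences, and then pass the $\chi^2$ ratio through $\eta_{\chi^2}(\mu,K)$. Fix $\nu$ with $0<D_\alpha(\nu\|\mu)<\infty$. For $\alpha\geq1$ this forces $\nu\ll\mu$. For $\alpha\in(0,1)$ the full-support assumption on $\mu$ gives $\nu\ll\mu$ automatically. In all cases $\nu K\ll\mu K$, and both $\chi^2(\nu\|\mu)$ and $\chi^2(\nu K\|\mu K)$ are finite. The chain I aim for is
\begin{equation*}
D_\alpha(\nu K\|\mu K)\leq c_1\,\chi^2(\nu K\|\mu K)\leq c_1\,\eta_{\chi^2}(\mu,K)\,\chi^2(\nu\|\mu)\leq \frac{c_1}{c_2}\,\eta_{\chi^2}(\mu,K)\,D_\alpha(\nu\|\mu),
\end{equation*}
with constants $c_1,c_2$ depending only on $\alpha$, $\mu$ and $\mu K$. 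Taking the supremum over $\nu$ then gives each of the three claimed bounds.

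\emph{Denominator (lower bound $D_\alpha\geq c_2\chi^2$).} Write $m\triangleq\min_{x\in\supp(\mu)}\mu(x)$. Then
\begin{equation*}
\chi^2(\nu\|\mu)=\sum_{x\in\supp(\mu)}\frac{(\nu(x)-\mu(x))^2}{\mu(x)}\leq\frac1m\sum_x(\nu(x)-\mu(x))^2.
\end{equation*}
The positive and negative parts of $\nu-\mu$ each have total mass $\|\nu-\mu\|_{\sf TV}$, and a sum of squares of non-negative terms is at most the square of their sum. Hence $\sum_x(\nu(x)-\mu(x))^2\leq2\|\nu-\mu\|_{\sf TV}^2$. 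For $\alpha\geq1$, Pinsker's inequality and monotonicity in the order give $2\|\nu-\mu\|_{\sf TV}^2\leq D_{\sf KL}(\nu\|\mu)\leq D_\alpha(\nu\|\mu)$. For $\alpha\in(0,1)$, I would instead use the R\'enyi--Pinsker inequality $D_\alpha(\nu\|\mu)\geq2\alpha\|\nu-\mu\|_{\sf TV}^2$~\cite{van2014renyi}. This yields $c_2=m$ for $\alpha\geq1$ and $c_2=\alpha m$ for $\alpha\in(0,1)$, with $m=\min_x\mu(x)$ in the latter case.

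\emph{Numerator (upper bound $D_\alpha\leq c_1\chi^2$).} For $\alpha\in(0,2]$, \cref{eq:renyi_ub_by_chi_squared} together with $\log(1+u)\leq u$ gives $c_1=1$. Combining with $c_2$ produces the two bounds for $\alpha\in(0,1)$ and $\alpha\in[1,2]$. For $\alpha>2$, I apply \cref{lemma:bound_between_renyi_chi_squared} to the pair $(\nu K,\mu K)$, which gives
\begin{equation*}
D_\alpha(\nu K\|\mu K)\leq\frac{C_\alpha(M')}{\alpha-1}\,\chi^2(\nu K\|\mu K),
\end{equation*}
again after $\log(1+u)\leq u$. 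Here $M'=\max_{y\in\supp(\mu K)}\nu K(y)/\mu K(y)$, and since $\nu K(y)\leq1$ we have $M'\leq1/\min_{y\in\supp(\mu K)}\mu K(y)$.

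The one step needing an actual argument is that $C_\alpha$ is non-decreasing on $\reals_+$ for $\alpha\geq2$, so that $M'$ may be replaced by this bound. I would prove it via the integral (Taylor-remainder) representation
\begin{equation*}
C_\alpha(t)=\alpha(\alpha-1)\int_0^1(1-u)\bigl(1+u(t-1)\bigr)^{\alpha-2}\,\dd u,
\end{equation*}
which follows from the second-order Taylor formula for $s\mapsto s^\alpha$ at $s=1$ and also matches the value $C_\alpha(1)=\alpha(\alpha-1)/2$. For $\alpha\geq2$ the integrand is non-decreasing in $t$, since $1+u(t-1)\geq0$. With this monotonicity, $c_1=C_\alpha\bigl(1/\min_y\mu K(y)\bigr)/(\alpha-1)$, and dividing by $c_2=m$ gives exactly the stated bound for $\alpha>2$. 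I expect this monotonicity step and the careful handling of supports to be the only delicate points; the rest is direct chaining.
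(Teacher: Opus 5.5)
Your proposal is correct and follows essentially the same route as the paper. The denominator is handled by a Pinsker/Gilardoni lower bound combined with $\chi^2(\nu\|\mu)\le 2\|\nu-\mu\|_{\sf TV}^2/\min_{x\in\supp(\mu)}\mu(x)$, and the numerator by \cref{eq:renyi_ub_by_chi_squared} or \cref{lemma:bound_between_renyi_chi_squared} with $\log(1+u)\le u$. For the latter, the paper also uses the integral-remainder monotonicity of $C_\alpha$. The differences are only cosmetic: you derive the $\chi^2$--TV comparison directly instead of citing it, and you bound $M'$ through $\nu K(y)\le 1$.
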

        \begin{remark}\label{remark:limitation_renyi_chi_squared_bounds}
            Due to the dependence on $\min_{x\in\supp(\mu)}\mu(x)$ and $\min_{y\in\supp(\mu K)}\mu K(y)$, the bounds in~\cref{prop:ub_on_eta_alpha_by_eta_chi_squared} can rapidly become vacuous. Since $\frac1{\min_{x\in\supp(\mu)}\mu(x)}\geq|\supp(\mu)|$ and $\frac1{\min_{y\in\supp(\mu K)}\mu K(y)}\geq|\supp(\mu K)|$, the bounds will typically be informative only when the relevant supports are small. Moreover, the additional dependence on $\alpha$ can render the bounds vacuous either when $\alpha$ gets close to 0, or when it is sufficiently larger than 2. The bounds across the different regimes of $\alpha$ are illustrated in~\cref{fig:dd_bounds_on_renyi_sdpi}.
        \end{remark}
        Despite these limitations, the bounds in~\cref{prop:ub_on_eta_alpha_by_eta_chi_squared} yield an exact result about the convergence of reversible Markov chains, which is discussed in~\cref{section:mixing_time}.
    \subsection{Bounds via One-To-One Mapping with Hellinger Divergences}
        The relationship between the R\'enyi and $\chi^2$-SDPI constants explored in~\cref{section:linear_bounds_renyi_chi_squared_sdpi} provides a useful initial comparison, but the resulting bounds can become vacuous outside of specific regimes. In this section we shift focus to Hellinger Divergences, driven by the one-to-one mapping that relates them to R\'enyi Divergences. By exploiting this connection, we obtain refined bounds that remain non-vacuous even in the regimes where the $\chi^2$-based bounds deteriorate.

        The starting point for this analysis is the identity connecting the R\'enyi and Hellinger Divergences:
        \begin{equation*}
            \renyiDiv{\nu}{\mu}=\frac1{\alpha-1}\log\big(1+(\alpha-1)\calH_\alpha(\nu\|\mu)\big)
        \end{equation*}
        By leveraging~\cref{lemma:sdpi_relationship_concave_function} (see~\cref{appendix:sdpi_relationship_concave_function}), which allows comparing the SDPI constants of divergences related through a convex or concave transformation, the following proposition bounds $\eta_\alpha$ using the SDPI constant of Hellinger Divergences.
        \begin{proposition}\label{corollary:sdpi_ub_lb_hellinger}
            For any pair $(\mu, K)\in\calP(\X)\times\calP(\Y|\X)$:
            \begin{enumerate}[a)]
                \item If $\alpha>1$, then
                \begin{equation}\label{eq:sdpi_hellinger_lb_sdpi renyi}
                    \eta_\alpha(\mu, K) \geq \eta_{\calH_\alpha}(\mu, K)\quad\text{and}\quad\eta_\alpha(K) \geq \eta_{\calH_\alpha}(K)
                \end{equation}
                \item If $\alpha\in(0,1)$, then
                \begin{equation}\label{eq:sdpi_hellinger_ub_sdpi renyi}
                    \eta_\alpha(\mu, K) \leq \eta_{\calH_\alpha}(\mu, K)\quad\text{and}\quad\eta_\alpha(K) \leq \eta_{\calH_\alpha}(K)
                \end{equation}
            \end{enumerate}
            \appendixproofref{appendix:sdpi_relationship_concave_function}
        \end{proposition}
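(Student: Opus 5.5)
The plan is to write $D_\alpha = g(\calH_\alpha)$ with $g(u)\triangleq\frac1{\alpha-1}\log\big(1+(\alpha-1)u\big)$, defined on the range of $\calH_\alpha$. The key structural fact is that $g$ is increasing with $g(0)=0$. It is strictly concave for $\alpha>1$ and strictly convex for $\alpha\in(0,1)$, since $g''(u)=-(\alpha-1)/(1+(\alpha-1)u)^2$. The whole argument then rests on one elementary property of such functions. For concave $g$ with $g(0)=0$, the map $u\mapsto g(u)/u$ is non-increasing on $(0,\infty)$, so $g(\lambda u)\geq\lambda g(u)$ for $\lambda\in[0,1]$. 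For convex $g$ with $g(0)=0$, the reverse holds: $g(\lambda u)\leq\lambda g(u)$.

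\textbf{Case $\alpha>1$.} Fix $\nu$ with $0<\calH_\alpha(\nu\|\mu)<\infty$; this is equivalent to $0<D_\alpha(\nu\|\mu)<\infty$, since $g$ is a bijection fixing $0$. Set $u=\calH_\alpha(\nu\|\mu)$ and $v=\calH_\alpha(\nu K\|\mu K)\leq u$ by the DPI, and write $v=\lambda u$ with $\lambda=v/u\in[0,1]$. Concavity gives
\[
D_\alpha(\nu K\|\mu K)=g(\lambda u)\geq\lambda g(u)=\frac{\calH_\alpha(\nu K\|\mu K)}{\calH_\alpha(\nu\|\mu)}D_\alpha(\nu\|\mu).
\]
Dividing by $D_\alpha(\nu\|\mu)$ shows that the R\'enyi contraction ratio of $\nu$ is at least its Hellinger ratio. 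The constraint sets of the two suprema coincide, so taking $\sup_\nu$ yields $\eta_\alpha(\mu,K)\geq\eta_{\calH_\alpha}(\mu,K)$. Taking $\sup_\mu$ afterwards, or running the same computation with arbitrary pairs $(\nu,\mu)$, gives the distribution-independent inequality.

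\textbf{Case $\alpha\in(0,1)$.} Here $\calH_\alpha\in[0,\frac1{1-\alpha}]$, and $1+(\alpha-1)u\geq0$ on this range. The value $u=\frac1{1-\alpha}$ (disjoint supports) corresponds to $D_\alpha=\infty$, and such $\nu$ are excluded from the R\'enyi supremum. For every admissible $\nu$, convexity gives the reversed per-$\nu$ inequality: the R\'enyi ratio is at most the Hellinger ratio, which is at most $\eta_{\calH_\alpha}(\mu,K)$. Taking suprema gives the claim for both the distribution-dependent and distribution-independent coefficients.

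\textbf{Main obstacle.} The only delicate point is bookkeeping of the constraint sets and infinite values. For $\alpha>1$, the sets $\{0<D_\alpha<\infty\}$ and $\{0<\calH_\alpha<\infty\}$ must coincide. For $\alpha<1$, every $\nu$ admissible for the R\'enyi supremum must also be admissible for the Hellinger one; this holds because finite $D_\alpha$ forces $u<\frac1{1-\alpha}$. I would isolate the elementary fact as a small lemma on SDPI constants of divergences related by a monotone concave or convex map fixing zero, and derive both parts of the statement from it.
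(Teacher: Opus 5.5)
Your proposal is correct and is essentially the paper's argument: the paper proves exactly this kind of lemma (\cref{lemma:sdpi_relationship_concave_function}) from the secant-slope monotonicity of $g(x)/x$, but applies it to the inverse map $g(x)=\frac{e^{(\alpha-1)x}-1}{\alpha-1}$. Writing $\calH_\alpha=g(D_\alpha)$, this $g$ is convex for $\alpha>1$ and concave for $\alpha\in(0,1)$, which is the mirror image of your logarithmic parametrisation $D_\alpha=g(\calH_\alpha)$. Your explicit treatment of the boundary value $\calH_\alpha=\frac1{1-\alpha}$ (where $D_\alpha=\infty$) for $\alpha<1$ is slightly more careful than the paper's lemma, which tacitly assumes $D$ is finite-valued and so does not address this case.
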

        \begin{remark}
            At order $\alpha=1$, both the R\'enyi and Hellinger Divergences coincide with the KL Divergence, and hence their SDPI constants coincide as well.
        \end{remark}
        Part a) of~\cref{corollary:sdpi_ub_lb_hellinger} specifies new lower bounds on the R\'enyi-SDPI constant, and it is thus natural to compare them to the $\eta_{\chi^2}$-based ones. An improvement can be expected since in general $\eta_{\chi^2}(\mu,K)\leq\eta_{\calH_\alpha}(\mu,K)$ and $\eta_{\chi^2}(K)\leq\eta_{\calH_\alpha}(K)$, as illustrated in~\cref{fig:di_bounds_on_renyi_sdpi_via_hellinger,fig:dd_bounds_on_renyi_sdpi} for $\alpha>1$. Part b) of~\cref{corollary:sdpi_ub_lb_hellinger}, where $\eta_{\calH_\alpha}$ instead acts as an upper bound, is illustrated in~\cref{fig:dd_bounds_on_renyi_sdpi} for $\alpha\in(0,1)$. In both figures, the SDPI constants are approximated numerically by grid search, which is feasible here because the underlying optimisation problems are low-dimensional.
        \begin{figure}
            \centering
            \includegraphics[width=0.8\linewidth]{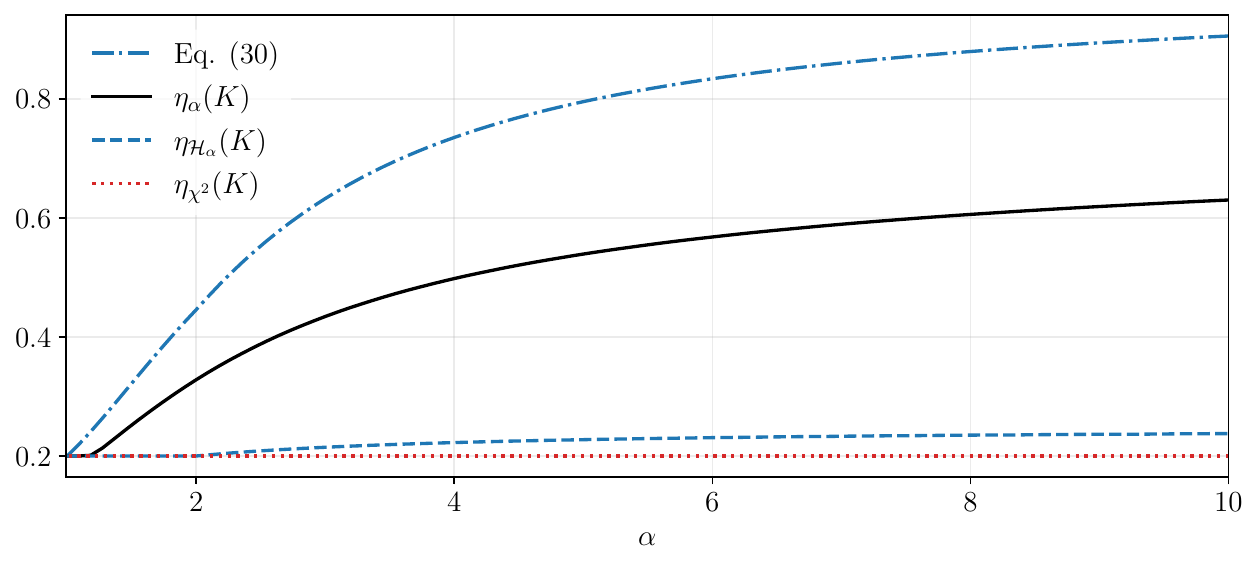}
            \caption{Comparison between $\eta_\alpha(K)$ and the bounds in~\cref{thm:local_chi_squared_lb_on_renyi_sdpi,corollary:sdpi_ub_lb_hellinger,prop:sdpi_ub_hellinger} for $\alpha\in(1, 10]$, where $K=\big[\begin{smallmatrix}0.5 & 0.5\\ 0.1 & 0.9\end{smallmatrix}\big]$.}
            \label{fig:di_bounds_on_renyi_sdpi_via_hellinger}
        \end{figure}

        \begin{figure}
            \centering
            \includegraphics[width=\textwidth]{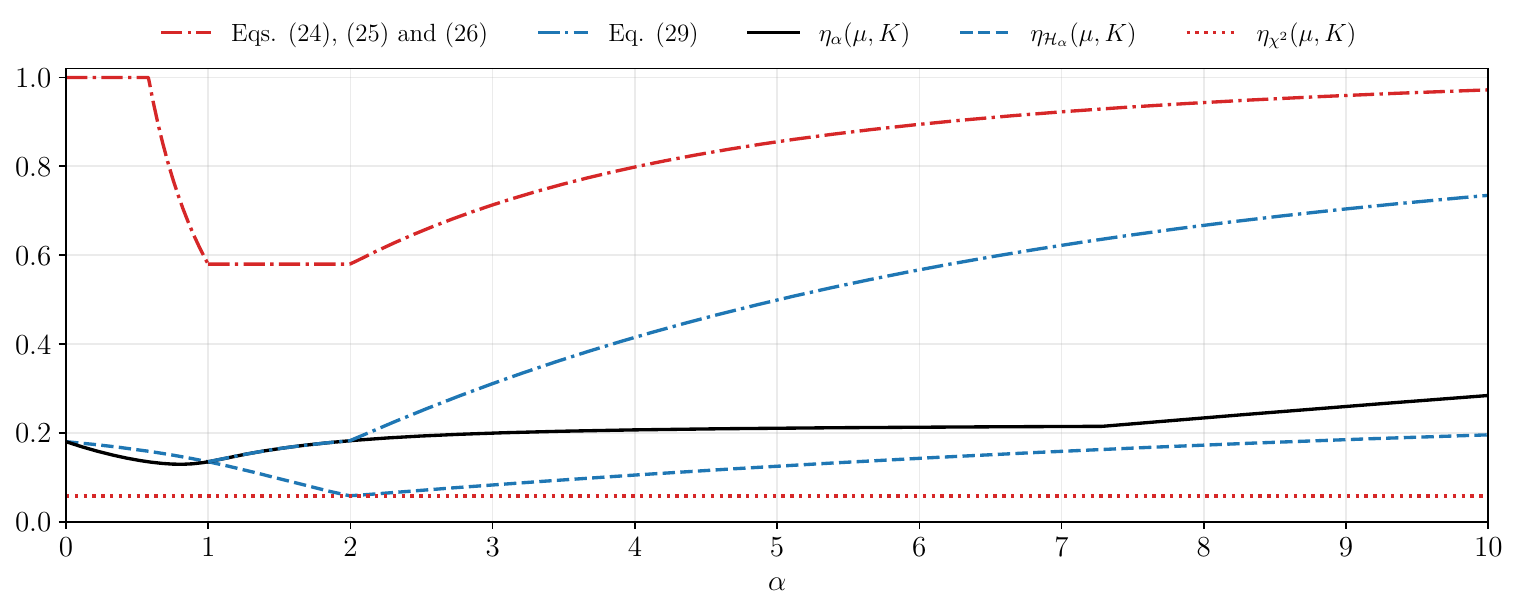}
            \caption{Comparison between $\eta_\alpha(\mu, K)$ and the bounds from~\cref{thm:local_chi_squared_lb_on_renyi_sdpi,prop:ub_on_eta_alpha_by_eta_chi_squared,corollary:sdpi_ub_lb_hellinger,prop:sdpi_ub_hellinger} for $\alpha\in(0,1)\cup(1,10]$, where $\mu=[0.9,0.1]$ and $K=\big[\begin{smallmatrix}0.5 & 0.5\\ 0.1 & 0.9\end{smallmatrix}\big]$.}
            \label{fig:dd_bounds_on_renyi_sdpi}
        \end{figure}
        Furthermore, the bounds in~\cref{corollary:sdpi_ub_lb_hellinger} enable us to revisit a fundamental property shared by all general $\varphi$-Divergences: the universal upper bound $\eta_\varphi(K) \leq\eta_{\sf TV}(K)$~\cite[Theorem III.1]{raginsky2016strong}. While it is known that this inequality does not extend to R\'enyi divergences of arbitrary order~\cite[Example 1]{esposito2024lower}, our Hellinger-based approach allows us to prove that this relationship is actually preserved in some cases. This is made precise in the statement below, which delineates the range of $\alpha$ where the Dobrushin coefficient upper bound is valid, as well as the region where it is violated.
        \begin{proposition}\label{prop:bound_on_distribution_independent_sdpi}
            Consider $\alpha\geq 0$. Then:
            \begin{enumerate}[a)]
                \item If $\alpha\in[0,1]$, $\eta_{\alpha}(K) \leq \eta_{\sf TV}(K)$ for any Markov kernel $K\in\calP(\Y|\X)$,
                \item If $\alpha>1$, there exists a Markov kernel $K\in\calP(\Y|\X)$ such that $\eta_{\alpha}(K) > \eta_{\sf TV}(K)$.
            \end{enumerate}
            \appendixproofref{appendix:proof_bound_on_distribution_independent_sdpi}
        \end{proposition}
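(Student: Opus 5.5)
For part a), the plan is to reduce to the $\chi^2$ case using results already established. By \cref{prop:distribution_independent_sdpi_equal_chi_square_sdpi}, $\eta_\alpha(K)=\eta_{\chi^2}(K)$ for every $\alpha\in[0,1]$. It then suffices to invoke the classical universal bound $\eta_\varphi(K)\leq\eta_{\sf TV}(K)$ for $\varphi$-Divergences~\cite[Theorem III.1]{raginsky2016strong}, specialised to $\varphi(t)=(t-1)^2$. This gives $\eta_\alpha(K)=\eta_{\chi^2}(K)\leq\eta_{\sf TV}(K)$. As an alternative for $\alpha\in(0,1)$, one could chain \cref{corollary:sdpi_ub_lb_hellinger} b), namely $\eta_\alpha(K)\leq\eta_{\calH_\alpha}(K)$, with the same Dobrushin bound applied to the Hellinger $\varphi$. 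That route, however, still needs a separate treatment of the endpoints $\alpha\in\{0,1\}$, which the equality with $\eta_{\chi^2}$ handles uniformly. So the equality route is the one to use.

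For part b), the plan is to exhibit an explicit kernel, and the $Z$-channel $K=\big[\begin{smallmatrix}1&0\\1-\lambda&\lambda\end{smallmatrix}\big]$ with $\lambda\in(0,1)$ is the natural choice. Its two conditionals have different supports, $\{0\}$ versus $\{0,1\}$. Hence \cref{prop:non_trivial_sdpi_characterisation} b) (see also \cref{example:z_channel}) gives $\eta_\alpha(K)=1$ for every $\alpha\in(1,\infty]$. On the other hand, the two-point formula in \cref{eq:two_point_tv_di} gives
\[
\eta_{\sf TV}(K)=\|K(\cdot|0)-K(\cdot|1)\|_{\sf TV}=\tfrac12\bigl(|1-(1-\lambda)|+|0-\lambda|\bigr)=\lambda<1 .
\]
Therefore $\eta_\alpha(K)=1>\lambda=\eta_{\sf TV}(K)$ for all $\alpha>1$, as required.

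Neither step should be a real obstacle, since both reduce to earlier results. The only point needing care is that the Dobrushin bound in part a) is cited for general $\varphi$-Divergences and must apply to $\chi^2$, which is immediate because $\chi^2$ is generated by a convex $\varphi$ with $\varphi(1)=0$. If one wanted a self-contained argument for part b) instead of quoting \cref{prop:non_trivial_sdpi_characterisation}, the hardest step would be showing $\eta_\alpha(K)\to1$ directly. For that I would take $\nu=\delta_1$ and $\mu_t=(1-t)\delta_1+t\delta_0$ with $t\to0$, in the spirit of \cref{prop:eta_parameter_renyi_binary} b). Under this choice $D_\alpha(\nu\|\mu_t)\sim\log\frac1{1-t}$, and the output divergence $D_\alpha(\nu K\|\mu_t K)$ is of the same order, because output symbol $1$ has probability $\lambda$ under $\nu K$ versus $\lambda(1-t)$ under $\mu_tK$. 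The ratio of the two divergences should therefore tend to $1$.
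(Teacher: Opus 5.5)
\paragraph{Part a): a heavier route than needed.}
Your main argument is correct and not circular. Part a) takes a slightly different route from the paper. The paper treats $\alpha\in\{0,1\}$ directly via $\eta_{\sf KL}(K)\leq\eta_{\sf TV}(K)$. For $\alpha\in(0,1)$ it chains only the inequality $\eta_\alpha(K)\leq\eta_{\calH_\alpha}(K)$ from \cref{corollary:sdpi_ub_lb_hellinger} with Dobrushin's bound for the Hellinger $\varphi$-Divergence, which is exactly the alternative you set aside.

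Going through \cref{prop:distribution_independent_sdpi_equal_chi_square_sdpi} is valid, but it is heavier rather than more uniform. That theorem is itself proved by the same inequality $\eta_\alpha(K)\leq\eta_{\calH_\alpha}(K)$, supplemented by the operator-convexity identity $\eta_{\calH_\alpha}(K)=\eta_{\chi^2}(K)$ and the lower bound of \cref{thm:local_chi_squared_lb_on_renyi_sdpi}. It also handles the endpoints separately, so that work is only hidden in the citation. Neither ingredient is needed for an upper bound by $\eta_{\sf TV}(K)$.

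\paragraph{Part b): the main argument.}
You use the same $Z$-channel as the paper, and your computation $\eta_{\sf TV}(K)=\lambda$ is correct. Citing \cref{prop:non_trivial_sdpi_characterisation}~b) is legitimate, since its ``if'' direction does not depend on the present proposition. Its proof is in fact the same two-point computation the paper performs here.

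\paragraph{Part b): the self-contained fallback fails.}
The fallback you sketch does not work. With $\nu=\delta_1$ and $\mu_t=(1-t)\delta_1+t\delta_0$, letting $t\to0$ sends the ratio to $0$, not $1$, at every finite $\alpha>1$. On the input side $D_\alpha(\nu\|\mu_t)=\log\frac{1}{1-t}\sim t$. On the output side $\nu K=[1-\lambda,\lambda]$ and $\mu_tK=[1-\lambda+t\lambda,\lambda(1-t)]$ both give each output symbol mass bounded away from zero, and the first-order terms cancel:
\begin{equation*}
e^{(\alpha-1)D_\alpha(\nu K\|\mu_tK)}=(1-\lambda)^\alpha(1-\lambda+t\lambda)^{1-\alpha}+\lambda(1-t)^{1-\alpha}=1+O(t^2).
\end{equation*}
Hence the ratio is $O(t)$.

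Your heuristic that output symbol $1$ has likelihood ratio $\frac{1}{1-t}$ only governs $D_\infty$. At $\alpha=\infty$ the ratio is indeed identically $1$, which is why the limit $t\to0$ appears in the proof of \cref{prop:infty_renyi_sdpi_closed_form}. At finite order that ratio is averaged against output symbol $0$, whose likelihood ratio is below one.

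Swapping the inputs does not help. For $\nu=\delta_0$ the ratio is $\frac{\log(1-\lambda t)}{\log(1-t)}\leq\lambda$ for every $t$.

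The correct limit is $t\to1$, i.e.\ $\mu_\varepsilon=(1-\varepsilon)\delta_0+\varepsilon\delta_1$ with $\varepsilon\to0$, which is what the paper uses. Then $D_\alpha(\delta_1\|\mu_\varepsilon)=\log\frac1\varepsilon$. Keeping only output symbol $1$ gives $e^{(\alpha-1)D_\alpha(\nu K\|\mu_\varepsilon K)}\geq\lambda^\alpha(\lambda\varepsilon)^{1-\alpha}$. So the ratio is at least $1+\frac{\log\lambda}{(\alpha-1)\log(1/\varepsilon)}$, which tends to $1$.
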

        \begin{example}
            To illustrate Part b) of~\cref{prop:bound_on_distribution_independent_sdpi}, consider the $Z$-channel $K = \big[\begin{smallmatrix}
                1 & 0\\
                1-\lambda & \lambda
            \end{smallmatrix}\big]$ with $\lambda\in(0,1)$. One has $\eta_\alpha(K)>\eta_{\sf TV}(K)$ since $\eta_{\sf TV}(K)=\lambda$, and as seen in~\cref{example:z_channel}, $\eta_\alpha(K)=1$.
        \end{example}
        \Cref{prop:bound_on_distribution_independent_sdpi} indicates that the upper bound $\eta_\alpha(K) \leq \eta_{\sf TV}(K)$ is only guaranteed when $\alpha\in[0,1]$, unlike the universal lower bound $\eta_\alpha(K)\geq \eta_{\chi^2}(K)$, which holds for $\alpha>0$. Nonetheless, we can still upper bound $\eta_\alpha$ as a function of a $\varphi$-Divergence when $\alpha>1$. Once again, this is made possible thanks to the relationship between R\'enyi and Hellinger Divergences.
        \begingroup
        \thmboxoptions{nocut}
        \begin{proposition}\label{prop:sdpi_ub_hellinger}
            For any pair $(\mu, K)\in\calP(\X)\times\calP(\Y|\X)$ and $\alpha>1$,
            \begin{equation}
                \eta_\alpha(\mu, K) \leq \frac{\log\big(1+(\alpha-1)M_\alpha(\mu, K)\big)}{\log\left(1+(\alpha-1)\frac{M_\alpha(\mu, K)}{\eta_{\calH_\alpha}(\mu, K)}\right)},
            \end{equation}
            where $M_\alpha(\mu, K) \triangleq \max_{x\in\supp(\mu)}\calH_\alpha(\delta_x K\|\mu K)$. Moreover,
            \begin{equation}
                \eta_\alpha(K) \leq \frac{\log\big(1+(\alpha-1)M_\alpha(K)\big)}{\log\left(1+(\alpha-1)\frac{M_\alpha(K)}{\eta_{\calH_\alpha}(K)}\right)},
            \end{equation}
            where $M_\alpha(K) \triangleq \sup_{x, x^\prime\in\X}\calH_\alpha\big(K(\cdot|x)\|K(\cdot|x^\prime)\big)$. In either inequality, the right-hand side is understood to equal zero when the corresponding $M_\alpha$ equals zero.
            \appendixproofref{appendix:proof_sdpi_ub_hellinger}
        \end{proposition}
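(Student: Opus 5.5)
The plan is to reduce everything to a scalar inequality through the identity $D_\alpha(\nu\|\mu)=g\big(\calH_\alpha(\nu\|\mu)\big)$, where $g(h)\triangleq\frac1{\alpha-1}\log\big(1+(\alpha-1)h\big)$ is increasing. We combine two upper bounds on the output Hellinger divergence: the SDPI bound via $\eta_{\calH_\alpha}$, and an absolute cap given by $M_\alpha$. Write $c\triangleq\alpha-1>0$, $\eta\triangleq\eta_{\calH_\alpha}(\mu,K)\in[0,1]$ and $M\triangleq M_\alpha(\mu,K)$.

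\textbf{Step 1: bound the output divergence.} Fix $\nu\ll\mu$ with $0<D_\alpha(\nu\|\mu)<\infty$, and set $h\triangleq\calH_\alpha(\nu\|\mu)>0$ and $u\triangleq\calH_\alpha(\nu K\|\mu K)$. First, $u\leq\eta h$ by definition of $\eta_{\calH_\alpha}(\mu,K)$. Second, $\nu K=\sum_{x\in\supp(\mu)}\nu(x)\,\delta_xK$, and $\calH_\alpha(\cdot\|\mu K)$ is convex in its first argument because it is a $\varphi$-Divergence. Hence $u\leq\max_{x\in\supp(\mu)}\calH_\alpha(\delta_xK\|\mu K)=M$. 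Together, $u\leq\min(M,\eta h)$, and the contraction ratio satisfies
\begin{equation*}
\frac{D_\alpha(\nu K\|\mu K)}{D_\alpha(\nu\|\mu)}=\frac{\log(1+cu)}{\log(1+ch)}\leq\frac{\log\big(1+c\min(M,\eta h)\big)}{\log(1+ch)}.
\end{equation*}
If $M=0$ or $\eta=0$, then $u=0$ and the ratio vanishes, consistent with the stated conventions. So assume $M,\eta>0$.

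\textbf{Step 2: split into two cases according to which term in the minimum is active.}

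If $\eta h\geq M$, i.e. $h\geq M/\eta$, the ratio is at most $\log(1+cM)/\log(1+ch)$. This is at most $\log(1+cM)/\log(1+cM/\eta)$ because the denominator is increasing in $h$.

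If $\eta h<M$, the ratio is at most $\log(1+\eta ch)/\log(1+ch)$. I then need the key monotonicity claim: for fixed $\eta\in(0,1]$, the map $x\mapsto\log(1+\eta x)/\log(1+x)$ is non-decreasing on $(0,\infty)$. Granting this, the bound is maximised at the endpoint $x=cM/\eta$, where it equals $\log(1+cM)/\log(1+cM/\eta)$. Both cases therefore give the claimed bound, and taking the supremum over $\nu$ concludes the distribution-dependent statement.

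\textbf{Step 3: the monotonicity claim.} This is the only genuinely analytic point, and I would do it by differentiation. The sign of the derivative equals the sign of
\begin{equation*}
\frac{\eta}{1+\eta x}\log(1+x)-\frac{1}{1+x}\log(1+\eta x).
\end{equation*}
Multiplying by $x(1+x)(1+\eta x)>0$ and dividing by $\eta x>0$ shows this is non-negative if and only if $\frac{\psi(\eta x)}{\eta x}\leq\frac{\psi(x)}{x}$, where $\psi(y)\triangleq(1+y)\log(1+y)$. Since $\psi$ is convex with $\psi(0)=0$, the slope $\psi(y)/y$ is non-decreasing in $y$. Because $\eta x\leq x$, the required inequality holds.

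\textbf{Step 4: the distribution-independent statement.} Take arbitrary $\nu\ll\mu$ with $0<D_\alpha(\nu\|\mu)<\infty$. Write $\nu K=\sum_x\nu(x)\,\delta_xK$ and $\mu K=\sum_{x'}\mu(x')\,\delta_{x'}K$. By joint convexity of $\calH_\alpha$, applied to the product coupling $\nu\otimes\mu$,
\begin{equation*}
\calH_\alpha(\nu K\|\mu K)\leq\sum_{x,x'}\nu(x)\mu(x')\,\calH_\alpha(K_x\|K_{x'})\leq M_\alpha(K).
\end{equation*}
Combined with $\calH_\alpha(\nu K\|\mu K)\leq\eta_{\calH_\alpha}(K)\,\calH_\alpha(\nu\|\mu)$, Steps 2--3 apply verbatim.

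The main obstacle I expect is the monotonicity in Step 3; everything else is bookkeeping. One minor care point is the case $M_\alpha(K)=\infty$ in the distribution-independent bound, where the right-hand side should be read as the limit, equal to $1$.
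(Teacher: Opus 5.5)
Your proposal is correct and follows essentially the same route as the paper. You cap the output Hellinger divergence by $\min\bigl(\eta_{\calH_\alpha}\,\calH_\alpha(\nu\|\mu), M_\alpha\bigr)$ using the Hellinger SDPI and convexity, then note that the scalar ratio peaks at $\calH_\alpha(\nu\|\mu)=M_\alpha/\eta_{\calH_\alpha}$ because $x\mapsto\log(1+\eta x)/\log(1+x)$ is non-decreasing. The differences are cosmetic: you verify that monotonicity by differentiation and convexity of $(1+y)\log(1+y)$ instead of the paper's substitution $s=\log\bigl(1+(\alpha-1)t\bigr)$, you skip the unneeded restriction $t\le C_\mu$, and your remark that the bound reads as $1$ when $M_\alpha(K)=\infty$ is a sensible addition the paper leaves implicit.
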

        \endgroup
        The distribution-independent and distribution-dependent bounds in~\cref{prop:sdpi_ub_hellinger} are illustrated in~\cref{fig:di_bounds_on_renyi_sdpi_via_hellinger,fig:dd_bounds_on_renyi_sdpi}, respectively. We now provide an example showing that the distribution-dependent bound is tight for every order $\alpha>1$.
        \begin{example}
            Let $\X=\{0,1\}$, $\Y=\{0,1,e\}$, and consider
            $K=\mathrm{BEC}(\delta)$ with $\delta\in(0,1)$ and
            $\mu=\left[\frac12,\frac12\right]$. For any
            $\nu\in\calP(\X)$, we have
            \begin{align}
                \calH_\alpha(\nu K\|\mu K)
                &=\frac{\sum_{y\in\Y}\nu K(y)^\alpha\mu K(y)^{1-\alpha}-1}{\alpha-1}\nonumber\\
                &=\frac{\delta+(1-\delta)\sum_{x\in\X}\nu(x)^\alpha\mu(x)^{1-\alpha}-1}{\alpha-1}\nonumber\\
                &=(1-\delta)\calH_\alpha(\nu\|\mu),\label{eq:hellinger_contraction_bec}
            \end{align}
            so that $\eta_{\calH_\alpha}(\mu,K)=1-\delta$. Next, we use~\cref{eq:hellinger_contraction_bec} to obtain
            \begin{align*}
                M_\alpha(\mu,K)&=\max\big\{\calH_\alpha(\delta_0K\|\mu K),\calH_\alpha(\delta_1K\|\mu K)\big\}\\
                &=(1-\delta)\max\big\{\calH_\alpha(\delta_0\|\mu),\calH_\alpha(\delta_1\|\mu)\big\}\\
                &=(1-\delta)\frac{2^{\alpha-1}-1}{\alpha-1},
            \end{align*}
            so that the upper bound from~\cref{prop:sdpi_ub_hellinger} becomes
            \begin{equation*}
                \frac{\log\bigl(1+(\alpha-1)M_\alpha(\mu,K)\bigr)}{\log\left(1+(\alpha-1)\frac{M_\alpha(\mu,K)}{\eta_{\calH_\alpha}(\mu,K)}\right)} = \frac{\log\bigl(\delta+(1-\delta)2^{\alpha-1}\bigr)}{(\alpha-1)\log 2}.
            \end{equation*}
            On the other hand, for every $\nu$ such that $0<D_\alpha(\nu\|\mu)<\infty$, \cref{eq:hellinger_contraction_bec} also gives,
            \begin{equation*}
                \frac{D_\alpha(\nu K\|\mu K)}
                     {D_\alpha(\nu\|\mu)}
                =
                \frac{\log\bigl(1+(\alpha-1)(1-\delta)
                \calH_\alpha(\nu\|\mu)\bigr)}
                     {\log\bigl(1+(\alpha-1)
                \calH_\alpha(\nu\|\mu)\bigr)}.
            \end{equation*}
            The right-hand side is non-decreasing as a function of
            $\calH_\alpha(\nu\|\mu)$ (see~\cref{corollary:log_ratio_increasing}), and is therefore maximised at
            $\nu=\delta_0$ or $\nu=\delta_1$, so that
            \begin{equation*}
                \eta_\alpha(\mu,K) = \frac{\log\bigl(\delta+(1-\delta)2^{\alpha-1}\bigr)}{(\alpha-1)\log 2}.
            \end{equation*}
            Thus, the distribution-dependent upper bound in
            \cref{prop:sdpi_ub_hellinger} is tight for every $\alpha>1$.
        \end{example}
        We next show that, at order $\alpha=2$, the distribution-dependent bound in~\cref{prop:sdpi_ub_hellinger} recovers the known exact R\'enyi-SDPI constant of the Binary Symmetric Channel.
        \begin{example}
            Let $\alpha=2$ and consider $K=\mathrm{BSC}(\varepsilon)$ with $\mu=[p, 1-p]$ where $p\in(0,1)$. Since $\mu K=[q, 1-q]$ with $q=p(1-\varepsilon)+(1-p)\varepsilon$, we have
            \begin{equation}\label{eq:chi_square_evaluation_bsc}
                \chi^2(\nu\|\mu) = \frac{(r-p)^2}{p(1-p)} \qquad\text{and}\qquad\chi^2(\nu K\|\mu K) = \frac{(r-p)^2(1-2\varepsilon)^2}{q(1-q)}
            \end{equation}
            for any $\nu=[r, 1-r]$ with $r\in[0,1]$. Hence, $\eta_{\chi^2}(\mu, K)=\frac{p(1-p)}{q(1-q)}(1-2\varepsilon)^2$. Next, we evaluate $M_2(\mu, K)$ in two ways in order to make our bound easier to read. First, from~\cref{eq:chi_square_evaluation_bsc}, we find
            \begin{align*}
                \chi^2(\delta_0K\|\mu K) = \frac{(1-p)^2(1-2\varepsilon)^2}{q(1-q)}=\frac{1-p}{p}\eta_{\chi^2}(\mu, K)\quad\text{and}\quad\chi^2(\delta_1K\|\mu K)= \frac{p^2(1-2\varepsilon)^2}{q(1-q)}=\frac{p}{1-p}\eta_{\chi^2}(\mu, K),
            \end{align*}
            \thmboxsplit
            so that $M_2(\mu, K)=\max\left(\frac{p}{1-p}, \frac{1-p}{p}\right)\eta_{\chi^2}(\mu, K)$. Alternatively, viewing the $\chi^2$-Divergence as the $\varphi$-Divergence stemming from $\varphi(t)=t^2-1$ gives
            \begin{equation*}
                M_2(\mu, K) = \max\left(\frac{(1-\varepsilon)^2}{q} + \frac{\varepsilon^2}{1-q}, \frac{\varepsilon^2}{q} + \frac{(1-\varepsilon)^2}{1-q}\right) -1.
            \end{equation*}
            Using the expressions for $\eta_{\chi^2}(\mu, K)$ and $M_2(\mu, K)$, \cref{prop:sdpi_ub_hellinger} yields
            \begin{align*}
                \eta_{2}(\mu, K) &\leq \frac{\log\big(1+M_2(\mu, K)\big)}{\log\left(1+\frac{M_2(\mu, K)}{\eta_{\chi^2}(\mu, K)}\right)}\\
                &=\frac{\log\max\left(\frac{(1-\varepsilon)^2}{q} + \frac{\varepsilon^2}{1-q}, \frac{\varepsilon^2}{q} + \frac{(1-\varepsilon)^2}{1-q}\right)}{\log\left(1+\max\left(\frac{p}{1-p}, \frac{1-p}{p}\right)\right)}\\
                &=\frac{\log\max\left(\frac{(1-\varepsilon)^2}{q} + \frac{\varepsilon^2}{1-q}, \frac{\varepsilon^2}{q} + \frac{(1-\varepsilon)^2}{1-q}\right)}{\log\max\left(\frac1p, \frac1{1-p}\right)},
            \end{align*}
            which matches with the exact value computed in~\cite[Corollary 6]{jin2024properties}, showing that the bound is tight. In particular when ${\mu=\left[\frac12, \frac12\right]}$, we recover $\eta_2(\mu, K) = \log_2\big(2\big(1-2\varepsilon(1-\varepsilon)\big)\big)$.
        \end{example}
    \subsection{Bounds at Order $\alpha=\infty$}\noindent
        In~\cref{sec:exact_characterisations}, we proved an exact characterisation for $\eta_\infty(K)$, offering a simple formula to evaluate it. However, for the distribution-dependent case, even with the simpler expression
        \begin{equation*}
            \eta_\infty(\mu, K) = \max_{\substack{A\in\Sigma_\X:\\0<\mu(A)<1}}\frac{D_\infty\left(\mu_{|A} K\|\mu K\right)}{D_\infty\left(\mu_{|A}\|\mu\right)}
        \end{equation*}
        put forth in~\cref{prop:infty_sdpi_achieved_on_conditionals}, its computation can still be prohibitive. This can be addressed by bounding this quantity with the SDPI constant of a different divergence that is more amenable to analytical evaluation, which is the focus of this section.

        Before we start, let us emphasise that the bounding strategies developed thus far rely on the assumption that the order $\alpha$ is finite. However, the behaviour of the R\'enyi-SDPI constant undergoes a transition as $\alpha \to \infty$. In this regime, the R\'enyi divergence evaluates to $D_\infty(\nu\|\mu)=\log\max_{x\in\supp(\mu)}\frac{\nu(x)}{\mu(x)}$, shifting the measure of discrepancy from having an expectation inside the logarithm to a maximum.

        This transition dismantles our previous bounding techniques. To see why the local approximation with the $\chi^2$-Divergence fails, consider $\nu_t$ to be a small perturbation of $\mu$, defined by the ratio $\frac{\nu_t(x)}{\mu(x)} = 1 + t \cdot h(x)$, where $t \downarrow 0$ and $\mean{\mu}{h} = 0$. For any finite $\alpha$, the function $x\mapsto x^\alpha$ is smooth, allowing a Taylor expansion of the R\'enyi Divergence:
        \begin{align*}
            D_\alpha(\nu_t\|\mu)&=\frac1{\alpha-1}\log\mathbb{E}_\mu\left[(1 + t \cdot h)^\alpha\right] \\
            &\approx \frac1{\alpha-1}\log\mathbb{E}_\mu\left[1 + \alpha t \cdot h + \frac{\alpha(\alpha-1)}{2} t^2 h^2\right]\\
            &\approx \frac1{\alpha-1}\mathbb{E}_\mu\left[\alpha t \cdot h + \frac{\alpha(\alpha-1)}{2} t^2 h^2\right]
        \end{align*}
        Because the linear term vanishes, the divergence grows on the order of $O(t^2)$, proportional to the $\chi^2$-Divergence. This behaviour is precisely what enables $\eta_\alpha(K)$ to be linked to $\eta_{\chi^2}(K)$. However, as $\alpha \to \infty$, the R\'enyi Divergence becomes $D_\infty(\nu_t\|\mu)=\log\max_{x\in\supp(\mu)}\bigl(1+t\cdot h(x)\bigr)\approx t\cdot\max_{x\in\supp(\mu)}h(x)$. The divergence now grows on the order of  $O(|t|)$, and is consequently no longer captured by the $\chi^2$-Divergence.

        Similarly, the connection to Hellinger Divergences explored in the previous section that allowed relating $\eta_\alpha$ to $\eta_{\calH_\alpha}$ becomes uninformative. This is because in the limit, the Hellinger Divergence degenerates into
        \begin{equation*}
                \lim_{\alpha \to \infty} \calH_\alpha(\nu\|\mu) =
            \begin{cases}
                  0 & \text{if } \nu = \mu \\
                  +\infty & \text{if } \nu \neq \mu
            \end{cases},
        \end{equation*}
        so that the associated SDPI constant becomes trivially zero by convention.

        Stripped of both the $\chi^2$-Divergence local approximation and the mapping to Hellinger Divergences, bounding $\eta_\infty(\mu, K)$ necessitates a different approach. Because the $\infty$-R\'enyi Divergence measures worst-case distinguishability, its contractive behaviour proves to be more closely related to that of the Total Variation Distance.

        Our first approach to bounding $\eta_\infty(\mu,K)$ exploits (reverse) Pinsker-type inequalities for R\'enyi Divergences~\cite{sason2015reverse} and follows the same strategy as in~\cref{section:linear_bounds_renyi_chi_squared_sdpi}: we upper- and lower-bound $D_\infty(\nu\|\mu)$ in terms of $\|\nu-\mu\|_{\sf TV}$.
        \begin{proposition}\label{prop:bounds_on_renyi_infty_sdpi_by_tv}
            Consider a pair $(\mu, K)\in\calP(\X)\times\calP(\Y|\X)$. Then, we have
            \begin{equation}\label{eq:ub_on_renyi_infty_sdpi_by_tv}
            \eta_\infty(\mu, K) \leq \frac{\eta_{\sf TV}(\mu, K)}{\min_{y\in\supp(\mu K)}\mu K(y)}.
            \end{equation}
            \appendixproofref{appendix:proof_bounds_on_renyi_infty_sdpi_by_tv}
        \end{proposition}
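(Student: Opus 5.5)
The plan is to sandwich $D_\infty$ between functions of Total Variation Distance and then pass to the supremum defining $\eta_\infty(\mu,K)$. Two elementary inequalities suffice. The first is a lower bound valid on the input side:
\begin{equation*}
    D_\infty(\nu\|\mu)\geq \log\bigl(1+\|\nu-\mu\|_{\sf TV}\bigr).
\end{equation*}
The second is an upper bound on the output side:
\begin{equation*}
    D_\infty(\nu K\|\mu K)\leq \log\Bigl(1+\frac{\|\nu K-\mu K\|_{\sf TV}}{\min_{y\in\supp(\mu K)}\mu K(y)}\Bigr)\leq \frac{\|\nu K-\mu K\|_{\sf TV}}{\min_{y}\mu K(y)}.
\end{equation*}
The upper bound is immediate. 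For any $y$, we have $\nu K(y)-\mu K(y)\leq \|\nu K-\mu K\|_{\sf TV}$, and dividing by $\mu K(y)\geq m\triangleq\min_{y\in\supp(\mu K)}\mu K(y)$ gives $\frac{\nu K(y)}{\mu K(y)}\leq 1+\|\nu K-\mu K\|_{\sf TV}/m$.

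The lower bound needs a little more care, because a direct pointwise argument only gives something of the form $\log(1+\|\nu-\mu\|_{\sf TV})$ after the right manipulation. Write $f=\frac{\dd\nu}{\dd\mu}$ and $M=\max f$. Then
\begin{equation*}
    \|\nu-\mu\|_{\sf TV}=\mean{\mu}{(f-1)_+}\leq (M-1)\mu(f>1)\leq M-1,
\end{equation*}
so $D_\infty(\nu\|\mu)=\log M\geq\log(1+\|\nu-\mu\|_{\sf TV})$. However, this gives only a logarithm, and $\log(1+t)\geq t/(1+t)$ loses a constant.

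To obtain a linear comparison without losing constants, I would instead use \cref{prop:infty_sdpi_achieved_on_conditionals} and restrict to $\nu=\mu_{|A}$, since the maximum is attained there. For $\nu=\mu_{|A}$ with $a=\mu(A)\in(0,1)$:
\begin{itemize}
    \item the input divergence is $D_\infty(\mu_{|A}\|\mu)=\log(1/a)$;
    \item the input distance is $\|\mu_{|A}-\mu\|_{\sf TV}=1-a$;
    \item the output divergence satisfies $D_\infty(\mu_{|A}K\|\mu K)\leq \log\bigl(1+\|\mu_{|A}K-\mu K\|_{\sf TV}/m\bigr)\leq \eta_{\sf TV}(\mu,K)(1-a)/m$.
\end{itemize}
Combining these, the ratio is at most
\begin{equation*}
    \frac{\eta_{\sf TV}(\mu,K)}{m}\cdot\frac{1-a}{\log(1/a)}\leq \frac{\eta_{\sf TV}(\mu,K)}{m},
\end{equation*}
where the last step uses the elementary inequality $\log(1/a)\geq 1-a$. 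Taking the maximum over $A$ concludes. The definition of $\eta_{\sf TV}(\mu,K)$ applies here because $\mu_{|A}\neq\mu$. When the constraint set is empty, both sides are zero by convention.

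The main obstacle is this loss in the naive sandwich, which yields only $\log(1+t)$ versus $t$. The route through conditionals avoids it: it makes the input ratio exactly $\log(1/a)$ against $1-a$, which compares with the right sign. I would check that $m>0$ is well defined and that $\mu_{|A}K\ll\mu K$ holds automatically.
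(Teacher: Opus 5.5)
Your proof is correct, but it takes a different route from the paper. The paper never invokes \cref{prop:infty_sdpi_achieved_on_conditionals}. It sandwiches $D_\infty$ by Total Variation for every $\nu\ll\mu$ and then takes the supremum of $\|\nu K-\mu K\|_{\sf TV}/\|\nu-\mu\|_{\sf TV}$. The upper side is your pointwise bound $D_\infty(\nu\|\mu)\leq\|\nu-\mu\|_{\sf TV}/\min_{x\in\supp(\mu)}\mu(x)$, applied to $(\nu K,\mu K)$. The lower side is the linear bound $D_\infty(\nu\|\mu)\geq\|\nu-\mu\|_{\sf TV}$.

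So the obstacle you identified in the naive sandwich is not real. With $f=\frac{\dd\nu}{\dd\mu}$ and $A=\{f>1\}$, one has $D_\infty(\nu\|\mu)\geq\log\frac{\nu(A)}{\mu(A)}\geq1-\frac{\mu(A)}{\nu(A)}=\frac{\|\nu-\mu\|_{\sf TV}}{\nu(A)}\geq\|\nu-\mu\|_{\sf TV}$, and no constant is lost. Your estimate $\log(1+\|\nu-\mu\|_{\sf TV})$ was simply too crude. Your inequality $\log(1/a)\geq 1-a$ is exactly this lower bound specialised to $\nu=\mu_{|A}$.

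The trade-off between the two routes is as follows. The paper's argument is self-contained and elementary, and it gives a two-sided comparison between $D_\infty$ and Total Variation valid for all $\nu\ll\mu$. Yours relies on the heavier structural characterisation of extremisers (the layer-cake/H\"older argument), which is not circular. In exchange, it reduces everything to an explicit one-parameter quantity. If you keep the logarithm on the output side, it even yields $\max_{A}\frac{\log\bigl(1+\eta_{\sf TV}(\mu,K)(1-\mu(A))/m\bigr)}{-\log\mu(A)}$ with $m=\min_{y\in\supp(\mu K)}\mu K(y)$. This is strictly below $\eta_{\sf TV}(\mu,K)/m$ whenever the latter is positive, since there are finitely many $A$ and $\log(1+x)<x$ for $x>0$.

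One small correction: when $|\supp(\mu)|=1$, the right-hand side need not be zero, since $\eta_{\sf TV}(\mu,K)$ can be positive by \cref{lemma:distribution_dependent_eta_tv}. The left-hand side is zero, however, so the inequality still holds trivially.
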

        \begin{remark}
            Unlike the Dobrushin coefficient $\eta_{\sf TV}(K)$ for which well-known characterisations such as
            \begin{equation*}
                \eta_{\sf TV}(K) = \max_{x, x^\prime\in\X}\|K(\cdot|x)-K(\cdot|x^\prime)\|_{\sf TV}
            \end{equation*}
            exist, its distribution-dependent counterpart $\eta_{\sf TV}(\mu, K)$ is comparatively not well studied. We show in~\cref{lemma:distribution_dependent_eta_tv} that in fact, a similar closed-form expression exists for $\eta_{\sf TV}(\mu, K)$.
        \end{remark}
        \begin{lemma}\label{lemma:distribution_dependent_eta_tv}
            Consider a pair $(\mu, K)\in\calP(\X)\times\calP(\Y|\X)$. Then, we have
            \begin{equation*}
                \eta_{\sf TV}(\mu, K) = \max_{x\in\X, x^\prime\in\supp(\mu)}\|K(\cdot|x)-K(\cdot|x^\prime)\|_{\sf TV}.
            \end{equation*}
            \appendixproofref{appendix:proof_distribution_dependent_eta_tv}
        \end{lemma}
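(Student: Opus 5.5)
The proof establishes the two inequalities separately.

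For the lower bound, we fix $x\in\X$ and $x'\in\supp(\mu)$ and use perturbations of $\mu$ in the direction $\delta_x-\delta_{x'}$. The natural first choice is $\nu_t=\mu+t(\delta_x-\delta_{x'})$ for small $t>0$, which is a probability measure because $\mu(x')>0$. It satisfies
\begin{equation*}
\|\nu_t-\mu\|_{\sf TV}=t \quad (x\neq x'),\qquad \|\nu_tK-\mu K\|_{\sf TV}=t\,\|K(\cdot|x)-K(\cdot|x')\|_{\sf TV}.
\end{equation*}
The ratio is therefore exactly $\|K(\cdot|x)-K(\cdot|x')\|_{\sf TV}$, which gives the lower bound. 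No absolute-continuity issue arises, since the definition of $\eta_{\sf TV}(\mu,K)$ only requires $0<\|\nu-\mu\|_{\sf TV}<\infty$. This is why $x$ may range over all of $\X$ while $x'$ must lie in $\supp(\mu)$. The pair $x=x'$ contributes $0$, and the case $|\X|\ge 2$ leaves at least one admissible direction.

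For the upper bound, take any $\nu\neq\mu$ and write the signed measure $\nu-\mu=\sigma^+-\sigma^-$ in its Jordan decomposition. Both parts have total mass $\tau\triangleq\|\nu-\mu\|_{\sf TV}$. Since $\nu\ge 0$, we have $\sigma^-\le\mu$, so $\sigma^-$ is supported in $\supp(\mu)$. Normalising gives probability measures $P=\sigma^+/\tau$ and $Q=\sigma^-/\tau$ with $\supp(Q)\subseteq\supp(\mu)$, and $\nu-\mu=\tau(P-Q)$. Hence
\begin{equation*}
\|\nu K-\mu K\|_{\sf TV}=\tau\|PK-QK\|_{\sf TV}.
\end{equation*}
Writing $PK-QK=\sum_{x,x'}\pi(x,x')\bigl(K(\cdot|x)-K(\cdot|x')\bigr)$ for any coupling $\pi$ of $P$ and $Q$ (for instance the product coupling), the triangle inequality bounds this by
\begin{equation*}
\max_{x\in\X,\ x'\in\supp(Q)}\|K(\cdot|x)-K(\cdot|x')\|_{\sf TV},
\end{equation*}
and $\supp(Q)\subseteq\supp(\mu)$ gives the right-hand side. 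Dividing by $\tau$ yields the upper bound, and taking the supremum over $\nu$ finishes the argument; the supremum is a maximum over a finite set.

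The step needing the most care is the support bookkeeping in the upper bound, namely verifying that the negative part of $\nu-\mu$ lives inside $\supp(\mu)$. This is what restricts $x'$ while leaving $x$ free. The remaining checks are routine: the degenerate case where $\mu$ is a Dirac mass, in which $x'$ is forced to be that point, and agreement with the convention that $\eta$ is zero when the constraint set is empty.
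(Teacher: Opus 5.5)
Your proof is correct and follows essentially the same route as the paper: the same perturbation $\mu+t(\delta_x-\delta_{x'})$ for the lower bound, and the same normalised positive/negative-part decomposition of $\nu-\mu$, with the negative part supported in $\supp(\mu)$, for the upper bound. The only cosmetic difference is in the last step. You finish with a coupling and the triangle inequality. The paper instead uses the variational representation of total variation, bounding $\mathbb{E}_{\rho^+}[Kf]-\mathbb{E}_{\rho^-}[Kf]$ by $\max_{x}Kf(x)-\min_{x'\in\supp(\mu)}Kf(x')$ and then exchanging the supremum over $f$ with the maximum over $(x,x')$.
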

        Leveraging the closed-form expression given in~\cref{lemma:distribution_dependent_eta_tv}, one can establish the lower bound $\eta_\infty(\mu, K)\geq\eta_{\sf TV}(\mu, K)$, improving on the bound in~\cite[Proposition 4]{vandenbroucque2026contraction} by removing a $\min_{x\in\X}\mu(x)$ factor. Moreover, this extends to the distribution-independent setting.
        \begin{theorem}\label{prop:eta_tv_lower_bounds_eta_inf}
            Consider a Markov kernel $K\in\calP(\Y|\X)$.
            \begin{enumerate}[a)]
                \item For every $\mu\in\calP(\X)$ with full support,
                \begin{equation}\label{eq:eta_tv_lower_bounds_eta_inf}
                    \eta_{\sf TV}(\mu,K)\leq\eta_\infty(\mu,K),
                \end{equation}
                with equality if and only if $\eta_{\sf TV}(\mu,K)\in\{0,1\}$.
                \item Moreover,
                \begin{equation}\label{eq:eta_tv_lower_bounds_eta_inf_distribution_independent}
                    \eta_{\sf TV}(K)\leq\eta_\infty(K),
                \end{equation}
                with equality if and only if $\eta_{\sf TV}(K)\in\{0,1\}$.
            \end{enumerate}
            \appendixproofref{appendix:proof_eta_tv_lower_bounds_eta_inf}
        \end{theorem}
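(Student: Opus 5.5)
The plan for part a) is to lower-bound $\eta_\infty(\mu,K)$ by the contraction ratio along a suitably chosen one-parameter family of measures, and then compare the resulting quantity with the two-point expression for $\eta_{\sf TV}(\mu,K)$ given by \cref{lemma:distribution_dependent_eta_tv}. Since $\mu$ has full support, that lemma gives $\eta_{\sf TV}(\mu,K)=\max_{x,x'}\tau_{x,x'}$, where $\tau_{x,x'}\triangleq\|K_x-K_{x'}\|_{\sf TV}$. I will fix a maximising pair $(x,x')$ and write $\tau=\tau_{x,x'}$.

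The cases $\tau\in\{0,1\}$ are handled first. If $\tau=0$, the bound is trivial because $\eta_\infty\geq0$. If $\tau=1$, then $K_x\perp K_{x'}$. In that case I take the set $A\ni x$, $A\not\ni x'$ given by \cref{prop:infty_sdpi_achieved_on_conditionals} and check directly that the ratio equals $1$; combined with the data-processing inequality this gives equality.

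For $0<\tau<1$, I will use the density family $f_t=\allones+t h$, i.e. $\nu_t=(1+th)\mu$, with $\mean{\mu}{h}=0$ and $\max h>0$. By \cref{eq:dual_kernel_formula}, $\frac{\dd(\nu_tK)}{\dd(\mu K)}=\allones+tK_\mu^\star h$. A first-order expansion of $D_\infty$, using $\log\max(1+tg)\sim t\max g$, then gives
\begin{equation*}
\eta_\infty(\mu,K)\;\geq\;\lim_{t\downarrow0}\frac{D_\infty(\nu_tK\|\mu K)}{D_\infty(\nu_t\|\mu)}=\frac{\max_y K_\mu^\star h(y)}{\max_z h(z)}.
\end{equation*}
I will choose $h$ supported on $\{x,x'\}$ with weights $\propto 1/\mu$, or alternatively a two-level $h=(\mathbbm{1}_A-\mu(A))/(1-\mu(A))$. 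With such a choice, $K_\mu^\star h(y)$ is expressed through $K_x(y)-K_{x'}(y)$ divided by $\mu K(y)$. Averaging the maximum over the set $S=\{y: K_x(y)>K_{x'}(y)\}$ with weights $\mu K$ then yields a lower bound involving $\tau$.

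The main obstacle is that this linearised ratio does not obviously dominate $\tau$: the $\mu$-weights in the denominator and in $\mu K(S)$ may fail to cancel. If that happens, I will instead work with the non-linearised ratio for $\mu_{|A}$ from \cref{prop:infty_sdpi_achieved_on_conditionals}, using the Pinsker-type inequality $\|\rho-\mu\|_{\sf TV}\leq 1-e^{-D_\infty(\rho\|\mu)}$, which holds with equality when $\rho=\mu_{|A}$. I will also use the identity $\mu_{|A}K-\mu K=(1-\mu(A))(\mu_{|A}K-\mu_{|A^c}K)$, and then optimise over $A$ between $\{x\}$ and $\X\setminus\{x'\}$.

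Strictness for $\tau\in(0,1)$ should follow from strict concavity of $\log$, i.e. strict AM--GM in whichever inequality closes the argument.

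For part b), I will combine \cref{prop:infty_renyi_sdpi_closed_form} with \cref{eq:two_point_tv_di}. Writing $\tau=\|K_x-K_{x'}\|_{\sf TV}=\sum_y\bigl(K_x(y)-K_{x'}(y)\bigr)_+$, I have
\begin{equation*}
1-\tau=\sum_y\min\bigl(K_x(y),K_{x'}(y)\bigr)\geq\Bigl(\min_{y\in\supp(K_x)}\frac{K_{x'}(y)}{K_x(y)}\Bigr)\sum_yK_x(y),
\end{equation*}
so $1-\min_y\frac{K_{x'}(y)}{K_x(y)}\geq\tau$ for every pair $(x,x')$; maximising over pairs gives \cref{eq:eta_tv_lower_bounds_eta_inf_distribution_independent}. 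Equality forces $\frac{K_{x'}}{K_x}$ to be constant on $\{\min(K_x,K_{x'})>0\}$ with that constant equal to $1$, which happens exactly when $\tau\in\{0,1\}$.
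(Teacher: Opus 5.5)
\textbf{Part a) has a genuine gap.} For $0<\tau<1$ you never actually prove $\eta_{\sf TV}(\mu,K)\le\eta_\infty(\mu,K)$, and your fallback cannot prove it. Write $\tau_A\triangleq\|\mu_{|A}K-\mu_{|A^c}K\|_{\sf TV}$. The Pinsker-type inequality and your identity give $1-e^{-D_\infty(\mu_{|A}K\|\mu K)}\ge(1-\mu(A))\tau_A$, hence only
\begin{equation*}
\frac{D_\infty(\mu_{|A}K\|\mu K)}{D_\infty(\mu_{|A}\|\mu)}\ge\frac{\log\bigl(1-(1-\mu(A))\tau_A\bigr)}{\log\bigl(1-(1-\mu(A))\bigr)}.
\end{equation*}
By \cref{corollary:log_ratio_decreasing} the right-hand side is at most $\tau_A$, and strictly less when $\tau_A,\mu(A)\in(0,1)$. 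Joint convexity of Total Variation gives $\tau_A\le\tau$. So this route stays strictly below $\tau$ in every non-trivial case.

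Your linearised route is viable, but it stops before the decisive step. The supremum of the linearised ratio over mean-zero $h$ reduces to a small linear program whose value is exactly $\max_z(1-\lambda_z)$, where $\lambda_z\triangleq\min_{y\in\supp(\mu K)}\frac{K(y|z)}{\mu K(y)}$. It is attained by your two-level $h$ with $A=\X\setminus\{z\}$. Neither $1-\lambda_x$ nor $1-\lambda_{x'}$ alone need dominate $\tau$, so you must use both points of a maximising pair $(x,x')$. Since $K_z\ge\lambda_z\,\mu K$ pointwise, $1-\tau=\sum_y\min\bigl(K_x(y),K_{x'}(y)\bigr)\ge\min(\lambda_x,\lambda_{x'})$. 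Hence $\tau\le\max(1-\lambda_x,1-\lambda_{x'})\le\eta_\infty(\mu,K)$. This is the paper's argument, which uses the exact ratio for $A_i=\X\setminus\{x_i\}$ and shows it is at least $1-\lambda_i$. It is also the same overlap trick you already use in part b).

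Strictness needs a concrete step that the linearised limit cannot supply. If $\min(\lambda_x,\lambda_{x'})=0$, then $\eta_\infty(\mu,K)=1>\tau$. Otherwise the minimum lies in $(0,1)$, and the exact ratio $1-\frac{\log(1-\mu(x_i)\lambda_i)}{\log(1-\mu(x_i))}$ strictly exceeds $1-\lambda_i$ by the strict part of \cref{corollary:log_ratio_decreasing}. (A smaller point: for $\tau=1$ the ratio equals $1$ for $A=\X\setminus\{x'\}$, not for an arbitrary $A\ni x$ with $x'\notin A$.)

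\textbf{Part b).} Your proof of the inequality is correct and genuinely more direct than the paper's. The paper deduces b) from a) by inserting a fully supported $\mu$. You instead compare \cref{prop:infty_renyi_sdpi_closed_form} with \cref{eq:two_point_tv_di} pair by pair and never need a).

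Your equality argument, however, is incorrect as stated. Set $c_{x,x'}\triangleq\min_{y\in\supp(K_x)}\frac{K_{x'}(y)}{K_x(y)}$. Equality $1-c_{x,x'}=\tau_{x,x'}$ for one ordered pair does not force the ratio to be $1$. For the $Z$-channel of \cref{example:z_channel} with $(x,x')=(0,1)$, one has $1-c_{0,1}=\lambda=\tau_{0,1}\in(0,1)$.

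What equality for a TV-maximising pair actually forces is $K_{x'}=(1-\tau)K_x$ on $\supp(K_x)$. If $\tau\in(0,1)$, then $K_{x'}$ puts mass $\tau>0$ outside $\supp(K_x)$. The reversed pair then has $c_{x',x}=0$, so $\eta_\infty(K)=1>\tau$. You need this reversed-ordering step (equivalently, \cref{prop:non_trivial_sdpi_characterisation}) to obtain the equality characterisation.
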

        The Total Variation contraction coefficient plays a distinguished role in the study of $\varphi$-Divergences through the universal bound $\eta_\varphi(K)\leq\eta_{\sf TV}(K)$~\cite[Theorem~III.1]{raginsky2016strong}. Part b) of the preceding theorem places the order-$\infty$ R\'enyi-SDPI constant above this bound. Together with~\cref{prop:distribution_independent_sdpi_equal_chi_square_sdpi,corollary:distribution_independent_order_properties}, we therefore obtain
        \begin{equation*}
            \eta_\varphi(K)\leq\eta_{\sf TV}(K)\leq\eta_\infty(K),\qquad\text{and}\qquad \eta_\alpha(K)\leq\eta_\infty(K),\quad\alpha\in[0,\infty].
        \end{equation*}
        Thus, $\eta_\infty(K)$ extends the role traditionally played by $\eta_{\sf TV}(K)$ by providing a common upper bound for all $\varphi$-Divergence contraction coefficients and all R\'enyi orders.
        \begin{remark}
            To understand what happens when $\sfS\triangleq\supp(\mu)\subsetneq\X$, consider the two quantities
            \begin{equation*}
                D_{\rm intra} \triangleq \max_{x, x^\prime\in \sfS}\|K(\cdot|x)-K(\cdot|x^\prime)\|_{\sf TV}\text{ and }D_{\rm inter} \triangleq\max_{x\notin \sfS, x^\prime\in \sfS}\|K(\cdot|x)-K(\cdot|x^\prime)\|_{\sf TV}.
            \end{equation*}
            First, observe that in the definition $\eta_\infty(\mu, K)=\sup_{\nu\in\calP(\X):\nu\ll\mu\text{ and }\nu\neq\mu}\frac{D_\infty(\nu K\|\mu K)}{D_\infty(\nu\|\mu)}$, the condition $\nu\ll\mu$ implies $\eta_\infty(\mu, K)=\eta_\infty\left(\tilde{\mu}, \tilde{K}\right)$, where $\tilde{\mu}\in\calP(\sfS)$ is the fully-supported restriction of $\mu$ with $\tilde{\mu}(x)=\mu(x)\;\forall x\in \sfS$ and $\tilde{K}\in\calP(\Y|\sfS)$ is the reduced Markov kernel containing only the conditionals $K(\cdot|x)$ with $x\in \sfS$. Since $\tilde{\mu}$ has full support, \cref{prop:eta_tv_lower_bounds_eta_inf} applies and hence
            \begin{equation*}
                \eta_\infty(\mu, K) = \eta_\infty\left(\tilde{\mu}, \tilde{K}\right) \geq \eta_{\sf TV}\left(\tilde{\mu}, \tilde{K}\right) = D_{\rm intra}.
            \end{equation*}
            Moreover, by~\cref{lemma:distribution_dependent_eta_tv}, we have $\eta_{\sf TV}(\mu, K)=\max(D_{\rm intra}, D_{\rm inter})$, so that $\eta_\infty(\mu, K)\geq\eta_{\sf TV}(\mu, K)$ if and only if $\eta_\infty\left(\tilde{\mu}, \tilde{K}\right)\geq D_{\rm inter}$. A simple example of a sufficient condition which guarantees it is $D_{\rm intra}\geq D_{\rm inter}$.
        \end{remark}
        Next, we show through an example that the requirement that $\mu$ is fully supported in~\cref{prop:eta_tv_lower_bounds_eta_inf} is crucial in order to get $\eta_{\sf TV}(\mu, K) \leq \eta_{\infty}(\mu, K)$.
        \begingroup
        \thmboxoptions{nocut}
        \begin{example}
            Consider $\X=\{0,1,2\}$, $\Y=\{0,1\}$, and
            \begin{equation*}
                K = \begin{bmatrix}
                    0.5 & 0.5 \\
                    0.5 & 0.5\\
                    1. &0.
                \end{bmatrix} \text{ and } \mu=[0.5, \;0.5,\; 0].
            \end{equation*}
            We have $D_{\rm intra}=0$ and $D_{\rm inter}=0.5$, so that $\eta_{\sf TV}(\mu, K)=0.5$, while we trivially get $\eta_\infty(\mu, K)=0$. Hence, $\eta_{\sf TV}(\mu, K) \not\leq \eta_\infty(\mu, K)$.
        \end{example}
        \endgroup
        \begin{figure}
            \centering
            \includegraphics[width=0.75\linewidth]{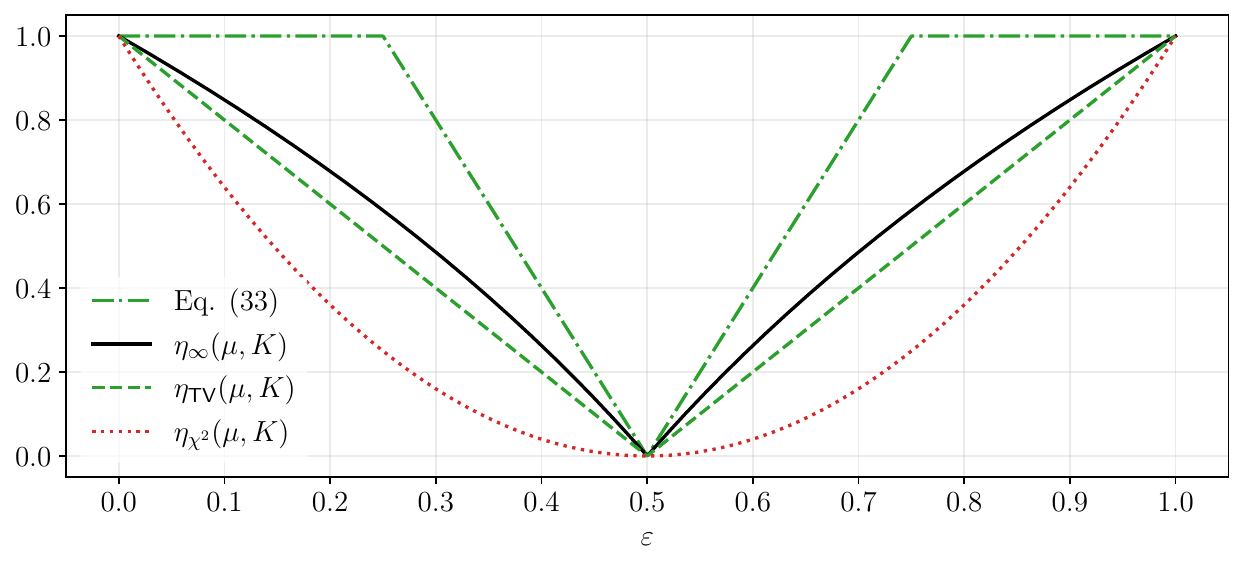}
            \caption{Bounds on $\eta_\infty(\mu, K)$ from~\cref{prop:bounds_on_renyi_infty_sdpi_by_tv,prop:eta_tv_lower_bounds_eta_inf}, where $K=\mathrm{BSC}(\varepsilon)$ and $\mu=[0.5,0.5]$. For comparison, $\eta_{\chi^2}(\mu, K)$ is also plotted.}
            \label{fig:bounds_on_renyi_infty_sdpi_by_tv}
        \end{figure}

        \Cref{fig:bounds_on_renyi_infty_sdpi_by_tv} displays the bounds in~\cref{prop:bounds_on_renyi_infty_sdpi_by_tv,prop:eta_tv_lower_bounds_eta_inf} for the choice $K=\mathrm{BSC}(\varepsilon)$ and $\mu = \left[\frac12, \frac12\right]$. In this setting, ${\eta_\infty(\mu, K) = 1-\log_2\frac1{\max(\varepsilon, 1-\varepsilon)}}$ (see~\cref{example:dd_infty_sdpi_bsc}) and ${\eta_{\sf TV}(\mu, K)=\eta_{\sf TV}(K)=|1-2\varepsilon|}$. The upper bound from~\cref{eq:ub_on_renyi_infty_sdpi_by_tv} gives a non-trivial result when $\varepsilon\in(\frac14, \frac34)$, and is tight for $\varepsilon=\frac12$. The lower bound from~\cref{eq:eta_tv_lower_bounds_eta_inf}, on the other hand, is always non-trivial. Moreover, it is tight for $\varepsilon\in\{0, \frac12, 1\}$.
\section{Applications}\label{sec:applications}\noindent
    \subsection{Connection to Local Differential Privacy}\noindent
        Local Differential Privacy (LDP) protects information about an input by randomising it before disclosure. It requires any two possible inputs to induce similar distributions over the disclosed output, thereby limiting what can be inferred about the original input. Such a randomisation is described by a privacy mechanism, represented by a Markov kernel $K\in\calP(\Y|\X)$. Information-theoretic inequalities have been used to study how much information can be preserved under this privacy constraint~\cite{DuchiJordanWainwright2013}. Privacy controls how $K(\cdot|x)$ varies with $x$, whereas contraction coefficients quantify how divergences between arbitrary input distributions change after applying $K$. For Hockey-Stick Divergences, this connection is exact: $(\varepsilon,\delta)$-LDP is equivalent to a bound on the corresponding contraction coefficient~\cite{asoodeh2020contraction}. Motivated by this equivalence, we investigate the corresponding relationship for R\'enyi Divergences. We begin by recalling the definition of pure LDP~\cite{KairouzOhViswanath2016}.
        \begin{definition}
            Let $\varepsilon\geq0$. A privacy mechanism $K\in\calP(\Y|\X)$ is $\varepsilon$-LDP if
            \begin{equation*}
                \sup_{x, x^\prime\in\X}\sup_{B\in\Sigma_\Y}\frac{K(B|x^\prime)}{K(B|x)}\leq e^\varepsilon,
            \end{equation*}
            with the conventions $\frac00=0$ and $\frac{a}{0}=\infty$ for $a>0$.
        \end{definition}
        R\'enyi Differential Privacy instead requires $D_\alpha\bigl(K(\cdot|x)\|K(\cdot|x^\prime)\bigr)$ to be bounded whenever $x$ and $x^\prime$ are neighbouring inputs~\cite{Mironov2017}. In the local setting, this requirement is imposed on every pair of inputs~\cite{GilaniKurriKosutSankar2024}, leading to the following definition.
        \begin{definition}
            Let $\alpha\in(1,\infty]$ and $\varepsilon\geq0$. A privacy mechanism $K\in\calP(\Y|\X)$ is $(\alpha,\varepsilon)$-RLDP if
            \begin{equation}\label{eq:renyi_diameter}
                \sup_{x,x^\prime\in\X}D_\alpha\bigl(K(\cdot|x)\|K(\cdot|x^\prime)\bigr)\leq\varepsilon.
            \end{equation}
        \end{definition}
        Recent work has studied the amplification of an existing RLDP guarantee under post-processing using contraction estimates for a rescaled version of Hellinger Divergences~\cite{grosse2025bounds}. These estimates are tailored to the privacy mechanism and post-processing channel under consideration. Here, we instead ask what the privacy constraint alone implies about the contraction of R\'enyi Divergence, uniformly over all input distributions.

        We first show that RLDP implies a bound on the R\'enyi-SDPI constant at the same order. For $\alpha=\infty$, where $(\infty,\varepsilon)$-RLDP coincides with $\varepsilon$-LDP, this implication is reversible.
        \begin{theorem}\label{thm:renyi_ldp_implies_contraction}
            Let $\alpha\in[2,\infty]$ and $\varepsilon\geq0$. If a privacy mechanism $K\in\calP(\Y|\X)$ is $(\alpha,\varepsilon)$-RLDP, then
            \begin{equation*}
                \eta_\alpha(K)\leq1-e^{-\varepsilon}.
            \end{equation*}
            When $\alpha=\infty$, the reverse implication also holds, so that
            \begin{equation*}
                K \text{ is $\varepsilon$-LDP} \quad\iff \quad\eta_\infty(K)\leq1-e^{-\varepsilon}.
            \end{equation*}
            \appendixproofref{appendix:proof_renyi_ldp_implies_contraction}
        \end{theorem}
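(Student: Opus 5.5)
The plan is to reduce everything to the two-point structure of \cref{prop:eta_parameter_renyi_binary} b). For $\alpha\in[2,\infty]$, $\eta_\alpha(K)$ is the supremum of the ratio over pairs $\nu=\delta_x$ and $\mu_t=(1-t)\delta_x+t\delta_{x'}$, with $x\neq x'$ and $t\in(0,1)$. For such a pair,
\begin{equation*}
    D_\alpha(\delta_x\|\mu_t)=-\log(1-t)
\end{equation*}
for every $\alpha$, since the density ratio equals $\frac1{1-t}$ on $\{x\}$ and zero elsewhere. Hence it suffices to show
\begin{equation*}
    D_\alpha\bigl(K_x\|(1-t)K_x+tK_{x'}\bigr)\leq-(1-e^{-\varepsilon})\log(1-t)
\end{equation*}
whenever $D_\alpha(K_x\|K_{x'})\leq\varepsilon$.

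For the forward direction I would use joint convexity in the second argument. For $\alpha\geq1$, the map $\mu\mapsto e^{(\alpha-1)D_\alpha(\nu\|\mu)}$ is convex; this follows from the joint convexity of the Hellinger divergence, or from convexity of $s\mapsto s^{1-\alpha}$ for $\alpha>1$. This gives
\begin{equation*}
    e^{(\alpha-1)D_\alpha(K_x\|\mu_tK)}\leq(1-t)+te^{(\alpha-1)\varepsilon}.
\end{equation*}
This yields
\begin{equation*}
    D_\alpha(K_x\|\mu_tK)\leq\frac{1}{\alpha-1}\log\bigl(1-t+te^{(\alpha-1)\varepsilon}\bigr),
\end{equation*}
which has the wrong shape, because the target compares with $-\log(1-t)$. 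I would therefore work at the level of the functional representation instead. Write $r=\frac{\dd K_{x'}}{\dd K_x}$, so that the ratio density is
\begin{equation*}
    g=\frac{1}{1-t+tr}.
\end{equation*}
Then
\begin{equation*}
    e^{(\alpha-1)D_\alpha}=\mathbb{E}_{K_x}\bigl[(1-t+tr)^{1-\alpha}\bigr].
\end{equation*}
The key inequality to aim for is $\mathbb{E}_{K_x}[(1-t+tr)^{1-\alpha}]\leq(1-t)^{-(\alpha-1)(1-e^{-\varepsilon})}$.

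To get it, I would use that $u\mapsto(1-t+tu)^{1-\alpha}$ is convex and decreasing. I would then bound $\mathbb{E}_{K_x}[r^{1-\alpha}]=e^{(\alpha-1)D_\alpha(K_x\|K_{x'})}\leq e^{(\alpha-1)\varepsilon}$ through the substitution $s=r^{-1}$, which rewrites the integrand as a function $\psi(s)=(1-t+t/s)^{1-\alpha}$ applied to $s=\frac{\dd K_x}{\dd K_{x'}}$. The goal is to show $\psi(s^{\alpha-1})^{1/(\alpha-1)}$ is concave in an appropriate variable, so that Jensen reduces the claim to the extremal case where $r$ is constant. That case is $r\equiv e^{-\varepsilon}$ on the support, with mass escaping elsewhere, which is exactly the randomised-response or BEC-like extremal pair. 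In the constant case, the inequality $\log(1-t+te^{-\varepsilon})\geq(1-e^{-\varepsilon})\log(1-t)$ is just concavity of $\log$ at the convex combination $e^{-\varepsilon}\cdot1+(1-e^{-\varepsilon})(1-t)$. This concavity (or extremality) step is where I expect the real difficulty, and it is also where $\alpha\geq2$ should enter. If the substitution route fails, I would first prove the $\alpha=\infty$ case directly and then try to interpolate.

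The $\alpha=\infty$ case is direct from \cref{prop:infty_renyi_sdpi_closed_form}. $\varepsilon$-LDP is equivalent to $K(y|x')\geq e^{-\varepsilon}K(y|x)$ for all $x,x',y$, and hence to
\begin{equation*}
    \min_{y\in\supp(K_x)}\frac{K(y|x')}{K(y|x)}\geq e^{-\varepsilon}\qquad\text{for all }x,x'.
\end{equation*}
Inserting this into \cref{eq:infty_renyi_sdpi_closed_form} gives $\eta_\infty(K)\leq1-e^{-\varepsilon}$. Conversely, $\eta_\infty(K)\leq1-e^{-\varepsilon}$ forces every such minimum to be at least $e^{-\varepsilon}$, which gives the pointwise bound and, by summing over $B$, the set bound. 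The singleton-versus-set equivalence is routine because the ratio over a set is a mediant of pointwise ratios. The same two-point formula also covers the finite-$\alpha$ case if the Jensen step works out.
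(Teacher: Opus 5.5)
Your skeleton matches the paper's. You reduce via Part b) of \cref{prop:eta_parameter_renyi_binary} to $\nu=\delta_x$, $\mu_t=(1-t)\delta_x+t\delta_{x'}$, use $D_\alpha(\delta_x\|\mu_t)=-\log(1-t)$, and finish with concavity of $\log$ (the paper's weighted AM--GM step). Your order-$\infty$ argument through \cref{prop:infty_renyi_sdpi_closed_form} is also essentially the paper's.

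The gap is the core of the finite-$\alpha$ case. With $r=\frac{\dd K_{x'}}{\dd K_x}$ you need
\begin{equation*}
\mathbb{E}_{K_x}\bigl[(1-t+tr)^{1-\alpha}\bigr]\leq\bigl(1-t+te^{-\varepsilon}\bigr)^{1-\alpha}\quad\text{whenever}\quad\mathbb{E}_{K_x}\bigl[r^{1-\alpha}\bigr]\leq e^{(\alpha-1)\varepsilon}.
\end{equation*}
You only announce that some reparametrisation of $\psi$ should be concave ``in an appropriate variable'', without identifying that variable or checking concavity.

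The missing ingredient is the reverse Minkowski inequality for the exponent $p=1-\alpha<0$. On positive functions, $f\mapsto\bigl(\mathbb{E}_{K_x}[f^{p}]\bigr)^{1/p}$ is concave and positively homogeneous, hence superadditive. Therefore
\begin{equation*}
\Bigl(\mathbb{E}_{K_x}\bigl[(1-t+tr)^{1-\alpha}\bigr]\Bigr)^{\frac{1}{1-\alpha}}\geq(1-t)+t\Bigl(\mathbb{E}_{K_x}\bigl[r^{1-\alpha}\bigr]\Bigr)^{\frac{1}{1-\alpha}}=(1-t)+te^{-D_\alpha(K_x\|K_{x'})}\geq(1-t)+te^{-\varepsilon}.
\end{equation*}
Here $r>0$ on $\supp(K_x)$ because $D_\alpha(K_x\|K_{x'})<\infty$. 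This is exactly the paper's step.

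In your Jensen language, set $w=r^{1-\alpha}$ and $\phi(w)=\bigl(1-t+tw^{-1/(\alpha-1)}\bigr)^{1-\alpha}$. Then $\phi'(w)=t\bigl(\frac{r}{1-t+tr}\bigr)^{\alpha}$, which is positive and decreasing in $w$. So $\phi$ is concave and non-decreasing, and $\mathbb{E}_{K_x}[\phi(w)]\leq\phi\bigl(e^{(\alpha-1)\varepsilon}\bigr)=(1-t+te^{-\varepsilon})^{1-\alpha}$.

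Two of your side remarks are also wrong.
\begin{enumerate}
\item This inequality holds for every $\alpha>1$, so $\alpha\geq2$ does not enter here. It is needed only to justify restricting to $\nu=\delta_x$ in Part b), which rests on \cref{thm:boundary_nu}.
\item The fallback of proving $\alpha=\infty$ first and interpolating cannot work. $(\alpha,\varepsilon)$-RLDP does not imply $\varepsilon$-LDP, so no bound on $\eta_\infty(K)$ in terms of $\varepsilon$ is available. Monotonicity or scaled convexity in the order then gives nothing.
\end{enumerate}
Finally, your ``extremal'' configuration $r\equiv e^{-\varepsilon}$ with escaping mass is not admissible under RLDP, which forces equal supports. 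It is only the equality case of the relaxed bound. That is harmless, since the argument never uses $\mathbb{E}_{K_x}[r]=1$.
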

        \begin{remark}
            The reverse implication fails at every finite order $\alpha\in[2,\infty)$. Indeed, the proof of~\cref{thm:renyi_ldp_implies_contraction} shows that the resulting bound on $\eta_\alpha(K)$ is strict whenever the supremum in~\cref{eq:renyi_diameter} is in $(0,\infty)$. Consequently, any $\varepsilon$ satisfying
            \begin{equation*}
                -\log\bigl(1-\eta_\alpha(K)\bigr)<\varepsilon<\sup_{x,x^\prime\in\X}D_\alpha\bigl(K(\cdot|x)\|K(\cdot|x^\prime)\bigr)
            \end{equation*}
            gives $\eta_\alpha(K)<1-e^{-\varepsilon}$ even though $K$ is not $(\alpha,\varepsilon)$-R\'enyi LDP.
        \end{remark}
        We now return to pure LDP. As mentioned above, pure LDP is equivalent to $(\infty,\varepsilon)$-RLDP. This fact, together with Part~a) of~\cref{corollary:distribution_independent_order_properties}, shows that every $\varepsilon$-LDP mechanism $K$ satisfies
        \begin{equation*}
            \eta_\alpha(K)\leq\eta_\infty(K)\leq1-e^{-\varepsilon},\qquad\alpha\in[1,\infty].
        \end{equation*}
       Since this bound is inherited from order $\infty$, a natural question is whether it can be sharpened at finite orders. For comparison, \cite[Theorem~1]{asoodeh2024contraction} establishes that every $\varepsilon$-LDP mechanism $K$ satisfies
        \begin{equation*}
            \eta_1(K)\leq\left(\frac{e^\varepsilon-1}{e^\varepsilon+1}\right)^2,
        \end{equation*}
        and that this bound is tight. We next identify the corresponding sharp bound for R\'enyi Divergences at every order.
        \begin{theorem}\label{thm:ldp_sharp_renyi_contraction}
            Let $\alpha\in[0,\infty]$ and $\varepsilon\geq0$. Every $\varepsilon$-LDP mechanism $K$ satisfies
            \begin{equation*}
                \eta_\alpha(K)\leq\Upsilon_{\alpha, \varepsilon}\triangleq\eta_\alpha\left(\mathrm{BSC}\left(\frac1{e^\varepsilon+1}\right)\right).
            \end{equation*}
            \appendixproofref{appendix:proof_ldp_sharp_renyi_contraction}
        \end{theorem}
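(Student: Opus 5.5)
We reduce to binary inputs via \cref{prop:eta_parameter_renyi_binary}, and then compare an arbitrary binary-input $\varepsilon$-LDP channel to the binary randomised response through a degradation argument. By Part a) of \cref{prop:eta_parameter_renyi_binary}, for every $\alpha\in[0,\infty]$ the supremum defining $\eta_\alpha(K)$ may be restricted to pairs $(\nu,\mu)$ supported on a common set $\{x,x'\}$. For such a pair, only the two conditionals $P\triangleq K(\cdot|x)$ and $Q\triangleq K(\cdot|x')$ matter. Hence $\eta_\alpha(K)=\sup_{x,x'}\eta_\alpha(K_{x,x'})$, where $K_{x,x'}\in\calP(\Y|\{x,x'\})$ is the binary-input restriction of $K$. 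The $\varepsilon$-LDP condition is inherited by each restriction, so it suffices to prove the bound for binary-input mechanisms with conditionals $P,Q$ satisfying $e^{-\varepsilon}\leq \frac{\dd P}{\dd Q}\leq e^{\varepsilon}$ (mutual absolute continuity follows from LDP with the stated conventions).

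The key step is the classical fact that every binary-input $\varepsilon$-LDP mechanism is a \emph{post-processing} of $\mathrm{BSC}\left(\frac1{e^\varepsilon+1}\right)$; this is the LDP analogue of the statement that randomised response dominates in the Blackwell order (cf.~\cite{KairouzOhViswanath2016}). Concretely, set $p\triangleq\frac{e^\varepsilon}{e^\varepsilon+1}$ and seek probability measures $R_0,R_1$ on $\Y$ with
\begin{equation*}
P=pR_0+(1-p)R_1,\qquad Q=(1-p)R_0+pR_1.
\end{equation*}
Solving the system gives $R_0=\frac{pP-(1-p)Q}{2p-1}$ and $R_1=\frac{pQ-(1-p)P}{2p-1}$ when $\varepsilon>0$. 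These are non-negative exactly because $pP\geq(1-p)Q$, i.e.\ $P\geq e^{-\varepsilon}Q$, and symmetrically; they also have total mass one. (The case $\varepsilon=0$ is trivial, since then $P=Q$ and both sides vanish, or we take a limit.) Therefore $K_{x,x'}=B\,L$, where $B=\mathrm{BSC}\left(\frac1{e^\varepsilon+1}\right)$ acts on $\{x,x'\}\cong\{0,1\}$ and $L$ is the kernel $j\mapsto R_j$.

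Now apply the DPI for $L$ at order $\alpha$. For any pair $(\nu,\mu)$ on $\{x,x'\}$,
\begin{equation*}
D_\alpha(\nu K\|\mu K)=D_\alpha(\nu BL\|\mu BL)\leq D_\alpha(\nu B\|\mu B)\leq\eta_\alpha(B)D_\alpha(\nu\|\mu),
\end{equation*}
with the reverse-KL analogue for $\alpha=0$, and a limit for $\alpha=\infty$ which also satisfies the DPI. Taking suprema yields $\eta_\alpha(K)\leq\Upsilon_{\alpha,\varepsilon}$.

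The main obstacle is making sure the reduction in the first paragraph is legitimate for every order. Specifically, we must check that Part a) of \cref{prop:eta_parameter_renyi_binary} really means that restricting $K$ to two inputs captures the supremum, including the endpoint conventions at $\alpha\in\{0,1,\infty\}$ and the finiteness constraints $0<D_\alpha<\infty$. The decomposition itself is elementary once the LDP ratio bounds are written as $e^{-\varepsilon}Q\leq P\leq e^{\varepsilon}Q$ setwise, which is equivalent to the pointwise density bounds on a finite alphabet. Equality for randomised response is immediate, since $K=B$ there.
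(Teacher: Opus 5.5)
Your proof is correct. It shares the paper's first step, the reduction to pairs supported on $\{x,x'\}$ via Part a) of \cref{prop:eta_parameter_renyi_binary}. Where it differs is in how the binary-input channel is compared with randomised response.

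The paper never constructs a garbling. Writing $\nu=(1-s)\delta_x+s\delta_{x'}$ and $\mu=(1-t)\delta_x+t\delta_{x'}$, it expresses $e^{(\alpha-1)D_\alpha(\nu K\|\mu K)}=\mean{K(\cdot|x)}{h(r)}$ with $h(u)=(1-s+su)^\alpha(1-t+tu)^{1-\alpha}$ and $r=\frac{\dd K(\cdot|x')}{\dd K(\cdot|x)}\in[e^{-\varepsilon},e^{\varepsilon}]$. It then checks via $h''$ that $h$ is convex for $\alpha>1$ and concave for $\alpha\in(0,1)$, and bounds $h(r)$ by its chord between $e^{-\varepsilon}$ and $e^{\varepsilon}$, which reproduces the BSC value. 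Orders $1$ and $\infty$ are then obtained by limits, and order $0$ by swapping $\nu$ and $\mu$ in the order-one inequality.

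The two arguments are the same inequality in different clothing. Multiplying the chord weights $\frac{e^\varepsilon-r}{e^\varepsilon-e^{-\varepsilon}}$ and $\frac{r-e^{-\varepsilon}}{e^\varepsilon-e^{-\varepsilon}}$ by $K(\cdot|x)$ yields exactly your $pR_0$ and $(1-p)R_1$. So the paper is verifying by hand, for the Hellinger integral, the data-processing step along your kernel $L$.

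Your packaging buys two things:
\begin{itemize}
\item \textbf{Uniformity.} The DPI holds at every R\'enyi order and for reverse KL, so no convexity computation or endpoint limits are needed.
\item \textbf{Generality.} The factorisation $K_{x,x'}=BL$ gives the same conclusion for any divergence satisfying the DPI whose distribution-independent constant admits a two-point reduction (for instance any $\varphi$-Divergence, by~\cite{ordentlich2021strong}).
\end{itemize}
The paper's route is more hands-on, but it requires no recognition of the randomised-response garbling (Blackwell-order) structure, only elementary convexity.
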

        \begin{remark}
            The channel $\mathrm{BSC}\left(\frac1{e^\varepsilon+1}\right)$ is precisely the binary randomised response mechanism with privacy parameter $\varepsilon$. Since it is itself $\varepsilon$-LDP, the bound in~\cref{thm:ldp_sharp_renyi_contraction} is attained. Moreover, Part~a) of~\cref{corollary:distribution_independent_order_properties} shows that $\Upsilon_{\alpha,\varepsilon}$ is non-decreasing in $\alpha$. Its values at orders one and infinity therefore give
            \begin{equation*}
                \left(\frac{e^\varepsilon-1}{e^\varepsilon+1}\right)^2=\Upsilon_{1,\varepsilon}\leq\Upsilon_{\alpha,\varepsilon}\leq\Upsilon_{\infty,\varepsilon}=1-e^{-\varepsilon},\qquad\alpha\in[1,\infty].
            \end{equation*}
        \end{remark}
        Taken together, these results further develop the existing connection between local privacy and divergence contraction in the setting of R\'enyi Divergences. Parallel developments for Hockey-Stick and more general $\varphi$-Divergences include non-linear contraction inequalities under pure LDP and extensions to approximate LDP~\cite{zamanlooy2023strong,nuradha2025non}. This broader line of work suggests investigating how other privacy constraints relate to the contraction of R\'enyi Divergences.
    \subsection{Convergence of Markov Chains}\label{section:mixing_time}\noindent
        Markov chain Monte Carlo methods generate approximate samples from a target distribution by simulating a Markov chain for which the target is stationary~\cite{gilks1995markov}. The quality of this approximation depends on how rapidly the chain approaches stationarity, which motivates the study of quantitative convergence bounds. To describe this convergence, recall that a Markov chain on $\X$ is determined by a Markov kernel $K\in\calP(\X|\X)$. When initialised according to $\nu\in\calP(\X)$, the distribution of the chain after $t$ steps is $\nu K^t$. Classical Markov chain theory characterises the asymptotic behaviour of this sequence~\cite{levin2017markov,montenegro2006mathematical}. If the chain is irreducible and aperiodic, then it admits a unique stationary distribution $\pi\in\calP(\X)$ satisfying $\pi K=\pi$, and $\lim_{t\to\infty}\|\nu K^t-\pi\|_{\sf TV}=0$ for all $\nu\in\calP(\X)$.

        This guarantee, however, is asymptotic and does not by itself specify how close $\nu K^t$ is to $\pi$ after a finite number of steps. Finite-time convergence is commonly quantified in Total Variation Distance~\cite[Chapter~4]{levin2017markov}. For reversible chains, spectral methods also yield bounds in $\chi^2$-Divergence, or equivalently in $L^2$-norm~\cite[Section~12.6]{levin2017markov}\cite[Section~2.1]{montenegro2006mathematical}. SDPIs place these approaches within a common framework: since $\pi K^t=\pi$, the SDPI associated with a divergence $D$ controls $D(\nu K^t\|\pi)$ in terms of $D(\nu\|\pi)$. Applying this viewpoint to R\'enyi Divergences gives, for every integer $t\geq1$,
        \begin{equation*}
            D_\alpha(\nu K^t\|\pi)\leq\eta_\alpha(\pi,K^t)D_\alpha(\nu\|\pi).
        \end{equation*}
        The coefficient $\eta_\alpha(\pi,K^t)$ thus directly controls the speed of convergence. For reversible chains, the bounds in~\cref{prop:ub_on_eta_alpha_by_eta_chi_squared} allow us to determine its exact asymptotic behaviour. Recall that a Markov chain with stationary distribution $\pi$ is reversible if it satisfies the detailed-balance condition
        \begin{equation*}
            \pi(x)K(y|x)=\pi(y)K(x|y),\qquad x,y\in\X,
        \end{equation*}
        which in view of the definition of the dual kernel in~\cref{eq:adjoint_relation}, is equivalent to $K_\pi^\star=K$.
        \begin{corollary}\label{corollary:asymptotic_renyi_contraction}
            Consider a Markov kernel $K\in\calP(\X|\X)$ inducing an irreducible, aperiodic and reversible Markov chain, and let $\pi\in\calP(\X)$ be its stationary distribution. Then, for every $\alpha\in(0,\infty)$,
            \begin{equation*}
                \lim_{t\to\infty}\eta_\alpha(\pi,K^t)^{\frac1t}=\eta_{\chi^2}(\pi,K).
            \end{equation*}
            \appendixproofref{appendix:proof_asymptotic_renyi_contraction}
        \end{corollary}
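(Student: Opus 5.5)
Sandwich $\eta_\alpha(\pi,K^t)$ between $\eta_{\chi^2}(\pi,K^t)$, multiplied by constants that do not depend on $t$, and then take $t$-th roots.

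\emph{Lower bound.} For $\alpha\in(0,1)\cup(1,\infty)$, \cref{thm:local_chi_squared_lb_on_renyi_sdpi} gives $\eta_{\chi^2}(\pi,K^t)\le\eta_\alpha(\pi,K^t)$. At $\alpha=1$ we use the KL analogue $\eta_{\chi^2}\le\eta_{\sf KL}$. Reversibility means $K_\pi^\star=K$, so $K$ is self-adjoint on $L^2(\pi)$. Hence $\eta_{\chi^2}(\pi,K^t)$ is the square of the operator norm of $K^t$ restricted to mean-zero functions. This equals $\lambda_\star^{2t}$, where $\lambda_\star$ is the second-largest eigenvalue in absolute value. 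Therefore $\eta_{\chi^2}(\pi,K^t)=\eta_{\chi^2}(\pi,K)^t$. The key point is that the spectral radius and the norm coincide for a self-adjoint operator, which is exactly where reversibility is used. Taking $t$-th roots gives $\liminf_t\eta_\alpha(\pi,K^t)^{1/t}\ge\eta_{\chi^2}(\pi,K)$.

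\emph{Upper bound.} Irreducibility makes $\pi$ fully supported. Since $\pi K^t=\pi$, the minima in \cref{prop:ub_on_eta_alpha_by_eta_chi_squared} are both $\pi_{\min}\triangleq\min_x\pi(x)>0$, and this does not depend on $t$. Applying that proposition to the pair $(\pi,K^t)$ gives
\begin{equation*}
\eta_\alpha(\pi,K^t)\le c_\alpha\,\eta_{\chi^2}(\pi,K^t)=c_\alpha\,\eta_{\chi^2}(\pi,K)^t,
\end{equation*}
where the constant depends on the range of $\alpha$:
\begin{itemize}
\item $c_\alpha=1/(\alpha\pi_{\min})$ for $\alpha\in(0,1)$, using full support;
\item $c_\alpha=1/\pi_{\min}$ for $\alpha\in[1,2]$;
\item $c_\alpha=C_\alpha(1/\pi_{\min})/((\alpha-1)\pi_{\min})$ for $\alpha>2$.
\end{itemize}
In every case $c_\alpha$ is finite, positive and independent of $t$, so $c_\alpha^{1/t}\to1$. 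Hence $\limsup_t\eta_\alpha(\pi,K^t)^{1/t}\le\eta_{\chi^2}(\pi,K)$.

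\emph{Main obstacle.} The delicate step is the identity $\eta_{\chi^2}(\pi,K^t)=\eta_{\chi^2}(\pi,K)^t$, which rests on self-adjointness. One also has to handle the degenerate case $\eta_{\chi^2}(\pi,K)=0$. In that case $K^t$ sends every density to $\allones$, so $\nu K^t=\pi$ and $\eta_\alpha(\pi,K^t)=0$, and the limit holds trivially. Aperiodicity is not needed for the argument itself; it only ensures $\lambda_\star<1$, so that the common rate is meaningful.
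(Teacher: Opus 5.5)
Your proposal is correct and is essentially the paper's proof. The paper also sandwiches $\eta_\alpha(\pi,K^t)$ between $\eta_{\chi^2}(\pi,K)^t$ (via \cref{thm:local_chi_squared_lb_on_renyi_sdpi}, plus $\eta_{\chi^2}\le\eta_{\sf KL}$ at $\alpha=1$) and $c_{\alpha,\pi}\,\eta_{\chi^2}(\pi,K)^t$ (via \cref{prop:ub_on_eta_alpha_by_eta_chi_squared}, using $\pi K^t=\pi$ and the full support of $\pi$), obtains $\eta_{\chi^2}(\pi,K^t)=\eta_{\chi^2}(\pi,K)^t$ from reversibility, and takes $t$-th roots. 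Your side remarks are also sound: the degenerate case is already covered by the sandwich, aperiodicity is indeed not used, and $\lambda_\star$ is best defined as the largest eigenvalue modulus of $K$ on mean-zero functions.
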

        The preceding corollary shows that, for every finite positive order, $\eta_\alpha(\pi,K^t)$ decays at the same exponential rate as $\eta_{\chi^2}(\pi,K)^t$. Evaluating $\eta_\alpha(\pi,K^t)$ directly is generally difficult, since it requires analysing $K^t$. In practice, finite-time bounds are instead obtained by iterating the one-step SDPI, which gives $D_\alpha(\nu K^t\|\pi)\leq\eta_\alpha(\pi,K)^tD_\alpha(\nu\|\pi)$. We follow this approach below and show that R\'enyi-SDPIs can improve upon classical $\chi^2$-based convergence bounds.

        The one-to-one relationship between R\'enyi and Hellinger Divergences allows us to express the R\'enyi-SDPI as a bound involving Hellinger Divergences.
        \begin{proposition}\label{prop:mcmt}
            Let $K\in\calP(\X|\X)$ and $\pi\in\calP(\X)$ be such that $\pi K=\pi$. Then, for every $\alpha\in(0,1)\cup(1,\infty)$, every integer $t\geq1$, and every $\nu\in\calP(\X)$ such that $\nu\ll\pi$,
            \begin{equation}
                \calH_\alpha(\nu K^t\|\pi)\leq\frac{\big(1+(\alpha-1)\calH_\alpha(\nu\|\pi)\big)^{\eta_\alpha(\pi,K)^t}-1}{\alpha-1}.\label{eq:mcmt_non_linear_sdpi}
            \end{equation}
            \appendixproofref{appendix:proof_mcmt}
        \end{proposition}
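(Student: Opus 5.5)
We will prove \cref{eq:mcmt_non_linear_sdpi} by first iterating the one-step R\'enyi-SDPI along the chain and then translating the resulting R\'enyi bound into Hellinger form through the one-to-one map between the two divergences. Throughout, write $\eta\triangleq\eta_\alpha(\pi,K)$.

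\textbf{Step 1: iterate the SDPI.} Fix $s\geq0$. Since $\pi K=\pi$, we have $\pi K^s=\pi$. Moreover, $\nu\ll\pi$ implies $\nu K^s\ll\pi K^s=\pi$. Applying the definition of $\eta_\alpha(\pi,K)$ to the measure $\nu K^s$ therefore gives
\begin{equation*}
D_\alpha(\nu K^{s+1}\|\pi)=D_\alpha\bigl((\nu K^s)K\|\pi K\bigr)\leq\eta\, D_\alpha(\nu K^s\|\pi).
\end{equation*}
This needs a short justification at the boundary of the constraint set in \cref{eq:dd_sdpi_constant}:
\begin{itemize}
\item If $D_\alpha(\nu K^s\|\pi)=0$, then $\nu K^s=\pi$, so $\nu K^{s+1}=\pi$ and the inequality holds trivially.
\item If $D_\alpha(\nu K^s\|\pi)=\infty$, the inequality is vacuous. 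On a finite space with $\nu\ll\pi$ this case only arises for $\alpha<1$ with non-overlapping supports, and then $\nu\not\ll\pi$, which is excluded.
\item The inequality also survives the convention $\eta=0$ when the constraint set is empty. Indeed, an empty set means every $\nu'\ll\pi$ has $D_\alpha(\nu'\|\pi)=0$ or $\infty$, and the zero case is covered above.
\end{itemize}
Induction on $s$ then gives $D_\alpha(\nu K^t\|\pi)\leq\eta^t D_\alpha(\nu\|\pi)$.

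\textbf{Step 2: convert to Hellinger.} Use the identity
\begin{equation*}
D_\alpha(\nu\|\pi)=\frac1{\alpha-1}\log\bigl(1+(\alpha-1)\calH_\alpha(\nu\|\pi)\bigr).
\end{equation*}
Multiplying Step 1 by $\alpha-1$ and exponentiating gives, in both ranges of $\alpha$,
\begin{equation*}
1+(\alpha-1)\calH_\alpha(\nu K^t\|\pi)\leq\bigl(1+(\alpha-1)\calH_\alpha(\nu\|\pi)\bigr)^{\eta^t}.
\end{equation*}
For $\alpha>1$, multiplying by $\alpha-1>0$ preserves the inequality direction. For $\alpha\in(0,1)$ the direction flips, but $\exp\bigl((\alpha-1)\,\cdot\,\bigr)$ is decreasing, so the flip is undone. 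The base $1+(\alpha-1)\calH_\alpha$ equals $\mean{\pi}{(\dd\nu/\dd\pi)^\alpha}$, which is positive, so the power is well defined.

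\textbf{Step 3: rearrange.} Subtract $1$ and divide by $\alpha-1$.
\begin{itemize}
\item For $\alpha>1$ the division keeps the inequality direction, which yields \cref{eq:mcmt_non_linear_sdpi} directly.
\item For $\alpha<1$, write the exponentiated inequality as $A\leq B$ with $A,B>0$. Then $A-1\leq B-1$, and dividing by the negative number $\alpha-1$ gives $\frac{A-1}{\alpha-1}\geq\frac{B-1}{\alpha-1}$.
\end{itemize}

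The main obstacle is this sign bookkeeping for $\alpha<1$, since it appears to reverse the claimed inequality. The resolution is to recheck Step 2: for $\alpha<1$, multiplying the R\'enyi inequality by $\alpha-1<0$ gives $(\alpha-1)D_\alpha(\nu K^t\|\pi)\geq(\alpha-1)\eta^tD_\alpha(\nu\|\pi)$. Exponentiating preserves this direction, so $A\geq B$, not $A\leq B$. Then $A-1\geq B-1$, and dividing by $\alpha-1<0$ gives $\frac{A-1}{\alpha-1}\leq\frac{B-1}{\alpha-1}$, which is exactly \cref{eq:mcmt_non_linear_sdpi}. So the two sign flips cancel, and I would present the argument as a two-case computation, $\alpha>1$ and $\alpha\in(0,1)$, to make this explicit.
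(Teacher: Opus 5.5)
Your proof is correct and follows the paper's route: you iterate the one-step R\'enyi-SDPI using $\pi K^s=\pi$ (your boundary-case checks are a careful addition the paper omits), then transfer the bound to Hellinger Divergences through the one-to-one relation. Step~2 as written is false for $\alpha\in(0,1)$, where the correct inequality is $1+(\alpha-1)\calH_\alpha(\nu K^t\|\pi)\geq\bigl(1+(\alpha-1)\calH_\alpha(\nu\|\pi)\bigr)^{\eta_\alpha(\pi,K)^t}$; your final paragraph repairs this correctly, but the paper avoids the case split entirely by applying $x\mapsto\frac{e^{(\alpha-1)x}-1}{\alpha-1}$, which is increasing for every $\alpha\neq1$ because its derivative is $e^{(\alpha-1)x}>0$.
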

        Contraction bounds of this form have previously appeared in the literature~\cite{polyanskiy2017strong,du2017strong,gu2023non}, where they are referred to as ``non-linear''.

        We next compare the non-linear bound in~\cref{eq:mcmt_non_linear_sdpi} with the corresponding linear Hellinger-contraction bound. For general $\alpha$, this comparison would require evaluating $\eta_{\calH_\alpha}(\pi,K)$, which is generally difficult. We therefore focus on $\alpha=2$, where
        \begin{equation*}
            \calH_2(\nu\|\pi)=\chi^2(\nu\|\pi)=\left\|\frac{\dd\nu}{\dd\pi}-1\right\|_{L^2(\pi)}^2
        \end{equation*}
        and $\eta_{\calH_2}(\pi, K)=\eta_{\chi^2}(\pi, K)$. In this case, the linear Hellinger-contraction bound reduces to the classical $\chi^2$-SDPI, while~\cref{eq:mcmt_non_linear_sdpi} becomes $ \chi^2(\nu K^t\|\pi)\leq\left(1+\chi^2(\nu\|\pi)\right)^{\eta_2(\pi,K)^t}-1$. Whenever $\eta_2(\pi,K)<1$, a Taylor expansion gives
        \begin{equation*}
            \left(1+\chi^2(\nu\|\pi)\right)^{\eta_2(\pi,K)^t}-1=\eta_2(\pi,K)^t\log\left(1+\chi^2(\nu\|\pi)\right)+O\left(\eta_2(\pi,K)^{2t}\right)
        \end{equation*}
        for every fixed $\nu\ll\pi$. This expansion shows that, for large $t$, comparing the non-linear and classical $\chi^2$-bounds amounts to comparing
        \begin{equation}\label{eq:mcmt_first_order_comparison}
            \eta_2(\pi,K)^t\log\left(1+\chi^2(\nu\|\pi)\right)\qquad\text{and}\qquad\eta_{\chi^2}(\pi,K)^t\chi^2(\nu\|\pi),
        \end{equation}
        respectively. This comparison does not determine a priori which bound is sharper: the classical bound benefits from the smaller contraction coefficient by~\cref{thm:local_chi_squared_lb_on_renyi_sdpi}, while the non-linear bound benefits from its logarithmic dependence on the initial divergence. We illustrate this trade-off on a family of chains for which both contraction coefficients and the exact divergence can be computed explicitly.
        \begin{example}\label{example:community_chain}
            Let $\mathcal A=\{A_1,\dots,A_q\}$ be a partition of a finite set $\X$ into $q\geq2$ communities, labelled so that $|A_1|=\min_{j\in[q]}|A_j|$, and let $\pi$ be the uniform distribution on $\X$. Fix $p\in(0,1)$ and consider the Markov kernel defined, for $x\in A_j$, through
            \begin{equation*}
                K(y|x)=p\frac{\mathbbm{1}_{A_j}(y)}{|A_j|}+(1-p)\frac1{|\X|},\qquad y\in\X.
            \end{equation*}
            At each step, the chain resamples uniformly within the community containing its current state with probability $p$, and uniformly over $\X$ otherwise. The chain is reversible with respect to $\pi$, and its contraction coefficients satisfy
            \begin{equation*}
                \eta_{\chi^2}(\pi,K)=p^2,\qquad \eta_2(\pi,K)=\frac{\log\left(1+p^2\left(\frac{|\X|}{|A_1|}-1\right)\right)}{\log\left(\frac{|\X|}{|A_1|}\right)}.
            \end{equation*}
            Since $0<p<1$, the strict concavity of the logarithm gives $\eta_{\chi^2}(\pi,K)<\eta_2(\pi,K)<1$. In view of~\cref{eq:mcmt_first_order_comparison}, the classical bound is therefore eventually sharper. However, the non-linear bound may remain tighter for a substantial number of steps.

            To compare the two bounds in~\cref{eq:mcmt_first_order_comparison} uniformly over the initial distribution, we maximise their right-hand sides over $\nu$. Both are non-decreasing functions of $\chi^2(\nu\|\pi)$. Since $\pi$ is uniform, $\chi^2(\nu\|\pi)=|\X|\sum_{z\in\X}\nu(z)^2-1\leq|\X|-1$, with equality precisely when $\nu$ is a point mass. We therefore fix $x\in A_1$ and take $\nu=\delta_x$ so that the R\'enyi-SDPI and $\chi^2$-SDPI bounds become, respectively,
            \begin{equation*}
                \chi^2(\nu K^t\|\pi)\leq|\X|^{\eta_2(\pi,K)^t}-1\qquad\text{and}\qquad\chi^2(\nu K^t\|\pi)\leq p^{2t}\bigl(|\X|-1\bigr).
            \end{equation*}
            On the other hand, the actual $\chi^2$-Divergence after $t\geq1$ steps is given by
            \begin{equation*}
                \chi^2(\delta_xK^t\|\pi)=p^{2t}\left(\frac{|\X|}{|A_1|}-1\right).
            \end{equation*}
            For $q=5$, $p=0.99$, and community sizes $1000$, $1500$, $2000$, $2500$, and $3000$,~\cref{fig:mcmt_chi_squared_comparison} compares the two bounds with the exact divergence and shows that the R\'enyi-SDPI bound remains sharper for several hundred steps. The general result underlying these computations, together with its proof, is provided in Appendix~\ref{appendix:partition_chain_computations}.
        \end{example}
        \begin{figure}[t]
            \centering
            \begin{subfigure}[t]{0.49\textwidth}
                \centering
                \includegraphics[width=\linewidth]{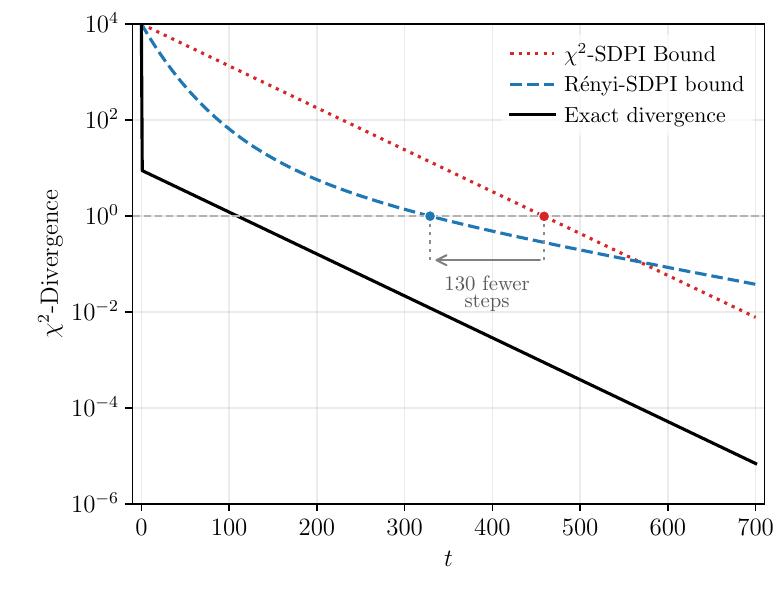}
                \caption{}
            \end{subfigure}
            \hfill
            \begin{subfigure}[t]{0.49\textwidth}
                \centering
                \includegraphics[width=\linewidth]{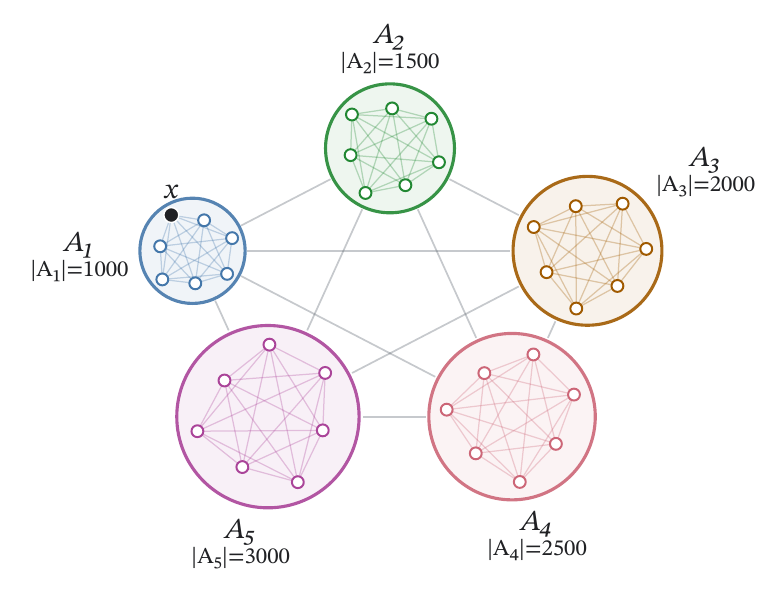}
                \caption{}
            \end{subfigure}
            \caption{Comparison of the exact $\chi^2$-Divergence with the bounds obtained from $\eta_2(\pi,K)$ and $\eta_{\chi^2}(\pi,K)$ for~\cref{example:community_chain}. At each step, the Markov chain moves to a uniformly chosen vertex of its current community with probability $p=0.99$, and to a uniformly chosen vertex of the entire graph otherwise. (a) Starting from a vertex in the smallest community $A_1$, the R\'enyi-SDPI bound ensures $\chi^2(\nu K^t\|\pi)\leq1$ after $329$ steps, compared with $459$ for the $\chi^2$-SDPI bound. (b) A depiction of the five-community graph on which the random walk is performed.}
            \label{fig:mcmt_chi_squared_comparison}
        \end{figure}
        In~\cref{example:community_chain}, the first transition erases all information about the position of the initial state within its community, so that the subsequent evolution depends only on the community containing the state. Consequently, scaling all community sizes by the same factor leaves the exact divergence unchanged for every integer $t\geq1$, even though the initial divergence $\chi^2(\delta_x\|\pi)=|\X|-1$ grows with the size of the state space. Since $\eta_2(\pi,K)$ depends on the community sizes only through $\frac{|\X|}{|A_1|}$, it is also unchanged under this scaling. Therefore, for every fixed integer $t\geq1$,
        \begin{equation*}
            \lim_{|\X|\to\infty}\frac{|\X|^{\eta_2(\pi,K)^t}-1}{p^{2t}\bigl(|\X|-1\bigr)}=0,
        \end{equation*}
        so that the multiplicative gap between the R\'enyi-SDPI and $\chi^2$-SDPI bounds can be made arbitrarily large.
\section*{Acknowledgment}\noindent
    This work was supported in part by the Swiss
    National Science Foundation under Grant 236494, and was partially conducted during the first author's visit to the Okinawa Institute of Science and Technology.
\bibliographystyle{IEEEtran}
\bibliography{references}
\newpage
\appendices
\section{Technical Results on Monotonicity}\noindent
    In the subsequent proofs, we frequently encounter the need to determine the monotonicity of a ratio of functions. The following standard result from convex analysis unifies these arguments.
    \begin{lemma}\label{lemma:secant_slope}
        Let $I\subseteq\reals_+$ be an interval containing 0, and let $g:I\to\reals$ be a function. Define $f: I\setminus\{0\}\to\reals$ through $f(t)=\frac{g(t)}{t}$.
        \begin{enumerate}[a)]
            \item\label{lemma:secant_slope_convex} If $g$ is convex (resp. strictly convex) on $I$ and $g(0)\leq 0$, then $f$ is non-decreasing (resp. increasing).
            \item\label{lemma:secant_slope_concave} If $g$ is concave (resp. strictly concave) on $I$ and $g(0)\geq 0$, then $f$ is non-increasing (resp. decreasing).
        \end{enumerate}
    \end{lemma}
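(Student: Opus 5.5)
The plan is to prove both parts directly through the slope of a secant through the origin: for $0\le s<t$ in $I$, write $s$ as a convex combination of $0$ and $t$ and compare $f(s)$ with $f(t)$. For part~\ref{lemma:secant_slope_convex}, fix $s,t\in I\setminus\{0\}$ with $0<s<t$. Since $I$ is an interval containing $0$ and $t$, it contains $[0,t]$, so we may set $\lambda\triangleq s/t\in(0,1)$ and write $s=\lambda t+(1-\lambda)\cdot 0$. Convexity of $g$ then gives
\begin{equation*}
    g(s)\leq \lambda g(t)+(1-\lambda)g(0)\leq \lambda g(t)=\frac{s}{t}\,g(t),
\end{equation*}
where the second inequality uses $g(0)\le0$ and $1-\lambda>0$. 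Dividing by $s>0$ yields $f(s)\le f(t)$, which is the claimed monotonicity. In the strictly convex case the first inequality is strict, because $\lambda\in(0,1)$ and $0\neq t$. Hence $f(s)<f(t)$, so $f$ is increasing.

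For part~\ref{lemma:secant_slope_concave}, I will apply part~\ref{lemma:secant_slope_convex} to $-g$. This function is convex (resp. strictly convex) and satisfies $-g(0)\le 0$. Consequently $-f=(-g)/t$ is non-decreasing (resp. increasing), and therefore $f$ is non-increasing (resp. decreasing).

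I do not expect any real obstacle in this argument. The only point requiring care is confirming that $[0,t]\subseteq I$, so that the convex combination of $0$ and $t$ lies in the domain of $g$. This follows from $I$ being an interval that contains both $0$ and $t$. Because $I\subseteq\reals_+$, we also have $s,t>0$, which justifies the division by $s$ without reversing the inequality.
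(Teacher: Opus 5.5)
Your proof is correct and follows essentially the same route as the paper: write $s=\lambda t+(1-\lambda)\cdot 0$ with $\lambda=s/t$, use convexity together with $g(0)\le 0$ to obtain $g(s)\le\lambda g(t)$, and divide by $s$. Your handling of part b) by applying part a) to $-g$ is equivalent to the paper's ``reverse the inequalities.'' Your explicit checks that $[0,t]\subseteq I$ and that strictness survives in the strictly convex case are a little more careful than the paper's version.
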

    \begin{proof}
        Assuming $g$ is convex and $g(0)\leq0$, we have $g\big(\lambda t+(1-\lambda)\cdot0\big)\leq \lambda g(t)$ for $\lambda\in[0,1]$. Letting $s=\lambda t\in I$, the inequality reads $\frac{g(s)}{s}\leq\frac{g(t)}{t}$, hence $f$ is non-decreasing. If strict convexity holds, identical steps yield that $f$ is increasing. For $g$ (strictly) concave and $g(0)\geq0$, the same argument applies but with the inequalities reversed.
    \end{proof}
    We now use \cref{lemma:secant_slope} to collect a number of useful corollaries.
    \begin{corollary}\label{corollary:log_sum_exp}
        Let $a, b\geq0$ such that $0<a+b \leq1$ and let $k\neq0$. Then the function
        \begin{equation*}
            f(t) = \frac{\log(a e^{kt} + b)}{t}
        \end{equation*}
        is increasing for $t > 0$ if $ab>0$, and non-decreasing otherwise.
    \end{corollary}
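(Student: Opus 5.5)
The plan is to apply \cref{lemma:secant_slope} with $g(t)\triangleq\log(ae^{kt}+b)$ on $I=[0,\infty)$, since $f(t)=g(t)/t$. This needs two checks: $g(0)\leq 0$, and $g$ convex, strictly so when $ab>0$.

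The first check is immediate: $g(0)=\log(a+b)\leq\log 1=0$ because $0<a+b\leq1$. Note also that $g$ is well defined on $[0,\infty)$. If $a>0$, then $ae^{kt}+b>0$. If $a=0$, then $b=a+b>0$.

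For convexity, differentiate twice:
\[
g'(t)=\frac{ake^{kt}}{ae^{kt}+b},\qquad g''(t)=\frac{abk^2e^{kt}}{(ae^{kt}+b)^2}.
\]
Thus $g''\geq0$ everywhere, so $g$ is convex. When $ab>0$ and $k\neq0$, we have $g''>0$, so $g$ is strictly convex. Part a) of \cref{lemma:secant_slope} then gives that $f$ is non-decreasing on $(0,\infty)$ in general and increasing when $ab>0$.

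When $ab=0$ the degenerate cases can be checked directly. If $b=0$, then $f(t)=k+\log(a)/t$ with $\log a\leq0$, which is non-decreasing. If $a=0$, then $f(t)=\log(b)/t$ with $\log b\leq 0$, which is also non-decreasing. No step is a real obstacle; the only care needed is the sign of $g(0)$, which is exactly where the hypothesis $a+b\leq1$ is used.
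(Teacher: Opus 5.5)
Your proposal is correct and follows the paper's proof: both apply \cref{lemma:secant_slope} with $g(t)=\log(ae^{kt}+b)$, use $g(0)=\log(a+b)\leq 0$, and read convexity (strict when $ab>0$) from $g''(t)=\frac{abk^2e^{kt}}{(ae^{kt}+b)^2}$. Your well-definedness remark and the direct check of the cases $ab=0$ are harmless additions, since the lemma already covers those cases.
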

    \begin{proof}
        We apply~\cref{lemma:secant_slope} with $g(t) = \log(a e^{kt} + b)$.
        Notice that $g(0) = \log(a+b)$ and the second derivative is $g^{\prime\prime}(t) = \frac{ab k^2 e^{kt}}{(a e^{kt} + b)^2}$, so that the sign of $g^{\prime\prime}(t)$ is entirely determined by the sign of $ab$. Hence, $g$ is convex (resp. strictly convex) if $ab \geq 0$ (resp. $ab > 0$) and since $a+b \leq 1$, $g(0) \le \log(1) = 0$, \cref{lemma:secant_slope}\labelcref{lemma:secant_slope_convex} gives that $f(t)$ is non-decreasing (resp. increasing).
    \end{proof}
    \begin{corollary}\label{corollary:log_ratio_increasing}
        For $\lambda\in[0,1]$, the map $t\mapsto \frac{\log\big(1+\lambda(\alpha-1) t\big)}{\log\big(1+(\alpha-1)t\big)}$ is non-decreasing on the interval $(0,\infty)$ for any $\alpha>1$.
    \end{corollary}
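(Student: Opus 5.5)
The plan is to reduce the claim to \cref{lemma:secant_slope}. Write $s\triangleq\log\big(1+(\alpha-1)t\big)$. Since $\alpha>1$, the map $t\mapsto s$ is an increasing bijection from $(0,\infty)$ onto $(0,\infty)$, with inverse $t=\frac{e^s-1}{\alpha-1}$. Substituting, the ratio becomes
\begin{equation*}
    f(s)=\frac{\log\big(1+\lambda(e^s-1)\big)}{s}=\frac{\log\big(\lambda e^s+(1-\lambda)\big)}{s}.
\end{equation*}
Because $s$ is increasing in $t$, it suffices to show that $f$ is non-decreasing in $s$ on $(0,\infty)$.

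This is exactly the form treated in \cref{corollary:log_sum_exp}, with $a=\lambda\geq0$, $b=1-\lambda\geq0$ and $k=1$. These parameters satisfy $a+b=1\in(0,1]$ and $k\neq0$. The corollary therefore shows that $f$ is non-decreasing on $s>0$, and in fact increasing when $\lambda\in(0,1)$.

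Alternatively, one can apply \cref{lemma:secant_slope}a) directly to $g(s)=\log(\lambda e^s+1-\lambda)$. This function satisfies $g(0)=0$, and
\begin{equation*}
    g''(s)=\frac{\lambda(1-\lambda)e^s}{\big(\lambda e^s+1-\lambda\big)^2}\geq0,
\end{equation*}
so $g$ is convex and the lemma gives the same conclusion.

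The edge cases cause no difficulty. If $\lambda=0$ the ratio is identically $0$, and if $\lambda=1$ it is identically $1$; both are covered by the "non-decreasing" conclusion.

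There is no real obstacle here. The only point needing care is the change of variables: the substitution must be monotone increasing and must map $(0,\infty)$ onto $(0,\infty)$, which holds precisely because $\alpha>1$.
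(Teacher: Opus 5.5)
Your proof is correct and follows the paper's argument: the substitution $s=\log\big(1+(\alpha-1)t\big)$ followed by \cref{corollary:log_sum_exp} with $a=\lambda$, $b=1-\lambda$, $k=1$. Your alternative direct appeal to \cref{lemma:secant_slope} is also valid, since it just unpacks the proof of that corollary.
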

    \begin{proof}
        Let $s = \log\big(1+(\alpha-1)t\big)$, which is an increasing bijection from $(0,\infty)$ to $(0, \infty)$. Under this substitution, $t=\frac{e^{s}-1}{\alpha-1}$, and the function becomes $f(s)=\frac{\log(1-\lambda+\lambda e^{s})}{s}$. By~\cref{corollary:log_sum_exp} with $a=\lambda, b=1-\lambda$ and $k=1$, $f(s)$ is non-decreasing. Since the substitution used is increasing, the original map has the same monotonicity as $f(s)$.
    \end{proof}
    \begin{corollary}\label{corollary:log_ratio_decreasing}
        The map $t\mapsto \frac{\log(1-\lambda t)}{\log(1-t)}$ on the interval $(0,1)$ is non-increasing when $\lambda\in[0,1]$, and decreasing when $\lambda\in(0,1)$. Moreover, $\lim_{t\to0}\frac{\log(1-\lambda t)}{\log(1-t)}=\lambda$.
    \end{corollary}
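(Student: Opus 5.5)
The plan is to reduce everything to \cref{lemma:secant_slope} through a change of variables, in the same spirit as the proof of \cref{corollary:log_ratio_increasing}. Set $s=-\log(1-t)$. This is an increasing bijection from $(0,1)$ onto $(0,\infty)$ with $1-t=e^{-s}$. Under this substitution the map becomes
\begin{equation*}
    f(s)=\frac{-\log\bigl(1-\lambda+\lambda e^{-s}\bigr)}{s},
\end{equation*}
since $\log(1-\lambda t)=\log(1-\lambda+\lambda e^{-s})$ and $\log(1-t)=-s$. Because the substitution is increasing, the map in $t$ has the same monotonicity as $f$ in $s$.

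Define $g(s)=-\log(1-\lambda+\lambda e^{-s})$ on $[0,\infty)$. We have $g(0)=-\log 1=0$. Reusing the computation in \cref{corollary:log_sum_exp} with $a=\lambda$, $b=1-\lambda$ and $k=-1$, the function $\log(ae^{ks}+b)$ has second derivative $\frac{abk^2e^{ks}}{(ae^{ks}+b)^2}\geq 0$. Hence $g$ is concave, and it is strictly concave exactly when $ab=\lambda(1-\lambda)>0$, that is, when $\lambda\in(0,1)$. Part \labelcref{lemma:secant_slope_concave} of \cref{lemma:secant_slope} then shows that $f$ is non-increasing for $\lambda\in[0,1]$ and decreasing for $\lambda\in(0,1)$. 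The endpoint cases $\lambda\in\{0,1\}$ give constant functions $0$ and $1$, which is consistent with this.

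For the limit, I would use first-order Taylor expansions as $t\to0$: $\log(1-\lambda t)=-\lambda t+O(t^2)$ and $\log(1-t)=-t+O(t^2)$. The ratio therefore tends to $\lambda$. Alternatively, one can apply L'H\^opital to get $\frac{\lambda/(1-\lambda t)}{1/(1-t)}\to\lambda$.

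I expect no real obstacle. The only point needing care is the sign bookkeeping. Negating $\log(\cdot)$ turns the convexity from \cref{corollary:log_sum_exp} into concavity, and the condition $g(0)\geq0$ holds with equality. The one thing to check is that \cref{lemma:secant_slope} applies on $I=[0,\infty)$, which contains $0$, so it does.
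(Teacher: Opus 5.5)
Your proof is correct and is essentially the paper's argument. The paper uses the same substitution $s=-\log(1-t)$ and gets the monotonicity from the convexity of $s\mapsto\log(\lambda e^{-s}+1-\lambda)$ through \cref{lemma:secant_slope}. The paper invokes \cref{corollary:log_sum_exp} with $k=-1$ and then negates, whereas you apply the concave part of \cref{lemma:secant_slope} directly to $g=-\log(1-\lambda+\lambda e^{-s})$. The limit is obtained the same way, via L'H\^opital.
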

    \begin{proof}
        Let $s = -\log(1-t)$, which is an increasing bijection from $(0,1)$ to $(0, \infty)$. Under this substitution, $t=1-e^{-s}$, and the function becomes $f(s)=\frac{-\log(1-\lambda+\lambda e^{-s})}{s}$. By~\cref{corollary:log_sum_exp} with $a=\lambda, b=1-\lambda$ and $k=-1$, $f(s)$ is non-increasing when $\lambda\in[0,1]$ and decreasing when $\lambda\in(0,1)$. Since the substitution used is increasing, the original map has the same monotonicity as $f(s)$. Finally, the limit is
        \begin{equation*}
                \lim_{t\to 0} \frac{\log\left(1-\lambda t\right)}{\log(1-t)} = \lim_{t\to 0}\frac{\frac{-\lambda}{1-\lambda t}}{\frac{-1}{1-t}} = \lambda.
        \end{equation*}
    \end{proof}
    \begin{corollary}\label{corollary:xlogx_map}
        For $\lambda \in (0,1]$, the map $t \mapsto \frac{(1+\lambda t)\log(1+\lambda t)}{t}$ is increasing on the interval $(0, \infty)$.
    \end{corollary}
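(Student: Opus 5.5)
Following the pattern of the preceding corollaries, the plan is to invoke \cref{lemma:secant_slope}\labelcref{lemma:secant_slope_convex} with
\begin{equation*}
    g(t)\triangleq(1+\lambda t)\log(1+\lambda t),\qquad t\in[0,\infty),
\end{equation*}
so that the map in question is exactly $f(t)=\frac{g(t)}{t}$ on $(0,\infty)$. Two hypotheses must be checked: $g(0)\leq0$, and strict convexity of $g$ on $[0,\infty)$.

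The first is immediate, since $g(0)=1\cdot\log 1=0$. For the second, write $g=\psi\circ\ell$ with $\psi(u)=u\log u$ and $\ell(t)=1+\lambda t$. Differentiating twice gives
\begin{equation*}
    g^{\prime}(t)=\lambda\bigl(\log(1+\lambda t)+1\bigr),\qquad g^{\prime\prime}(t)=\frac{\lambda^2}{1+\lambda t}.
\end{equation*}
Because $\lambda\in(0,1]$, we have $g^{\prime\prime}(t)>0$ for all $t\geq0$. Hence $g$ is strictly convex on the interval $[0,\infty)$, which contains $0$.

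\Cref{lemma:secant_slope}\labelcref{lemma:secant_slope_convex} then gives that $f$ is increasing on $(0,\infty)$, as claimed. The only point requiring care is that strictness relies on $\lambda>0$. When $\lambda=0$ the map is identically zero, which is why the statement excludes that case. Beyond this there is no real obstacle; the argument is a direct second-derivative check.
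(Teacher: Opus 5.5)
Your proposal is correct and follows the paper's proof exactly: apply \cref{lemma:secant_slope}\labelcref{lemma:secant_slope_convex} to $g(t)=(1+\lambda t)\log(1+\lambda t)$, using $g(0)=0$ and $g^{\prime\prime}(t)=\frac{\lambda^2}{1+\lambda t}>0$ to get strict convexity. Your remark that strictness needs $\lambda>0$ correctly explains why $\lambda=0$ is excluded.
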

    \begin{proof}
        The function is in the form $\frac{g(t)}{t}$ with $g(t) = (1+\lambda t)\log(1+\lambda t)$. The second derivative of $g$ is $g^{\prime\prime}(t) = \frac{\lambda^2}{1+\lambda t}$, and hence $g$ is strictly convex. Moreover since $g(0) = 0$, \cref{lemma:secant_slope}\labelcref{lemma:secant_slope_convex} implies that $\frac{g(t)}{t}$ is increasing.
    \end{proof}
\section{Proofs for~\cref{sec:properties_exact_characterisations}}
    \subsection{Proof of~\cref{prop:renyi_sdpi_functional_form}}\label{appendix:proof_renyi_sdpi_functional_form}\noindent
        The map $\nu\mapsto f=\frac{\mathrm d\nu}{\mathrm d\mu}$ is a bijection between probability measures $\nu\ll\mu$ and $\mathcal D_\mu$, with $\nu=\mu$ if and only if $f=\allones$. Moreover, for every $\nu\ll\mu$ with $\nu\neq\mu$, writing $f=\frac{\mathrm d\nu}{\mathrm d\mu}$, \cref{eq:dual_kernel_formula} allows us to write
        \begin{equation*}
            \frac{D_\alpha(\nu K\|\mu K)}{D_\alpha(\nu\|\mu)} = \frac{\log\left\|K_\mu^\star f\right\|_{L^\alpha(\mu K)}}{\log\left\|f\right\|_{L^\alpha(\mu)}},
        \end{equation*}
        where the identity at $\alpha=\infty$ follows from the corresponding $L^\infty$-representation. Since the admissible probability measures in the definition of $\eta_\alpha(\mu,K)$ correspond precisely to the elements of $\mathcal D_\mu\setminus\{\allones\}$, taking the supremum proves the claim.

        For $f\in\mathcal D_\mu\setminus\{\allones\}$, the variational representation and the fact that $\log\|f\|_{L^\alpha(\mu)}>0$ prove~\cref{eq:renyi_sdpi_functional_inequality} after exponentiating. For $f=\allones$, the inequality holds with equality since $K_\mu^\star\allones=\allones$.
    \subsection{A First-Order Optimality Lemma}\label{appendix:first_order_optimality}\noindent
        Several arguments below rely on the same first-order necessary condition, recorded here as a separate lemma.
        \begin{lemma}\label{lemma:first_order_optimality}
            Consider a pair $(\mu,K)\in\calP(\X)\times\calP(\Y|\X)$, let $\alpha>1$ and $\lambda\in[0,1]$, and suppose that $h\in\mathcal D_\mu$ maximises the functional
            \begin{equation*}
                \Phi_{\alpha,\lambda}(f)\triangleq\log\left\|K_\mu^\star f\right\|_{L^\alpha(\mu K)}-\lambda\log\left\|f\right\|_{L^\alpha(\mu)}
            \end{equation*}
            over $\mathcal D_\mu$. Then,
            \begin{equation}\label{eq:first_order_optimality_condition}
                \frac{K\big((K_\mu^\star h)^{\alpha-1}\big)}{\left\|K_\mu^\star h\right\|_{L^\alpha(\mu K)}^\alpha}\leq\lambda\frac{h^{\alpha-1}}{\left\|h\right\|_{L^\alpha(\mu)}^\alpha}+(1-\lambda)\allones\qquad\mu\text{-almost everywhere},
            \end{equation}
            with equality on $\calS_h\triangleq\left\{x\in\supp(\mu):h(x)>0\right\}$.
        \end{lemma}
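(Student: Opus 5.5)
The plan is a standard first-order variation argument, i.e.\ a KKT condition on the simplex $\mathcal D_\mu$. Since $\X$ is finite, $\mathcal D_\mu$ is identified with the probability simplex over $\supp(\mu)$ through $\nu=h\mu$.

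\emph{Step 1: directional derivatives.} Write $A(f)\triangleq\|K_\mu^\star f\|_{L^\alpha(\mu K)}^\alpha$ and $B(f)\triangleq\|f\|_{L^\alpha(\mu)}^\alpha$, so that $\Phi_{\alpha,\lambda}(f)=\frac1\alpha\bigl(\log A(f)-\lambda\log B(f)\bigr)$. Both $A$ and $B$ are finite sums of convex powers, hence differentiable along feasible directions. For $x_0\in\supp(\mu)$, use the admissible perturbation
\begin{equation*}
h_\varepsilon=(1-\varepsilon)h+\varepsilon\frac{\mathbbm 1_{\{x_0\}}}{\mu(x_0)},\qquad\varepsilon\in[0,1],
\end{equation*}
which moves the underlying measure towards $\delta_{x_0}$ and stays in $\mathcal D_\mu$. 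Set $g\triangleq\mathbbm 1_{\{x_0\}}/\mu(x_0)-h$. Linearity of $K_\mu^\star$ and the adjoint relation~\cref{eq:adjoint_relation} give
\begin{equation*}
\frac{d}{d\varepsilon}A(h_\varepsilon)\Big|_{0^+}=\alpha\ip{(K_\mu^\star h)^{\alpha-1}}{K_\mu^\star g}_{\mu K}=\alpha\ip{K\bigl((K_\mu^\star h)^{\alpha-1}\bigr)}{g}_\mu .
\end{equation*}
Similarly, $\frac{d}{d\varepsilon}B(h_\varepsilon)|_{0^+}=\alpha\ip{h^{\alpha-1}}{g}_\mu$. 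Evaluating against $g$, the first term gives the value of the function at $x_0$. The second term gives $\ip{K((K_\mu^\star h)^{\alpha-1})}{h}_\mu=\ip{(K_\mu^\star h)^{\alpha-1}}{K_\mu^\star h}_{\mu K}=A(h)$, and $\ip{h^{\alpha-1}}{h}_\mu=B(h)$.

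\emph{Step 2: the inequality on $\supp(\mu)$.} Since $h$ is a maximiser, the one-sided derivative of $\Phi_{\alpha,\lambda}(h_\varepsilon)$ at $0^+$ is $\le0$. Dividing each derivative by $A(h)$ and $B(h)$ respectively (both are $\ge1$ by Jensen, hence positive) yields
\begin{equation*}
\frac{K((K_\mu^\star h)^{\alpha-1})(x_0)}{A(h)}-1-\lambda\Bigl(\frac{h(x_0)^{\alpha-1}}{B(h)}-1\Bigr)\le0,
\end{equation*}
which is exactly~\cref{eq:first_order_optimality_condition} at $x_0$. Because $\X$ is finite, holding at every $x_0\in\supp(\mu)$ means it holds $\mu$-almost everywhere.

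\emph{Step 3: equality on $\calS_h$.} For $x_0\in\calS_h$, the perturbation can also be taken with small negative $\varepsilon$, since $h_\varepsilon=(1-\varepsilon)h+\varepsilon\mathbbm 1_{\{x_0\}}/\mu(x_0)$ remains non-negative: at $x_0$ its value is $h(x_0)+\varepsilon(1/\mu(x_0)-h(x_0))$, which is positive for small $|\varepsilon|$. Thus $\varepsilon=0$ is an interior maximiser of a differentiable function of $\varepsilon$, the two-sided derivative vanishes, and equality follows.

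\emph{Main obstacle: differentiability.} The delicate point is justifying differentiability where $h$ or $K_\mu^\star h$ vanishes. Since $\alpha>1$, the map $t\mapsto t^\alpha$ is $C^1$ on $[0,\infty)$ with derivative $\alpha t^{\alpha-1}$, which is $0$ at $t=0$, so the one-sided derivative formulas remain valid even at zeros; finiteness of the sums handles interchange of limits. One must also check that $\Phi_{\alpha,\lambda}$ is finite at $h$, which holds since all norms are positive.
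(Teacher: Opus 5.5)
Your proposal is correct and is essentially the paper's argument: a one-sided first-order variation of $\Phi_{\alpha,\lambda}$ along convex combinations inside $\mathcal D_\mu$, with the adjoint relation giving $\langle K((K_\mu^\star h)^{\alpha-1}),h\rangle_\mu=\|K_\mu^\star h\|_{L^\alpha(\mu K)}^\alpha$. The paper perturbs towards an arbitrary $g\in\mathcal D_\mu$ rather than your point-mass densities, which amounts to the same thing, and the only real difference is how equality on $\calS_h$ is obtained: you use a two-sided perturbation, whereas the paper uses complementary slackness (that same identity makes the non-negative gap between the two sides of~\cref{eq:first_order_optimality_condition} integrate to zero against $h\,\dd\mu$, so it vanishes wherever $h>0$ and no two-sided differentiability check is needed).
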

        \begin{proof}
            Let us derive the first-order necessary condition for maximising $\Phi_{\alpha, \lambda}$ over the set $\mathcal D_\mu$. Fix $g\in\mathcal D_\mu$. By convexity of $\mathcal D_\mu$,
            \begin{equation*}
                h_t\triangleq(1-t)h+tg\in\mathcal D_\mu,\qquad 0\leq t\leq1.
            \end{equation*}
            Since $\Phi_{\alpha, \lambda}(h_t)\leq\Phi_{\alpha, \lambda}(h)$,
            \begin{equation}\label{eq:foc_general}
                \frac{\Phi_{\alpha, \lambda}(h_t)-\Phi_{\alpha, \lambda}(h)}{t}\leq0 \qquad\Longrightarrow \qquad\left.\frac{\mathrm d}{\mathrm dt}\Phi_{\alpha, \lambda}(h_t)\right|_{t=0^+}\leq0.
            \end{equation}
            Because $\alpha>1$, the map $s\mapsto|s|^\alpha$ is continuously differentiable, including at $s=0$. With the convention $0^{\alpha-1}=0$, direct differentiation and the non-negativity of $h_t$ give
            \begin{align*}
                \left.\frac{\mathrm d}{\mathrm dt}\log\|h_t\|_{L^\alpha(\mu)}\right|_{t=0^+} &= \left.\frac{\mathrm d}{\mathrm dt}\left[\frac{1}{\alpha}\log\left\langle h_t^\alpha,\allones\right\rangle_\mu\right]\right|_{t=0^+}\\
                &= \left.\frac{1}{\alpha}\frac{\left\langle\alpha h_t^{\alpha-1}(g-h), \allones\right\rangle_\mu}{\left\langle h_t^\alpha, \allones\right\rangle_\mu}\right|_{t=0^+}\\
                &= \left.\frac{\left\langle h_t^{\alpha-1},g-h\right\rangle_\mu}{\|h_t\|_{L^\alpha(\mu)}^\alpha}\right|_{t=0^+}\\
                &= \frac{\left\langle h^{\alpha-1},g-h\right\rangle_\mu}{\|h\|_{L^\alpha(\mu)}^\alpha}.
            \end{align*}
            Using the previous computation together with the linearity of $K_\mu^\star$ and the adjoint relation from~\cref{eq:adjoint_relation} leads to
            \begin{align*}
                \left.\frac{\mathrm d}{\mathrm dt}\log\| K_\mu^\star h_t\|_{L^\alpha(\mu K)}\right|_{t=0^+} &= \frac{\left\langle(K_\mu^\star h)^{\alpha-1},K_\mu^\star(g-h)\right\rangle_{\mu K}}{\| K_\mu^\star h\|_{L^\alpha(\mu K)}^\alpha}\\
                &= \frac{\left\langle K\big((K_\mu^\star h)^{\alpha-1}\big),g-h\right\rangle_\mu}{\| K_\mu^\star h\|_{L^\alpha(\mu K)}^\alpha}.
            \end{align*}
            Therefore, if we define
            \begin{equation*}
                G_{h}\triangleq\frac{K\big((K_\mu^\star h)^{\alpha-1}\big)}{\| K_\mu^\star h\|_{L^\alpha(\mu K)}^\alpha}-\lambda\frac{h^{\alpha-1}}{\| h\|_{L^\alpha(\mu)}^\alpha},
            \end{equation*}
            the first-order condition from~\cref{eq:foc_general} becomes
            \begin{equation}\label{eq:foc_precise}
                \big\langle G_{h},g-h\big\rangle_\mu\leq0\qquad\text{for every }g\in\mathcal D_\mu.
            \end{equation}

            Using the adjoint relation from~\cref{eq:adjoint_relation} once more,
            \begin{equation}\label{eq:foc_equality}
                \langle G_{h},h\rangle_\mu = \frac{\left\langle(K_\mu^\star h)^{\alpha-1},K_\mu^\star h\right\rangle_{\mu K}}{\| K_\mu^\star h\|_{L^\alpha(\mu K)}^\alpha}-\lambda\frac{\left\langle h^{\alpha-1},h\right\rangle_\mu}{\| h\|_{L^\alpha(\mu)}^\alpha} = 1-\lambda,
            \end{equation}
            so that~\cref{eq:foc_precise} becomes
            \begin{equation*}
                \langle G_{h},g\rangle_\mu\leq1-\lambda\qquad\text{for every }g\in\mathcal D_\mu,
            \end{equation*}
            and consequently,
            \begin{equation}\label{eq:foc_final}
                G_{h}\leq (1-\lambda)\allones \qquad\mu\text{-almost everywhere}.
            \end{equation}
            Finally, \cref{eq:foc_equality} gives $\left\langle h,(1-\lambda)\allones-G_{h}\right\rangle_\mu=0$, while $(1-\lambda)\allones-G_{h}$ is non-negative on $\supp(\mu)$ by~\cref{eq:foc_final}. It must therefore vanish wherever $h$ is positive, proving equality on $\calS_h$.
        \end{proof}
    \subsection{Proof of~\cref{thm:monotonicity}}\label{appendix:proof_monotonicity}\noindent
        The proof of the three-order comparison in~\cref{lemma:order_insertion_monotonicity} relies on the following consequence of~\cref{lemma:first_order_optimality}, which relates three orders $\alpha$, $\beta$, and $\gamma$.
        \begin{lemma}\label{lemma:first_order_balance_identity}
            Consider a pair $(\mu,K)\in\calP(\X)\times\calP(\Y|\X)$ and let $1\leq\alpha<\beta\leq\infty$. Set $\gamma\triangleq1+\frac{\alpha}{\beta^\prime}$. Then, for any $0\leq \lambda<\eta_\gamma(\mu,K)$, there exists $h\in\mathcal D_\mu$ with support $\calS_h\triangleq\{x\in\supp(\mu): h(x)>0\}$ such that
            \begin{equation}\label{eq:first_order_balance_positivity}
                \left\|K_\mu^\star h\right\|_{L^\gamma(\mu K)}>\left\|h\right\|_{L^\gamma(\mu)}^\lambda,
            \end{equation}
            and
            \begin{equation}\label{eq:first_order_inequality_after_am_gm}
                \frac{\left\langle K_\mu^\star\bigl(h^{\frac{\alpha}{\beta}}\bigr),(K_\mu^\star h)^{\gamma-1}\right\rangle_{\mu K}}{\left\|K_\mu^\star h\right\|_{L^\gamma(\mu K)}^\gamma} \geq \left\| h^{\frac\alpha\beta}\right\|_{L^1(\mu)}^{1-\lambda} \left(\frac{\left\|h\right\|_{L^\alpha(\mu)}^{\alpha}}{\left\|h\right\|_{L^\gamma(\mu)}^{\gamma}}\right)^\lambda.
            \end{equation}
            When $\beta=\infty$, the function $h^\frac{\alpha}{\beta}$ is interpreted as $\mathbbm{1}_{\calS_h}$.
        \end{lemma}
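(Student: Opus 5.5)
The plan is to obtain $h$ as a maximiser of the functional $\Phi_{\gamma,\lambda}$ from \cref{lemma:first_order_optimality}, taken at order $\gamma$. We then test the resulting first-order condition against the function $h^{\alpha/\beta}$ and finish with weighted AM--GM.

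\textbf{Step 1 (order and existence of a maximiser).} First note that $\gamma=1+\frac{\alpha}{\beta^\prime}>1$. Indeed, $\beta>\alpha\geq1$ gives $\beta^\prime<\infty$, and $\gamma=1+\alpha$ when $\beta=\infty$. Hence \cref{lemma:first_order_optimality} is available at order $\gamma$.

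Since $\lambda<\eta_\gamma(\mu,K)$, the supremum in \cref{prop:renyi_sdpi_functional_form} provides some $f\in\mathcal D_\mu\setminus\{\allones\}$ with
\begin{equation*}
\log\|K_\mu^\star f\|_{L^\gamma(\mu K)}>\lambda\log\|f\|_{L^\gamma(\mu)}.
\end{equation*}
In other words, $\Phi_{\gamma,\lambda}(f)>0$. Because $\X$ is finite, $\mathcal D_\mu$ is (identified with) a compact simplex of densities on $\supp(\mu)$. On $\mathcal D_\mu$ both norms are continuous and bounded below by $1$. For $\|f\|_{L^\gamma(\mu)}$ this is because the $L^1$ norm equals $1$; for $\|K_\mu^\star f\|_{L^\gamma(\mu K)}$ it follows from \cref{eq:markov_kernel_mean_preservation}. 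So $\Phi_{\gamma,\lambda}$ is continuous and attains its maximum at some $h\in\mathcal D_\mu$. Then $\Phi_{\gamma,\lambda}(h)\geq\Phi_{\gamma,\lambda}(f)>0$, which is exactly \cref{eq:first_order_balance_positivity}.

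\textbf{Step 2 (test the optimality condition).} \Cref{lemma:first_order_optimality} at order $\gamma$ gives an equality on $\calS_h$:
\begin{equation*}
\frac{K\big((K_\mu^\star h)^{\gamma-1}\big)}{\|K_\mu^\star h\|_{L^\gamma(\mu K)}^\gamma}=\lambda\frac{h^{\gamma-1}}{\|h\|_{L^\gamma(\mu)}^\gamma}+(1-\lambda)\allones .
\end{equation*}
Multiply both sides by $h^{\alpha/\beta}$, which vanishes off $\calS_h$ (for $\beta=\infty$ this is $\mathbbm{1}_{\calS_h}$), and integrate against $\mu$. Only the equality on $\calS_h$ is used, so the inequality part of the lemma is not needed.

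On the left, the adjoint relation \cref{eq:adjoint_relation} turns $\langle K((K_\mu^\star h)^{\gamma-1}),h^{\alpha/\beta}\rangle_\mu$ into $\langle K_\mu^\star(h^{\alpha/\beta}),(K_\mu^\star h)^{\gamma-1}\rangle_{\mu K}$. This is the numerator of the left side of \cref{eq:first_order_inequality_after_am_gm}.

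On the right, the exponent bookkeeping is the key point:
\begin{equation*}
\gamma-1+\frac{\alpha}{\beta}=\frac{\alpha}{\beta^\prime}+\frac{\alpha}{\beta}=\alpha .
\end{equation*}
For $\beta=\infty$ one has directly $h^{\gamma-1}\mathbbm{1}_{\calS_h}=h^\alpha$. The right side therefore equals $\lambda A+(1-\lambda)B$, where
\begin{equation*}
A=\frac{\|h\|_{L^\alpha(\mu)}^\alpha}{\|h\|_{L^\gamma(\mu)}^\gamma},\qquad B=\|h^{\alpha/\beta}\|_{L^1(\mu)} .
\end{equation*}

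\textbf{Step 3 (weighted AM--GM).} Since $A,B\geq0$ and $\lambda\in[0,1]$, weighted AM--GM gives $\lambda A+(1-\lambda)B\geq A^\lambda B^{1-\lambda}$. This yields \cref{eq:first_order_inequality_after_am_gm}.

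The only delicate point is Step 1: one must check that the maximiser exists and that $\gamma>1$, so that the first-order lemma applies. Both follow from finiteness of $\X$ and the lower bound $1$ on the norms. The identity $\gamma-1+\alpha/\beta=\alpha$ is what makes the choice of $\gamma$ work.
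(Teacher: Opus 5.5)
Your proposal is correct and follows essentially the same route as the paper. You maximise $\Phi_{\gamma,\lambda}$ over the compact set $\mathcal D_\mu$, get \cref{eq:first_order_balance_positivity} from $\Phi_{\gamma,\lambda}(h)\geq\Phi_{\gamma,\lambda}(f)>0$, test the equality part of \cref{lemma:first_order_optimality} against $h^{\alpha/\beta}$ using $\gamma-1+\frac{\alpha}{\beta}=\alpha$ and the adjoint relation, and finish with weighted AM--GM. The only details you leave implicit are that $\|f\|_{L^\gamma(\mu)}>1$ for non-constant $f$, which is needed to turn the ratio bound into $\Phi_{\gamma,\lambda}(f)>0$, and that $\lambda<\eta_\gamma(\mu,K)\leq1$ by the DPI, so that \cref{lemma:first_order_optimality} and AM--GM apply with $\lambda\in[0,1]$.
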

        \begin{proof}
            Consider the functional
            \begin{equation*}
                \Phi(f)\triangleq \log\left\|K_\mu^\star f\right\|_{L^\gamma(\mu K)} - \lambda\log\left\|f\right\|_{L^\gamma(\mu)}
            \end{equation*}
            on $\mathcal{D}_\mu$.
            Finiteness of the alphabets makes $K_\mu^\star$ a continuous linear map, and hence both norms appearing in $\Phi$ are continuous maps. Moreover, as $f$ and $K_\mu^\star f$ are probability densities, the norms take values in $[1,\infty)$, so that composing them with the logarithm preserves continuity. Therefore, $\Phi$ is continuous on the compact set $\mathcal{D}_\mu$ and consequently admits a maximiser $h\in\mathcal D_\mu$.

            By the definition of $\eta_\gamma(\mu,K)$, the inequality $\lambda<\eta_\gamma(\mu,K)$ yields some $f\in\mathcal D_\mu\setminus\{\allones\}$ such that
            \begin{equation*}
                \frac{\log\left\|K_\mu^\star f\right\|_{L^\gamma(\mu K)}}{\log\left\|f\right\|_{L^\gamma(\mu)}}>\lambda.
            \end{equation*}
            The non-constant probability density $f$ satisfies $\|f\|_{L^\gamma(\mu)}>\|f\|_{L^1(\mu)}=1$ because $\gamma>1$ and therefore $\Phi(f)>0$. By maximality, $\Phi(h)\geq\Phi(f)>0$, which is equivalent to~\cref{eq:first_order_balance_positivity}.

            Since $h$ maximises $\Phi$ over $\mathcal D_\mu$, applying~\cref{lemma:first_order_optimality} at order $\gamma$ gives
            \begin{equation*}
                \frac{K\bigl((K_\mu^\star h)^{\gamma-1}\bigr)}{\left\|K_\mu^\star h\right\|_{L^\gamma(\mu K)}^\gamma} \leq \lambda \frac{h^{\gamma-1}}{\left\|h\right\|_{L^\gamma(\mu)}^\gamma} + (1-\lambda)\allones,
            \end{equation*}
            $\mu$-almost everywhere, with equality on $\calS_h$.

            The function $h^{\frac{\alpha}{\beta}}$ vanishes $\mu$-almost everywhere outside $\calS_h$: this is immediate when $\beta<\infty$, while for $\beta=\infty$ it follows from the convention $h^{\frac{\alpha}{\beta}}=\mathbbm{1}_{\calS_h}$. Moreover, $\frac{\alpha}{\beta}+\gamma-1=\alpha$, so that $h^\frac\alpha\beta h^{\gamma-1}=h^\alpha$ $\mu$-almost everywhere. We may therefore multiply the equality in the first-order condition by $h^{\frac\alpha\beta}$ and integrate with respect to $\mu$. Doing so and using~\cref{eq:adjoint_relation} leads to
            \begin{equation}\label{eq:first_order_equality}
                \frac{\left\langle K_\mu^\star\bigl(h^{\frac{\alpha}{\beta}}\bigr),(K_\mu^\star h)^{\gamma-1}\right\rangle_{\mu K}}{\left\|K_\mu^\star h\right\|_{L^\gamma(\mu K)}^\gamma}=(1-\lambda)\left\| h^\frac{\alpha}{\beta}\right\|_{L^1(\mu)}+\lambda\frac{\left\|h\right\|_{L^\alpha(\mu)}^\alpha}{\left\|h\right\|_{L^\gamma(\mu)}^\gamma}.
            \end{equation}
            Applying the weighted AM--GM inequality to the right-hand side of~\cref{eq:first_order_equality} gives~\cref{eq:first_order_inequality_after_am_gm}.
        \end{proof}
        The choice of $\gamma$ is explained by the identities
        \begin{equation*}
            \frac{\alpha}{\beta}+\gamma-1=\alpha,\qquad\frac{\alpha}{\beta}\,\beta=\alpha,\qquad(\gamma-1)\beta^\prime=\alpha,
        \end{equation*}
        which bring all quantities back to order $\alpha$. The first turns the product $h^{\alpha/\beta}h^{\gamma-1}$ into $h^\alpha$ in the preceding proof, while the other two turn the conjugate $\beta$- and $\beta^\prime$-norms arising from H\"older's inequality in the subsequent proof into $\alpha$-norms.
        \begin{lemma}\label{lemma:order_insertion_monotonicity}
            Consider a pair $(\mu,K)\in\calP(\X)\times\calP(\Y|\X)$ and let $1<\beta\leq\infty$. Then,
            \begin{equation*}
                \eta_{1+\frac1{\beta^\prime}}(\mu,K) \leq \eta_\beta(\mu,K).
            \end{equation*}
            Moreover, for every $1<\alpha<\beta$, setting $\gamma \triangleq 1+\frac{\alpha}{\beta^\prime}$,
            \begin{equation*}
                \eta_{\gamma}(\mu,K) \leq \max\big\{\eta_\alpha(\mu,K), \eta_\beta(\mu,K)\big\}.
            \end{equation*}
        \end{lemma}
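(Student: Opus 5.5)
The plan is to argue by contradiction using \cref{lemma:first_order_balance_identity}: fix $1\leq\alpha<\beta\leq\infty$ and $\gamma=1+\frac{\alpha}{\beta^\prime}$, and take any $\lambda$ with $0\leq\lambda<\eta_\gamma(\mu,K)$. Let $h\in\mathcal D_\mu$ be the maximiser provided there. \Cref{eq:first_order_inequality_after_am_gm} gives a lower bound on
\begin{equation*}
    Q\triangleq\frac{\left\langle K_\mu^\star\bigl(h^{\frac{\alpha}{\beta}}\bigr),(K_\mu^\star h)^{\gamma-1}\right\rangle_{\mu K}}{\left\|K_\mu^\star h\right\|_{L^\gamma(\mu K)}^\gamma}.
\end{equation*}
I will produce a matching upper bound via H\"older, and compare the two. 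The goal is to show $\lambda\leq m$, where $m\triangleq\max\{\eta_\alpha(\mu,K),\eta_\beta(\mu,K)\}$ in the second claim. For the first claim (the case $\alpha=1$) the target is $m\triangleq\eta_\beta(\mu,K)$. Letting $\lambda\uparrow\eta_\gamma(\mu,K)$ then concludes.

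For the upper bound, apply H\"older with exponents $\beta,\beta^\prime$ to the numerator of $Q$:
\begin{equation*}
    \left\langle K_\mu^\star\bigl(h^{\frac{\alpha}{\beta}}\bigr),(K_\mu^\star h)^{\gamma-1}\right\rangle_{\mu K}\leq\bigl\|K_\mu^\star(h^{\frac\alpha\beta})\bigr\|_{L^\beta(\mu K)}\,\bigl\|K_\mu^\star h\bigr\|_{L^\alpha(\mu K)}^{\alpha/\beta^\prime},
\end{equation*}
using $(\gamma-1)\beta^\prime=\alpha$. I bound the first factor by \cref{eq:renyi_sdpi_functional_inequality_non_negative_functions} at order $\beta$, applied to $f=h^{\alpha/\beta}$ with $\lambda=\eta_\beta$. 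Since $\|h^{\alpha/\beta}\|_{L^\beta(\mu)}=\|h\|_{L^\alpha(\mu)}^{\alpha/\beta}$, this gives $\|h\|_{L^\alpha(\mu)}^{\alpha\eta_\beta/\beta}\|h^{\alpha/\beta}\|_{L^1(\mu)}^{1-\eta_\beta}$. For $\beta=\infty$ I will use $h^{\alpha/\beta}=\mathbbm 1_{\calS_h}$ and the $L^\infty$ version directly. I bound the second factor by \cref{eq:renyi_sdpi_functional_inequality} at order $\alpha$, which gives $\|h\|_{L^\alpha(\mu)}^{\alpha\eta_\alpha/\beta^\prime}$. When $\alpha=1$ this factor is just $\|K_\mu^\star h\|_{L^1}^{1/\beta^\prime}=1$ by \cref{eq:markov_kernel_mean_preservation}, so the $\eta_\alpha$ term disappears; this is what yields the first claim. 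For the denominator of $Q$ I use \cref{eq:first_order_balance_positivity}, namely $\|K_\mu^\star h\|_{L^\gamma}^\gamma>\|h\|_{L^\gamma}^{\lambda\gamma}$. Combined with the lower bound \cref{eq:first_order_inequality_after_am_gm}, the $\|h\|_{L^\gamma}$ factors cancel exactly. Writing $A\triangleq\|h\|_{L^\alpha(\mu)}\geq1$ and $B\triangleq\|h^{\alpha/\beta}\|_{L^1(\mu)}$, I obtain the strict inequality
\begin{equation*}
    B^{1-\lambda}A^{\alpha\lambda}<A^{\alpha(\eta_\beta/\beta+\eta_\alpha/\beta^\prime)}B^{1-\eta_\beta}.
\end{equation*}

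To conclude, suppose $\lambda>m$. Then $\eta_\beta-\lambda<0$. Jensen/Lyapunov gives $B\leq A^{\alpha/\beta}$, hence $B^{\eta_\beta-\lambda}\geq A^{\alpha(\eta_\beta-\lambda)/\beta}$. Substituting, the quantity $B^{\eta_\beta-\lambda}A^{\alpha(\lambda-\eta_\beta/\beta-\eta_\alpha/\beta^\prime)}$ is at least $A^{\alpha(\lambda-\eta_\alpha)/\beta^\prime}$. Since $A\geq1$ and $\lambda>\eta_\alpha$, this is at least $1$, which contradicts the strict inequality above. In the case $\alpha=1$ the $\eta_\alpha$ term is absent and the same computation uses only $\lambda>\eta_\beta$. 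Hence $\lambda\leq m$ for every $\lambda<\eta_\gamma(\mu,K)$, proving both claims.

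The main obstacle I expect is bookkeeping rather than conceptual. The exponents must collapse exactly; the identities $\frac\alpha\beta+\gamma-1=\alpha$, $\frac{\alpha}{\beta}\beta=\alpha$ and $(\gamma-1)\beta^\prime=\alpha$ make this happen, but it needs checking. The edge cases also need care: $\beta=\infty$ with the indicator convention, and $B$ possibly being small, which is exactly why the Lyapunov bound on $B$ is used in the right direction. Finally, strictness must be preserved so that the contradiction is genuine even when $A=1$.
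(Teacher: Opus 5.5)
Your proof is correct and follows essentially the same route as the paper: a contradiction via \cref{lemma:first_order_balance_identity}, H\"older with exponents $\beta,\beta^\prime$, and the functional inequalities at orders $\beta$ and $\alpha$. The only difference is bookkeeping. The paper applies those inequalities directly with exponent $\lambda$, which is legitimate because $\lambda$ exceeds $\eta_\alpha(\mu,K)$ and $\eta_\beta(\mu,K)$, via \cref{eq:renyi_sdpi_functional_inequality_non_negative_functions} and $\|h\|_{L^\alpha(\mu)}\geq1$, so the exponents cancel identically; you instead keep the exact constants and recover the same slack at the end through $\|h^{\alpha/\beta}\|_{L^1(\mu)}\leq\|h\|_{L^\alpha(\mu)}^{\alpha/\beta}$, which is precisely the inequality underlying that remark.
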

        \begin{proof}
            We prove both assertions simultaneously, taking $\alpha=1$ for the first and $1<\alpha<\beta$ for the second.

            Suppose, towards a contradiction, that $\eta_\gamma(\mu,K)>\eta_\beta(\mu,K)$ and, when $\alpha>1$, also $\eta_\gamma(\mu,K)>\eta_\alpha(\mu,K)$. The DPI gives $\eta_\gamma(\mu,K)\leq1$, so the preceding strict inequalities allow us to choose $\lambda\in(0,1)$ such that $\eta_\beta(\mu,K) < \lambda< \eta_\gamma(\mu,K)$, and, when $\alpha>1$, also $\eta_\alpha(\mu,K)<\lambda$.

            By~\cref{lemma:first_order_balance_identity}, there exists $h\in\mathcal{D}_\mu$ such that
            \begin{equation}\label{eq:first_order_balance_positivity2}
                \left\|K_\mu^\star h\right\|_{L^\gamma(\mu K)}>\left\|h\right\|_{L^\gamma(\mu)}^\lambda,
            \end{equation}
            and
            \begin{equation}\label{eq:first_order_balance_identity2}
                \frac{\left\langle K_\mu^\star\bigl(h^{\frac{\alpha}{\beta}}\bigr),(K_\mu^\star h)^{\gamma-1}\right\rangle_{\mu K}}{\left\|K_\mu^\star h\right\|_{L^\gamma(\mu K)}^\gamma} \geq \left\| h^{\frac\alpha\beta}\right\|_{L^1(\mu)}^{1-\lambda} \frac{\left\|h\right\|_{L^\alpha(\mu)}^{\alpha\lambda}}{\left\|h\right\|_{L^\gamma(\mu)}^{\gamma\lambda}}.
            \end{equation}
            For $\alpha>1$, the numerator on the left-hand side of~\cref{eq:first_order_balance_identity2} is bounded above by
            \begin{align}
                \left\langle K_\mu^\star\bigl(h^{\frac\alpha\beta}\bigr), (K_\mu^\star h)^{\gamma-1}\right\rangle_{\mu K} &\leq \left\|K_\mu^\star\bigl(h^{\frac\alpha\beta}\bigr)\right\|_{L^\beta(\mu K)} \left\|(K_\mu^\star h)^{\gamma-1}\right\|_{L^{\beta^\prime}(\mu K)}&\left(\substack{\text{H\"older's}\\\text{inequality}}\right)&\nonumber\\
                &= \left\|K_\mu^\star\bigl(h^{\frac\alpha\beta}\bigr)\right\|_{L^\beta(\mu K)}\left\|K_\mu^\star h\right\|_{L^\alpha(\mu K)}^{\gamma-1}&\left(\substack{\text{Since}\\(\gamma-1)\beta^\prime=\alpha}\right)&\nonumber\\
                &\leq\left\|h^{\frac\alpha\beta}\right\|_{L^\beta(\mu)}^{\lambda}\left\| h^{\frac\alpha\beta}\right\|_{L^1(\mu)}^{1-\lambda}\left\|K_\mu^\star h\right\|_{L^\alpha(\mu K)}^{\gamma-1}&\left(\substack{\text{By~\cref{eq:renyi_sdpi_functional_inequality_non_negative_functions} and}\\\eta_\beta(\mu, K)<\lambda}\right)&\nonumber\\
                &= \left\|h\right\|_{L^\alpha(\mu)}^{\frac{\lambda\alpha}{\beta}}\left\| h^{\frac\alpha\beta}\right\|_{L^1(\mu)}^{1-\lambda}\left\|K_\mu^\star h\right\|_{L^\alpha(\mu K)}^{\gamma-1}\nonumber\\
                &\leq \left\|h\right\|_{L^\alpha(\mu)}^{\frac{\lambda\alpha}{\beta}}\left\| h^{\frac\alpha\beta}\right\|_{L^1(\mu)}^{1-\lambda}\left\| h\right\|_{L^\alpha(\mu)}^{(\gamma-1)\lambda}&\left(\substack{\text{By~\cref{eq:renyi_sdpi_functional_inequality_non_negative_functions} and}\\\eta_\alpha(\mu, K)<\lambda}\right)&\label{eq:order_insertion_holder_bound_temp}\\
                &= \left\| h^{\frac\alpha\beta}\right\|_{L^1(\mu)}^{1-\lambda}\left\| h\right\|_{L^\alpha(\mu)}^{\lambda\alpha}&\left(\substack{\text{Since}\\\frac\alpha\beta+\gamma-1=\alpha}\right)&\label{eq:holder_bound_monotonicity}
            \end{align}
            For $\alpha=1$, the same holds but~\cref{eq:order_insertion_holder_bound_temp} becomes an equality since $\left\|K_\mu^\star h\right\|_{L^1(\mu K)}=\left\|h\right\|_{L^1(\mu)}=1=\left\|h\right\|_{L^1(\mu)}^\lambda$.

            Substituting~\cref{eq:holder_bound_monotonicity} into~\cref{eq:first_order_balance_identity2} and rearranging yields
            \begin{equation*}
                \left(\frac{\left\|K_\mu^\star h\right\|_{L^\gamma(\mu K)}}{\left\|h\right\|_{L^\gamma(\mu)}^\lambda}\right)^\gamma \leq 1,
            \end{equation*}
            which contradicts~\cref{eq:first_order_balance_positivity2}.
        \end{proof}
        \begin{proof}[Proof of~\cref{thm:monotonicity}]
            If $\supp(\mu)$ is a singleton, then $\eta_\alpha(\mu,K)=0$ for every $\alpha\in[1,\infty]$ by the convention following~\cref{eq:dd_sdpi_constant}, and the result is immediate. Hence, assume that $|\supp(\mu)|\geq2$.

            The case $\beta=1$ is also immediate, so fix $\beta\in(1,\infty]$. Rather than comparing each lower order with $\beta$ separately, consider the set of orders at which the desired comparison fails:
            \begin{equation*}
                \mathcal{O}_\beta \triangleq \big\{\alpha\in(1,\beta): \eta_\alpha(\mu,K)>\eta_\beta(\mu,K)\big\}.
            \end{equation*}
            The set $\mathcal{O}_\beta$ is open. Indeed, for each fixed $f\in\mathcal D_\mu\setminus\{\allones\}$, the map
            \begin{equation*}
                \Psi_f(\alpha) \triangleq \frac{\log\left\|K_\mu^\star f\right\|_{L^\alpha(\mu K)}}{\log\left\|f\right\|_{L^\alpha(\mu)}}
            \end{equation*}
            is continuous on $(1,\infty)$. The representation of the SDPI constant as a supremum over densities gives
            \begin{equation*}
                \mathcal{O}_\beta = \bigcup_{f\in\mathcal D_\mu\setminus\{\allones\}} \big\{\alpha\in(1,\beta): \Psi_f(\alpha)>\eta_\beta(\mu, K)\big\},
            \end{equation*}
            and since each set in this union is open by continuity of the corresponding $\Psi_f$, $\mathcal{O}_\beta$ is open.

            Suppose, towards a contradiction, that $\mathcal{O}_\beta$ is non-empty. Since $\mathcal{O}_\beta$ is open, it is a disjoint union of open intervals, none of which can be enlarged while remaining in $\mathcal{O}_\beta$. Fix one such interval and denote it by $(a,b)$. Since $(a,b)\subseteq(1,\beta)$, its endpoints satisfy $1\leq a<b\leq\beta$. If $a>1$, then $a\notin\mathcal{O}_\beta$ for otherwise openness would allow the interval to be extended to the left, contradicting its construction. Consequently,
            \begin{equation}\label{eq:monotonicity_proof_ub_a}
                \eta_a(\mu, K)\leq\eta_\beta(\mu, K) \qquad\text{whenever }a>1.
            \end{equation}
            The same reasoning applies at $b$, yielding
            \begin{equation}\label{eq:monotonicity_proof_ub_b}
                \eta_b(\mu, K)\leq\eta_\beta(\mu, K)
            \end{equation}
            when $b<\beta$, while equality holds when $b=\beta$. Combining~\cref{eq:monotonicity_proof_ub_a,eq:monotonicity_proof_ub_b} with~\cref{lemma:order_insertion_monotonicity}, we obtain
            \begin{equation}\label{eq:order_insertion_bound}
                \eta_{1+\frac{a}{b^\prime}}(\mu,K) \leq
                \begin{cases}
                    \eta_b(\mu,K), &\text{if } a=1\\[1mm]
                    \max\bigl\{\eta_a(\mu, K),\eta_b(\mu, K)\bigr\}, & \text{if }a>1
                \end{cases}
                \leq \eta_\beta(\mu,K).
            \end{equation}
            However, the order appearing on the left-hand side lies strictly inside $(a,b)$. Indeed, $1+\frac{a}{b^\prime} = \frac1b b + \frac1{b^\prime}a\in(a,b)$ when $b<\infty$ and $1 + \frac{a}{b^\prime} = a + 1\in(a,b)$ when $b=\infty$. Since $(a,b)\subseteq\mathcal{O}_\beta$, it follows that
            \begin{equation*}
                \eta_{1+\frac{a}{b^\prime}}(\mu, K) > \eta_\beta(\mu, K),
            \end{equation*}
            contradicting~\cref{eq:order_insertion_bound}. Therefore, $\mathcal{O}_\beta$ must be empty, and hence
            \begin{equation}\label{eq:monotonicity_without_1}
                \eta_\alpha(\mu, K) \leq \eta_\beta(\mu, K), \qquad 1<\alpha\leq\beta.
            \end{equation}
            It remains to include the endpoint $\alpha=1$. Fix $\nu\ll\mu$ with $\nu\neq\mu$. For every $\alpha\in(1,\beta)$, the definition of the SDPI constant and~\cref{eq:monotonicity_without_1} give
            \begin{equation*}
                D_\alpha(\nu K\|\mu K) \leq \eta_\alpha(\mu,K)D_\alpha(\nu\|\mu) \leq \eta_\beta(\mu,K)D_\alpha(\nu\|\mu).
            \end{equation*}
            Letting $\alpha\downarrow1$ and using the convergence of R\'enyi Divergence to KL Divergence gives
            \begin{equation*}
                D_{\sf KL}(\nu K\|\mu K) \leq \eta_\beta(\mu,K)D_{\sf KL}(\nu\|\mu).
            \end{equation*}
            Dividing by $D_{\sf KL}(\nu\|\mu)>0$ and taking the supremum over $\nu\ll\mu$, $\nu\neq\mu$, yields
            \begin{equation*}
                \eta_1(\mu,K) = \eta_{\sf KL}(\mu,K) \leq \eta_\beta(\mu,K),
            \end{equation*}
            which completes the proof.
        \end{proof}
    \subsection{Proof of~\cref{thm:convexity}}\label{appendix:proof_convexity}\noindent
        To derive the desired convexity result, the first step is to refine~\cref{lemma:order_insertion_monotonicity}, which was used to establish~\cref{thm:monotonicity}.
        \begin{lemma}\label{lemma:order_insertion_convexity}
            Consider a pair $(\mu,K)\in\calP(\X)\times\calP(\Y|\X)$ and let $1<\alpha< \beta\leq\infty$. Then, setting $\gamma \triangleq 1+\frac{\alpha}{\beta^\prime}$,
            \begin{equation*}
                \eta_{\gamma}(\mu,K) \leq\frac{(\alpha-1)\eta_\alpha(\mu,K) + \eta_\beta(\mu,K)}{\alpha}.
            \end{equation*}
        \end{lemma}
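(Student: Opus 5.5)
The plan is to rerun the argument of \cref{lemma:order_insertion_monotonicity}, but with two separate exponents in the H\"older chain instead of a single common $\lambda$. Write $\eta_\alpha\triangleq\eta_\alpha(\mu,K)$ and $\eta_\beta\triangleq\eta_\beta(\mu,K)$, and set
\begin{equation*}
    \lambda^\star\triangleq\frac{(\alpha-1)\eta_\alpha+\eta_\beta}{\alpha}.
\end{equation*}
Suppose, towards a contradiction, that $\eta_\gamma(\mu,K)>\lambda^\star$. Since $\eta_\gamma(\mu,K)\leq1$, we may pick $\lambda$ with $\lambda^\star<\lambda<\eta_\gamma(\mu,K)$. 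We may then pick $\lambda_\alpha\geq\eta_\alpha$ and $\lambda_\beta\geq\eta_\beta$, both in $[0,1]$, with
\begin{equation*}
    (\alpha-1)\lambda_\alpha+\lambda_\beta=\alpha\lambda.
\end{equation*}
This is possible because the left side equals $\alpha\lambda^\star<\alpha\lambda\leq\alpha$ at $(\eta_\alpha,\eta_\beta)$, and it can be increased continuously up to $\alpha$ at $(1,1)$.

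Next, apply \cref{lemma:first_order_balance_identity} with this $\lambda$ to obtain $h\in\mathcal D_\mu$ satisfying the strict positivity bound \cref{eq:first_order_balance_positivity} and the AM--GM lower bound \cref{eq:first_order_inequality_after_am_gm}. Upper bound the same numerator $\langle K_\mu^\star(h^{\alpha/\beta}),(K_\mu^\star h)^{\gamma-1}\rangle_{\mu K}$ exactly as in \cref{eq:holder_bound_monotonicity}, but use \cref{eq:renyi_sdpi_functional_inequality_non_negative_functions} at order $\beta$ with exponent $\lambda_\beta$ and at order $\alpha$ with exponent $\lambda_\alpha$. This is allowed because the remark only requires the exponent to lie in $[\eta,1]$. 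The chain gives
\begin{equation*}
    \bigl\|h\bigr\|_{L^\alpha(\mu)}^{\frac{\alpha\lambda_\beta}{\beta}}\,
    \bigl\|h^{\alpha/\beta}\bigr\|_{L^1(\mu)}^{1-\lambda_\beta}\,
    \bigl\|h\bigr\|_{L^\alpha(\mu)}^{(\gamma-1)\lambda_\alpha}.
\end{equation*}
Using $(\gamma-1)=\alpha/\beta'$, the total exponent of $\|h\|_{L^\alpha(\mu)}$ is
\begin{equation*}
    \alpha\Bigl(\frac{\lambda_\beta}{\beta}+\frac{\lambda_\alpha}{\beta'}\Bigr).
\end{equation*}

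Here lies the main obstacle: this exponent must match $\alpha\lambda$, and the power of $\|h^{\alpha/\beta}\|_{L^1(\mu)}$ must match $1-\lambda$. With the weights above this requires $\lambda=\lambda_\beta/\beta+\lambda_\alpha/\beta'$, whereas the constraint I imposed was $(\alpha-1)\lambda_\alpha+\lambda_\beta=\alpha\lambda$. These agree only in special cases, so the weight bookkeeping needs care. My first attempt is to recheck which weights the H\"older step actually produces, since the target inequality has weights $(\alpha-1)/\alpha$ and $1/\alpha$, not $1/\beta'$ and $1/\beta$.

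The fix I would try is to compare $\|h^{\alpha/\beta}\|_{L^1(\mu)}$ and $\|h\|_{L^\alpha(\mu)}$ by H\"older/Jensen, namely $\|h^{\alpha/\beta}\|_{L^1(\mu)}\leq\|h\|_{L^\alpha(\mu)}^{\alpha/\beta}$. This trades excess powers of the $L^1$ term for powers of $\|h\|_{L^\alpha(\mu)}$, which is always at least $1$. Choosing $\lambda_\alpha,\lambda_\beta$ to make the net exponent at most the one on the lower-bound side should then give the ratio bound
\begin{equation*}
    \left(\frac{\|K_\mu^\star h\|_{L^\gamma(\mu K)}}{\|h\|_{L^\gamma(\mu)}^{\lambda}}\right)^{\gamma}\leq1,
\end{equation*}
which contradicts \cref{eq:first_order_balance_positivity}.

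If the exponents still refuse to align, a fallback is to invoke the lemma in the stated form with only the mixed estimate and then use \cref{thm:monotonicity}, $\eta_\alpha\leq\eta_\beta$, to absorb the discrepancy. Whichever convex combination comes out of the H\"older weights, replacing part of $\eta_\beta$ by the smaller $\eta_\alpha$ only moves toward the claimed bound, provided the weight on $\eta_\beta$ in the derived estimate is at most $1/\alpha$. Verifying this inequality between the weights is the step I would check first.

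Finally, the case $\beta=\infty$ is handled by the same argument, using $h^{\alpha/\beta}=\mathbbm{1}_{\calS_h}$ and $\gamma=\alpha+1$.
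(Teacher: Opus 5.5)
\textbf{There is a genuine gap.} Your setup is sound. In fact the constraint $(\alpha-1)\lambda_\alpha+\lambda_\beta=\alpha\lambda$ is exactly the right one. But the proposal stops at the decisive step, and the fix you suggest points the wrong way. Combining your H\"older upper bound with \cref{eq:first_order_inequality_after_am_gm} gives
\begin{equation*}
\left(\frac{\|K_\mu^\star h\|_{L^\gamma(\mu K)}}{\|h\|_{L^\gamma(\mu)}^{\lambda}}\right)^{\gamma}\leq\bigl\|h^{\alpha/\beta}\bigr\|_{L^1(\mu)}^{\lambda-\lambda_\beta}\,\|h\|_{L^\alpha(\mu)}^{\alpha\left(\frac{\lambda_\beta}{\beta}+\frac{\lambda_\alpha}{\beta'}-\lambda\right)}.
\end{equation*}
By \cref{thm:monotonicity}, $\lambda_\beta\geq\eta_\beta(\mu,K)\geq\eta_\gamma(\mu,K)>\lambda$, so the exponent $\lambda-\lambda_\beta$ is strictly negative. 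Since $\|h^{\alpha/\beta}\|_{L^1(\mu)}\leq1$, that factor is at least one, and you need a \emph{lower} bound on $\|h^{\alpha/\beta}\|_{L^1(\mu)}$. Your Jensen estimate $\|h^{\alpha/\beta}\|_{L^1(\mu)}\leq\|h\|_{L^\alpha(\mu)}^{\alpha/\beta}$ is an upper bound and cannot help.

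The missing ingredient is the interpolation inequality that uses $\mean{\mu}{h}=1$. Apply H\"older with conjugate exponents $\frac{\gamma-1}{\alpha-1}$ and $\frac{\gamma-1}{\gamma-\alpha}$ to $h=(h^{\alpha/\beta})^{\frac{\alpha-1}{\gamma-1}}(h^\alpha)^{\frac{\gamma-\alpha}{\gamma-1}}$. This yields
\begin{equation*}
1=\langle h,\allones\rangle_\mu^{\gamma-1}\leq\bigl\|h^{\alpha/\beta}\bigr\|_{L^1(\mu)}^{\alpha-1}\|h\|_{L^\alpha(\mu)}^{\alpha\left(1-\frac\alpha\beta\right)},
\end{equation*}
with $\mu(\calS_h)^{\alpha-1}\|h\|_{L^\alpha(\mu)}^\alpha$ on the right when $\beta=\infty$. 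Under your constraint, the right-hand side of the first display is exactly this product raised to the power $-(\lambda_\beta-\lambda_\alpha)/\alpha$. The condition $\lambda_\beta>\lambda_\alpha$ is equivalent to $\lambda_\beta>\lambda$, which you already have. Hence the right-hand side is at most one, contradicting \cref{eq:first_order_balance_positivity}. This is the paper's argument, run with the simpler choice $\lambda=\frac{(\alpha-1)\eta_\alpha(\mu,K)+\eta_\beta(\mu,K)}{\alpha}$, $\lambda_\alpha=\eta_\alpha(\mu,K)$, $\lambda_\beta=\eta_\beta(\mu,K)$. Since \cref{lemma:first_order_balance_identity} only needs $\lambda<\eta_\gamma(\mu,K)$, your intermediate $\lambda$ is unnecessary.

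Your fallback cannot replace this step. Using only $\|h^{\alpha/\beta}\|_{L^1(\mu)}\leq1\leq\|h\|_{L^\alpha(\mu)}$, you would need $\lambda\geq\lambda_\beta\geq\eta_\beta(\mu,K)$, which gives nothing beyond \cref{thm:monotonicity}. Moreover, the convex combination suggested by the raw H\"older weights, $\eta_\gamma(\mu,K)\leq\frac1\beta\eta_\beta(\mu,K)+\frac1{\beta'}\eta_\alpha(\mu,K)$, puts weight $1/\beta<1/\alpha$ on $\eta_\beta$ and is false. For $\mathrm{BEC}(1/2)$ with uniform $\mu$, one has $\eta_\theta(\mu,K)=\frac{\log_2(1/2+2^{\theta-2})}{\theta-1}$. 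Taking $\alpha=2$, $\beta=4$ and $\gamma=5/2$ gives $\eta_{5/2}\approx0.6245>\frac14\eta_4+\frac34\eta_2\approx0.6195$. The interpolation inequality above is therefore the heart of the lemma, and as written the proposal does not prove it.
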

        \begin{proof}
            Set $\lambda\triangleq\frac{(\alpha-1)\eta_\alpha(\mu,K)+\eta_\beta(\mu,K)}{\alpha}$, so that $\eta_\alpha(\mu,K)\leq\eta_\beta(\mu,K)$ by~\cref{thm:monotonicity}, and hence $\lambda\in[0,1]$.

            Suppose, towards a contradiction, that $\eta_\gamma(\mu,K)>\lambda$. By~\cref{lemma:first_order_balance_identity}, there exists $h\in\mathcal{D}_\mu$ such that
            \begin{equation}\label{eq:first_order_balance_positivity3}
                \left\|K_\mu^\star h\right\|_{L^\gamma(\mu K)}>\left\|h\right\|_{L^\gamma(\mu)}^\lambda,
            \end{equation}
            and
            \begin{equation}\label{eq:first_order_balance_identity3}
                \frac{\left\langle K_\mu^\star\bigl(h^{\frac{\alpha}{\beta}}\bigr),(K_\mu^\star h)^{\gamma-1}\right\rangle_{\mu K}}{\left\|K_\mu^\star h\right\|_{L^\gamma(\mu K)}^\gamma} \geq \left\| h^{\frac\alpha\beta}\right\|_{L^1(\mu)}^{1-\lambda} \frac{\left\|h\right\|_{L^\alpha(\mu)}^{\alpha\lambda}}{\left\|h\right\|_{L^\gamma(\mu)}^{\gamma\lambda}}.
            \end{equation}

            The numerator on the left-hand side of~\cref{eq:first_order_balance_identity3} is bounded above by
            \begin{align}
                \left\langle K_\mu^\star\bigl(h^{\frac\alpha\beta}\bigr), (K_\mu^\star h)^{\gamma-1}\right\rangle_{\mu K} &\leq \left\|K_\mu^\star\bigl(h^{\frac\alpha\beta}\bigr)\right\|_{L^\beta(\mu K)} \left\|(K_\mu^\star h)^{\gamma-1}\right\|_{L^{\beta^\prime}(\mu K)}&\left(\substack{\text{H\"older's}\\\text{inequality}}\right)&\nonumber\\
                &= \left\|K_\mu^\star\bigl(h^{\frac\alpha\beta}\bigr)\right\|_{L^\beta(\mu K)}\left\|K_\mu^\star h\right\|_{L^\alpha(\mu K)}^{\gamma-1}&\left(\substack{\text{Since}\\(\gamma-1)\beta^\prime=\alpha}\right)&\nonumber\\
                &\leq\left\|h^{\frac\alpha\beta}\right\|_{L^\beta(\mu)}^{\eta_\beta(\mu, K)}\left\| h^{\frac\alpha\beta}\right\|_{L^1(\mu)}^{1-\eta_\beta(\mu,K)}\left\| h\right\|_{L^\alpha(\mu)}^{(\gamma-1)\eta_\alpha(\mu,K)}&\left(\substack{\text{By~\cref{eq:renyi_sdpi_functional_inequality_non_negative_functions}}}\right)&\nonumber\\
                &\leq \left\| h^{\frac\alpha\beta}\right\|_{L^1(\mu)}^{1-\eta_\beta(\mu,K)} \left\|h\right\|_{L^\alpha(\mu)}^{\alpha\left(\frac{\eta_\beta(\mu,K)}{\beta} + \frac{\eta_\alpha(\mu,K)}{\beta^\prime}\right)}.&\left(\substack{\text{Since}\\(\gamma-1)\beta^\prime=\alpha}\right)&\label{eq:order_insertion_convexity_holder}
            \end{align}
            Substituting~\cref{eq:order_insertion_convexity_holder} into~\cref{eq:first_order_balance_identity3} and using the definition of $\lambda$ to simplify the exponents gives
            \begin{equation}\label{eq:order_insertion_convexity_final_inequality}
                \left(\frac{\left\|K_\mu^\star h\right\|_{L^\gamma(\mu K)}}{\left\|h\right\|_{L^\gamma(\mu)}^\lambda}\right)^\gamma \leq \left(\left\| h^{\frac\alpha\beta}\right\|_{L^1(\mu)}^{\alpha-1}\left\|h\right\|_{L^\alpha(\mu)}^{\alpha(1-\frac\alpha\beta)}\right)^{-\frac{\eta_\beta(\mu,K) - \eta_\alpha(\mu,K)}{\alpha}}.
            \end{equation}
            Finally, using H\"older's inequality with conjugate exponents $\frac{\gamma-1}{\alpha-1}$ and $\frac{\gamma-1}{\gamma-\alpha}$, we obtain
            \begin{align*}
                1 &= \left\langle h,\allones\right\rangle_\mu^{\gamma-1} \\
                &= \left\langle\left(h^{\frac\alpha\beta}\right)^{\frac{\alpha-1}{\gamma-1}},\left(h^\alpha\right)^{\frac{\gamma-\alpha}{\gamma-1}}\right\rangle_\mu^{\gamma-1}\\
                &\leq \left\langle h^{\frac\alpha\beta}, \allones\right\rangle_\mu^{\alpha-1} \left\langle h^\alpha, \allones\right\rangle_\mu^{\gamma-\alpha}\\
                &= \left\| h^{\frac\alpha\beta}\right\|_{L^1(\mu)}^{\alpha-1}\left\|h\right\|_{L^\alpha(\mu)}^{\alpha(1-\frac\alpha\beta)}.
            \end{align*}
            Since $\eta_\alpha(\mu,K)\leq\eta_\beta(\mu,K)$, the right-hand side of~\cref{eq:order_insertion_convexity_final_inequality} is therefore at most one, contradicting~\cref{eq:first_order_balance_positivity3}.

            When $\beta=\infty$, we have $\gamma=\alpha+1$. The same steps apply until~\cref{eq:order_insertion_convexity_final_inequality}, which with our convention $h^{\frac\alpha\infty}=\mathbbm{1}_{\calS_h}$ becomes
            \begin{equation*}
                \left(\frac{\left\|K_\mu^\star h\right\|_{L^\gamma(\mu K)}}{\left\|h\right\|_{L^\gamma(\mu)}^\lambda}\right)^\gamma \leq \left(\mu(\calS_h)^{\alpha-1}\left\|h\right\|_{L^\alpha(\mu)}^\alpha\right)^{-\frac{\eta_\infty(\mu, K)-\eta_\alpha(\mu, K)}{\alpha}} \leq 1,
            \end{equation*}
            where the last inequality follows from $\eta_\alpha(\mu, K)\leq\eta_\infty(\mu, K)$ and $1 = \left\langle h,\mathbbm{1}_{\calS_h}\right\rangle_\mu^\alpha \leq \mu(\calS_h)^{\alpha-1} \left\|h\right\|_{L^\alpha(\mu)}^\alpha$. This again contradicts~\cref{eq:first_order_balance_positivity3}, completing the proof.
        \end{proof}
        We now show that this refined upper bound suffices to establish~\cref{thm:convexity}.
        \begin{proof}[Proof of~\cref{thm:convexity}]
            Set
            \begin{equation*}
                F(\theta)\triangleq(\theta-1)\eta_\theta(\mu,K),\qquad \theta\in(1,\infty).
            \end{equation*}
            Fix $1<\alpha<\beta<\infty$, and let
            \begin{equation*}
                \ell(\theta)\triangleq\frac{\beta-\theta}{\beta-\alpha}F(\alpha)+\frac{\theta-\alpha}{\beta-\alpha}F(\beta)
            \end{equation*}
            denote the chord joining $F(\alpha)$ and $F(\beta)$. Consider
            \begin{equation*}
                \mathcal O_{\alpha,\beta}\triangleq\bigl\{\theta\in(\alpha,\beta):F(\theta)>\ell(\theta)\bigr\}.
            \end{equation*}
            By~\cref{prop:renyi_sdpi_functional_form},
            \begin{equation*}
                \mathcal O_{\alpha,\beta}=\bigcup_{f\in\mathcal D_\mu\setminus\{\allones\}}\left\{\theta\in(\alpha,\beta):(\theta-1)\frac{\log\left\|K_\mu^\star f\right\|_{L^\theta(\mu K)}}{\log\left\|f\right\|_{L^\theta(\mu)}}>\ell(\theta)\right\}.
            \end{equation*}
            Each set in this union is open, since its defining function is continuous in $\theta$. Hence, $\mathcal O_{\alpha,\beta}$ is open.

            Suppose, towards a contradiction, that $\mathcal{O}_{\alpha, \beta}$ is non-empty. Since $\mathcal{O}_{\alpha, \beta}$ is open, it is a disjoint union of open intervals, none of which can be enlarged while remaining in $\mathcal{O}_{\alpha, \beta}$. Fix one such interval and denote it by $(a,b)$. By maximality of this interval,
            \begin{equation*}
                F(a)\leq\ell(a),\qquad F(b)\leq\ell(b),
            \end{equation*}
            with equality whenever $a=\alpha$ or $b=\beta$. Now set $\gamma\triangleq1+\frac{a}{b^\prime}\in(a,b)$. Applying~\cref{lemma:order_insertion_convexity} with $a$ and $b$ in place of $\alpha$ and $\beta$, and multiplying its conclusion by $\gamma-1=\frac{a}{b^\prime}$, gives
            \begin{align*}
                F(\gamma) &\leq\frac{b-1}{b}F(a) + \frac1bF(b)\\
                &\leq \frac{b-1}{b}\ell(a) + \frac1b\ell(b)\\
                &= \ell(\gamma),
            \end{align*}
            where the last equality follows from the affinity of $\ell$. This contradicts $\gamma\in(a,b)\subseteq\mathcal O_{\alpha,\beta}$. Therefore, $F$ lies below the chord joining $F(\alpha)$ and $F(\beta)$. Since $\alpha$ and $\beta$ were arbitrary, $F$ is convex on $(1,\infty)$.
        \end{proof}
    \subsection{Proof of~\cref{corollary:continuity_in_order}}\label{appendix:proof_continuity_in_order}\noindent
        If $\supp(\mu)$ is a singleton, then $\eta_\alpha(\mu,K)=0$ for every $\alpha\in[1,\infty]$ by the convention following~\cref{eq:dd_sdpi_constant}, and the result is immediate. Hence, assume that $|\supp(\mu)|\geq2$.

        By~\cref{thm:convexity}, the function $\alpha\mapsto(\alpha-1)\eta_\alpha(\mu,K)$ is convex on $(1,\infty)$. Since $0\leq\eta_\alpha(\mu,K)\leq1$, this function is real-valued and hence continuous. Dividing by $\alpha-1$ proves the continuity of $\alpha\mapsto\eta_\alpha(\mu,K)$ on $(1,\infty)$.

        We next consider the endpoint at one. Fix $\nu\ll\mu$ with $\nu\neq\mu$ and write $f\triangleq\frac{\dd\nu}{\dd\mu}$. Letting $m_\mu \triangleq \min_{x\in\supp(\mu)}\mu(x)$, finiteness of the alphabets gives
        \begin{equation*}
            0\leq f\leq m_\mu^{-1}\quad\mu\text{-a.e.},
            \qquad
            0\leq K_\mu^\star f\leq\|f\|_{L^\infty(\mu)}
            \leq m_\mu^{-1}\quad\mu K\text{-a.e.},
        \end{equation*}
        where the second bound follows because $K_\mu^\star f$ is a conditional average of $f$.

        For $\alpha>1$, define the functions $g_\alpha(t)\triangleq\frac{t^\alpha-\alpha t+\alpha-1}{\alpha-1}$ and $g_1(t)\triangleq t\log t-t+1$ with the convention $0\log0=0$. Both functions and their first derivatives vanish at $t=1$. Moreover, for every $t\in\big(0,m_\mu^{-1}\big]$,
        \begin{equation*}
            g_\alpha^{\prime\prime}(t)=\alpha t^{\alpha-2}\leq\alpha m_\mu^{1-\alpha}t^{-1}=\alpha m_\mu^{1-\alpha}g_1^{\prime\prime}(t).
        \end{equation*}
        Consequently, the function $t\mapsto\alpha m_\mu^{1-\alpha}g_1(t)-g_\alpha(t)$ is convex and has value and first derivative equal to zero at $t=1$. It is therefore non-negative, and continuity at $t=0$ gives
        \begin{equation}\label{eq:continuity_at_one_g_inequality}
            g_\alpha(t)\leq\alpha m_\mu^{1-\alpha}g_1(t)
        \end{equation}
        throughout the range of both $f$ and $K_\mu^\star f$. Hence, we obtain
        \begin{align*}
            D_\alpha(\nu K\|\mu K) &\leq \calH_\alpha(\nu K\|\mu K)&\left(\substack{\text{Since}\\\frac1{\alpha-1}\log\big(1+(\alpha-1)x\big)\leq x}\right)&\\
            &= \mean{\mu K}{g_\alpha \circ \dualK f}&\left(\substack{\text{Definition of $g_\alpha$}}\right)&\\
            &\leq \alpha m_\mu^{1-\alpha}\mean{\mu K}{g_1 \circ \dualK f}&\left(\substack{\text{By~\cref{eq:continuity_at_one_g_inequality}}}\right)&\\
            &=\alpha m_\mu^{1-\alpha}D_{\sf KL}(\nu K\|\mu K)&\left(\substack{\text{Definition of $g_1$}}\right)&\\
            &\leq\alpha m_\mu^{1-\alpha}\eta_1(\mu,K)D_{\sf KL}(\nu\|\mu)&\left(\substack{\text{Definition of $\eta_1(\mu,K)$}}\right)&\\
            &\leq\alpha m_\mu^{1-\alpha}\eta_1(\mu,K)D_\alpha(\nu\|\mu)&\left(\substack{\text{Monotonicity of}\\\text{R\'enyi Divergences}}\right)&.
        \end{align*}
        Dividing by $D_\alpha(\nu\|\mu)>0$, taking the supremum over $\nu$ and applying~\cref{thm:monotonicity} for the reverse inequality yields
        \begin{equation}\label{eq:continuity_at_one_sdpi_comparison}
            \eta_1(\mu,K)\leq\eta_\alpha(\mu,K)\leq\alpha\left(\min_{x\in\supp(\mu)}\mu(x)\right)^{1-\alpha}\eta_1(\mu,K).
        \end{equation}
        Since the multiplicative factor converges to one as $\alpha\downarrow1$, continuity at one follows.

        Finally,~\cref{thm:monotonicity} shows that the finite-order R\'enyi-SDPI constants form a non-decreasing family bounded above by $\eta_\infty(\mu,K)$. Consequently, their limit exists and satisfies
        \begin{equation*}
            \lim_{\alpha\to\infty}\eta_\alpha(\mu,K)\leq\eta_\infty(\mu,K).
        \end{equation*}
        Conversely, fix $\nu\ll\mu$ with $\nu\neq\mu$. Since $D_\infty(\nu\|\mu)>0$, convergence of R\'enyi Divergences for both $(\nu,\mu)$ and $(\nu K,\mu K)$ to their order-$\infty$ values~\cite[Theorem 6]{van2014renyi} gives
        \begin{equation*}
            \lim_{\alpha\to\infty}\eta_\alpha(\mu,K)\geq\lim_{\alpha\to\infty}\frac{D_\alpha(\nu K\|\mu K)}{D_\alpha(\nu\|\mu)}=\frac{D_\infty(\nu K\|\mu K)}{D_\infty(\nu\|\mu)}.
        \end{equation*}
        Taking the supremum over admissible $\nu$ gives the reverse inequality, and therefore continuity at infinity holds.
    \subsection{Proof of~\cref{corollary:distribution_independent_order_properties}}\label{appendix:proof_distribution_independent_order_properties}\noindent
        By definition, $\eta_\alpha(K)=\sup_{\mu\in\calP(\X)}\eta_\alpha(\mu,K)$. Hence, Part a) follows from~\cref{thm:monotonicity}, since a pointwise supremum of non-decreasing functions is non-decreasing. Similarly, given that $(\alpha-1)\eta_\alpha(K)=\sup_{\mu\in\calP(\X)}(\alpha-1)\eta_\alpha(\mu,K)$ for every $\alpha>1$, Part b) follows from~\cref{thm:convexity}, since a pointwise supremum of convex functions is convex.

        For Part c), since $\alpha\mapsto(\alpha-1)\eta_\alpha(K)$ is convex by Part b) and real-valued, it is continuous on $(1,\infty)$. Dividing by $\alpha-1$ proves continuity of $\alpha\mapsto\eta_\alpha(K)$ on this interval. Finally, continuity at infinity follows from~\cref{corollary:continuity_in_order} and Part a) since
        \begin{equation*}
            \eta_\infty(K)=\sup_{\mu\in\calP(\X)}\lim_{\alpha\to\infty}\eta_\alpha(\mu,K)\leq\lim_{\alpha\to\infty}\sup_{\mu\in\calP(\X)}\eta_\alpha(\mu,K)=\lim_{\alpha\to\infty}\eta_\alpha(K)\leq\eta_\infty(K).
        \end{equation*}
    \subsection{Proof of~\cref{thm:extremal_functions}}\label{appendix:proof_extremal_functions}\noindent
        We first establish the dichotomy. By~\cref{thm:local_chi_squared_lb_on_renyi_sdpi}, $\eta_\alpha(\mu,K)\geq\eta_{\chi^2}(\mu,K)$. If equality holds, then Part~a) follows. Suppose therefore that $\eta_\alpha(\mu, K)>\eta_{\chi^2}(\mu, K)$.

        Recall from~\cref{prop:renyi_sdpi_functional_form} that the SDPI constant can be written as
        \begin{equation*}
            \eta_\alpha(\mu,K) = \sup_{f\in\mathcal D_\mu\setminus\{\allones\}}\frac{\log\| K_\mu^\star f\|_{L^\alpha(\mu K)}}{\log\| f\|_{L^\alpha(\mu)}}.
        \end{equation*}
        Since $\X$ is finite, $\mathcal D_\mu$ is compact and convex in $L^1(\mu)$.

        Let $(f_n)\subseteq\mathcal D_\mu\setminus\{\allones\}$ be a maximising sequence. By compactness of $\mathcal D_\mu$, after passing to a subsequence we have $f_n\to f^\star$ in $L^1(\mu)$ for some $f^\star\in\mathcal D_\mu$. However, this does not yet ensure that $f^\star$ belongs to $\mathcal D_\mu\setminus\{\allones\}$, since it could be equal to $\allones$. We show that this cannot happen. Suppose, for contradiction, that $f^\star=\allones$. Set $h_n=f_n-\allones$. Then,  since convergence of $f_n$ to $f^\star$ also holds in $L^2(\mu)$ due to finiteness of $|\X|$, $\| h_n\|_{L^2(\mu)}^2\to0$, $\langle h_n,\allones\rangle_\mu=0$, and one has the following Taylor expansions:
            \begin{align*}
                \log\| f_n\|_{L^\alpha(\mu)}
                &=\frac{\alpha-1}{2}\| h_n\|_{L^2(\mu)}^2 + o\left(\| h_n\|_{L^2(\mu)}^2\right),\\
                \log\| K_\mu^\star f_n\|_{L^\alpha(\mu K)}
                &=\frac{\alpha-1}{2}\| K_\mu^\star h_n\|_{L^2(\mu K)}^2 + o\left(\| h_n\|_{L^2(\mu)}^2\right).
            \end{align*}
            Therefore,
            \begin{equation*}
                \frac{\log\| K_\mu^\star f_n\|_{L^\alpha(\mu K)}}{\log\| f_n\|_{L^\alpha(\mu)}}
                =\frac{\| K_\mu^\star h_n\|_{L^2(\mu K)}^2}{\| h_n\|_{L^2(\mu)}^2} + o(1),
            \end{equation*}
            and since $(f_n)$ is a maximising sequence, taking $n\to\infty$ and using the definition of $\eta_{\chi^2}(\mu,K)$ gives
            \begin{equation*}
                \eta_\alpha(\mu, K) = \lim_{n\to\infty}\frac{\log\| K_\mu^\star f_n\|_{L^\alpha(\mu K)}}{\log\| f_n\|_{L^\alpha(\mu)}}=\lim_{n\to\infty}\frac{\| K_\mu^\star h_n\|_{L^2(\mu K)}^2}{\| h_n\|_{L^2(\mu)}^2}\leq\eta_{\chi^2}(\mu,K).
            \end{equation*}
        which would contradict $\eta_\alpha(\mu,K)>\eta_{\chi^2}(\mu,K)$. Hence $f^\star\in\mathcal D_\mu\setminus\{\allones\}$, and since the contraction ratio is continuous on this punctured domain, $f^\star$ attains $\eta_\alpha(\mu,K)$. From these properties, we deduce that the corresponding probability measure $\nu^\star$ given through the identity $f^\star=\frac{\dd\nu^\star}{\dd\mu}$ satisfies $\nu^\star\ll\mu$ and $\nu^\star\neq\mu$, and attains $\eta_\alpha(\mu, K)$.

        Next, we show that $\eta_\alpha(\mu,K)<1$. Suppose, towards a contradiction, that $\eta_\alpha(\mu,K)=1$. Since $f^\star$ attains the supremum,~\cref{prop:renyi_sdpi_functional_form} gives $\left\|K_\mu^\star f^\star\right\|_{L^\alpha(\mu K)}=\left\|f^\star\right\|_{L^\alpha(\mu)}$. On the other hand, Jensen's inequality and~\cref{eq:adjoint_relation} yield
        \begin{equation*}
            0\leq\left\langle K_\mu^\star\bigl((f^\star)^\alpha\bigr)-\bigl(K_\mu^\star f^\star\bigr)^\alpha,\allones\right\rangle_{\mu K}=\left\|f^\star\right\|_{L^\alpha(\mu)}^\alpha-\left\|K_\mu^\star f^\star\right\|_{L^\alpha(\mu K)}^\alpha=0.
        \end{equation*}
        The Jensen gap therefore vanishes $\mu K$-almost everywhere. Since $t\mapsto t^\alpha$ is strictly convex, $f^\star$ is constant on the support of $K_\mu^\star(\cdot\mid y)$ for $\mu K$-almost every $y$, and hence
        \begin{equation*}
            \bigl(K_\mu^\star(f^\star-\allones)\bigr)^2=K_\mu^\star\bigl((f^\star-\allones)^2\bigr)\qquad\mu K\text{-almost everywhere}.
        \end{equation*}
        Using this identity together with~\cref{eq:adjoint_relation}, we obtain
        \begin{equation*}
            1\geq\eta_{\chi^2}(\mu,K)\geq\frac{\left\|K_\mu^\star(f^\star-\allones)\right\|_{L^2(\mu K)}^2}{\left\|f^\star-\allones\right\|_{L^2(\mu)}^2}=\frac{\Big\langle\bigl(K_\mu^\star(f^\star-\allones)\bigr)^2,\allones\Big\rangle_{\mu K}}{\Big\langle K_\mu^\star\bigl((f^\star-\allones)^2\bigr),\allones\Big\rangle_{\mu K}}=1,
        \end{equation*}
        contradicting $\eta_\alpha(\mu,K)>\eta_{\chi^2}(\mu,K)$. Therefore, $\eta_{\chi^2}(\mu,K)<\eta_\alpha(\mu,K)<1$, which proves Part~b).

        It remains to establish~\cref{eq:renyi_sdpi_foc}. Let $\nu^\star$ now be an arbitrary extremal distribution for $\eta_\alpha(\mu,K)$, and set $f^\star\triangleq\frac{\dd\nu^\star}{\dd\mu}\in\mathcal D_\mu\setminus\{\allones\}$. Consider the functional
        \begin{equation*}
            \Phi(f)\triangleq\log\left\|K_\mu^\star f\right\|_{L^\alpha(\mu K)}-\eta_\alpha(\mu,K)\log\left\|f\right\|_{L^\alpha(\mu)}.
        \end{equation*}
        For every $f\in\mathcal D_\mu\setminus\{\allones\}$, strictness in the monotonicity of $L^p$-norms gives $\|f\|_{L^\alpha(\mu)}>\|f\|_{L^1(\mu)}=1$. Hence, the definition of $\eta_\alpha(\mu,K)$ implies $\Phi(f)\leq0$. Moreover, $\Phi(\allones)=0$ and $\Phi(f^\star)=0$, so that $f^\star$ maximises $\Phi$ over $\mathcal D_\mu$. Applying~\cref{lemma:first_order_optimality} with $\lambda=\eta_\alpha(\mu,K)$ and $h=f^\star$ gives~\cref{eq:renyi_sdpi_foc}, with equality on $\supp(\nu^\star)$.
    \subsection{Proof of~\cref{thm:boundary_nu}}\label{appendix:proof_boundary_nu}\noindent
        If $\supp(\mu)$ is a singleton, then both sides of~\cref{eq:boundary_nu_maximum_representation} are zero by convention, and the second assertion is vacuous. Hence, assume that $|\supp(\mu)|\geq2$.

        We first prove the second assertion. Suppose that $0<\eta_\alpha(\mu,K)<1$, and let $\nu^\star$ be an arbitrary extremal distribution for $\eta_\alpha(\mu,K)$, with corresponding density $f^\star\triangleq\frac{\dd\nu^\star}{\dd\mu}\in\mathcal D_\mu\setminus\{\allones\}$. Suppose, towards a contradiction, that $\supp(\nu^\star)=\supp(\mu)$. Equality then holds $\mu$-almost everywhere in~\cref{eq:renyi_sdpi_foc}. Integrating it with respect to $\mu$ and using~\cref{eq:adjoint_relation} yields
        \begin{equation}\label{eq:foc_identity_boundary_theorem}
            \frac{\left\|K_\mu^\star f^\star\right\|_{L^{\alpha-1}(\mu K)}^{\alpha-1}}{\left\|K_\mu^\star f^\star\right\|_{L^\alpha(\mu K)}^\alpha} = \eta_\alpha(\mu, K)\frac{\left\|f^\star\right\|_{L^{\alpha-1}(\mu)}^{\alpha-1}}{\left\|f^\star\right\|_{L^\alpha(\mu)}^\alpha} + 1-\eta_\alpha(\mu, K).
        \end{equation}
        Since $f^\star$ is a non-constant density, we have
        \begin{equation*}
            \left\|f^\star\right\|_{L^\alpha(\mu)}^\alpha-\left\|f^\star\right\|_{L^{\alpha-1}(\mu)}^{\alpha-1} = \left\langle(f^\star)^{\alpha-1}-\allones, f^\star-\allones\right\rangle_\mu > 0,
        \end{equation*}
        so that on the right-hand side of~\cref{eq:foc_identity_boundary_theorem}, the ratio of norms lies in the interval $(0,1)$. Moreover, the left-hand side of~\cref{eq:foc_identity_boundary_theorem} can be bounded as
        \begin{align}
            \frac{\left\|K_\mu^\star f^\star\right\|_{L^{\alpha-1}(\mu K)}^{\alpha-1}}{\left\|K_\mu^\star f^\star\right\|_{L^\alpha(\mu K)}^\alpha} &= \frac{\left\|K_\mu^\star f^\star\right\|_{L^{\alpha-1}(\mu K)}^{\alpha-1}}{\left\|f^\star\right\|_{L^\alpha(\mu)}^{\alpha\eta_\alpha(\mu, K)}}&\left(\substack{\text{Since $f^\star$}\\\text{attains $\eta_\alpha(\mu, K)$}}\right)&\nonumber\\
            &\leq \frac{\left\|f^\star\right\|_{L^{\alpha-1}(\mu)}^{(\alpha-1)\eta_{\alpha-1}(\mu, K)}}{\left\|f^\star\right\|_{L^\alpha(\mu)}^{\alpha\eta_\alpha(\mu, K)}}&\left(\substack{\text{By~\cref{eq:renyi_sdpi_functional_inequality} for $\alpha>2$;}\\\text{By~\cref{eq:markov_kernel_mean_preservation} for $\alpha=2$}}\right)&\nonumber\\
            &\leq \left(\frac{\left\|f^\star\right\|_{L^{\alpha-1}(\mu)}^{\alpha-1}}{\left\|f^\star\right\|_{L^\alpha(\mu)}^\alpha}\right)^{\eta_\alpha(\mu, K)}&\left(\substack{\text{By~\cref{thm:monotonicity} and since}\\\text{$\|f^\star\|_{L^{\alpha-1}(\mu)}\geq\|f^\star\|_{L^{1}(\mu)}=1$}}\right)&\nonumber\\
            &< \eta_\alpha(\mu, K)\frac{\left\|f^\star\right\|_{L^{\alpha-1}(\mu)}^{\alpha-1}}{\left\|f^\star\right\|_{L^\alpha(\mu)}^\alpha} + 1-\eta_\alpha(\mu, K),&\left(\substack{\text{Weighted AM--GM}\\\text{inequality}}\right)&\label{eq:boundary_result_final_inequality}
        \end{align}
        where strictness in the last inequality is due to $0<\eta_\alpha(\mu, K)<1$ and the fact that the ratio of norms lies in $(0,1)$. Therefore, \cref{eq:boundary_result_final_inequality} contradicts~\cref{eq:foc_identity_boundary_theorem} and we conclude that $\supp(\nu^\star)\subsetneq\supp(\mu)$.

        It remains to prove~\cref{eq:boundary_nu_maximum_representation}, for which it suffices to exhibit an extremal distribution whose support is strictly contained in $\supp(\mu)$. If $\eta_\alpha(\mu,K)=0$, then any $\delta_x$ with $x\in\supp(\mu)$ is an extremal distribution whose support is strictly contained in $\supp(\mu)$.

        Suppose now that $\eta_\alpha(\mu,K)>0$. Recall that $\eta_{\chi^2}(\mu,K)$ is the supremum of $\frac{\|K_\mu^\star h\|_{L^2(\mu K)}^2}{\|h\|_{L^2(\mu)}^2}$ over non-zero functions $h$ satisfying $\mean{\mu}{h}=0$~\cite[Theorem III.2]{raginsky2016strong}. Since this quotient is invariant under rescaling, its optimisation may be restricted to $\mu$-mean-zero functions with unit $L^2(\mu)$-norm. This set is compact because $\X$ is finite, and hence the supremum is attained by some function $h^\star$. Since $h^\star$ is non-zero and has zero $\mu$-mean, it takes both positive and negative values on $\supp(\mu)$, and we may normalise it so that $\min_{x\in\supp(\mu)}h^\star(x)=-1$. In particular, $\allones+h^\star$ is a non-constant probability density that vanishes somewhere on $\supp(\mu)$.

        Suppose first that $\eta_\alpha(\mu,K)<1$. If $\eta_{\chi^2}(\mu,K)=0$, then $\eta_{\chi^2}(\mu,K)<\eta_\alpha(\mu,K)$ immediately. Otherwise,~\cref{thm:local_chi_squared_lb_on_renyi_sdpi} gives $0<\eta_{\chi^2}(\mu,K)\leq\eta_\alpha(\mu,K)<1$, and we obtain
        \begin{align*}
            \eta_\alpha(\mu,K)&\geq\eta_2(\mu,K)&\left(\substack{\text{By~\cref{thm:monotonicity}}}\right)&\\
            &\geq\frac{\log\left(1+\eta_{\chi^2}(\mu,K)\|h^\star\|_{L^2(\mu)}^2\right)}{\log\left(1+\|h^\star\|_{L^2(\mu)}^2\right)}&\left(\substack{\text{By~\cref{prop:renyi_sdpi_functional_form}}\\\text{applied to $\allones+h^\star$}}\right)&\\
            &>\eta_{\chi^2}(\mu,K).&\left(\substack{\text{Strict concavity of}\\\text{the logarithm}}\right)&
        \end{align*}
        Thus, in either case,~\cref{thm:extremal_functions} guarantees the existence of an extremal distribution, whose support is strictly contained in $\supp(\mu)$ by the preceding argument.

        Finally, suppose that $\eta_\alpha(\mu,K)=1$. By~\cref{thm:extremal_functions}, we then have $\eta_{\chi^2}(\mu,K)=1$, and Jensen's inequality gives
        \begin{equation*}
            0\leq \left\langle K_\mu^\star\big((h^\star)^2\big)-\bigl(K_\mu^\star h^\star\bigr)^2,\allones\right\rangle_{\mu K} =\|h^\star\|_{L^2(\mu)}^2-\left\|K_\mu^\star h^\star\right\|_{L^2(\mu K)}^2=0.
        \end{equation*}
        Equality therefore holds in the corresponding application of Jensen's inequality, so $h^\star$ is constant on the support of $K_\mu^\star(\cdot|y)$ for $\mu K$-almost every $y$. Consequently,
        \begin{equation*}
            \left\|K_\mu^\star(\allones+h^\star)\right\|_{L^\alpha(\mu K)}^\alpha=\left\langle \bigl(K_\mu^\star(\allones+h^\star)\bigr)^\alpha,\allones\right\rangle_{\mu K}=\left\langle K_\mu^\star\bigl((\allones+h^\star)^\alpha\bigr),\allones\right\rangle_{\mu K}=\|\allones+h^\star\|_{L^\alpha(\mu)}^\alpha.
        \end{equation*}
        The probability measure with density $\allones+h^\star$ is therefore an extremal distribution whose support is strictly contained in $\supp(\mu)$, proving~\cref{eq:boundary_nu_maximum_representation}.
    \subsection{Proof of~\cref{prop:infty_sdpi_achieved_on_conditionals}}\label{appendix:proof_infty_sdpi_achieved_on_conditionals}\noindent
        If $\supp(\mu)$ is a singleton, then $\eta_\infty(\mu,K)=0$ and there is no $A\in\Sigma_\X$ satisfying $0<\mu(A)<1$, so the first assertion is immediate and the second is vacuous. Hence, assume that $|\supp(\mu)|\geq2$.

        We will show that both $\eta\leq \eta_\infty(\mu, K)$ and $\eta\geq \eta_\infty(\mu, K)$ hold, where
        \begin{equation*}
            \eta\triangleq\sup_{\substack{A\in\Sigma_\X:\\0<\mu(A)<1}}\frac{D_\infty\left(\mu_{|A} K\|\mu K\right)}{D_\infty\left(\mu_{|A}\|\mu\right)} = \sup_{\substack{A\in\Sigma_\X:\\0<\mu(A)<1}}\frac{D_\infty\left(\mu_{|A} K\|\mu K\right)}{-\log{\mu(A)}}.
        \end{equation*}
        Since $\X$ is finite, there are only finitely many events $A\in\Sigma_\X$ satisfying $0<\mu(A)<1$, so the supremum defining $\eta$ is in fact a maximum.

        We start with the lower bound. Since for any $A\in\Sigma_\X$ with $0<\mu(A)<1$, the probability measure $\mu_{|A}$ is admissible in the optimisation defining $\eta_\infty(\mu, K)$, we obtain $\eta \leq \eta_\infty(\mu, K)$. Moreover when $\eta=1$, the equality is settled since $\eta_\infty(\mu, K)\leq 1$. Hence, assume from now on that $\eta\in[0, 1)$.

        For the upper bound, we introduce the probability measure $\rho_B$, defined for a fixed $B\in\Sigma_\Y$ with $\mu K(B)>0$ by
        \begin{equation*}
            \frac{\dd\rho_B}{\dd\mu} = \frac{K(B|\cdot)}{\mu K(B)} \quad \mu\text{-a.e.}
        \end{equation*}
        For all $A\in\Sigma_\X$ with $0<\mu(A)<1$, observe the following chain of equivalences due to the definition of $\eta$:
        \begin{align}
        D_\infty(\mu_{|A}K\|\mu K)\leq -\eta\log{\mu(A)}&\iff \left\|\frac{\dd \mu_{|A}K}{\dd \mu K}\right\|_{L^\infty(\mu K)} \leq \mu(A)^{-\eta}\nonumber\\
        &\iff \forall B\in\Sigma_\Y, \quad\mu_{|A}K(B) \leq \mu(A)^{-\eta}\mu K(B)\nonumber\\
        &\iff \forall B\in\Sigma_\Y\text{ with }\mu K(B)>0, \quad \rho_B(A) \leq \mu(A)^{1-\eta}.\label{eq:eta_infty_proof_inequality_for_events}
        \end{align}
        This remains true when $\mu(A)\in\{0, 1\}$ since $\rho_B\ll\mu$, and hence~\cref{eq:eta_infty_proof_inequality_for_events} holds in fact for every $A\in\Sigma_\X$. Next, we extend~\cref{eq:eta_infty_proof_inequality_for_events}, which currently holds for events $A\in\Sigma_\X$, to non-negative bounded measurable functions.

        Consider a non-negative measurable function $f\in L^\infty(\mu)$. Defining its superlevel sets as $A_t\triangleq \{x\in\X: f(x)>t\}$, the layer-cake representation gives that for any probability measure $\rho\in\calP(\X)$ satisfying $\rho\ll\mu$,
        \begin{align}
            \int_\X f(x)\dd\rho(x)&=\int_\X\left(\int_0^{\|f\|_{L^\infty(\mu)}}\mathbbm{1}_{A_t}(x)\dd t\right)\dd\rho(x)\nonumber\\
            &= \int_0^{\|f\|_{L^\infty(\mu)}} \left(\int_\X\mathbbm{1}_{A_t}(x)\dd\rho(x)\right)\dd t\nonumber\\
            &=\int_0^{\|f\|_{L^\infty(\mu)}}\rho(A_t)\dd t.\label{eq:proof_eta_infty_layer_cake}
        \end{align}
        Consequently, for any $B\in\Sigma_\Y$ with $\mu K(B)>0$, we obtain
        \begin{align}
            \mean{\rho_B}{f} & =\int_0^{\|f\|_{L^\infty(\mu)}} \rho_B(A_t)\dd t&\left(\substack{\text{By~\cref{eq:proof_eta_infty_layer_cake}}}\right)&\nonumber\\
            &\leq \int_0^{\|f\|_{L^\infty(\mu)}} \mu(A_t)^{1-\eta}\dd t&\left(\substack{\text{By~\cref{eq:eta_infty_proof_inequality_for_events}}}\right)&\nonumber\\
            &\leq \left(\int_0^{\|f\|_{L^\infty(\mu)}}\mu(A_t)\dd t\right)^{1-\eta}\left(\int_0^{\|f\|_{L^\infty(\mu)}}1\dd t\right)^\eta&\left(\substack{\text{H\"older's}\\\text{ inequality}}\right)&\nonumber\\
            &= \left(\int_0^{\|f\|_{L^\infty(\mu)}}\mu(A_t)\dd t\right)^{1-\eta}\|f\|_{L^\infty(\mu)}^\eta\nonumber\\
            &= \left(\int f\dd\mu\right)^{1-\eta}\|f\|_{L^\infty(\mu)}^\eta.&\left(\substack{\text{By~\cref{eq:proof_eta_infty_layer_cake}}}\right)&\label{eq:eta_infty_proof_inequality_for_functions}
        \end{align}
        The step with H\"older's inequality holds for $0<\eta<1$ using the H\"older's conjugates $\frac1{1-\eta}$ and $\frac1\eta$. When $\eta=0$, the same inequality holds with equality.

        Now, let $\nu\in\calP(\X)$ with $0<D_\infty(\nu\|\mu)<\infty$, and set $f=\frac{\dd\nu}{\dd\mu}$. Applying~\cref{eq:eta_infty_proof_inequality_for_functions} yields
        \begin{equation*}
            \frac{\nu K(B)}{\mu K(B)}= \int_\X \frac{K(B|x)}{\mu K(B)}\frac{\dd\nu}{\dd\mu}(x)\dd\mu(x)=\mean{\rho_B}{\frac{\dd\nu}{\dd\mu}}\leq \left\|\frac{\dd\nu}{\dd\mu}\right\|_{L^\infty(\mu)}^\eta
        \end{equation*}
        for every $B\in\Sigma_\Y$ with $\mu K(B)>0$, or equivalently $D_\infty(\nu K\|\mu K)\leq \eta D_\infty(\nu\|\mu)$. Rearranging and taking supremum over all admissible $\nu$ gives $\eta_\infty(\mu, K)\leq \eta$.

        Suppose now that $0<\eta_\infty(\mu,K)<1$, so that $0<\eta<1$, and let $\nu^\star$ be an arbitrary extremal distribution with density $f^\star\triangleq\frac{\dd\nu^\star}{\dd\mu}$. Since $\Y$ is finite and $\nu^\star$ is extremal, there exists $y\in\supp(\mu K)$ such that $\mean{\rho_{\{y\}}}{f^\star}=K_\mu^\star f^\star(y)=\|f^\star\|_{L^\infty(\mu)}^\eta$. Since $\mean{\mu}{f^\star}=1$, both sides of~\cref{eq:eta_infty_proof_inequality_for_functions}, applied with $B=\{y\}$ and $f=f^\star$, are equal. Equality must therefore hold in the corresponding application of H\"older's inequality. Writing $A_t^\star\triangleq\{x\in\X:f^\star(x)>t\}$, this implies that $t\mapsto\mu(A_t^\star)$ is constant almost everywhere on $(0,\|f^\star\|_{L^\infty(\mu)})$. As $\X$ is finite, $f^\star$ can consequently take only the values $0$ and $\|f^\star\|_{L^\infty(\mu)}$ $\mu$-almost everywhere. Moreover, since $f^\star$ is measurable, $A_0^\star\in\Sigma_\X$, and $f^\star=\|f^\star\|_{L^\infty(\mu)}\mathbbm{1}_{A_0^\star}$ $\mu$-almost everywhere. Finally, $\mean{\mu}{f^\star}=1$ gives $\|f^\star\|_{L^\infty(\mu)}=\frac1{\mu(A_0^\star)}$. Since $\nu^\star\neq\mu$, this norm is strictly larger than one, so $0<\mu(A_0^\star)<1$ and $\nu^\star=\mu_{|A_0^\star}$.
    \subsection{Proof of~\cref{prop:eta_parameter_renyi_binary}}\label{appendix:proof_eta_parameter_renyi_binary}\noindent
        For any divergence $D$, let $\eta_D^{(2)}(K)$ denote the supremum in~\cref{eq:di_sdpi_constant} restricted to pairs of probability measures supported on at most two common points.

        We first prove Part~a).

        \textbf{Case $\alpha\in\{0,1\}$:} The result follows from~\cite[Theorem~1]{ordentlich2021strong}, since $\eta_0(K)=\eta_{\sf revKL}(K)=\eta_{\sf KL}(K)=\eta_1(K)$.

        \textbf{Case $\alpha\in(0,1)$:} Applying~\cref{thm:local_chi_squared_lb_on_renyi_sdpi} to binary-supported pairs of probability measures, together with~\cite[Theorem~2]{ordentlich2021strong} and~\cref{prop:distribution_independent_sdpi_equal_chi_square_sdpi}, gives
        \begin{equation*} \eta_\alpha(K)\geq\eta_\alpha^{(2)}(K)\geq\eta_{\chi^2}^{(2)}(K)=\eta_{\chi^2}(K)=\eta_\alpha(K). \end{equation*}

        \textbf{Case $\alpha\in(1,\infty)$:} Fix any $(\nu,\mu)$ with $0<D_\alpha(\nu\|\mu)<\infty$, and set $f\triangleq\frac{\dd\nu}{\dd\mu}$. Let $H_\alpha(\nu\|\mu)\triangleq e^{(\alpha-1)D_\alpha(\nu\|\mu)}$ denote the Hellinger integral. Consider the compact convex set
        \begin{equation*}
            \mathcal{S}_f \triangleq \left\{\widehat\mu\in\calP\bigl(\supp(\mu)\bigr):\left\langle f,\allones\right\rangle_{\widehat\mu}=1\right\}.
        \end{equation*}
        Since $\mathcal{S}_f$ is the intersection of a probability simplex with a hyperplane, its extreme points are supported on at most two points in $\supp(\mu)$. Moreover, as $\mu\in\mathcal S_f$, there exist $1\leq m\leq|\supp(\mu)|$, extreme points $\mu^{(1)},\ldots,\mu^{(m)}$ of $\mathcal S_f$, and coefficients $\theta_1,\ldots,\theta_m>0$ summing to one such that $\mu=\sum_{j=1}^m\theta_j\mu^{(j)}$. Also, let $\nu^{(j)}$ be the probability measure defined through
        \begin{equation}\label{eq:binary_nu_definition}
            \frac{\dd\nu^{(j)}}{\dd\mu^{(j)}}=f\quad\mu^{(j)}\text{-a.e.},
        \end{equation}
        so that $\nu=\sum_{j=1}^m\theta_j\nu^{(j)}$. Since $\nu\neq\mu$, at least one component satisfies $\nu^{(j)}\neq\mu^{(j)}$. For every such $j$, the pair $(\nu^{(j)},\mu^{(j)})$ is admissible, and the definition of $\eta_\alpha^{(2)}(K)$ gives
        \begin{equation}\label{eq:binary_support_non_trivial_component}
            H_\alpha\bigl(\nu^{(j)}K\|\mu^{(j)}K\bigr)\leq H_\alpha\bigl(\nu^{(j)}\|\mu^{(j)}\bigr)^{\eta_\alpha^{(2)}(K)}.
        \end{equation}
        For the remaining indices, both sides of~\cref{eq:binary_support_non_trivial_component} equal one. Hence, \cref{eq:binary_support_non_trivial_component} holds for every $j\in[m]$ and we obtain
        \begin{align*}
            H_\alpha(\nu K\|\mu K) &\leq\sum_{j=1}^m\theta_j H_\alpha\bigl(\nu^{(j)}K\|\mu^{(j)}K\bigr)&\left(\substack{\text{Joint convexity}\\\text{of $H_\alpha$}}\right)&\\
            &\leq\sum_{j=1}^m\theta_j H_\alpha\bigl(\nu^{(j)}\|\mu^{(j)}\bigr)^{\eta_\alpha^{(2)}(K)}&\left(\substack{\text{By~\cref{eq:binary_support_non_trivial_component}}}\right)&\\
            &\leq\left(\sum_{j=1}^m\theta_j H_\alpha\bigl(\nu^{(j)}\|\mu^{(j)}\bigr)\right)^{\eta_\alpha^{(2)}(K)}&\left(\substack{\text{Jensen's}\\\text{inequality}}\right)&\\
            &=\left(\sum_{j=1}^m\theta_j\langle f^\alpha, \allones\rangle_{\mu^{(j)}}\right)^{\eta_\alpha^{(2)}(K)}\\
            &=\langle f^\alpha, \allones\rangle_{\sum_{j=1}^m\theta_j\mu^{(j)}}^{\eta_\alpha^{(2)}(K)}&\left(\substack{\text{By~\cref{eq:binary_nu_definition}}}\right)&\\
            &=H_\alpha(\nu\|\mu)^{\eta_\alpha^{(2)}(K)}.
        \end{align*}
        Taking logarithms and rearranging gives $\frac{D_\alpha(\nu K\|\mu K)}{D_\alpha(\nu\|\mu)}\leq{\eta_\alpha^{(2)}(K)}$, and taking the supremum over admissible $(\nu,\mu)$ yields $\eta_\alpha(K)\leq\eta_\alpha^{(2)}(K)$. The reverse inequality is immediate.

        \textbf{Case $\alpha=\infty$:} For every $\beta\in(1,\infty)$, $0<D_\beta(\nu\|\mu)<\infty$ implies $\nu\ll\mu$, and therefore
        \begin{equation*}
            \eta_\beta^{(2)}(K)=\sup_{\substack{\mu\in\calP(\X):\\|\supp(\mu)|\leq2}}\eta_\beta(\mu,K)\leq\sup_{\substack{\mu\in\calP(\X):\\|\supp(\mu)|\leq2}}\eta_\infty(\mu,K)=\eta_\infty^{(2)}(K),
        \end{equation*}
        where the inequality follows from~\cref{thm:monotonicity}. With this, Part c) in~\cref{corollary:distribution_independent_order_properties} and the finite-order cases proven above lead to
        \begin{equation*}
            \eta_\infty(K)=\lim_{\beta\to\infty}\eta_\beta(K)=\lim_{\beta\to\infty}\eta_\beta^{(2)}(K)\leq\eta_\infty^{(2)}(K)\leq\eta_\infty(K).
        \end{equation*}

        We now prove Part~b). Fix $\alpha\in[2,\infty]$. Since admissibility implies $\nu\ll\mu$, Part~a) reduces the optimisation to reference measures $\mu$ supported on at most two points. We may further restrict to $|\supp(\mu)|=2$: if $|\supp(\mu)|=1$, then the constraint set defining $\eta_\alpha(\mu,K)$ is empty and $\eta_\alpha(\mu,K)=0$ by convention. Such reference measures therefore do not affect the supremum.

        For $\alpha\in[2,\infty)$,~\cref{thm:boundary_nu} restricts the optimisation defining $\eta_\alpha(\mu,K)$ to probability measures whose support is strictly contained in $\supp(\mu)$. At order $\alpha=\infty$,~\cref{prop:infty_sdpi_achieved_on_conditionals} restricts it to conditional distributions of $\mu$ on non-trivial events. Since $\mu$ is supported on two points, either restriction forces $\nu$ to be a Dirac measure. Writing $x$ for its support point and $x^\prime$ for the other point in $\supp(\mu)$ gives
        \begin{equation*}
            \nu=\delta_x,\qquad\mu=(1-t)\delta_x+t\delta_{x^\prime}
        \end{equation*}
        for some $t\in(0,1)$, proving Part~b).
    \subsection{Proof of~\cref{prop:infty_renyi_sdpi_closed_form}}\label{appendix:proof_infty_renyi_sdpi_closed_form}\noindent
        By~\cref{prop:eta_parameter_renyi_binary} and~\cref{prop:infty_sdpi_achieved_on_conditionals}, it suffices to optimise over pairs of the form $\nu = \delta_x$ and $\mu_t = (1-t)\delta_x + t\delta_{x^\prime}$ for distinct $x, x^\prime\in\X$, where $t\in(0,1)$. The corresponding ratio of $\infty$-R\'enyi Divergences is given by
        \begin{align*}
            \frac{D_\infty(\delta_x K\|\mu_t K)}{D_\infty(\delta_x\|\mu_t)} &= \frac{\log \left\|\frac{\dd K(\cdot|x)}{\dd\mu_t K}\right\|_{L^\infty(\mu_t K)}}{\log\frac1{1-t}}\\
            &= \frac{\log \max_{y\in\supp(\mu_t K)}\frac{K(y|x)}{(1-t)K(y|x) +t K(y|x^\prime)}}{\log\frac1{1-t}}\\
            &= \frac{\log \max_{y\in\supp(K(\cdot|x))}\frac{K(y|x)}{(1-t)K(y|x) +t K(y|x^\prime)}}{\log\frac1{1-t}}\\
            &= \frac{-\log\left(1-t + t \min_{y\in\supp(K(\cdot|x))}\frac{K(y|x^\prime)}{K(y|x)}\right)}{\log\frac1{1-t}}\\
            &= \frac{\log \bigg(1-t \left(1 -\min_{y\in\supp(K(\cdot|x))}\frac{K(y|x^\prime)}{K(y|x)}\right)\bigg)}{\log(1-t)}.
        \end{align*}
        Now that the ratio has been determined, it remains to find its maximum value over $t\in(0,1)$. Since $\lambda\triangleq 1 -\min_{y\in\supp(K(\cdot|x))}\frac{K(y|x^\prime)}{K(y|x)}\in[0,1]$, we can use~\cref{corollary:log_ratio_decreasing} which states that $t \mapsto \frac{\log(1-\lambda t)}{\log(1-t)}$ is decreasing on $(0,1)$ and $\lim_{t\to0}\frac{\log(1-\lambda t)}{\log(1-t)}=\lambda$. Therefore, the supremum is approached when $t\to 0$, and so
        \begin{equation*}
            \sup_{t\in(0,1)} \frac{D_\infty(\delta_x K\|\mu_t K)}{D_\infty(\delta_x\|\mu_t)} = \lim_{t\to 0} \frac{D_\infty(\delta_x K\|\mu_t K)}{D_\infty(\delta_x\|\mu_t)}= 1 -\min_{y\in\supp(K(\cdot|x))} \frac{K(y|x^\prime)}{K(y|x)}.
        \end{equation*}
        Since $x, x^\prime\in\X$ were arbitrary, the final expression for the SDPI constant becomes
        \begin{equation*}
            \eta_\infty(K) = \sup_{x, x^\prime\in\X}\left\{1 -\min_{y\in\supp(K(\cdot|x))} \frac{K(y|x^\prime)}{K(y|x)}\right\}.
        \end{equation*}
    \subsection{Proof of~\cref{prop:distribution_independent_sdpi_equal_chi_square_sdpi}}\label{appendix:proof_distribution_independent_sdpi_equal_chi_square_sdpi}\noindent
        The cases ${\alpha\in\{0,1\}}$ reduce to the distribution-independent SDPI constant of KL Divergence, an instance of $\varphi$-Divergence with operator convex $\varphi$, for which it is known that $\eta_{\varphi}(K)=\eta_{\chi^2}(K)$~\cite[Corollary III.1]{raginsky2016strong}.

        For $\alpha\in(0,1)$, \cref{thm:local_chi_squared_lb_on_renyi_sdpi,corollary:sdpi_ub_lb_hellinger} yield $\eta_{\chi^2}(K)
        \leq \eta_\alpha(K)
        \leq \eta_{\calH_\alpha}(K)
        = \eta_{\chi^2}(K)$, where the last equality follows again from \cite[Corollary III.1]{raginsky2016strong}, since the Hellinger Divergence is generated by an operator-convex function when $\alpha\in(0,1)$. Hence $\eta_\alpha(K)=\eta_{\chi^2}(K)$.
    \subsection{Proof of~\cref{prop:non_trivial_sdpi_characterisation}}\label{appendix:proof_non_trivial_sdpi_characterisation}\noindent
        We start with Part a). Since the boundary cases $\alpha\in\{0,1\}$ are already known~\cite{cohen1993relative}, let us consider $\alpha\in(0,1)$.
        \begin{itemize}
            \item \textbf{``If'' direction $\Longleftarrow$:} Suppose that there exist $x, x^\prime$ such that $\supp\big(K(\cdot|x)\big)$ and $\supp\big(K(\cdot|x^\prime)\big)$ are disjoint. Then~\cite[Theorem 4.1]{cohen1993relative} gives $\eta_{\chi^2}(K)=1$, and since $\eta_\alpha(K)\geq\eta_{\chi^2}(K)$ this means $\eta_\alpha(K)=1$.
            \item \textbf{``Only if'' direction $\Longrightarrow$:} We prove the contrapositive. Assume that for all $x, x^\prime\in\X$, $\supp\big(K(\cdot|x)\big)$ and $\supp\big(K(\cdot|x^\prime)\big)$ intersect. But then using that $\|\nu-\mu\|_{\sf TV}<1\iff \supp(\nu)\cap\supp(\mu)\neq\emptyset$ together with the characterisation ${\eta_{\sf TV}(K) = \sup_{x, x^\prime\in\X}\|K(\cdot|x)-K(\cdot|x^\prime)\|_{\sf TV}}$ gives $\eta_{\sf TV}(K)<1$. It remains to invoke Part a) of~\cref{prop:bound_on_distribution_independent_sdpi} and we get $\eta_\alpha(K)\leq\eta_{\sf TV}(K)<1$.
        \end{itemize}
        Let us now proceed with Part b) and analyse the case $\alpha>1$.
        \begin{itemize}
            \item \textbf{``If'' direction $\Longleftarrow$:} Suppose that there exist two rows in $K$ with different support. In other words, there exists $x, x^\prime\in\X$ and $y\in\Y$ such that $K(y|x^\prime)=0$ but $K(y|x)>0$. Letting $\nu = \delta_x$ and $\mu_\varepsilon = (1-\varepsilon)\delta_{x^\prime}+\varepsilon\delta_x$ where $\varepsilon\in(0,1)$, we find that $\renyiDiv{\nu}{\mu_\varepsilon}=-\log\varepsilon$ and
            \begin{align*}
                e^{(\alpha-1)\renyiDiv{\nu K}{\mu_\varepsilon K}} &= \sum_{y^\prime\in\Y}K (y^\prime|x)^\alpha\big((1-\varepsilon)K(y^\prime|x^\prime) +\varepsilon K(y^\prime|x)\big)^{1-\alpha}\\
                &\geq K (y|x)^\alpha\big((1-\varepsilon)\cdot 0 +\varepsilon K(y|x)\big)^{1-\alpha}\\
                &= \varepsilon^{1-\alpha}K(y|x).
            \end{align*}
            Consequently for all $\varepsilon\in(0,1)$,
            \begin{equation*}
                \eta_\alpha(K) \geq \frac{\renyiDiv{\nu K}{\mu_\varepsilon K}}{\renyiDiv{\nu}{\mu_\varepsilon}} \geq 1 + \frac{\log K(y|x)}{(\alpha-1)\log\frac1\varepsilon} \overset{\varepsilon\to 0}{\longrightarrow} 1,
            \end{equation*}
            and since $\eta_\alpha(K)\leq 1$ always, $\eta_\alpha(K)=1$.
            \item \textbf{``Only if'' direction $\Longrightarrow$:} We prove the contrapositive. Assume that for all $x, x^\prime\in\X$, $\supp\big(K(\cdot|x)\big)=\supp\big(K(\cdot|x^\prime)\big)\equiv \calS$. Let us first consider the limiting regime in which $\renyiDiv{\nu}{\mu}\to\infty$. Since all rows of $K$ have common support, we have
            \begin{align*}
                \renyiDiv{\nu K}{\mu K}&\leq D_\infty(\nu K\|\mu K) \\
                &= \log\max_{y\in\supp(\mu K)}\frac{\nu K(y)}{\mu K (y)}\\
                &\leq \log\max_{y\in\calS}\frac{\max_{x\in\X}K(y|x)}{\min_{x^\prime\in\X}K(y|x^\prime)} \\
                &<\infty.
            \end{align*}
            Hence, we would obtain $\frac{\renyiDiv{\nu K}{\mu K}}{\renyiDiv{\nu}{\mu}} \to 0$, which is obviously strictly smaller than 1.

            A second limiting regime is when $\renyiDiv{\nu}{\mu}\to0$, which is known to yield the local contraction $\eta_{\chi^2}(K)$~\cite[Theorem 2]{jin2024properties}. Since we assume all conditionals of $K$ have identical support, they are in particular not disjoint, so that $\eta_{\chi^2}(K)<1$. With the limiting cases out of the way, assume we have $0<\renyiDiv{\nu}{\mu}<\infty$, and let $f=\rnd$. One obtains
            \begin{align*}
                e^{(\alpha-1)\renyiDiv{\nu K}{\mu K}} &= \left\langle\allones, (\dualK f)^\alpha\right\rangle_{\mu K}&\left(\substack{\text{By~\cref{eq:dual_kernel_formula}}}\right)&\\
                &< \left\langle\allones, \dualK( f^\alpha)\right\rangle_{\mu K}&\left(\substack{\text{Jensen's}\\\text{inequality}}\right)&\\
                &=\left\langle K(\allones),  f^\alpha\right\rangle_{\mu}&\left(\substack{\text{By~\cref{eq:adjoint_relation}}}\right)&\\
                &=\left\langle \allones,  f^\alpha\right\rangle_{\mu}\\
                &= e^{(\alpha-1)\renyiDiv{\nu}{\mu}},
            \end{align*}
            where Jensen's inequality is strict because $\nu\neq\mu$ implies that $f$ is non-constant on $\supp(\mu)$, and the assumption that all conditionals of $K$ have the same support ensures that $K_\mu^\star(\cdot|y)$ has support $\supp(\mu)$ for every $y\in\calS$.

            The two limiting arguments above rule out contraction ratios approaching one when $D_\alpha(\nu\|\mu)$ tends to zero or infinity. In the remaining regime, compactness of $\calP(\X)^2$ yields a limiting pair, at which the strict Jensen inequality above again prevents the contraction ratio from approaching one. Hence, $\eta_\alpha(K)<1$.

            Finally, let us prove Part b) for $\alpha=\infty$. In this case, we find from~\cref{prop:infty_renyi_sdpi_closed_form} that $\eta_\infty(K)=1$ is equivalent to the statement
            \begin{equation*}
                \exists x, x^\prime\in\X, \exists y\in\Y\text{ such that } K(y|x)>0\text{ and }K(y|x^\prime)=0.
            \end{equation*}
            In other terms, this is equivalent to the existence of $x, x^\prime\in\X$ such that $\supp\big(K(\cdot|x)\big)\neq\supp\big(K(\cdot|x^\prime)\big)$.
        \end{itemize}
    \subsection{Proof of~\cref{prop:renyi_samorodnitsky}}\label{appendix:proof_renyi_samorodnitsky}\noindent
        Before proceeding with the proof, we establish a useful lemma:
        \begin{lemma}\label{lemma:renyi_sdpi_side_information}
            Let $\alpha\in(1,\infty]$, $K\in\calP(\Y|\X)$, and $\nu,\gamma\in\calP(\Z\times\X)$ such that $\nu\ll\gamma$. Denote by $\nu_{\Z}$ and $\gamma_{\Z}$ their $\Z$-marginals. Then
            \begin{equation}\label{eq:renyi_sdpi_side_information}
                D_\alpha\big(\nu(\Id_{\Z}\otimes K)\|\gamma(\Id_{\Z}\otimes K)\big) \leq \eta_\alpha(K)D_\alpha(\nu\|\gamma) + \big(1-\eta_\alpha(K)\big) D_\alpha(\nu_{\Z}\|\gamma_{\Z}).
            \end{equation}
            If the reference measure takes the form $\gamma=\rho\otimes\mu$ for some $\rho\in\calP(\Z)$ and $\mu\in\calP(\X)$, then the same inequality holds but with $\eta_\alpha(\mu,K)$ replacing $\eta_\alpha(K)$.
        \end{lemma}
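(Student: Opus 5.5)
The plan is to work with Hellinger integrals $H_\alpha\triangleq e^{(\alpha-1)D_\alpha}$ for finite $\alpha$, and to decompose along the $\Z$-coordinate. Write $\nu=\nu_\Z\otimes\nu_{\X|z}$ and $\gamma=\gamma_\Z\otimes\gamma_{\X|z}$ via disintegration; since $\nu\ll\gamma$, we have $\nu_\Z\ll\gamma_\Z$ and $\nu_{\X|z}\ll\gamma_{\X|z}$ for $\nu_\Z$-a.e.\ $z$. Applying $\Id_\Z\otimes K$ acts only on the conditionals, so
\begin{equation*}
H_\alpha\big(\nu(\Id_\Z\otimes K)\|\gamma(\Id_\Z\otimes K)\big)=\sum_{z}\gamma_\Z(z)\Big(\tfrac{\nu_\Z(z)}{\gamma_\Z(z)}\Big)^\alpha H_\alpha\big(\nu_{\X|z}K\|\gamma_{\X|z}K\big).
\end{equation*}
For each $z$, the definition of $\eta_\alpha(K)$ (or of $\eta_\alpha(\gamma_{\X|z},K)$) gives $H_\alpha(\nu_{\X|z}K\|\gamma_{\X|z}K)\le H_\alpha(\nu_{\X|z}\|\gamma_{\X|z})^{\eta}$, with $\eta=\eta_\alpha(K)$. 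This also holds trivially when the two conditionals coincide, and it is consistent with the functional form \cref{eq:renyi_sdpi_functional_inequality}.

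Set weights $w_z\triangleq\gamma_\Z(z)(\nu_\Z(z)/\gamma_\Z(z))^\alpha$. These satisfy $\sum_z w_z=H_\alpha(\nu_\Z\|\gamma_\Z)$ and $\sum_z w_zH_\alpha(\nu_{\X|z}\|\gamma_{\X|z})=H_\alpha(\nu\|\gamma)$ (the chain rule for the Hellinger integral). Normalising the weights to a probability vector and applying Jensen's inequality to the concave map $t\mapsto t^\eta$, we get
\begin{equation*}
\sum_z w_z H_\alpha(\nu_{\X|z}\|\gamma_{\X|z})^\eta\le H_\alpha(\nu_\Z\|\gamma_\Z)^{1-\eta}H_\alpha(\nu\|\gamma)^{\eta}.
\end{equation*}
Taking $\frac1{\alpha-1}\log$ yields \cref{eq:renyi_sdpi_side_information} exactly.

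The case $\alpha=\infty$ follows the same pattern. Here $\max_{z,y}$ replaces the sums, and $\|\cdot\|_{L^\infty}^{\eta}$ bounds on each conditional combine multiplicatively: $\frac{\nu_\Z(z)}{\gamma_\Z(z)}\cdot\|f_z\|^\eta\le \big(\frac{\nu_\Z(z)}{\gamma_\Z(z)}\big)^{1-\eta}\big(\frac{\nu_\Z(z)}{\gamma_\Z(z)}\|f_z\|\big)^\eta$. Bounding each factor by its maximum gives the claim. Alternatively, one can pass to the limit $\alpha\to\infty$ using \cref{corollary:distribution_independent_order_properties}, \cref{corollary:continuity_in_order} and the convergence of Rényi divergences.

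For the product-reference variant $\gamma=\rho\otimes\mu$, every conditional $\gamma_{\X|z}$ equals $\mu$. The per-$z$ bound therefore uses $\eta_\alpha(\mu,K)$, and the Jensen step is unchanged. The main technical care lies in the per-$z$ step: one must handle $z$ with $\nu_\Z(z)=0$ (these contribute zero), and note that the SDPI inequality in Hellinger-integral form holds also when the ratio is undefined ($D=0$). I expect no deeper obstacle beyond checking the chain-rule identity for the Hellinger integral.
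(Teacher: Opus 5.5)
Your proposal is correct and follows essentially the same route as the paper: disintegrate along $\Z$, apply the per-$z$ SDPI in Hellinger-integral form, then combine the terms. The paper does the combining step with H\"older's inequality with exponents $1/\eta$ and $1/(1-\eta)$; your Jensen step for $t\mapsto t^\eta$ with weights normalised by $H_\alpha(\nu_\Z\|\gamma_\Z)$ is the same inequality, and your treatment of $\alpha=\infty$ by splitting the maximum is identical to the paper's.
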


        \begin{proof}
            We have
            \begin{equation*}
                \nu_{\Z}(z)\triangleq\sum_{x\in\X}\nu(z,x),\qquad \gamma_{\Z}(z)\triangleq\sum_{x\in\X}\gamma(z,x),
            \end{equation*}
            and since $\nu\ll\gamma$, for every $z\in\supp(\nu_{\Z})$ we can define
            \begin{equation*}
                \nu(x|z)\triangleq\frac{\nu(z,x)}{\nu_{\Z}(z)},\qquad \gamma(x|z)\triangleq\frac{\gamma(z,x)}{\gamma_{\Z}(z)}.
            \end{equation*}
            Write $\eta$ for the applicable coefficient on the right-hand side of~\cref{eq:renyi_sdpi_side_information}. In the $\gamma=\rho\otimes\mu$ case, $\gamma(\cdot|z)=\mu$ for any $z\in\supp(\nu_\Z)$, so the SDPI holds with $\eta=\eta_\alpha(\mu,K)$. In general, it holds with $\eta=\eta_\alpha(K)$. The cases $\eta\in\{0,1\}$ follow directly, so assume $\eta\in(0,1)$.

            For $1<\alpha<\infty$, the definition of R\'enyi-SDPI and H\"older's inequality give
            \begin{align*}
                e^{(\alpha-1)D_\alpha\left(\nu(\Id_{\Z}\otimes K)\middle\|\gamma(\Id_{\Z}\otimes K)\right)}&=\sum_{z\in\supp(\nu_{\Z})}\gamma_{\Z}(z)\left(\frac{\nu_{\Z}(z)}{\gamma_{\Z}(z)}\right)^\alpha e^{(\alpha-1)D_\alpha\big(\nu(\cdot|z)K\|\gamma(\cdot|z)K\big)}\\
                &\leq\sum_{z\in\supp(\nu_{\Z})}\left[\gamma_{\Z}(z)\left(\frac{\nu_{\Z}(z)}{\gamma_{\Z}(z)}\right)^\alpha e^{(\alpha-1)D_\alpha\big(\nu(\cdot|z)\|\gamma(\cdot|z)\big)}\right]^\eta \left[\gamma_{\Z}(z)\left(\frac{\nu_{\Z}(z)}{\gamma_{\Z}(z)}\right)^\alpha\right]^{1-\eta}\\
                &\leq\left[\sum_{z\in\supp(\nu_{\Z})}\gamma_{\Z}(z)\left(\frac{\nu_{\Z}(z)}{\gamma_{\Z}(z)}\right)^\alpha e^{(\alpha-1)D_\alpha\left(\nu(\cdot|z)\middle\|\gamma(\cdot|z)\right)}\right]^\eta \left[\sum_{z\in\supp(\nu_{\Z})}\gamma_{\Z}(z)\left(\frac{\nu_{\Z}(z)}{\gamma_{\Z}(z)}\right)^\alpha\right]^{1-\eta}.
            \end{align*}
            Taking logarithms and dividing by $(\alpha-1)$ proves the claim for finite $\alpha$.

            For $\alpha=\infty$, only the definition of the R\'enyi-SDPI is needed, which leads to
            \begin{align*}
                e^{D_\infty\big(\nu(\Id_{\Z}\otimes K)\|\gamma(\Id_{\Z}\otimes K)\big)}&=\max_{z\in\supp(\nu_{\Z})}\frac{\nu_{\Z}(z)}{\gamma_{\Z}(z)}e^{D_\infty\big(\nu(\cdot|z)K\|\gamma(\cdot|z)K\big)}\\
                &\leq\left[\max_{z\in\supp(\nu_{\Z})}\frac{\nu_{\Z}(z)}{\gamma_{\Z}(z)}e^{D_\infty\big(\nu(\cdot|z)\|\gamma(\cdot|z)\big)}\right]^\eta\left[\max_{z\in\supp(\nu_{\Z})}\frac{\nu_{\Z}(z)}{\gamma_{\Z}(z)}\right]^{1-\eta}.
            \end{align*}
            Taking logarithms concludes the proof.
        \end{proof}

        \begin{proof}[Proof of~\cref{prop:renyi_samorodnitsky}]
            For brevity, write $\eta_i\triangleq\eta_\alpha(\mu_i,K_i)$ for $i\in[n]$. We start with the proof of Part a). Let $\rho\in\calP(\Z)$ and let $\nu\ll\rho\otimes\mu_1\otimes\cdots\otimes\mu_n$ be a probability measure on $\Z\times\X_1\times\cdots\times\X_n$. For $T\subseteq[n]$, let $\nu_{\Z,T}$ denote its marginal on $\Z\times\prod_{i\in T}\X_i$, and set $\nu_{\Z,\varnothing}\triangleq\nu_{\Z}$. Moreover, set $\mu_T\triangleq\bigotimes_{i\in T}\mu_i$, with the convention that $\mu_\varnothing$ denotes the (unique) probability measure on a singleton. We claim that
            \begin{equation}\label{eq:renyi_samorodnitsky_auxiliary}
                D_\alpha\big(\nu(\Id_{\Z}\otimes K_1\otimes\cdots\otimes K_n)\|\rho\otimes\mu_1K_1\otimes\cdots\otimes\mu_nK_n\big) \leq \sum_{T\subseteq[n]}\left(\prod_{i\in T}\eta_i\prod_{i\in[n]\setminus T}(1-\eta_i)\right)D_\alpha\left(\nu_{\Z,T}\middle\|\rho\otimes\mu_T\right),
            \end{equation}
            and prove it by induction on $n$.

            For $n=1$, this reduces to the version of~\cref{lemma:renyi_sdpi_side_information} with the reference measure being a product distribution. Suppose that the claim holds for $n-1$. Since the component kernels act on distinct coordinates, the overall Markov kernel may be applied in two stages:
            \begin{equation*}
            \begin{aligned}
                \nu(\Id_{\Z}\otimes K_1\otimes\cdots\otimes K_n)=\Bigl[\nu\bigl(\Id_{\Z\times\prod_{i=1}^{n-1}\X_i}\otimes K_n\bigr)\Bigr]\bigl(\Id_{\Z}\otimes K_1\otimes\cdots\otimes K_{n-1}\otimes\Id_{\Y_n}\bigr).
            \end{aligned}
            \end{equation*}
            After the first stage, which consists in only applying $K_n$ to the $n$-th coordinate of $\nu$, the resulting measure remains absolutely continuous with respect to
            \begin{equation*}
                \rho\otimes\mu_1\otimes\cdots\otimes\mu_{n-1}\otimes\mu_nK_n.
            \end{equation*}
            We may therefore apply the induction hypothesis to $K_1,\ldots,K_{n-1}$, treating $\Z\times\Y_n$ as the auxiliary space, with reference measure $\rho\otimes\mu_nK_n$. For every $T\subseteq[n-1]$, the marginal appearing on the right-hand side of the induction hypothesis is $\nu_{\Z,T\cup\{n\}}\bigl(\Id_{\Z\times\prod_{i\in T}\X_i}\otimes K_n\bigr)$, with reference measure $\rho\otimes\mu_T\otimes\mu_nK_n$. Applying the induction hypothesis, and then applying~\cref{lemma:renyi_sdpi_side_information} to each resulting term with auxiliary space $\Z\times\prod_{i\in T}\X_i$ and auxiliary reference $\rho\otimes\mu_T$, gives
            \begin{align*}
                D_\alpha\big(\nu(\Id_{\Z}&\otimes K_1\otimes\cdots\otimes K_n)\|\rho\otimes\mu_1K_1\otimes\cdots\otimes\mu_nK_n\big)\\
                &\leq\sum_{T\subseteq[n-1]}\left(\prod_{i\in T}\eta_i\prod_{i\in[n-1]\setminus T}(1-\eta_i)\right)D_\alpha\left(\nu_{\Z,T\cup\{n\}}(\Id\otimes K_n)\|\rho\otimes\mu_T\otimes\mu_nK_n\right)\\
                &\leq\sum_{T\subseteq[n-1]}\left(\prod_{i\in T}\eta_i\prod_{i\in[n-1]\setminus T}(1-\eta_i)\right)\!\!\bigg[\eta_nD_\alpha\left(\nu_{\Z,T\cup\{n\}}\|\rho\otimes\mu_T\otimes\mu_n\right)+(1-\eta_n)D_\alpha\left(\nu_{\Z,T}\|\rho\otimes\mu_T\right)\bigg]\\
                &=\sum_{S\subseteq[n]}\left(\prod_{i\in S}\eta_i\prod_{i\in[n]\setminus S}(1-\eta_i)\right)D_\alpha\left(\nu_{\Z,S}\|\rho\otimes\mu_S\right).
            \end{align*}
            In the first inequality, $\Id$ acts on $\Z$ and the coordinates in $T$. In the last equality, the terms multiplied by $\eta_n$ correspond to the subsets $S\subseteq[n]$ containing $n$, whereas those multiplied by $1-\eta_n$ correspond to the subsets not containing $n$. This proves~\cref{eq:renyi_samorodnitsky_auxiliary}.

            To conclude, it remains to take $\Z$ to be a singleton, $\rho$ its (unique) probability measure, and $\nu$ the probability measure in the proposition. Then $\nu_{\Z,T}=\nu_T$, and \cref{eq:renyi_samorodnitsky_auxiliary} becomes
            \begin{equation*}
                D_\alpha\big(\nu(K_1\otimes\cdots\otimes K_n)\|\mu_1K_1\otimes\cdots\otimes\mu_nK_n\big)\leq\sum_{T\subseteq[n]}\left(\prod_{i\in T}\eta_i\prod_{i\in[n]\setminus T}(1-\eta_i)\right)D_\alpha(\nu_T\|\mu_T),
            \end{equation*}
            as desired.

            Let us now prove Part b). For $\alpha\in(1,\infty]$, consider arbitrary auxiliary measures $\nu,\gamma\in\calP(\Z\times\X_1\times\cdots\times\X_n)$ with $\nu\ll\gamma$, and let $\nu_{\Z,T}$ and $\gamma_{\Z,T}$ denote their respective marginals on $\Z\times\prod_{i\in T}\X_i$. The same induction proves
            \begin{equation*}
                D_\alpha\big(\nu(\Id_{\Z}\otimes K_1\otimes\cdots\otimes K_n)\|\gamma(\Id_{\Z}\otimes K_1\otimes\cdots\otimes K_n)\big)\leq\sum_{T\subseteq[n]}\left(\prod_{i\in T}\eta_\alpha(K_i)\prod_{i\in[n]\setminus T}\bigl(1-\eta_\alpha(K_i)\bigr)\right)D_\alpha(\nu_{\Z,T}\|\gamma_{\Z,T}).
            \end{equation*}
            Indeed, at the induction step both measures are passed through $K_n$, and the general part of~\cref{lemma:renyi_sdpi_side_information} is applied with the conditional reference on the $n$-th coordinate. Taking $\Z$ to be a singleton gives the result.

            For $\alpha\in(0,1)$, consider arbitrary $\nu,\gamma\in\calP(\X_1\times\cdots\times\X_n)$. No absolute-continuity assumption is needed in this case, and the desired inequality follows instead from the Samorodnitsky-type inequality for $\varphi$-Divergences. Indeed, the Hellinger Divergence is generated by an operator convex function when $\alpha\in(0,1)$, so that by~\cite[Corollary III.1]{raginsky2016strong} and~\cref{prop:distribution_independent_sdpi_equal_chi_square_sdpi}
            \begin{equation*}
                \eta_{\calH_\alpha}(K_i)=\eta_{\chi^2}(K_i)=\eta_\alpha(K_i),\qquad i\in[n]
            \end{equation*}
            and by~\cite[Theorem 7]{makur2020comparison},
            \begin{equation}\label{eq:hellinger_samorodnitsky}
                \calH_\alpha\big(\nu(K_1\otimes\cdots\otimes K_n)\|\gamma(K_1\otimes\cdots\otimes K_n)\big)\leq \sum_{T\subseteq[n]}\left(\prod_{i\in T}\eta_{\calH_\alpha}(K_i)\prod_{i\in[n]\setminus T}\bigl(1-\eta_{\calH_\alpha}(K_i)\bigr)\right)\calH_\alpha(\nu_{T}\|\gamma_{T}).
            \end{equation}
            Writing $w_T$ for the coefficient of the $T$-th term in the preceding subset sum and using $\sum_{T\subseteq[n]}w_T=1$, we have
            \begin{align*}
                e^{(\alpha-1)D_\alpha\big(\nu( K_1\otimes\cdots\otimes K_n)\|\gamma(K_1\otimes\cdots\otimes K_n)\big)} &= 1+(\alpha-1)\calH_\alpha\big(\nu( K_1\otimes\cdots\otimes K_n)\|\gamma(K_1\otimes\cdots\otimes K_n)\big)\\
                &\geq 1+(\alpha-1)\sum_{T\subseteq[n]}w_T\calH_\alpha(\nu_{T}\|\gamma_{T})&\left(\substack{\text{By~\cref{eq:hellinger_samorodnitsky}}}\right)&\\
                &=\sum_{T\subseteq[n]}w_T\left(1+(\alpha-1)\calH_\alpha(\nu_{T}\|\gamma_{T})\right)\\
                &=\sum_{T\subseteq[n]}w_T\exp\left((\alpha-1)D_\alpha(\nu_{T}\|\gamma_{T})\right)\\
                &\geq\exp\left((\alpha-1)\sum_{T\subseteq[n]}w_TD_\alpha(\nu_{T}\|\gamma_{T})\right).&\left(\substack{\text{Jensen's}\\\text{inequality}}\right)&
            \end{align*}
            Taking logarithms and dividing by $\alpha-1<0$ concludes the proof.
        \end{proof}
    \subsection{Proof of~\cref{prop:tensorisation_of_sdpi_constant}}\label{appendix:proof_tensorisation_of_sdpi_constant}\noindent
        We start with a proof of Part a). Consider $\mu=\mu_1\otimes\cdots\otimes\mu_n$ and $K=K_1\otimes \cdots\otimes K_n$, where $(\mu_i, K_i)\in\calP(\X_i)\times\calP(\Y_i|\X_i)$.

        First, we establish the lower bounds. For any $i\in[n]$, we have
        \begin{align*}
            \eta_\alpha(\mu, K) &= \sup_{\substack{\nu\in\calP(\X_1\times\cdots\times \X_n):\\0<\renyiDiv{\nu}{\mu}<\infty}} \frac{D_\alpha\big(\nu (K_1\otimes\cdots\otimes K_n)\|\mu_1 K_1\otimes\cdots\otimes\mu_n K_n\big)}{D_\alpha(\nu\|\mu_1\otimes\cdots\otimes\mu_n)}\\
            &\geq \sup_{\substack{\nu_i\in\calP(\X_i):\\0<D_\alpha(\nu_i\|\mu_i)<\infty}}\frac{D_\alpha\big(\mu_1K_1\otimes\cdots\otimes\mu_{i-1}K_{i-1}\otimes\nu_iK_i\otimes\mu_{i+1}K_{i+1}\otimes\cdots\otimes\mu_n K_n\|\mu_1 K_1\otimes\cdots\otimes\mu_n K_n\big)}{D_\alpha(\mu_1\otimes\cdots\otimes\mu_{i-1}\otimes\nu_i\otimes\mu_{i+1}\otimes\cdots\otimes\mu_n\|\mu_1\otimes\cdots\otimes\mu_n)}\\
            &= \sup_{\substack{\nu_i\in\calP(\X_i):\\0<D_\alpha(\nu_i\|\mu_i)<\infty}}\frac{D_\alpha\big(\nu_i K_i\|\mu_i K_i) + \sum_{j\neq i}D_\alpha(\mu_j K_j\|\mu_j K_j)}{D_\alpha(\nu_i\|\mu_i) + \sum_{j\neq i}D_\alpha(\mu_j\|\mu_j)}\\
            &= \eta_\alpha(\mu_i, K_i).
        \end{align*}
        Since it holds for every $i$, we obtain $\eta_\alpha(\mu, K) \geq\max_{i\in[n]}\eta_\alpha(\mu_i, K_i)$. Moreover, this naturally leads to the bound for the distribution-independent SDPI constant:
        \begin{align*}
            \eta_\alpha(K) &= \sup_{\gamma\in\calP(\X_1\times\cdots\times\X_n)} \eta_\alpha(\gamma, K) \\
            &\geq\sup_{\gamma_1\otimes\cdots\otimes\gamma_n\in\calP(\X_1\times\cdots\times\X_n)} \eta_\alpha(\gamma_1\otimes\cdots\otimes\gamma_n, K_1\otimes\cdots\otimes K_n) \\
            &\geq \sup_{\gamma_1\otimes\cdots\otimes\gamma_n\in\calP(\X_1\times\cdots\times\X_n)} \max_{i\in[n]}\eta_\alpha(\gamma_i, K_i) \\
            &= \max_{i\in[n]}\eta_\alpha(K_i).
        \end{align*}
        Let us move on to the upper bounds, which follow directly from~\cref{prop:renyi_samorodnitsky}. Indeed, consider any $\nu\in\calP(\X_1\times\cdots\times\X_n)$ with $\nu\ll\mu$. Since marginalisation is a Markov kernel, the DPI yields
        \begin{equation*}
            D_\alpha(\nu_T\|\mu_T)\leq D_\alpha(\nu\|\mu),
        \end{equation*}
        for any $T\subseteq[n]$. Using this in~\cref{eq:dd_renyi_samorodnitsky_tensorisation} with the additional fact that $D_\alpha(\nu_\varnothing\|\mu_\varnothing)=0$, we obtain
        \begin{align*}
            D_\alpha(\nu K\|\mu K) &\leq \sum_{T\subseteq[n]}\left(\prod_{i\in T}\eta_\alpha(\mu_i, K_i)\prod_{j\in[n]\setminus T}\big(1-\eta_\alpha(\mu_j, K_j)\big)\right)D_\alpha(\nu_T\|\mu_T)\\
            &\leq \sum_{\substack{T\subseteq[n]:\\T\neq\varnothing}}\left(\prod_{i\in T}\eta_\alpha(\mu_i, K_i)\prod_{j\in[n]\setminus T}\big(1-\eta_\alpha(\mu_j, K_j)\big)\right)D_\alpha(\nu\|\mu)\\
            &= D_\alpha(\nu\|\mu)\left[1-\prod_{i=1}^n\bigl(1-\eta_\alpha(\mu_i, K_i)\bigr)\right]
        \end{align*}
        Dividing by $D_\alpha(\nu\|\mu)$ and taking supremum over all admissible $\nu$ gives $\eta_\alpha(\mu, K)\leq 1-\prod_{i=1}^n\bigl(1-\eta_\alpha(\mu_i, K_i)\bigr)$, as desired.

        For Part b), using the distribution-independent inequality from~\cref{prop:renyi_samorodnitsky} displayed in~\cref{eq:di_renyi_samorodnitsky_tensorisation}, the same reasoning yields the bound on $\eta_\alpha(K)$.

\section{Proofs for~\cref{sec:bounds_on_contraction_coefficient}}\noindent
    \subsection{Proof of~\cref{lemma:bound_between_renyi_chi_squared}}\label{appendix:proof_bound_between_renyi_chi_squared}\noindent
        We start with a weaker version of the result to illustrate the approach. By Taylor's Theorem with the Lagrange form of the remainder, for any $t\geq 0$, there exists some $c$ lying between $1$ and $t$ such that
        \begin{equation*}
            t^\alpha = 1 + \alpha\big(t - 1\big) + \frac{\alpha(\alpha-1)}{2}c^{\alpha-2}\big(t - 1\big)^2.
        \end{equation*}
        Selecting $t=\frac{\dd\nu}{\dd\mu}(x)$ and integrating w.r.t. $\mu$ yields
        \begin{equation*}
            \int \left(\frac{\dd\nu}{\dd\mu}\right)^\alpha\dd\mu = 1 + \frac{\alpha(\alpha-1)}{2}\int_\X c(x)^{\alpha-2}\left(\frac{\dd\nu}{\dd\mu}(x) - 1\right)^2\dd\mu(x)
        \end{equation*}
        for some $c(x)$ lying between $1$ and $\frac{\dd\nu}{\dd\mu}(x)$. Since the integration is over $\mu$, we can use the bound $c(x)\leq\|\frac{\dd\nu}{\dd\mu}\|_{L^\infty(\mu)}=M$ and the fact that $x\mapsto x^{\alpha-2}$ is non-decreasing when $\alpha\geq 2$ to get
        \begin{equation*}
            \int \left(\frac{\dd\nu}{\dd\mu}\right)^\alpha\dd\mu \leq 1 + \frac{\alpha(\alpha-1)}{2}M^{\alpha-2}\chi^2(\nu\|\mu),
        \end{equation*}
        or equivalently
        \begin{equation}\label{eq:pinsker_type_inequality_weak}
            D_\alpha(\nu\|\mu)\leq \frac1{\alpha-1}\log\left(1 + \frac{\alpha(\alpha-1)}{2}M^{\alpha-2}\chi^2(\nu\|\mu)\right)
        \end{equation}
        To get the stronger version of the result, we start similarly with Taylor's Theorem but this time with the integral remainder:
        \begin{equation*}
            t^\alpha = 1 + \alpha\big(t - 1\big) + \alpha(\alpha-1)\int_1^t (t-s)s^{\alpha-2}\dd s
        \end{equation*}
        Rearranging, this can be written as $C_\alpha(t) = \frac{\alpha(\alpha-1)}{(t-1)^2}\int_1^t s^{\alpha-2}(t-s)\dd s$. Applying the substitution $s= 1+u(t-1)$ for $u\in[0,1]$ yields
        \begin{equation}
            C_\alpha(t) = \alpha(\alpha-1)\int_0^1 (1-u)\big(1+u(t-1)\big)^{\alpha-2}\dd u.
        \end{equation}
        Since $\alpha\geq 2$, the term $(1+u(t-1))^{\alpha-2}$ is non-decreasing in $t$ for any fixed $u\in[0,1]$, and consequently $C_\alpha(t)$ is non-decreasing since it is the integral of non-decreasing functions. From this monotonicity, we directly obtain that $C_\alpha(\frac{\dd\nu}{\dd\mu}(x))\leq C_\alpha(M)$ for $\mu$-a.e. $x$, which equivalently reads
        \begin{equation*}
            \left(\frac{\dd\nu}{\dd\mu}(x)\right)^\alpha \leq 1+\alpha\left(\frac{\dd\nu}{\dd\mu}(x) -1\right) + C_\alpha(M)\left(\frac{\dd\nu}{\dd\mu}(x) - 1\right)^2
        \end{equation*}
        After integration w.r.t. $\mu$ and rearranging, we obtain the desired result:
        \begin{equation*}
            D_\alpha(\nu\|\mu) \leq \frac1{\alpha-1}\log\left(1+C_\alpha\left(M\right)\chi^2(\nu\|\mu)\right).
        \end{equation*}
        Notice it is stronger than~\cref{eq:pinsker_type_inequality_weak} because
        \begin{align*}
            C_\alpha(M) &= \alpha(\alpha-1)\int_0^1 (1-u)\big(1+u(M-1)\big)^{\alpha-2}\dd u\\
            &\leq \alpha(\alpha-1)\int_0^1 (1-u)M^{\alpha-2}\dd u\\
            &= \frac{\alpha(\alpha-1)}{2}M^{\alpha-2}.
        \end{align*}
    \subsection{Proof of~\cref{prop:ub_on_eta_alpha_by_eta_chi_squared}}\label{appendix:proof_ub_on_eta_alpha_by_eta_chi_squared}\noindent
        To lower bound the R\'enyi Divergence for $\alpha>0$, we use Pinsker's and Gilardoni's inequality~\cite{gilardoni2010pinsker}
        \begin{equation}
            D_\alpha(\nu\|\mu)\geq \min(1, \alpha)\cdot2\|\nu-\mu\|_{\sf TV}^2.\label{eq:gilardoni_inequality}
        \end{equation}
        Combined with~\cite[Equation 72]{makur2020comparison}, which states that
        \begin{equation*}
            \chi^2(\nu\|\mu)\leq \frac{2\|\nu-\mu\|_{\sf TV}^2}{\min_{x\in\supp(\mu)}\mu(x)},
        \end{equation*}
        we find
        \begin{equation}\label{eq:pinsker_type_inequality_renyi_chi_squared}
            D_\alpha(\nu\|\mu)\geq \min(1, \alpha)\cdot\chi^2(\nu\|\mu)\min_{x\in\supp(\mu)}\mu(x).
        \end{equation}
        For the upper bound we use~\cref{lemma:bound_between_renyi_chi_squared} and~\cref{eq:renyi_ub_by_chi_squared} together with $\log(1+x)\leq x$ for $x\geq 0$ to get
        \begin{equation}\label{eq:reverse_pinsker_type_inequality_renyi_chi_squared}
            D_\alpha(\nu\|\mu) \leq \begin{cases}
                \chi^2(\nu\|\mu),\quad\text{if }\alpha\in[0, 2]\\
                C_\alpha\Big(\frac1{\min_{x\in\supp(\mu)}\mu(x)}\Big)\frac{\chi^2(\nu\|\mu)}{\alpha-1}, \quad \text{if }\alpha>2.
            \end{cases}
        \end{equation}
        Both~\cref{eq:pinsker_type_inequality_renyi_chi_squared,eq:reverse_pinsker_type_inequality_renyi_chi_squared} hold whenever $\nu\ll\mu$. For $\alpha\geq1$, they can therefore be used to obtain
        \begin{align*}
            \eta_\alpha(\mu, K) &= \sup_{\substack{\nu\in\calP(\X):\\0<D_\alpha(\nu\|\mu)<\infty}}\frac{D_\alpha(\nu K\|\mu K)}{D_\alpha(\nu\|\mu)}\\
            &=\sup_{\substack{\nu\in\calP(\X):\\\nu\ll\mu\\\nu\neq\mu}}\frac{D_\alpha(\nu K\|\mu K)}{D_\alpha(\nu\|\mu)}\\
            &\leq \begin{cases}
                \sup_{\substack{\nu\in\calP(\X):\\\nu\ll\mu\\\nu\neq\mu}} \frac{\chi^2(\nu K\|\mu K)}{\chi^2(\nu\|\mu)\min_{x\in\supp(\mu)}\mu(x)},\quad \text{if }\alpha\in[1,2]\\
                \sup_{\substack{\nu\in\calP(\X):\\\nu\ll\mu\\\nu\neq\mu}} \frac{\frac{1}{\alpha-1}\chi^2(\nu K\|\mu K)C_\alpha\left(\frac1{\min_{y\in\supp(\mu K)}\mu K(y)}\right)}{\chi^2(\nu\|\mu)\min_{x\in\supp(\mu)}\mu(x)}, \quad\text{if }\alpha>2
            \end{cases} \\
            &= \begin{cases}
                \frac{\eta_{\chi^2}(\mu, K)}{\min_{x\in\supp(\mu)}\mu(x)},\quad \text{if }\alpha\in[1,2]\\
                C_\alpha\left(\frac1{\min_{y\in\supp(\mu K)}\mu K(y)}\right) \frac{\eta_{\chi^2}(\mu, K)}{(\alpha-1)\min_{x\in\supp(\mu)}\mu(x)}, \quad\text{if }\alpha>2.
            \end{cases}
        \end{align*}
        The case $\alpha\in(0,1)$ is more delicate because the supremum when computing $\eta_\alpha(\mu, K)$ is over
        \begin{equation*}
            \{\nu\in\calP(\X): 0<D_\alpha(\nu\|\mu)<\infty\}=\{\nu\in\calP(\X): \supp(\nu)\cap\supp(\mu)\neq\emptyset \text{ and } \nu\neq\mu\},
        \end{equation*}
        which does not necessarily coincide with the optimisation space $\{\nu\in\calP(\X): \nu\ll\mu \text{ and } \nu\neq\mu\}$ in the definition of $\eta_{\chi^2}(\mu, K)$. However, requiring $\mu$ to be full-support ensures
        \begin{equation*}
            \{\nu\in\calP(\X): \supp(\nu)\cap\supp(\mu)\neq\emptyset \text{ and } \nu\neq\mu\} = \{\nu\in\calP(\X): \nu\ll\mu \text{ and } \nu\neq\mu\},
        \end{equation*}
        so that by~\cref{eq:pinsker_type_inequality_renyi_chi_squared,eq:reverse_pinsker_type_inequality_renyi_chi_squared},
        \begin{align*}
            \eta_\alpha(\mu, K) &= \sup_{\substack{\nu\in\calP(\X):\\\supp(\nu)\cap\supp(\mu)\neq\emptyset\\\nu\neq\mu}}\frac{D_\alpha(\nu K\|\mu K)}{D_\alpha(\nu\|\mu)}\\
            &=\sup_{\substack{\nu\in\calP(\X):\\\nu\ll\mu\\\nu\neq\mu}}\frac{D_\alpha(\nu K\|\mu K)}{D_\alpha(\nu\|\mu)}\\
            &\leq \sup_{\substack{\nu\in\calP(\X):\\\nu\ll\mu\\\nu\neq\mu}}\frac{\chi^2(\nu K\|\mu K)}{\alpha\chi^2(\nu\|\mu)\min_{x\in\supp(\mu)}\mu(x)}\\
            &= \frac{\eta_{\chi^2}(\mu, K)}{\alpha\min_{x\in\X}\mu(x)}.
        \end{align*}
    \subsection{Comparing SDPI Constants of Functionally Related Divergences}\label{appendix:sdpi_relationship_concave_function}\noindent
        \begin{lemma}\label{lemma:sdpi_relationship_concave_function}
            Consider any divergence ${D(\cdot\|\cdot):\calP(\X)\times\calP(\X)\to\reals_+}$ that satisfies the DPI. Let ${g:\reals_+\to\reals_+}$ be a non-zero function such that $g(0)=0$, and let $D^{(g)}(\nu\|\mu)=g\big(D(\nu\|\mu)\big)$. Then, for any admissible pair $(\mu, K)$:
            \begin{enumerate}[a)]
                \item If $g$ is convex, then
                \begin{equation*}
                    \eta_{D^{(g)}}(\mu, K) \leq \eta_{D}(\mu, K)\quad\text{and}\quad    \eta_{D^{(g)}}(K) \leq \eta_{D}(K),
                \end{equation*}

                \item If $g$ is concave, then
                \begin{equation*}
                    \eta_{D^{(g)}}(\mu, K) \geq \eta_{D}(\mu, K)\quad\text{and}\quad  \eta_{D^{(g)}}(K) \geq \eta_{D}(K).
                \end{equation*}
            \end{enumerate}
        \end{lemma}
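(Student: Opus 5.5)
The plan is to compare the two contraction ratios pointwise in the pair $(\nu,\mu)$ and then take suprema. The key tool is the secant-slope property of convex or concave $g$ with $g(0)=0$, as in \cref{lemma:secant_slope}.

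First, record that if $g$ is convex on $\reals_+$ with $g(0)=0$, then $t\mapsto g(t)/t$ is non-decreasing on $(0,\infty)$. Also, $g$ is non-decreasing wherever it is positive: a convex function with $g(0)=0$ and $g\geq0$ cannot decrease after becoming positive. Symmetrically, if $g$ is concave with $g(0)=0$ and $g\geq0$, then $g$ is non-decreasing and $t\mapsto g(t)/t$ is non-increasing.

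Now fix $\mu,K$ and an admissible $\nu$. Write $a\triangleq D(\nu K\|\mu K)$ and $b\triangleq D(\nu\|\mu)$. The DPI gives $0\leq a\leq b$.

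In the convex case, take $\nu$ admissible for $D^{(g)}$, i.e.\ $0<g(b)<\infty$; then $b>0$, since $g(0)=0$. If $a=0$, the ratio $g(a)/g(b)$ vanishes. Otherwise $0<a\leq b$, and monotonicity of the secant slope gives $g(a)/a\leq g(b)/b$. Hence
\[
  \frac{g(a)}{g(b)}\leq\frac{a}{b}\leq\eta_D(\mu,K).
\]
For the last step, $\nu$ must be admissible for $D$, i.e.\ $0<b<\infty$. This holds because $b$ is real-valued by the standing assumption $D:\calP\times\calP\to\reals_+$. Taking the supremum over $\nu$ gives the distribution-dependent bound, and a further supremum over $\mu$ gives the distribution-independent one.

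In the concave case, start from $\nu$ admissible for $D$ with $0<b<\infty$ and $a>0$; the case $a=0$ contributes nothing to the supremum. The reverse secant inequality gives $g(a)/g(b)\geq a/b$, provided $g(b)>0$. This positivity is the only delicate point. A nonnegative concave $g$ with $g(0)=0$ that vanishes at some $b>0$ must vanish on all of $[0,b]$, by concavity and nonnegativity. It then vanishes on $[0,\infty)$, since it is non-decreasing and concave after a flat start, which forces $g\equiv0$ beyond $b$ as well. That contradicts $g$ being non-zero, so $g(b)>0$.

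With $g(b)>0$, $\nu$ is admissible for $D^{(g)}$, and taking suprema gives $\eta_{D^{(g)}}\geq\eta_D$ in both the distribution-dependent and distribution-independent settings.

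The expected main obstacle is this admissibility bookkeeping: the supports of the two optimisation problems differ, and the convention $\eta=0$ applies when a constraint set is empty. The remaining work is checking that $g(b)>0\iff b>0$ in the convex direction when needed, handled above via the monotonicity of $g$, and otherwise the inequality is immediate.
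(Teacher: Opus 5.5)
Your proposal is correct and follows the paper's proof essentially step for step. Both compare the two contraction ratios pointwise using the monotonicity of $t\mapsto g(t)/t$ (\cref{lemma:secant_slope}) and the DPI $D(\nu K\|\mu K)\leq D(\nu\|\mu)$. Both then handle admissibility through the inclusion of constraint sets in the convex case, and through positivity of $g$ on $(0,\infty)$ in the concave case, which you justify more carefully than the paper, where it is simply asserted.

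One small slip is in your closing remark. In the convex case you only need and use $g(b)>0\Rightarrow b>0$, which follows from $g(0)=0$ alone. The converse fails for convex $g$ such as $g(t)=\max(0,t-1)$, but nothing in your argument depends on it.
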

        \begin{proof}
            Let us consider the case where $g$ is convex first. Since $g$ is non-negative and $g(0)=0$, $g$ must be non-decreasing so that $D^{(g)}$ also satisfies the DPI. Moreover by~\cref{lemma:secant_slope}\labelcref{lemma:secant_slope_convex}, the function $x\mapsto\frac{g(x)}{x}$ is non-decreasing for $x>0$. Set $x=D(\nu\|\mu)$ and $y=D(\nu K\|\mu K)$. If $y=0$, then $\frac{D^{(g)}(\nu K\|\mu K)}{D^{(g)}(\nu\|\mu)}=\frac{D(\nu K\|\mu K)}{D(\nu\|\mu)}=0$. Otherwise, $0<y\leq x$, and the monotonicity of $t\mapsto\frac{g(t)}{t}$ gives
            \begin{equation*}
                \frac{g(y)}{y}\leq\frac{g(x)}{x}\iff\frac{D^{(g)}(\nu K\|\mu K)}{D^{(g)}(\nu\|\mu)}\leq\frac{D(\nu K\|\mu K)}{D(\nu\|\mu)}.
            \end{equation*}
            Taking the appropriate supremum on each side and using $\{\nu\in\calP(\X): 0<D^{(g)}(\nu\|\mu)<\infty\}\subseteq\{\nu\in\calP(\X): 0<D(\nu\|\mu)<\infty\}$ gives $\eta_{D^{(g)}}(\mu, K) \leq \eta_{D}(\mu, K)$ and further taking supremum over $\mu$ gives the distribution-independent result. If $g$ is concave, then its non-negativity, the assumption that it is non-zero, and $g(0)=0$ imply that $g(x)>0$ for every $x>0$. Hence, $D$ and $D^{(g)}$ have the same admissible pairs, and the same argument with reversed inequalities proves Part b).
        \end{proof}
        \Cref{corollary:sdpi_ub_lb_hellinger} follows directly from~\cref{lemma:sdpi_relationship_concave_function} with the function $g(x)=\frac{e^{(\alpha-1)x}-1}{\alpha-1}$, since $\calH_\alpha(\nu\|\mu)=g\big(D_\alpha(\nu\|\mu)\big)$.
    \subsection{Proof of~\cref{prop:bound_on_distribution_independent_sdpi}}\label{appendix:proof_bound_on_distribution_independent_sdpi}\noindent
        We first prove Part a). The cases $\alpha\in\{0,1\}$ reduce to the known inequality $\eta_{\sf KL}(K)\leq\eta_{\sf TV}(K)$. For $\alpha\in(0,1)$, \Cref{corollary:sdpi_ub_lb_hellinger} gives $\eta_\alpha(K)\leq\eta_{\calH_\alpha}(K)$ and since the Hellinger Divergence is in the family of $\varphi$-Divergences, it satisfies $\eta_{\calH_\alpha}(K)\leq\eta_{\sf TV}(K)$.

        Part b) is proven by considering $\X=\{0,1\}$ and the Z-channel $K = \big[\begin{smallmatrix}
            1 & 0\\
            1-\lambda & \lambda
        \end{smallmatrix}\big]$ with $\lambda\in(0, 1)$, for which $\eta_{\sf TV}(K)=\lambda$. Considering the input and output R\'enyi Divergences between the Dirac mass $\delta_1$ and $\mu_\varepsilon \triangleq [1-\varepsilon, \varepsilon]$, we find their ratio to be
        \begin{align*}
            \frac{\renyiDiv{\delta_1 K}{\mu_\varepsilon K}}{\renyiDiv{\delta_1}{\mu_\varepsilon}} &= \frac{\frac1{\alpha-1}\log\left((1-\lambda)^\alpha(1-\lambda\varepsilon)^{1-\alpha} + \lambda^\alpha(\lambda\varepsilon)^{1-\alpha}\right)}{\log \frac1\varepsilon} \\
            &= 1 - \frac{\frac1{\alpha-1}\log\left(\lambda + (1-\lambda)^\alpha(\frac1\varepsilon-\lambda )^{1-\alpha}\right)}{\log\frac1\varepsilon}.
        \end{align*}
        Letting $u(\varepsilon) = \log\frac1\varepsilon$ and $g(x) = \frac1{\alpha-1}\log\left(\lambda + (1-\lambda)^\alpha(e^x-\lambda )^{1-\alpha}\right)$, the ratio can be expressed as $1 - \frac{g\big(u(\varepsilon)\big)}{u(\varepsilon)}$, and we find that
        \begin{equation*}
            \lim_{\varepsilon\to0}\left(1-\frac{g\big(u(\varepsilon)\big)}{u(\varepsilon)}\right) = 1 - \lim_{u\to\infty}\frac{\frac1{\alpha-1}\log\left(\lambda + \frac{(1-\lambda)^\alpha}{(e^u-\lambda )^{\alpha-1}}\right)}{u}= 1.
        \end{equation*}
        But by definition of the R\'enyi-SDPI constant, one has
        \begin{equation*}
            \lim_{\varepsilon\to0}\frac{\renyiDiv{\delta_1 K}{\mu_\varepsilon K}}{\renyiDiv{\delta_1}{\mu_\varepsilon}} \leq \eta_\alpha(K)\leq 1,
        \end{equation*}
        so that $\eta_\alpha(K) = 1$ and in particular $\eta_\alpha(K) > \eta_{\sf TV}(K)$.
    \subsection{Proof of~\cref{prop:sdpi_ub_hellinger}}\label{appendix:proof_sdpi_ub_hellinger}\noindent
        We begin with the distribution-dependent setting. If $M_\alpha(\mu,K)=0$, then $K(\cdot|x)=\mu K$ for every $x\in\supp(\mu)$. Hence, $\nu K=\mu K$ for every $\nu\ll\mu$, so that $\eta_\alpha(\mu,K)=0$. We may therefore assume that $M_\alpha(\mu,K)>0$.

        To build intuition, we first establish a weaker bound. Using the SDPI for Hellinger Divergences, we have
        \begin{align*}
            \eta_\alpha(\mu, K) &= \sup_{\substack{\nu\in\calP(\X):\\0<D_\alpha(\nu\|\mu)<\infty}}\frac{\log\big(1+(\alpha-1)\calH_\alpha(\nu K\|\mu K)\big)}{\log\big(1+(\alpha-1)\calH_\alpha(\nu\|\mu)\big)} \\
            &\leq \sup_{\substack{\nu\in\calP(\X):\\0<D_\alpha(\nu\|\mu)<\infty}}\frac{\log\big(1+\eta_{\calH_\alpha}(\mu, K)(\alpha-1)\cdot\calH_\alpha(\nu\|\mu)\big)}{\log\big(1+(\alpha-1)\calH_\alpha(\nu\|\mu)\big)}.
        \end{align*}
        Observe that for any $\nu\in\calP(\X)$ satisfying $0<D_\alpha(\nu\|\mu)<\infty$, the convexity of $\calH_\alpha$ in its first argument implies $0<\calH_\alpha(\nu\|\mu)\leq \sup_{x\in\supp(\mu)}\calH_\alpha(\delta_x\|\mu) \triangleq C_\mu$. We can therefore write
        \begin{align}
            \eta_\alpha(\mu, K) &\leq \sup_{t\in(0, C_\mu]}\frac{\log\big(1+(\alpha-1)\eta_{\calH_\alpha}(\mu, K)\cdot t\big)}{\log\big(1+(\alpha-1)t\big)}\nonumber\\
            &= \frac{\log\big(1+(\alpha-1)\eta_{\calH_\alpha}(\mu, K) \cdot C_\mu\big)}{\log\big(1+(\alpha-1)C_\mu\big)},\label{eq:ub_sdpi_via_hellinger_weak}
        \end{align}
        where the equality follows because the map $t\mapsto \frac{\log(1+(\alpha-1)\lambda t)}{\log(1+(\alpha-1)t)}$ is non-decreasing for any $\lambda\in[0,1]$ and $\alpha>1$ (see~\cref{corollary:log_ratio_increasing}). Applying the same steps to bound $\eta_\alpha(K)$ would yield a trivial bound of 1; however, it turns out that this initial idea can be refined to improve the bound for both SDPI constants. We flesh out the details for the distribution-dependent case.

        The refinement is obtained by providing an additional bound on $\calH_\alpha(\nu K\|\mu K)$. By joint convexity of $\calH_\alpha$ and linearity of $K$, one finds $\calH_\alpha(\nu K\|\mu K) \leq \sup_{x\in\supp(\mu)}\calH_\alpha(\delta_x K\|\mu K)=M_\alpha(\mu, K)$. Consequently,
        \begin{align}
            \eta_\alpha(\mu, K) &\leq \sup_{\substack{\nu\in\calP(\X):\\0<D_\alpha(\nu\|\mu)<\infty}}\frac{\log\Big(1+(\alpha-1)\min\big(\eta_{\calH_\alpha}(\mu, K)\cdot\calH_\alpha(\nu\|\mu), M_\alpha(\mu, K)\big)\Big)}{\log\big(1+(\alpha-1)\calH_\alpha(\nu\|\mu)\big)}\nonumber\\
            &\leq \sup_{t\in(0, C_\mu]}\frac{\log\Big(1+(\alpha-1)\min\big(\eta_{\calH_\alpha}(\mu, K)\cdot t, M_\alpha(\mu, K)\big)\Big)}{\log\big(1+(\alpha-1)t\big)}.\label{eq:ub_sdpi_via_hellinger_strong_intermediate_computation}
        \end{align}
        To evaluate this supremum, we analyse the objective function in \cref{eq:ub_sdpi_via_hellinger_strong_intermediate_computation}. Since $M_\alpha(\mu,K)>0$ implies $\eta_{\calH_\alpha}(\mu,K)>0$, the quantity $t^\star \triangleq \frac{M_\alpha(\mu, K)}{\eta_{\calH_\alpha}(\mu, K)}$ is well-defined. Note that by the definition of $\eta_{\calH_\alpha}(\mu, K)$, we have $M_\alpha(\mu, K) \leq \eta_{\calH_\alpha}(\mu, K) C_\mu$, ensuring $t^\star \leq C_\mu$.

        For $0 < t \leq t^\star$, the function simplifies to $t\mapsto \frac{\log(1+(\alpha-1)\eta_{\calH_\alpha}(\mu, K)\cdot t)}{\log(1+(\alpha-1)t)}$. As established, this is non-decreasing, so its maximum on this interval is attained at $t^\star$. For $t^\star < t \leq C_\mu$, the function simplifies to $t\mapsto \frac{\log(1+(\alpha-1)M_\alpha(\mu, K))}{\log(1+(\alpha-1)t)}$, which is strictly decreasing. Thus, the global supremum on $(0, C_\mu]$ is achieved at $t^\star$, and we conclude that
        \begin{equation}
            \eta_\alpha(\mu, K) \leq \frac{\log\big(1+(\alpha-1)M_\alpha(\mu, K)\big)}{\log\left(1+(\alpha-1)\frac{M_\alpha(\mu, K)}{\eta_{\calH_\alpha}(\mu, K)}\right)}.
        \end{equation}
        This refinement is never worse than~\cref{eq:ub_sdpi_via_hellinger_weak} because $t^\star \leq C_\mu$, meaning the ratio of logarithms is capped at a lower value than what the unrefined version of the inequality would allow at $t=C_\mu$.

        For the distribution-independent setting, if $M_\alpha(K)=0$, then all the conditionals $K(\cdot|x)$ coincide, so that $\eta_\alpha(K)=0$. Otherwise, repeating the preceding argument with $\eta_{\calH_\alpha}(K)$ in place of $\eta_{\calH_\alpha}(\mu,K)$ and using $\calH_\alpha(\nu K\|\mu K)\leq\sup_{x,x^\prime\in\X}\calH_\alpha(\delta_xK\|\delta_{x^\prime}K)=M_\alpha(K)$ yields the corresponding bound on $\eta_\alpha(K)$.
    \subsection{Proof of~\cref{prop:bounds_on_renyi_infty_sdpi_by_tv}}\label{appendix:proof_bounds_on_renyi_infty_sdpi_by_tv}\noindent
        The result follows after upper and lower bounding the \mbox{$\infty$-R\'enyi Divergence} by the Total Variation Distance. Fix $\nu\in\calP(\X)$ such that $\nu\ll\mu$. First, notice that the following identity holds:
        \begin{equation*}
            \left\|\frac{\dd\nu}{\dd\mu}\right\|_{L^\infty(\mu)}= \sup_{\substack{A\in\Sigma_\X:\\\mu(A)>0}} \frac{\nu(A)}{\mu(A)}.
        \end{equation*}
        Let us also recall that ${\|\nu-\mu\|_{\sf TV} \triangleq \sup_{A\in\Sigma_\X}|\nu(A)-\mu(A)|}$, and in particular ${\|\nu-\mu\|_{\sf TV} = \sup_{A\in\Sigma_\X:\mu(A)>0}\{\nu(A)-\mu(A)\}}$ since $\nu\ll\mu$, so that in this case
        \begin{equation*}
            D_\infty(\nu\|\mu) = \log\sup_{\substack{A\in\Sigma_\X:\\\mu(A)>0}} \frac{\nu(A)}{\mu(A)}\leq \sup_{\substack{A\in\Sigma_\X:\\\mu(A)>0}} \frac{\nu(A)}{\mu(A)}-1\leq \frac1{\min_{x\in\supp(\mu)}\mu(x)}\|\nu-\mu\|_{\sf TV},
        \end{equation*}
        where we used that $\log(x)\leq x-1$ for $x>0$. Moreover, since $\log(x)\geq1-1/x$ for $x>0$,
        \begin{equation*}
            D_\infty(\nu\|\mu) = \log\sup_{\substack{A\in\Sigma_\X:\\\mu(A)>0}} \frac{\nu(A)}{\mu(A)} = \log\sup_{\substack{A\in\Sigma_\X:\\\nu(A),\mu(A)>0}} \frac{\nu(A)}{\mu(A)} \geq \sup_{\substack{A\in\Sigma_\X:\\\nu(A),\mu(A)>0}} \left\{1-\frac{\mu(A)}{\nu(A)}\right\} \geq \|\nu-\mu\|_{\sf TV},
        \end{equation*}
        where in the last inequality we used $1/\nu(A)\geq 1$. The bounds can now be stated, but it is important to observe the set inclusion
        \begin{align*}
            \{\nu\in\calP(\X): 0<D_\infty(\nu\|\mu)<\infty\} &= \{\nu\in\calP(\X): \nu\ll\mu \text{ and } \nu\neq\mu\}\\
            &\subseteq \{\nu\in\calP(\X): \nu\neq\mu\}\\
            &= \{\nu\in\calP(\X): 0<\|\nu-\mu\|_{\sf TV}<\infty\},
        \end{align*}
        due to the fact that Total Variation Distance is always bounded. Using this and the bounds just derived leads to
        \begin{align*}
            \eta_\infty(\mu, K) &= \sup_{\substack{\nu\in\calP(\X):\\\nu\neq\mu\\\nu\ll\mu}} \frac{D_\infty(\nu K\|\mu K)}{D_\infty(\nu\|\mu)} \\
            &\leq \frac1{\min_{y\in\supp(\mu K)}\mu K(y)}\cdot\sup_{\substack{\nu\in\calP(\X):\\\nu\neq\mu\\\nu\ll\mu}} \frac{\|\nu K-\mu K\|_{\sf TV}}{\|\nu-\mu\|_{\sf TV}} \\
            &\leq \frac1{\min_{y\in\supp(\mu K)}\mu K(y)}\cdot\sup_{\substack{\nu\in\calP(\X):\\\nu\neq\mu}} \frac{\|\nu K-\mu K\|_{\sf TV}}{\|\nu-\mu\|_{\sf TV}} \\
            &= \frac{1}{\min_{y\in\supp(\mu K)}\mu K(y)} \eta_{\sf TV}(\mu, K).
        \end{align*}
    \subsection{Proof of~\cref{lemma:distribution_dependent_eta_tv}}
\label{appendix:proof_distribution_dependent_eta_tv}\noindent
        For every $\nu\in\calP(\X)$ with $\nu\neq\mu$, let $\rho_\nu^\pm \triangleq \frac{(\nu-\mu)_\pm}{\|\nu-\mu\|_{\sf TV}}$ denote the probability measures that separate the signed measure $\nu-\mu$ into its positive and negative parts. Hence, $\rho_\nu^+\in\calP(\X)$, $\rho_\nu^-\in\calP\big(\supp(\mu)\big)$.

        Suppose first that $\max_{x\in\X,x^\prime\in\supp(\mu)} \|K(\cdot|x)-K(\cdot|x^\prime)\|_{\sf TV}>0$. Choose $x_\star\in\X$ and $x_\star^\prime\in\supp(\mu)$ attaining the maximum. Necessarily $x_\star\neq x_\star^\prime$, and for $0<\varepsilon<\mu(x_\star^\prime)$, define $\nu_\varepsilon \triangleq \mu+\varepsilon\bigl(\delta_{x_\star}-\delta_{x_\star^\prime}\bigr)$. Then, we have
        \begin{equation*}
            \max_{x\in\X,x^\prime\in\supp(\mu)} \|K(\cdot|x)-K(\cdot|x^\prime)\|_{\sf TV} =\frac{\|\nu_\varepsilon K-\mu K\|_{\sf TV}}{\|\nu_\varepsilon-\mu\|_{\sf TV}} \leq\eta_{\sf TV}(\mu,K).
        \end{equation*}
        On the other hand, we have the upper bound
        \begin{align*}
            \eta_{\sf TV}(\mu,K)&=\sup_{\substack{\nu\in\calP(\X):\\\nu\neq\mu}}
              \frac{\sup_{\substack{f\in\F(\Y):\\0\leq f\leq1}}\big(\mean{\nu K}{f}-\mean{\mu K}{f}\big)}{\|\nu-\mu\|_{\sf TV}}&\left(\substack{\text{Variational representation}\\\text{of TV Distance}}\right)&\\
            &=\sup_{\substack{\nu\in\calP(\X):\\\nu\neq\mu}}\frac{\sup_{\substack{f\in\F(\Y):\\0\leq f\leq1}}\big(\mean{\nu}{Kf}-\mean{\mu}{Kf}\big)}{\|\nu-\mu\|_{\sf TV}}\\
            &=\sup_{\substack{\nu\in\calP(\X):\\\nu\neq\mu}}\sup_{\substack{f\in\F(\Y)\\0\leq f\leq1}}\left(\mean{\rho_\nu^+}{Kf}-\mean{\rho_\nu^-}{Kf}\right)&\left(\substack{\text{By definition of $\rho_\nu^\pm$}}\right)&\\
            &\leq \sup_{\substack{f\in\F(\Y)\\0\leq f\leq1}}\bigl(\max_{x\in\X} Kf(x)-\min_{x^\prime\in\supp(\mu)}Kf(x^\prime)\bigr)\\
            &= \sup_{\substack{f\in\F(\Y)\\0\leq f\leq1}} \max_{\substack{x\in\X,\\x^\prime\in\supp(\mu)}}\bigl(Kf(x)-Kf(x^\prime)\bigr)\\
            &= \max_{\substack{x\in\X,\\x^\prime\in\supp(\mu)}}\sup_{\substack{f\in\F(\Y):\\0\leq f\leq1}}\left(\mean{K(\cdot|x)}{f} - \mean{K(\cdot|x^\prime)}{f}\right)\\
            &= \max_{\substack{x\in\X,\\x^\prime\in\supp(\mu)}}
              \|K(\cdot|x)-K(\cdot|x^\prime)\|_{\sf TV}&\left(\substack{\text{Variational representation}\\\text{of TV Distance}}\right)&,
        \end{align*}
        from which we conclude the desired equality.

        If $\max_{x\in\X,x^\prime\in\supp(\mu)} \|K(\cdot|x)-K(\cdot|x^\prime)\|_{\sf TV}=0$, the last chain of inequalities directly gives $\eta_{\sf TV}(\mu,K)\leq 0$, hence the desired identity also holds in this case and this concludes the proof.
    \subsection{Proof of~\cref{prop:eta_tv_lower_bounds_eta_inf}}\label{appendix:proof_eta_tv_lower_bounds_eta_inf}\noindent
        We first prove Part a). Since $\mu$ has full support, \cref{lemma:distribution_dependent_eta_tv} gives $\eta_{\sf TV}(\mu, K)=\max_{x, x^\prime\in\X}\|K(\cdot|x)-K(\cdot|x^\prime)\|_{\sf TV}=\eta_{\sf TV}(K)$. Let us denote by $x_1, x_2\in\X$ the elements which achieve this maximum. The proof proceeds as follows: we first find a lower bound on $\eta_\infty(\mu, K)$ in terms of $x_1, x_2$, and eventually connect it back to $\eta_{\sf TV}(\mu, K)$. To this end, consider the two subsets $A_i=\X\setminus\{x_i\}$ for $i=1,2$, and notice that
        \begin{align*}
            \max_{y\in\supp(\mu K)} \frac{\mu_{|A_i}K(y)}{\mu K(y)}&= \max_{y\in\supp(\mu K)} \frac{\sum_{x\in\X}K(y|x)\mu_{|A_i}(x)}{\mu K(y)}\\
            &= \max_{y\in\supp(\mu K)} \frac{\frac1{\mu(A_i)}\sum_{x\in A_i}K(y|x)\mu(x)}{\mu K(y)}\\
            &= \max_{y\in\supp(\mu K)} \frac{\frac1{\mu(A_i)}\big(\mu K(y)-\mu(x_i)K(y|x_i)\big)}{\mu K(y)}\\
            &= \frac1{1-\mu(x_i)}\left(1-\mu(x_i)\cdot\min_{y\in\supp(\mu K)}\frac{K(y|x_i)}{\mu K(y)}\right).
        \end{align*}
        Hence we get
        \begin{align}
            \frac{D_\infty(\mu_{|A_i}K\|\mu K)}{D_\infty(\mu_{|A_i}\|\mu)} &= \frac{\log\left(\max_{y\in\supp(\mu K)} \frac{\mu_{|A_i}K(y)}{\mu K(y)}\right)}{-\log\big(\mu(A_i)\big)}\nonumber\\
            &= 1 - \frac{\log\left(1-\mu(x_i)\cdot\min_{y\in\supp(\mu K)}\frac{K(y|x_i)}{\mu K(y)}\right)}{\log\big(1-\mu(x_i)\big)}\nonumber\\
            &\geq 1-\min_{y\in\supp(\mu K)}\frac{K(y|x_i)}{\mu K(y)},\label{eq:eta_inf_eta_tv_proof_ratio_D_inf}
        \end{align}
        where in the last inequality we used that $\lambda_i\triangleq\min_{y\in\supp(\mu K)}\frac{K(y|x_i)}{\mu K(y)}\in[0,1]$ and $\frac{\log(1-\lambda_i t)}{\log(1-t)}\leq \lambda_i$ for $t\in(0,1)$ when $\lambda_i\in[0,1]$ (see~\cref{corollary:log_ratio_decreasing}). Using~\cref{eq:eta_inf_eta_tv_proof_ratio_D_inf} with~\cref{prop:infty_sdpi_achieved_on_conditionals} yields
        \begin{equation}
            \eta_\infty(\mu, K)=\max_{\substack{A\in\Sigma_\X:\\0<\mu(A)<1}}\frac{D_\infty(\mu_{|A}K\|\mu K)}{D_\infty(\mu_{|A}\|\mu)} \geq \max\left(1-\lambda_1, 1-\lambda_2\right)= 1-\min(\lambda_1, \lambda_2).\label{eq:eta_inf_eta_tv_proof_lb_on_eta_inf}
        \end{equation}
        Let us now focus on $\eta_{\sf TV}(\mu, K)$ for a moment. By the variational representation of Total Variation Distance, there exists a set $A\subseteq\Y$ such that
        \begin{equation*}
            K(A|x_1)-K(A|x_2) = \eta_{\sf TV}(\mu, K) \iff \eta_{\sf TV}(\mu, K) = 1-\sum_{y\in A^c}K(y|x_1) -\sum_{y\in A} K(y|x_2).
        \end{equation*}
        We also know $\lambda_i\mu K(y) \leq K(y|x_i)\;\forall y\in\supp(\mu K)$, and moreover since $\mu$ has full support we have that $\mu K(y)=0\iff K(y|x)=0\;\forall x\in\X$, or in other words even when $y\not\in\supp(\mu K)$, we can write $\lambda_i\mu K(y) \leq K(y|x_i)$. Consequently,
        \begin{equation}\label{eq:bound_eta_tv_temp}
            \eta_{\sf TV}(\mu, K) \leq 1-\sum_{y\in A^c}\lambda_1\mu K(y) -\sum_{y\in A} \lambda_2 \mu K(y) \leq 1 - \min(\lambda_1, \lambda_2)
        \end{equation}
        and combining with~\cref{eq:eta_inf_eta_tv_proof_lb_on_eta_inf} concludes the proof.

        Next, we show that we have $\eta_{\sf TV}(\mu, K)=\eta_\infty(\mu, K)$ if and only if $\eta_{\sf TV}(\mu, K)=0$ or $\eta_{\sf TV}(\mu, K)=1$.
        \begin{itemize}
            \item \textbf{``If'' direction $\Longleftarrow$:} If $\eta_{\sf TV}(\mu, K)=1$,  \cref{eq:eta_tv_lower_bounds_eta_inf} directly implies $\eta_\infty(\mu, K)=1$. If $\eta_{\sf TV}(\mu, K)=0$, this means $\max_{x, x^\prime\in\X}\|K(\cdot|x)-K(\cdot|x^\prime)\|_{\sf TV}=0$, that is, all the conditionals $\{K(\cdot|x)\}_{x\in\X}$ are equal. Consequently, every input distribution maps to the same output distribution, causing the divergence $D_\infty(\nu K\|\mu K)$ to vanish for all $\nu$ and yielding $\eta_\infty(\mu, K)=0$.
            \item \textbf{``Only if'' direction $\Longrightarrow$:} We show the contrapositive, so assume $\eta_{\sf TV}(\mu, K)\in(0,1)$. By~\cref{eq:bound_eta_tv_temp}, we see that $\min(\lambda_1, \lambda_2) <1$. Let us consider two cases:
            \begin{enumerate}
                \item $\min(\lambda_1, \lambda_2)=0$: In this case, \cref{eq:eta_inf_eta_tv_proof_lb_on_eta_inf} gives $\eta_\infty(\mu, K)\geq 1-\min(\lambda_1, \lambda_2)=1 > \eta_{\sf TV}(\mu, K)$.
                \item $\min(\lambda_1, \lambda_2)>0$: Let us show that the inequality in~\cref{eq:eta_inf_eta_tv_proof_ratio_D_inf} becomes strict. This can be seen through~\cref{corollary:log_ratio_decreasing}, which implies that $\frac{\log(1-\lambda t)}{\log(1-t)}< \lambda$ for $t,\lambda\in(0,1)$, and in particular
                \begin{equation*}
                    \eta_\infty(\mu, K)\geq\max_{i=1,2}\left\{1 - \frac{\log\left(1-\mu(x_i)\cdot\lambda_i\right)}{\log\big(1-\mu(x_i)\big)}\right\}> 1-\min(\lambda_1, \lambda_2)\geq \eta_{\sf TV}(\mu, K).
                \end{equation*}
            \end{enumerate}
        \end{itemize}
        For Part b), fix any fully-supported $\mu\in\calP(\X)$. By~\cref{lemma:distribution_dependent_eta_tv} and Part~a),
        \begin{equation*}
            \eta_{\sf TV}(K)=\eta_{\sf TV}(\mu,K)\leq\eta_\infty(\mu,K)\leq\eta_\infty(K).
        \end{equation*}
        If $\eta_{\sf TV}(K)=0$, then all the conditionals of $K$ coincide and hence $\eta_\infty(K)=0$, while $\eta_{\sf TV}(K)=1$ directly implies $\eta_\infty(K)=1$ by the preceding inequality. Conversely, if $\eta_{\sf TV}(K)\in(0,1)$, the strict inequality in Part~a) gives
        \begin{equation*}
            \eta_{\sf TV}(K)=\eta_{\sf TV}(\mu,K)<\eta_\infty(\mu,K)\leq\eta_\infty(K),
        \end{equation*}
        which concludes the proof.
\section{Proofs for~\cref{sec:applications}}\noindent
    \subsection{Proof of~\cref{thm:renyi_ldp_implies_contraction}}\label{appendix:proof_renyi_ldp_implies_contraction}\noindent
        Suppose first that $\alpha\in[2,\infty)$. Part~b) of~\cref{prop:eta_parameter_renyi_binary} reduces the optimisation defining $\eta_\alpha(K)$ to pairs $\nu=\delta_x$ and $\mu=(1-t)\delta_x+t\delta_{x^\prime}$, where $x\neq x^\prime$ and $t\in(0,1)$. Moreover, the RLDP assumption implies that $K(\cdot|x)$ and $K(\cdot|x^\prime)$ have the same support, so we find

        \begin{align*}
            e^{-D_\alpha\left(K(\cdot|x)\|(1-t)K(\cdot|x)+tK(\cdot|x^\prime)\right)}
            &=\left(\mean{K(\cdot|x)}{\left((1-t)+t\frac{\dd K(\cdot|x^\prime)}{\dd K(\cdot|x)}\right)^{1-\alpha}}\right)^{\frac1{1-\alpha}}\\
            &\geq(1-t)+t\left(\mean{K(\cdot|x)}{\left(\frac{\dd K(\cdot|x^\prime)}{\dd K(\cdot|x)}\right)^{1-\alpha}}\right)^{\frac1{1-\alpha}}
            &\left(\substack{\text{Reverse Minkowski}\\\text{inequality}}\right)&\\
            &=(1-t)+te^{-D_\alpha\left(K(\cdot|x)\|K(\cdot|x^\prime)\right)}\\
            &\geq(1-t)+te^{-\varepsilon}&\left(\substack{\text{Since $K$ is}\\\text{$(\alpha, \varepsilon)$-RLDP}}\right)&\\
            &=e^{-\varepsilon}+\bigl(1-e^{-\varepsilon}\bigr)(1-t)\\
            &\geq(1-t)^{1-e^{-\varepsilon}}.&\left(\substack{\text{Weighted AM--GM}\\\text{inequality}}\right)&.
        \end{align*}
        Taking negative logarithms and using $D_\alpha\bigl(\delta_x\|(1-t)\delta_x+t\delta_{x^\prime}\bigr)=-\log(1-t)$ gives
        \begin{equation}\label{eq:rldp_implies_rsdpi_contraction_bound_before_supremum}
            \frac{D_\alpha\left(K_x\|(1-t)K_x+tK_{x^\prime}\right)}{D_\alpha\bigl(\delta_x\|(1-t)\delta_x+t\delta_{x^\prime}\bigr)}\leq\bigl(1-e^{-\varepsilon}\bigr).
        \end{equation}
        Taking the supremum over $x,x^\prime$ and $t$ proves $\eta_\alpha(K)\leq 1-e^{-\varepsilon}$. If $\varepsilon>0$, the bound is strict. Indeed, the weighted AM--GM inequality is strict for every $t\in(0,1)$. Moreover, for each $x,x^\prime$, the contraction ratio in~\cref{eq:rldp_implies_rsdpi_contraction_bound_before_supremum} extends continuously to $[0,1]$ as a function of $t$ and equals zero at both endpoints. Since $\X$ is finite, the overall supremum is therefore attained for some $x,x^\prime\in\X$ and $t\in(0,1)$, where the weighted AM--GM inequality is strict.

        At order $\infty$,~\cref{prop:infty_renyi_sdpi_closed_form} can equivalently be written as
        \begin{equation*}
            \eta_\infty(K)=1-\exp\left(-\sup_{x,x^\prime\in\X}D_\infty\bigl(K(\cdot|x)\|K(\cdot|x^\prime)\bigr)\right).
        \end{equation*}
        Since $\varepsilon$-LDP coincides with $(\infty,\varepsilon)$-R\'enyi LDP, the desired equivalence follows directly.

    \subsection{Proof of~\cref{thm:ldp_sharp_renyi_contraction}}\label{appendix:proof_ldp_sharp_renyi_contraction}\noindent
        The case $\varepsilon=0$ is immediate, since the LDP condition forces all the conditionals $K(\cdot|x)$ to coincide, and hence $\eta_\alpha(K)=0=\Upsilon_{\alpha,0}$. Suppose therefore that $\varepsilon>0$.

        By Part~a) of~\cref{prop:eta_parameter_renyi_binary}, it suffices to consider pairs of probability measures $(\nu, \mu)$ supported on two elements $x,x^\prime\in\X$, so let us write
        \begin{equation*}
            \nu=(1-s)\delta_x+s\delta_{x^\prime},\qquad\mu=(1-t)\delta_x+t\delta_{x^\prime},
        \end{equation*}
        where $s,t\in[0,1]$. The $\varepsilon$-LDP condition ensures that $K(\cdot|x)$ and $K(\cdot|x^\prime)$ have the same support and that
        \begin{equation*}
            e^{-\varepsilon}\leq r\triangleq\frac{\dd K(\cdot|x^\prime)}{\dd K(\cdot|x)}\leq e^\varepsilon,\qquad K(\cdot|x)\text{-almost everywhere}.
        \end{equation*}

        Fix $\alpha\in(0,\infty)\setminus\{1\}$ and consider the function
        \begin{equation*}
            h(u)\triangleq(1-s+su)^\alpha(1-t+tu)^{1-\alpha}.
        \end{equation*}
        Its second derivative is $h^{\prime\prime}(u)=\alpha(\alpha-1)(s-t)^2(1-s+su)^{\alpha-2}(1-t+tu)^{-\alpha-1}$, so that $h$ is convex when $\alpha>1$ and concave when $\alpha\in(0,1)$. Suppose first that $\alpha>1$. Then, we have
        \begin{align*}
            e^{(\alpha-1)D_\alpha(\nu K\|\mu K)}&=\mean{K(\cdot|x)}{h(r)}\\
            &=\mean{K(\cdot|x)}{h\left(\frac{e^\varepsilon - r}{e^\varepsilon -e^{-\varepsilon}}e^{-\varepsilon}+\frac{r - e^{-\varepsilon}}{e^\varepsilon -e^{-\varepsilon}}e^{\varepsilon}\right)}\\
            &\leq\mean{K(\cdot|x)}{\frac{e^\varepsilon - r}{e^\varepsilon -e^{-\varepsilon}}h(e^{-\varepsilon})+\frac{r - e^{-\varepsilon}}{e^\varepsilon -e^{-\varepsilon}}h(e^{\varepsilon})}&\left(\substack{\text{Convexity of $h$}}\right)&\\
            &=\frac{e^\varepsilon}{1+e^\varepsilon}h(e^{-\varepsilon})+\frac1{1+e^\varepsilon}h(e^\varepsilon)&\left(\substack{\text{Since $\mean{K(\cdot|x)}{r}=1$}}\right)&\\
            &=e^{(\alpha-1)D_\alpha\left(\nu\mathrm{BSC}\left(\frac1{1+e^\varepsilon}\right)\middle\|\mu\mathrm{BSC}\left(\frac1{1+e^\varepsilon}\right)\right)}.
        \end{align*}
        When $\alpha\in(0,1)$, concavity reverses the inequality. Since $\alpha-1<0$, both cases yield
        \begin{equation}\label{eq:ldp_direct_renyi_comparison}
            D_\alpha(\nu K\|\mu K)\leq D_\alpha\left(\nu\mathrm{BSC}\left(\frac1{1+e^\varepsilon}\right)\middle\|\mu\mathrm{BSC}\left(\frac1{1+e^\varepsilon}\right)\right).
        \end{equation}
        Taking limits in~\cref{eq:ldp_direct_renyi_comparison} as $\alpha\to1$ and $\alpha\to\infty$ gives the corresponding inequalities at orders one and infinity. Dividing~\cref{eq:ldp_direct_renyi_comparison} by $D_\alpha(\nu\|\mu)$ and using the definition of the contraction coefficient gives
        \begin{equation*}
            \frac{D_\alpha(\nu K\|\mu K)}{D_\alpha(\nu\|\mu)}\leq\eta_\alpha\left(\mathrm{BSC}\left(\frac1{1+e^\varepsilon}\right)\right)=\Upsilon_{\alpha,\varepsilon}.
        \end{equation*}
        Taking the supremum over all admissible $s,t,x,x^\prime$ proves the result. For $\alpha=0$, recall that $\eta_0$ is defined through reverse KL Divergence. One can thus use~\cref{eq:ldp_direct_renyi_comparison} at $\alpha=1$ and interchange $\nu$ and $\mu$. Applying the same remaining steps as the other orders yields the result for $\alpha=0$.
    \subsection{Proof of~\cref{corollary:asymptotic_renyi_contraction}}\label{appendix:proof_asymptotic_renyi_contraction}\noindent
        By reversibility and the spectral characterisation of the $\chi^2$-SDPI constant, $\eta_{\chi^2}(\pi,K^t)=\eta_{\chi^2}(\pi,K)^t$~\cite[Equation 86]{makur2020comparison}. Moreover, irreducibility ensures that $\pi$ has full support, while stationarity gives $\pi K^t=\pi$. Thus, combining~\cref{thm:local_chi_squared_lb_on_renyi_sdpi,prop:ub_on_eta_alpha_by_eta_chi_squared}, together with the corresponding lower bound $\eta_{\chi^2}(\pi,K^t)\leq\eta_{\sf KL}(\pi,K^t)$ when $\alpha=1$, shows that for every $\alpha\in(0,\infty)$ there exists a finite constant $c_{\alpha,\pi}$, independent of $t$, such that
        \begin{equation*}
            \eta_{\chi^2}(\pi,K)^t\leq\eta_\alpha(\pi,K^t)\leq c_{\alpha,\pi}\eta_{\chi^2}(\pi,K)^t.
        \end{equation*}
        Taking $t$-th roots and letting $t\to\infty$ concludes the proof.
    \subsection{Proof of~\cref{prop:mcmt}}\label{appendix:proof_mcmt}\noindent
        Iterating the R\'enyi-SDPI and using $\pi K=\pi$ gives $D_\alpha(\nu K^t\|\pi)\leq\eta_\alpha(\pi,K)^tD_\alpha(\nu\|\pi)$. Applying the increasing function $x\mapsto\frac{e^{(\alpha-1)x}-1}{\alpha-1}$ to both sides of this inequality and using the relationship between R\'enyi and Hellinger Divergences yields~\cref{eq:mcmt_non_linear_sdpi}.
    \subsection{Computations for~\cref{example:community_chain}}\label{appendix:partition_chain_computations}\noindent
        Abawonse et al.~\cite[Theorem~3 and Equation~(7)]{abawonse2026generalized} computed $\eta_2$ for resampling kernels. The following lemma extends their result to a broader class of Markov chains, including the one considered in~\cref{example:community_chain}.
        \begin{lemma}\label{lemma:eta_2_partition_chain}
            Let $K\in\calP(\X|\X)$ induce a reversible Markov chain with stationary distribution $\pi$, and consider a partition $\mathcal{A}=\{A_1,\dots,A_q\}$ of $\supp(\pi)$ with $q\geq2$. Define the Markov operators $E_\mathcal{A}$ and $\Pi$ through $E_\mathcal{A}f\triangleq\sum_{j=1}^q\mean{\pi_{|A_j}}{f}\mathbbm{1}_{A_j}$ and $\Pi f\triangleq\mean{\pi}{f}\allones$. If, as operators on $L^2(\pi)$,
            \begin{equation}\label{eq:eta_2_partition_chain_assumption}
                K^2=\lambda E_\mathcal{A}+(1-\lambda)\Pi
            \end{equation}
            for some $\lambda\in[0,1]$, then for every integer $t\geq1$
            \begin{equation}\label{eq:eta_2_partition_chain}
                \eta_{\chi^2}(\pi,K^t)=\lambda^t\qquad\text{and}\qquad \eta_2(\pi,K^t)=\frac{\log\Big(1+\lambda^t\left((\pi_{\min}^{\mathcal A})^{-1}-1\right)\Big)}{\log\big((\pi_{\min}^{\mathcal A})^{-1}\big)},
            \end{equation}
            where $\pi_{\min}^{\mathcal A}\triangleq\min_{j\in[q]}\pi(A_j)$. Moreover, for every integer $t\geq1$ and every $\nu\in\calP(\X)$ with $\nu\ll\pi$,
            \begin{equation}\label{eq:exact_chi_squared_partition_chain}
                \chi^2(\nu K^t\|\pi)=\lambda^t\chi^2(\nu E_{\mathcal A}\|\pi).
            \end{equation}
        \end{lemma}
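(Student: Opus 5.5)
The plan is to work entirely in $L^2(\pi)$, using reversibility to identify $K_\pi^\star=K$ and hence $(K^t)^\star_\pi=K^t$. First I will record the algebra of the two operators $E_{\mathcal A}$ and $\Pi$. Both are self-adjoint orthogonal projections on $L^2(\pi)$ because they are conditional expectations. They also satisfy $E_{\mathcal A}\Pi=\Pi E_{\mathcal A}=\Pi$, since $\Pi$ conditions on the trivial $\sigma$-algebra, which is coarser than the partition. Iterating~\cref{eq:eta_2_partition_chain_assumption} then gives
\begin{equation*}
K^{2t}=\lambda^tE_{\mathcal A}+(1-\lambda^t)\Pi
\end{equation*}
for every integer $t\geq1$, by induction on $t$.

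For~\cref{eq:exact_chi_squared_partition_chain}, let $f=\frac{\dd\nu}{\dd\pi}$ and $g=f-\allones$, which has $\pi$-mean zero. By~\cref{eq:dual_kernel_formula}, the density of $\nu K^t$ with respect to $\pi$ is $K^tf$. Using self-adjointness,
\begin{equation*}
\chi^2(\nu K^t\|\pi)=\|K^tg\|^2_{L^2(\pi)}=\langle g,K^{2t}g\rangle_\pi=\lambda^t\langle g,E_{\mathcal A}g\rangle_\pi=\lambda^t\|E_{\mathcal A}g\|^2_{L^2(\pi)}.
\end{equation*}
Here $\Pi g=0$ removes the second term of $K^{2t}$. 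Since $E_{\mathcal A}$ is self-adjoint, $E_{\mathcal A}f$ is exactly the density of $\nu E_{\mathcal A}$, so the last quantity equals $\lambda^t\chi^2(\nu E_{\mathcal A}\|\pi)$. For the first identity in~\cref{eq:eta_2_partition_chain}, the same computation gives
\begin{equation*}
\eta_{\chi^2}(\pi,K^t)=\sup_{g}\lambda^t\frac{\|E_{\mathcal A}g\|^2}{\|g\|^2},
\end{equation*}
with the supremum over nonzero mean-zero $g$. The ratio is at most $1$ because $E_{\mathcal A}$ is a projection. It equals $1$ for a nonzero mean-zero function that is constant on each block, and such a function exists because $q\geq2$. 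Hence $\eta_{\chi^2}(\pi,K^t)=\lambda^t$.

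For $\eta_2$, I will write $D_2=\log(1+\chi^2)$, set $b=\chi^2(\nu\|\pi)$ and $a=\chi^2(\nu E_{\mathcal A}\|\pi)$, and note that $a\leq b$ by the data-processing inequality for $E_{\mathcal A}$. The contraction ratio is $\log(1+\lambda^ta)/\log(1+b)$. It is non-increasing in $b$, so it is at most $\log(1+\lambda^ta)/\log(1+a)$. By~\cref{corollary:log_ratio_increasing} with $\alpha=2$, this bound is non-decreasing in $a$. The block-aggregated divergence is $a=\sum_j\nu(A_j)^2/\pi(A_j)-1$, which is maximised by putting all mass of $\nu$ on a block of smallest probability, giving $a\leq(\pi^{\mathcal A}_{\min})^{-1}-1$. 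This yields the upper bound in~\cref{eq:eta_2_partition_chain}. For equality, take $\nu=\pi_{|A_j}$ with $\pi(A_j)=\pi^{\mathcal A}_{\min}$. This $\nu$ has a density constant on blocks, so $a=b=(\pi^{\mathcal A}_{\min})^{-1}-1$ and the bound is attained.

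The step requiring most care is this optimisation over $\nu$. Two things must hold there: the two monotonicities have to chain in the right order, and the maximiser of the aggregated $\chi^2$ must simultaneously make $b=a$ so that no slack is lost. The block-conditional choice $\nu=\pi_{|A_j}$ handles both.
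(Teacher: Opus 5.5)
Your proposal is correct and follows essentially the same route as the paper: the same operator identity $K^{2t}=\lambda^tE_{\mathcal A}+(1-\lambda^t)\Pi$, the same quadratic-form computation giving $\chi^2(\nu K^t\|\pi)=\lambda^t\chi^2(\nu E_{\mathcal A}\|\pi)$, and the same two chained monotonicities in the $\eta_2$ bound, attained at $\nu=\pi_{|A_j}$ for a smallest block. The differences are cosmetic: you centre with $g=f-\allones$, and you bound $\eta_{\chi^2}$ via the projection property of $E_{\mathcal A}$ and a block-constant test function rather than via the DPI and $\nu=\pi_{|A_j}$. One small point: split off the case $\chi^2(\nu E_{\mathcal A}\|\pi)=0$, where the ratio is simply zero, before dividing by $\log(1+a)$.
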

        \begin{proof}
            Since $E_{\mathcal A}$ and $\Pi$ are conditional-expectation operators, we have $E_{\mathcal A}^2=E_{\mathcal A}$, $\Pi^2=\Pi$, and $E_{\mathcal A}\Pi=\Pi E_{\mathcal A}=\Pi$. Using these identities together with~\cref{eq:eta_2_partition_chain_assumption} inductively gives
            \begin{equation}\label{eq:expression_K_2t}
                K^{2t}=\lambda^tE_{\mathcal A}+(1-\lambda^t)\Pi.
            \end{equation}
            Fix $\nu\ll\pi$ and let $f\triangleq\frac{\dd\nu}{\dd\pi}$. We then have
            \begin{align}
                \chi^2(\nu K^t\|\pi K^t)&=\left\|\left(K^t\right)_\pi^\star f\right\|_{L^2(\pi K^t)}^2-1&\left(\substack{\text{By~\cref{eq:dual_kernel_formula}}}\right)&\nonumber\\
                &=\ip{f}{K^{2t}f}_\pi-1&\left(\substack{\text{By~\cref{eq:adjoint_relation} and}\\\text{since $K^t=(K^t)_\pi^\star$}}\right)&\nonumber\\
                &=\lambda^t\ip{f}{E_{\mathcal A}f}_\pi+(1-\lambda^t)\ip{f}{\Pi f}_\pi-1&\left(\substack{\text{By~\cref{eq:expression_K_2t}}}\right)&\nonumber\\
                &=\lambda^t\left(\|E_{\mathcal A}f\|_{L^2(\pi)}^2-1\right)&\left(\substack{\text{Since $\Pi f=\allones$ and}\\\ip{f}{E_{\mathcal A}f}_\pi=\|E_{\mathcal A}f\|_{L^2(\pi)}^2}\right)&\nonumber\\
                &=\lambda^t\chi^2(\nu E_{\mathcal A}\|\pi)&\left(\substack{\text{Since $\pi E_{\mathcal A}=\pi$ and}\\E_{\mathcal A}f=\frac{\dd\nu E_{\mathcal A}}{\dd\pi}}\right)&.\label{eq:proof_exact_chi_squared_partition_chain}
            \end{align}
            This proves~\cref{eq:exact_chi_squared_partition_chain}. Together with the DPI for $\chi^2$-Divergence,~\cref{eq:proof_exact_chi_squared_partition_chain} gives $\eta_{\chi^2}(\pi,K^t)\leq\lambda^t$. Conversely, taking $\nu=\pi_{|A_j}$ for any $j\in[q]$ gives $\nu E_{\mathcal A}=\nu$, proving that $\eta_{\chi^2}(\pi,K^t)=\lambda^t$.

            Moreover, the definition of $E_{\mathcal A}$ gives $\nu E_{\mathcal A}=\sum_{j=1}^q\nu(A_j)\pi_{|A_j}$. The convexity of $\chi^2$ in its first argument therefore yields
            \begin{equation}\label{eq:convexity_bound_on_chi_squared}
                \chi^2(\nu E_{\mathcal A}\|\pi)\leq\max_{j\in[q]}\chi^2(\pi_{|A_j}\|\pi)=\left(\pi_{\min}^{\mathcal A}\right)^{-1}-1.
            \end{equation}
            If $\nu E_{\mathcal A}=\pi$,~\cref{eq:proof_exact_chi_squared_partition_chain} gives $D_2(\nu K^t\|\pi)=0$. Otherwise,
            \begin{align}
                \frac{D_2(\nu K^t\|\pi)}{D_2(\nu\|\pi)}&=\frac{\log\left(1+\lambda^t\chi^2(\nu E_{\mathcal A}\|\pi)\right)}{\log\left(1+\chi^2(\nu\|\pi)\right)}&\left(\substack{\text{By~\cref{eq:proof_exact_chi_squared_partition_chain} and}\\\text{the definition of $D_2$}}\right)&\nonumber\\
                &\leq\frac{\log\left(1+\lambda^t\chi^2(\nu E_{\mathcal A}\|\pi)\right)}{\log\left(1+\chi^2(\nu E_{\mathcal A}\|\pi)\right)}&\left(\substack{\text{By DPI for $\chi^2$-Divergence}}\right)&\nonumber\\
                &\leq\frac{\log\Big(1+\lambda^t\left((\pi_{\min}^{\mathcal A})^{-1}-1\right)\Big)}{\log\big((\pi_{\min}^{\mathcal A})^{-1}\big)}.&\left(\substack{\text{By~\cref{corollary:log_ratio_increasing}}\\\text{and~\cref{eq:convexity_bound_on_chi_squared}}}\right)&\label{eq:mcmt_lemma_eta_2_ub}
            \end{align}
            Taking the supremum over admissible $\nu$ establishes the ``$\leq$'' direction in~\cref{eq:eta_2_partition_chain}. For the reverse inequality, fix $j\in[q]$ such that $\pi(A_j)=\pi_{\min}^{\mathcal A}$ and take $\nu=\pi_{|A_j}$. For this choice, all the inequalities in~\cref{eq:mcmt_lemma_eta_2_ub} are equalities, proving the expression for $\eta_2(\pi,K^t)$.
        \end{proof}
        For the Markov chain considered in~\cref{example:community_chain}, its Markov kernel satisfies
        \begin{equation*}
            K=pE_{\mathcal A}+(1-p)\Pi,\qquad K^2=p^2E_{\mathcal A}+(1-p^2)\Pi.
        \end{equation*}
        Since $E_{\mathcal A}$ and $\Pi$ are self-adjoint on $L^2(\pi)$, the chain is reversible with respect to $\pi$. Moreover, $\pi$ is uniform and $A_1$ has minimum cardinality, so that $\pi_{\min}^{\mathcal A}=|A_1|/|\X|$. Applying~\cref{lemma:eta_2_partition_chain} with $t=1$ and $\lambda=p^2$ therefore gives
        \begin{equation*}
            \eta_{\chi^2}(\pi,K)=p^2,\qquad \eta_2(\pi,K)=\frac{\log\left(1+p^2\left(\frac{|\X|}{|A_1|}-1\right)\right)}{\log\left(\frac{|\X|}{|A_1|}\right)}.
        \end{equation*}
        Finally, for every $x\in A_1$, we have $\delta_xE_{\mathcal A}=\pi_{|A_1}$ and $\chi^2(\pi_{|A_1}\|\pi)=\frac{|\X|}{|A_1|}-1$. Hence,~\cref{eq:exact_chi_squared_partition_chain} gives
        \begin{equation*}
            \chi^2(\delta_xK^t\|\pi)=p^{2t}\left(\frac{|\X|}{|A_1|}-1\right),\qquad t\geq1.
        \end{equation*}
\end{document}